\documentclass[11pt]{article}

\usepackage{amsmath,amssymb,amsfonts,amsthm}
\usepackage{bbm}        \usepackage{nicefrac}      

\usepackage{booktabs}

\usepackage{graphicx}
\usepackage{float}
\usepackage[caption=false]{subfig}

\usepackage[margin=1in]{geometry}
\usepackage{changepage}

\usepackage{enumitem}
\usepackage{comment}
\usepackage{thm-restate}

\usepackage[ruled,vlined,linesnumbered]{algorithm2e}

\usepackage{natbib}
\usepackage{csquotes}

\usepackage{tikz}
\usepackage{pgfplots}
\pgfplotsset{compat=1.18}
\usetikzlibrary{patterns,intersections,arrows,decorations.markings}
\usepgfplotslibrary{fillbetween}

\tikzstyle{vecArrow} = [thick, decoration={markings,mark=at position
   1 with {\arrow[semithick]{open triangle 60}}},
   double distance=1.4pt, shorten >= 5.5pt,
   preaction = {decorate},
   postaction = {draw,line width=1.4pt, white,shorten >= 4.5pt}]
\tikzstyle{innerWhite} = [semithick, white,line width=1.4pt, shorten >= 4.5pt]

\theoremstyle{plain}
\newtheorem{theorem}{Theorem}[section]
\newtheorem{lemma}[theorem]{Lemma}
\newtheorem{claim}[theorem]{Claim}
\newtheorem{fact}[theorem]{Fact}
\newtheorem{proposition}[theorem]{Proposition}
\newtheorem{corollary}[theorem]{Corollary}

\theoremstyle{plain}
\newtheorem{definition}{Definition}[section]
\newtheorem{example}[definition]{Example}

\theoremstyle{plain}
\newtheorem{assumption}{Assumption}

\usepackage{hyperref}
\usepackage{cleveref}

\crefname{claim}{claim}{claims}
\Crefname{algocf}{Algorithm}{Algorithms}

\crefname{appendix}{appendix}{appendices}
\Crefname{appendix}{Appendix}{Appendices}

\let\oldappendix\appendix
\renewcommand{\appendix}{\oldappendix
  \crefalias{section}{appendix}}



\newcommand{\xhdr}[1]{\vspace{2mm}\noindent{\bf #1}}

\newcommand{\noaccents}[1]{#1}
\newcommand{\newagentvar}[3][\noaccents]{\expandafter\newcommand\expandafter{\csname #2\endcsname}[1][]{{#1{#3}_{##1}}}\expandafter\newcommand\expandafter{\csname #2s\endcsname}{#1{\boldsymbol{#3}}}\expandafter\newcommand\expandafter{\csname #2OPT\endcsname}[1][]{#1{#3}_{##1}^*}\expandafter\newcommand\expandafter{\csname #2Prime\endcsname}{#1{#3}'}\expandafter\newcommand\expandafter{\csname #2DoublePrime\endcsname}{#1{#3}''}\expandafter\newcommand\expandafter{\csname #2Plus\endcsname}{#1{#3}^{+}}\expandafter\newcommand\expandafter{\csname #2Minus\endcsname}{#1{#3}^{-}}\expandafter\newcommand\expandafter{\csname #2Dag\endcsname}[1][]{#1{#3}_{##1}^{\dagger}}\expandafter\newcommand\expandafter{\csname #2DoubleDag\endcsname}[1][]{#1{#3}_{##1}^{\ddagger}}\expandafter\newcommand\expandafter{\csname #2Zero\endcsname}{#1{#3}^{0}}}

\newcommand{\eps}{\varepsilon}
\newcommand{\smallEps}{\varepsilon}

\newcommand{\OPT}{\textsc{Opt}}

\newcommand{\cost}{c}

\newcommand{\sellerprice}{\tilde{\price}}

\newcommand{\dist}{F}
\newcommand{\CDF}{\dist}
\newcommand{\PDF}{f}
\newcommand{\buyerdist}{F}

\newcommand{\sellerdist}{G}

\newcommand{\buyercdf}{\buyerdist}
\newcommand{\buyerpdf}{f}

\newcommand{\sellerpdf}{g}

\newcommand{\buyerhazardrate}{\phi}
\newcommand{\virtualval}{\psi}

\newcommand{\virtualcost}{\phi}

\newcommand{\numAgents}{n}

\newcommand{\valFunc}{v}
\newcommand{\valFuncPrime}{v'}

\newagentvar{val}{v}
\newagentvar{quant}{q}
\newcommand{\qquant}{t}
\newcommand{\quantM}{\quant[m]}

\newagentvar{alloc}{x}
\newagentvar{accAlloc}{X}
\newcommand{\accAllocUb}{\hat{X}}
\newcommand{\allocUb}{\hat{x}}

\newagentvar{payme}{p}
\newagentvar{price}{p}
\newagentvar{bid}{b}
\newagentvar{surpp}{u}
\newagentvar{distt}{F}

\newcommand{\bidspace}{\mathcal{B}}

\newagentvar{dualV}{\lambda}
\newagentvar{revCurve}{R}

\newcommand{\lagCurve}[1][\dualV]{\revCurve_{#1}}
\newcommand{\lagCurveCH}[1][\dualV]{\bar{\revCurve}_{#1}}
\newcommand{\lagCurvePrime}[1][\dualV]{\revCurve'_{#1}}
\newcommand{\lagCurveDoublePrime}[1][\dualV]{\revCurve''_{#1}}
\newcommand{\lagCurveCHPrime}[1][\dualV]{(\bar{\revCurve}_{#1})'}
\newcommand{\lagCurveCHDoublePrime}[1][\dualV]{(\bar{\revCurve}_{#1})''}

\newagentvar{quantItv}{\mathcal Q}
\newcommand{\leftS}{\ell}
\newcommand{\rightS}{r}
\newcommand{\subLim}{m}

\newcommand{\mechFam}{\mathfrak{M}}

\newcommand{\mechfam}{\mathfrak{M}}
\newcommand{\mech}[1][]{\mathcal{M}_{#1}}

\newcommand{\largeN}{K}
\newcommand{\valRatio}{\alpha}
\newcommand{\revApproxL}{0.8246}
\newcommand{\revApproxU}{0.8532}

\newcommand{\ironedItvs}{\mathcal I}
\newcommand{\residuItv}{\mathcal J}
\newagentvar{evol}{\nu}
\newcommand{\mixprob}{{\evol}}
\newcommand{\interpol}{{\evol}}

\newcommand{\hval}{{\val}_H}

\newcommand{\lval}{{\val}_L}

\newcommand{\buyerutil}{U}
\newcommand{\sellerutil}{\Pi}

\newcommand{\sellerbenchmark}{\Pi^*}
\newcommand{\buyerbenchmark}{U^*}
\newcommand{\buyerutilmid}{U^+}
\newcommand{\sellerbenchmarkDist}{\Pi_{\buyerdist}^*}
\newcommand{\buyerbenchmarkDist}{U_{\buyerdist}^*}

\usepackage{ifthen}

\newcommand{\GFT}[2][]{\textsc{Wel}\ifthenelse{\not\equal{}{#1}}{_{#1}}{}\!\left({\def\givenn{\middle|}#2}\right)}

\newcommand{\buyerexanteutil}[2][]{U\ifthenelse{\not\equal{}{#1}}{_{#1}}{}\!\left({\def\givenn{\middle|}#2}\right)}
\newcommand{\sellerexanteutil}[2][]{\Pi\ifthenelse{\not\equal{}{#1}}{_{#1}}{}\!\left({\def\givenn{\middle|}#2}\right)}

\newcommand{\reserve}{r}
\newcommand{\optreserve}{\reserve^*}

\newcommand{\constantH}{K}

\newcommand{\threshold}{t}
\newcommand{\winnerval}{\val_{\rm win}}

\newcommand{\multiUnitRatio}{\Psi}

\newcommand{\Mye}{\sf Mye}

\newcommand{\BuyerOffer}{\sf Buyer-Offer Mechanism}
\newcommand{\BOM}{\sf BOM}
\newcommand{\SellerOffer}{\sf Seller-Offer Mechanism}
\newcommand{\SOM}{\sf SOM}

\newcommand{\VCGAuction}{\sf VCG Auction}

\newcommand{\OPTSB}{\OPT_{\rm SB}}
\newcommand{\OPTFB}{\OPT_{\rm FB}}

\newcommand{\ksfairness}{KS-fairness}

\newcommand{\KSsolution}{{\sf KS Solution}}
\newcommand{\Nashsolution}{{\sf Nash Solution}}

\newcommand{\CSNashsolution}{{\sf Cross-Side Nash Solution}}

\newcommand{\indirectSPA}{{\sf indirect Vickrey auction}}
\newcommand{\indirectSPAs}{{\sf indirect Vickrey auctions}}

\newcommand{\primed}{^\dagger}
\newcommand{\doubleprimed}{^\ddagger}
\newcommand{\opted}{^*}

\newcommand{\allocLow}{\alloc_l}
\newcommand{\allocHigh}{\alloc_h}

\newcommand{\SPA}{{\sf SPA}}
\newcommand{\imbalanceness}{\tau}

\newcommand{\auxdist}{T}

\newcommand{\prefixratio}{\xi}
\newcommand{\prefixratioTkn}{\prefixratio_{\auxdist}(k,n)}

\newcommand{\varx}{a}
\newcommand{\varyy}{b}
\newcommand{\varA}{A}
\newcommand{\varB}{B}

\newcommand{\revratio}{\alpha}

\newcommand{\VCG}{{\sf VCG}}

\newcommand{\AutoAdjust}[3]{\mathchoice{\left #1 #2 \right #3}{#1 #2 #3}{#1 #2 #3}{#1 #2 #3}}
\newcommand{\inParentheses}[1]{\AutoAdjust{(}{#1}{)}}
\newcommand{\inBrackets}[1]{\AutoAdjust{[}{#1}{]}}

\newcommand{\condition}{\,\mid\,}
\newcommand{\suchthat}{\,:\,}

\DeclareMathOperator{\argmax}{argmax}

\newcommand{\subjectto}{\mathrm{s.t.}\ }

\newcommand{\prob}[2][]{\text{Pr}\ifthenelse{\not\equal{}{#1}}{_{#1}}{}\!\left[{\def\givenn{\middle|}#2}\right]}
\newcommand{\expect}[2][]{\mathbb{E}\ifthenelse{\not\equal{}{#1}}{_{#1}}{}\!\left[{\def\givenn{\middle|}#2}\right]}

\newcommand{\tparen}{\big}
\newcommand{\tprob}[2][]{\text{Pr}\ifthenelse{\not\equal{}{#1}}{_{#1}}{}\tparen[{\def\given{\tparen|}#2}\tparen]}
\newcommand{\texpect}[2][]{\mathbb{E}\ifthenelse{\not\equal{}{#1}}{_{#1}}{}\tparen[{\def\given{\tparen|}#2}\tparen]}

\newcommand{\sprob}[2][]{\text{Pr}\ifthenelse{\not\equal{}{#1}}{_{#1}}{}[#2]}
\newcommand{\sexpect}[2][]{\mathbb{E}\ifthenelse{\not\equal{}{#1}}{_{#1}}{}[#2]}

\newcommand{\dd}{\ \mathrm d}

\newcommand{\reals}{\mathbb{R}}

\newcommand{\supp}{{\sf supp}}

\newcommand{\indicator}[1]{\mathbbm{1}\left\{#1\right\}} 
\title{Pareto-Efficient Multi-Buyer Mechanisms:\\
Characterization, Fairness and Welfare}

\author{Moshe Babaioff\thanks{Hebrew University of Jerusalem. Email: {\tt moshe.babaioff@mail.huji.ac.il}} 
\and Sijin Chen\thanks{Peking University. Email: {\tt csj20040330@stu.pku.edu.cn}}
\and Zhaohua Chen\thanks{Peking University. Email: {\tt chenzhaohua@pku.edu.cn}}
\and Yiding Feng\thanks{Hong Kong University of Science and Technology. Email: {\tt ydfeng@ust.hk}}}
\date{}

\begin{document}

\maketitle
\begin{abstract}
    
A truthful mechanism for a Bayesian single-item auction results with some ex-ante revenue for the seller, and some ex-ante total surplus for the buyers. We study the Pareto frontier of the set of seller-buyers ex-ante utilities, generated by all truthful mechanisms when buyers values are sampled independently and identically (i.i.d.). We first provide a complete structural characterization of the Pareto frontier under natural distributional assumptions. For example, when valuations are drawn i.i.d.\ from a distribution that is both regular and anti-MHR, every Pareto-optimal mechanism is a second-price auction with a reserve no larger than the monopoly reserve.

Building on this, we interpret the problem of picking a mechanism as a two-sided bargaining game, and analyze two canonical Pareto-optimal solutions from cooperative bargaining theory: the \emph{Kalai-Smorodinsky (KS) solution}, and the \emph{Nash solution}. We prove that when values are drawn i.i.d. from a distribution that is both regular and anti-MHR, in large markets both solutions yield near-optimal welfare. In contrast, under worst-case MHR distributions, their performance diverges sharply: the KS solution guarantees one-half of the optimal welfare, while the Nash solution might only achieve an arbitrarily small fraction of it. These results highlight the sensitivity of fairness-efficiency tradeoffs to distributional structure, and affirm the KS solution as the more robust notion of fairness for asymmetric two-sided markets. \end{abstract}

\thispagestyle{empty}
\newpage

\section{Introduction}
\label{sec:intro}

The theory of Bayesian mechanism design has achieved remarkable success in the single-item multi-buyer setting,
when
optimizing a single objective: whether maximizing social welfare \citep{vic-61}, seller revenue \citep{mye-81}, or buyers' 
surplus \citep{HR-08}. These foundational results yield elegant---and often simple---mechanisms, such as the second-price auction, second-price auction with a monopoly reserve, and uniform lottery, which are
optimal under standard distributional assumptions like regularity or monotone hazard rate (MHR). For these objectives, the designer's goal is unambiguous, and the resulting mechanisms are celebrated for their simplicity and robustness.

Yet many modern marketplaces are not zero-sum games between a monopolist and passive buyers. Instead, they are two-sided platforms---such as online advertising exchanges, cloud resource markets, ride-sharing services, or labor platforms---where both the seller and the collective of buyers are strategic stakeholders whose long-term participation hinges on perceived fairness. A mechanism that maximizes the seller's revenue may leave buyers with negligible surplus, discouraging future engagement; conversely, a welfare-maximizing mechanism might allocate the item for free, yielding zero revenue and undermining platform sustainability. In such settings, optimizing a single objective is insufficient. What is needed is a balanced approach that respects the interests of both sides.

This has motivated the study of \emph{bi-objective mechanism design}, where the goal is to characterize the Pareto frontier of achievable ex-ante payoff pairs---namely, (seller revenue, buyers' surplus) pairs. While this frontier has been partially analyzed in the single-item, single-buyer setting \citep{CHP-23,BFM-25}, the multi-buyer regime---where competition among buyers fundamentally reshapes the structure of optimal mechanisms---remains largely uncharted.
We thus ask:

\begin{displayquote}
\textbf{(Question 1)} 
\emph{For the single-item, multi-buyer setting, characterize the structure of the Pareto frontier of achievable ex-ante payoff pairs: Which mechanism are Pareto-optimal? Are they simple? Do they need to randomize?}
\end{displayquote}
\noindent
In a nutshell, we provide a complete structural characterization of this frontier under natural distributional assumptions.

Of course, characterizing the Pareto frontier does not tell us which mechanism to implement. Among the continuum of Pareto-optimal options, we seek one that balances the interests of both sides in a principled way. To this end, we interpret the selection problem as a cooperative bargaining game \citep{nas-50} between the seller and the buyers, and study two canonical solution concepts from bargaining theory---the {\sf Kalai-Smorodinsky (KS) Solution} \citep{KS-75} and the {\Nashsolution} \citep{nas-50}---which formalize distinct notions of fairness. The social welfare highly depends on the chosen Pareto-optimal mechanism.
This leads to our second question:

\begin{displayquote}
\textbf{(Question 2)} 
\emph{In the single-item, multi-buyer setting, what are the welfare guarantees of the {\KSsolution} and the {\Nashsolution}, particularly in large markets?}
\end{displayquote}
\noindent
Our work shows that the welfare performance of {\KSsolution} and {\Nashsolution} depends critically on the underlying distribution, with {\KSsolution} exhibiting greater robustness than {\Nashsolution}.

\subsection{Our Contributions and Techniques}
In this work, we provide comprehensive answers to both questions. Below, we outline our main contributions and techniques. Throughout most of our analysis, we consider a Bayesian setting in which a seller offers a single item to $n$ buyers whose valuations are drawn i.i.d.\ from a common distribution $\buyerdist$. We focus on three standard distributional assumptions: regularity, monotone hazard rate (MHR), and anti-monotone hazard rate (anti-MHR). See \Cref{fig:dist-hierarchy} for an illustration of their relationships, examples and \Cref{def:dist} for formal definitions.

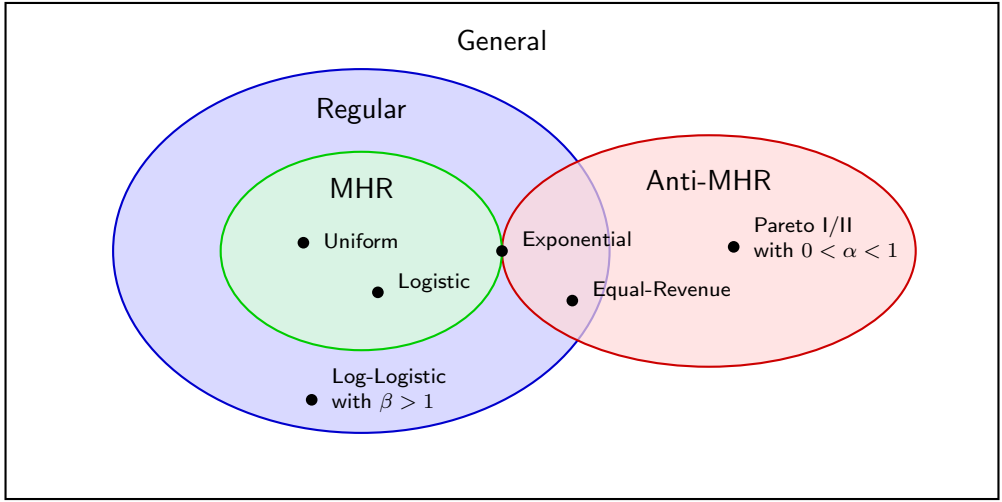
\begin{figure}
    \centering
    \begin{tikzpicture}[scale=1.1]

\draw[thick] (-6,-3) rectangle (6,3);
  \node[below] at (0,2.8) {\textsf{General}};

\fill[blue!15] (-1.7,0) ellipse (3.0 and 2.2);
  \draw[thick, blue!80!black] (-1.7,0) ellipse (3.0 and 2.2);
  \node[above] at (-1.7, 1.42) {\textsf{Regular}};

\fill[red!15, fill opacity=0.7] (2.5,0) ellipse (2.5 and 1.4);
  \draw[thick, red!80!black] (2.5,0) ellipse (2.5 and 1.4);
  \node[above] at (2.5, 0.62) {\textsf{Anti-MHR}};

\fill[green!15, fill opacity=0.7] (-1.7,0) ellipse (1.7 and 1.2);
  \draw[thick, green!80!black] (-1.7,0) ellipse (1.7 and 1.2);
  \node[above] at (-1.7, 0.52) {\textsf{MHR}};

  \fill (0,0) circle (2pt);
  \node[font=\scriptsize, anchor=west] at (0.12,0.12) {\textsf{Exponential}};

  \fill (-2.4,0.1) circle (2pt);
  \node[font=\scriptsize, anchor=west] at (-2.28,0.12) {\textsf{Uniform}};

  \fill (-1.5,-0.5) circle (2pt);
  \node[font=\scriptsize, anchor=west] at (-1.38,-0.38) {\textsf{Logistic}};

  \fill (0.85,-0.6) circle (2pt);
  \node[font=\scriptsize, anchor=west] at (0.97,-0.48) {\textsf{Equal-Revenue}};

  \fill (-2.3,-1.8) circle (2pt);
  \node[font=\scriptsize, anchor=west, align=left] at (-2.18,-1.68) {\textsf{Log-Logistic} \\ \textsf{with $\beta > 1$}};

  \fill (2.8,0.05) circle (2pt);
  \node[font=\scriptsize, anchor=west, align=left] at (2.92,0.17) {\textsf{Pareto I/II} \\ \textsf{with $0 < \alpha < 1$}};
\end{tikzpicture}     \caption{Distribution hierarchy. MHR distributions and anti-MHR distributions intersect at exponential distributions.}
    \label{fig:dist-hierarchy}
\end{figure}

\xhdr{Complete characterization of the Pareto frontier (\Cref{sec:Pareto frontier}).}
We characterize the mechanisms that trace out the entire Pareto frontier of achievable ex-ante payoff pairs. 

In the single-buyer setting, every Pareto-optimal mechanism is a randomization over at most two posted prices (\Cref{prop:Pareto frontier:single buyer general}); when the valuation distribution is regular, it simplifies further to a single posted price (\Cref{prop:Pareto frontier:single buyer regular}).

{In contrast, the multi-buyer setting is significantly more complex due to strategic competition among buyers. Our characterization reveals that every \emph{extreme point} on the Pareto frontier can be implemented by a rather simple mechanism---one that belongs to the family of {\indirectSPAs} \citep{HR-08}, which generalizes the standard second-price auction by 
allowing to restrict the bidding space, and allowing to add a reserve price.
Depending on the properties of the valuation distribution, we fully describe how these implementing mechanisms vary along the Pareto frontier, and establish monotonicity properties of their structure as the buyers' surplus increases.}

\begin{itemize}[leftmargin=*]
    \item \textbf{Regular $\cap$ anti-MHR:} Every point on the Pareto frontier can be implemented by a second-price auction with a reserve no larger than the lowest monopoly reserve. As buyers' surplus increases along the frontier, the reserve decreases monotonically. See \Cref{thm:Pareto frontier:multi buyer:regular and antiMHR distribution}.
    
    \item \textbf{Anti-MHR (not necessarily regular):} Every \emph{extreme point} on the Pareto frontier is implemented by an {\indirectSPA} with a reserve no larger than the monopoly reserve. As buyers' surplus increases, the feasible bid space \emph{expands} and the reserve decreases. See \Cref{thm:Pareto frontier:multi buyer:antiMHR distribution}.
    
    \item \textbf{MHR:} The structure of Pareto-optimal mechanisms changes at a threshold equal to the buyers' surplus achieved by the standard second-price auction: (See \Cref{thm:Pareto frontier:multi buyer:MHR distribution})
    \begin{itemize}
        \item Below this threshold, every point on the Pareto frontier is implemented by a second-price auction with a reserve no larger than the lowest monopoly reserve. As buyers' surplus increases, the reserve decreases monotonically.
        \item Above this threshold, every \emph{extreme point} on the Pareto frontier is implemented by an {\indirectSPA} with \emph{zero reserve} (i.e., the item is always allocated). In this region, the feasible bid space \emph{shrinks} as buyers' surplus increases. \end{itemize}
\end{itemize}
For the anti-MHR and MHR distributions, once the extreme points of the Pareto frontier are characterized, all remaining points are obtained by randomizing between two appropriately chosen {\indirectSPAs}.

In summary, the structure of Pareto-optimal mechanisms---and how they evolve along the frontier---depends critically on the underlying distributional assumptions, reflecting how different hazard rate conditions induce qualitatively distinct competitive dynamics among buyers. Moreover, from a structural perspective, the intersection of regularity and anti-MHR represents a ``sweet spot'': for distributions in this class, the entire Pareto frontier is implemented by simple mechanisms---namely, second-price auctions with a reserve.

\xhdr{Welfare guarantees of {\KSsolution} and {\Nashsolution} (\Cref{sec:ks solution,sec:nash solution}).}
In the second part of the paper, we interpret the selection of a mechanism on the Pareto frontier as a two-sided bargaining problem between the seller and the buyers. Within this framework, we analyze two canonical Pareto-optimal solutions from cooperative bargaining theory: the {\KSsolution} and the {\Nashsolution}.

The {\KSsolution} is the Pareto-optimal mechanism in which both sides achieve the same fraction of their ideal ex-ante payoffs---i.e., the seller attains the same fraction of the maximum possible revenue as the buyers do of the maximum total surplus. (See \Cref{def:KS solution}.) We show that, for any distribution, the {\KSsolution} guarantees at least 50\% of the optimal social welfare. Moreover, this bound is essentially tight: for large markets (i.e., as $n \to \infty$), there exists a worst-case instance with MHR valuation distribution, in which
the {\KSsolution} achieves less than $(50 + \eps)\%$ of the optimum. In contrast, when the valuation distribution lies in the intersection of regularity and anti-MHR, the {\KSsolution} achieves asymptotically optimal welfare as $n \to \infty$.

The {\Nashsolution} is defined as the Pareto-optimal mechanism that maximizes the product of all traders' ex-ante payoffs (seller and all buyers). (See \Cref{def:nash solution}.) For large markets, using the same instance discussed above, we show that the {\Nashsolution} attains less than $\eps$ fraction of the optimal welfare---significantly worse than the {\KSsolution}. However, like the {\KSsolution}, it achieves asymptotically optimal welfare when the distribution is both regular and anti-MHR.

We also consider a variant, the {\CSNashsolution}, which maximizes the product of the seller revenue and the buyers' surplus (rather than individual utilities). (See \Cref{def:cross-side nash solution}.) Its welfare guarantees closely mirror those of the {\KSsolution} across distributional classes.

We summarize all welfare guarantees in \Cref{tab:welfare-guarantees}. Remarkably, echoing our structural findings, the intersection of regularity and anti-MHR again emerges as a ``sweet spot'': for distributions in this class, all three mechanisms---{\KSsolution}, {\Nashsolution}, and {\CSNashsolution}---achieve asymptotically optimal social welfare as $n \to \infty$. 
Moreover, we note that {\KSsolution} is significantly more robust in terms of efficiency: while it guarantees at least 50\% of optimal welfare under arbitrary distributions and degrades gracefully in the worst-case MHR setting, {\Nashsolution} can perform arbitrarily poorly in the same regimes.

Finally, we present two additional results:  
(i) a welfare guarantee for the {\KSsolution} in multi-unit, multi-buyer settings with MHR valuations, and  
(ii) an improved welfare bound for the {\CSNashsolution} in bilateral trade when both traders' distributions are MHR.

\begin{table}[t]
\centering
\caption{Asymptotic welfare guarantees ($n \to \infty$) of Pareto-optimal bargaining solutions.}
\label{tab:welfare-guarantees}
\begin{tabular}{lccc}
\toprule
\textbf{Distribution class} & \textbf{{\KSsolution}} & \textbf{{\Nashsolution}} & \textbf{{\CSNashsolution}} \\
\midrule
General & $\geq \tfrac{1}{2}$ & $o(1)$ & $\geq \tfrac{1}{2}$ \\
MHR (worst-case) & $\leq \tfrac{1}{2} + o(1)$ & $o(1)$ & $\leq \tfrac{1}{2} + o(1)$ \\
Regular $\cap$ anti-MHR & $1 - o(1)$ & $1 - o(1)$ & $1 - o(1)$ \\
\bottomrule
\end{tabular}
\end{table}

\xhdr{Technique: revenue maximization subject to buyers' surplus constraint.} 
From a technical standpoint, we develop a unified framework to characterize the Pareto frontier across all three distributional regimes. The key insight is to reduce the characterization problem to a constrained optimization problem: maximizing seller revenue subject to a lower bound constraint on buyers' surplus. We prove that this reduction is exact for any target buyers' surplus lying between the surplus achieved by the seller-revenue-optimal mechanism (i.e., the Myerson auction) and the maximum achievable buyers' surplus. We believe that this buyer-surplus-constrained revenue maximization framework may be of independent interest.

To solve this constrained problem, we employ a Lagrangian duality approach. Specifically, for any non-negative dual variable $\dualV$ associated with the surplus constraint, we construct an {\indirectSPA} $\mech[\dualV]\opted$ that maximizes the Lagrangian objective:
\begin{align*}
\text{Seller Revenue} + \dualV \cdot \text{Buyers' Surplus}.
\end{align*}
Crucially, we show that as $\dualV$ increases continuously from $0$ to $\infty$, the resulting family of mechanisms $\{\mech[\dualV]\opted\}_{\dualV \geq 0}$ traces out all \emph{extreme points} of the Pareto frontier, with buyers' surplus varying \emph{continuously} from that of the Myerson auction to the buyer-optimal benchmark.

For any buyers' surplus lower bound constraint corresponding to an extreme point on the Pareto frontier, we establish the existence of a dual value $\dualV\opted \geq 0$ such that the mechanism $\mech[\dualV\opted]\opted$ satisfies the Karush–Kuhn–Tucker (KKT) conditions---particularly, complementary slackness---for the original constrained problem. Consequently, $\mech[\dualV\opted]\opted$ solves the constrained optimization and is therefore Pareto-optimal.

Moreover, the \emph{continuity of buyers' surplus} in the parameter $\dualV$ enables us to derive monotonicity properties of Pareto-optimal mechanisms along the frontier. By analyzing how $\mech[\dualV]\opted$ evolves with $\dualV$---in terms of reserves and feasible bidding spaces---we obtain all the structural monotonicity results stated in \Cref{sec:Pareto frontier}.

\subsection{Related Work}

\xhdr{Pareto-efficient mechanism design.} A growing body of work studies mechanism design through the lens of Pareto efficiency, seeking mechanisms that balance competing objectives---typically seller revenue and buyer surplus---without leaving gains on the table. Most closely related is the work of \citet{BFM-25}, who initiate the study of Pareto-optimal mechanisms in Bayesian bilateral trade and introduce the Kalai–Smorodinsky (KS) solution as a fairness-refined selection from the Pareto frontier. They establish tight welfare guarantees for KS-fair mechanisms under regularity and MHR assumptions, and contrast them with Nash welfare maximization. Extending this to multi-buyer settings, our work provides the first full characterization of the Pareto frontier and analyzes the robustness of KS fairness in competitive markets. In parallel, \citet{BOT-25} study dynamic rental games where agents are stagewise individually rational; they show that classical Myersonian tools break down, and develop alternative characterizations of Pareto-optimal allocations under various reward structures (e.g., mix between welfare, revenue, buyers' surplus). Their work highlights how temporal constraints reshape Pareto efficiency, but focuses on a single-agent-per-stage model without competition. Finally, \citet{CHP-23} propose an $(\alpha,\theta)$-CSR pricing scheme that interpolates between profit and buyers' surplus. While their framework yields distribution-free bounds on welfare ratios, it only considers pricing mechanism, does not characterize the Pareto frontier or address multi-buyer strategic competition. In contrast, our paper bridges these threads by offering a structural characterization of the entire Pareto frontier among the entire mechanism space in multi-buyer auctions and evaluating canonical {\KSsolution} and {\Nashsolution} as principled selectors of fair and efficient mechanisms. Beyond two-sided fairness, there is also a line of research focused on fairness \emph{among buyers}---for example, ensuring envy-freeness, proportionality, or equitable surplus allocation in multi-buyer settings (see, e.g., \citealp{DP-17,AJ-18,MFGW-20,RG-20,DKA-21,CEL-22,ADK-24,CK-25,WKR-25,FLMMP-25,BCHX-25}).

\xhdr{Cooperative bargaining with incomplete information.} \label{subsubsec:related-bargaining}
The cooperative bargaining framework was introduced by \citet{nas-50}, who proposed the Nash solution as the unique outcome satisfying four axioms: Pareto optimality, symmetry, scale invariance, and independence of irrelevant alternatives. In the same complete-information setting, \citet{KS-75} replaced the independence axiom with resource monotonicity to obtain the Kalai–Smorodinsky (KS) solution, while \citet{Kal-77,Mye-77} introduced the egalitarian solution by substituting scale invariance with resource monotonicity. We refer the reader to \citet{AH-92} for a comprehensive survey.
A separate strand of literature studies bargaining under private information \citep[e.g.,][]{HS-72,Mye-84,Sam-84,KW-93,ACD-02}, where agents possess asymmetric knowledge about their own types. In this setting, researchers have proposed analogues of the classical axioms and corresponding mechanism-based solutions.
While these works focus on the axiomatization and existence of fair outcomes in abstract bargaining problems, our paper takes a concrete mechanism design perspective: we study the single-item, multi-buyer auction with i.i.d.\ private values, fully characterize the mechanisms that implement the Pareto frontier, and analyze the welfare guarantees of two canonical bargaining solutions---the {\KSsolution} and the {\Nashsolution}---which embody distinct notions of fairness between the seller and the buyers.

\section{Preliminaries}
\label{sec:prelim}

In this work, we study the space of Bayesian single-seller, multi-buyer auctions, focusing on the Pareto frontier and the implications of fairness constraints on the welfare. We next present some background.

\xhdr{Environment.} There is a single seller and $n$ unit-demand buyers, whom we collectively refer to as traders (or agents). The seller holds a single item with zero production cost. Buyers have quasi-linear utilities: given an allocation $\alloc \in [0, 1]$ and a monetary transfer $\price \in \reals_+$ that she pays (i.e., her \emph{payment} to the seller), a buyer with private value $\val$ obtains a utility of $\alloc \cdot \val - \price$. Meanwhile, upon receiving a monetary transfer (buyers' payments) of $\price \in \reals_+$, the seller obtains revenue (and utility) equal to~$\price$.

For each buyer $i \in [n]$, her private value $\val_i$ is drawn independently and identically (i.i.d.) from a publicly-known common valuation distribution $\buyerdist$. We assume that the distribution $\buyerdist$ has a continuous support $\supp(\buyerdist)$, 
so for some $\hval\geq \lval\geq 0$ it holds that either ${\supp(\buyerdist)} = [\lval, \hval]$ or ${\supp(\buyerdist)} = [\lval, \infty)$.    
We assume that $\buyerdist$ has no atoms, except possibly at $\hval$. 
Following the convention in auction design literature, we define the cumulative density function (CDF) of the buyer's valuation distribution as $\buyercdf(t)\triangleq \prob[\val\sim\buyercdf]{\val < t}$. 
We assume that it is differentiable within the interior of the support, with measure 1, and denote the corresponding probability density function by $\buyerpdf$.
Consequently, $\buyercdf$ is left-continuous.  Finally, we assume there exists a price $\price \in \supp(\buyerdist)$ that maximizes the expected revenue $\price \cdot (1 - \buyercdf(\price))$.

\xhdr{Mechanisms.} A benevolent social planner (i.e., a mechanism designer) aims to design mechanisms that facilitate trade between the seller and the buyers. Specifically, a mechanism $\mech = (\allocs, \prices)$ first solicits bids from all buyers and then determines the allocation and payments according to an allocation rule $\allocs = (\alloc_1, \dots, \alloc_n)$ and a payment rule $\prices = (\price_1, \dots, \price_n)$. For each buyer $i \in [n]$, her allocation $\alloc_i \colon \reals_+^n \to [0, 1]$ and payment $\price_i \colon \reals_+^n \to \reals_+$ are functions mapping the bid profile of all $n$ buyers to, respectively, the fraction of the item allocated to her and the monetary transfer she makes. Then, the social planner transfers all money received from buyers directly to the seller.\footnote{In the literature, such mechanisms are known as strong budget balanced (SBB) mechanisms, since the social planner does not keep any money. Such a restriction causes no loss on our goal of maximizing welfare, as we will discuss in \Cref{sec:weak budget balance}.}

Under the revelation principle, a standard argument in the literature~\citep{mye-81}, it is without loss of generality to restrict attention to ex-post feasible, Bayesian incentive compatible, and interim individually rational mechanisms. These properties are formally defined as follows. 
\begin{itemize}
    \item \emph{Ex-post feasibility}: the mechanism never allocates more than its supply (of one unit), i.e., for any $\vals$, $\sum_{i \in [n]} \alloc_i(\vals) \leq 1$.
    \item \emph{Bayesian incentive compatibility (BIC)}: given any buyer $i$ and her value $\val_i$, when all other buyers bid truthfully, buyer $i$ maximizes her interim utility by also bidding truthfully. 
    \item \emph{Interim individual rationality (interim IR)}: given any buyer $i$ and her value $\val_i$, her interim utility is non-negative when bidding truthfully. 
\end{itemize}
Specifically, under BIC, we assume that buyers report their private values truthfully and thus interpret the allocation and payment rules $(\allocs, \prices)$ of a given mechanism as functions of the buyers’ true valuation profile $\vals = (\val_1, \dots, \val_n)$, mapping it to their allocations, and payments, respectively.

Finally, since all buyers are ex-ante symmetric (i.e., their values are drawn i.i.d.\ from the same distribution $\buyerdist$), any mechanism can be symmetrized by randomly permuting the buyers' identities. 
Therefore, without loss of generality, we restrict attention to mechanisms that are symmetric across buyers. 
In all, we use $\mechFam$ to denote the family of all mechanisms satisfying the conditions stated above, i.e., ex-post feasibility, BIC, interim IR, and symmetry. 

\xhdr{Welfare maximization.}
An important consideration for the social planner is the economic efficiency of the mechanism, which we measure by its social welfare -- the combined utility of the seller and buyers.
For a given mechanism $\mech \in \mechFam$, the \emph{ex-ante seller revenue} is $\sellerexanteutil[\buyerdist]{\mech} \triangleq \expect{\sum_{i\in[n]}\price_i(\vals)}$, and for each buyer $i \in [n]$, her \emph{ex-ante utility} $\buyerexanteutil[i, \buyerdist]{\mech} \triangleq \expect{\alloc_i(\vals)\cdot\val_i-\price_i(\vals)}$ is the expected utility she receives under $\mech$, where the expectation is taken over the realization of all buyers' valuations from prior $\buyerdist$ and any randomness in the mechanism (under truthful reporting). 
In addition, we define the \emph{ex-ante buyers' surplus} (also called residual surplus) as the sum of all buyers' ex-ante utilities, i.e., $\buyerexanteutil[\buyerdist]{\mech} = \sum_{i \in [n]} \buyerexanteutil[i, \buyerdist]{\mech}$.
Finally, the \emph{(ex-ante) social welfare} (or gains-from-trade) is defined as the sum of the ex-ante seller revenue and the ex-ante buyers' surplus, i.e., 
\begin{align*}
    \GFT[\buyerdist]{\mech} \triangleq \sellerexanteutil[\buyerdist]{\mech} + \buyerexanteutil[\buyerdist]{\mech}.
\end{align*}
We denote the \emph{optimum social welfare benchmark} $\OPT_{\buyerdist}$ by 
\begin{align*}
    \OPT_{\buyerdist} \triangleq \expect[\vals \sim \buyerdist^{n}]{\max\{\val_1, \dots, \val_n\}}.
\end{align*}
Notably, this benchmark is achievable by the classic Vickrey auction \citep{vic-61}, i.e., the second-price auction. 
We also define the optimal (ideal) seller revenue and the optimal (ideal) buyers' surplus as 
\[\sellerbenchmarkDist = \max_{\mech \in \mechFam} \sellerexanteutil[\dist]{\mech}, \quad \buyerbenchmarkDist = \max_{\mech \in \mechFam} \buyerexanteutil[\dist]{\mech}. \]

\xhdr{Pareto optimality and Pareto frontier.}
In addition to maximizing social welfare, another important aspect is to consider how to distribute it between the seller and all buyers.
In this sense, we define the Pareto-optimal mechanisms in $\mechFam$ and the induced Pareto frontier as follows:
\begin{definition}[Pareto domination, Pareto optimality, and Pareto frontier]
    Fix a value distribution $\dist$. We say a mechanism $\mech \in \mechFam$ is \emph{Pareto-dominated} by $\mech\primed \in \mechFam$, if both $\sellerexanteutil[\dist]{\mech\primed} \geq \sellerexanteutil[\dist]{\mech}$ and $\buyerexanteutil[\dist]{\mech\primed} \geq \buyerexanteutil[\dist]{\mech}$ hold, and at least one of them holds with strict inequality. $\mech \in \mechFam$ is \emph{Pareto-optimal}, if no mechanism $\mech\primed \in \mechFam$ Pareto-dominates $\mech$. In this case, we say that the (seller revenue, buyers' surplus) pair $(\sellerexanteutil[\dist]{\mech}, \buyerexanteutil[\dist]{\mech})$ realized by mechanism $\mech$ is on the \emph{Pareto frontier}.
\end{definition}

As a direct consequence of the definition, we remark that there is no point on the Pareto frontier with lower buyers' surplus than that of the seller-revenue-optimal mechanism, i.e., the Myerson auction. In fact, the mechanism that realizes such a point is Pareto-dominated by the Myerson auction. Similarly, there is also no point on the Pareto frontier with lower seller revenue than that of the buyer-surplus-optimal mechanism. We will characterize the Pareto frontier in \Cref{sec:Pareto frontier}. Further, in \Cref{sec:ks solution,sec:nash solution}, we use that to derive implications on the welfare of two standard fairness solution concepts, i.e., Kalai-Somorodinsky fairness, Nash welfare maximization, that are on that Pareto frontier.

\subsection{Additional Mechanism Design Notations and Concepts}
\label{sec:prelim:mechanism design techniques}

Next, we introduce notations and concepts from the mechanism design literature that we use in our analysis. 

\xhdr{Regularity, MHR, and anti-MHR.} 
We first present several prominent classes of distributions.
\begin{definition}[Regularity, MHR, and anti-MHR]
\label{def:dist}
For a valuation distribution $\dist$:
    \begin{itemize}
        \item $\dist$ is called \emph{regular} if its virtual value function $\virtualval(\val) \triangleq \val - \frac{1 - \CDF(\val)}{\PDF(\val)}$ is monotone non-decreasing in $\val$. 
        \item $\dist$ satisfies the \emph{monotone hazard rate (MHR) condition} if its hazard rate function $\virtualcost(\val) \triangleq \frac{\PDF(\val)}{1 - \CDF(\val)}$ is monotonically non-decreasing in $\val$. 
        \item $\dist$ satisfies the \emph{anti-MHR condition} if its hazard rate function is monotonically non-increasing in $\val$.
    \end{itemize}
\end{definition}

The MHR condition is stronger than the regularity condition. Meanwhile, the only intersection between the anti-MHR and MHR distributions is the exponential distribution, whose hazard rate is constant. However, there are some natural distributions that are both regular and anti-MHR besides exponential distributions, e.g., equal revenue distributions, Pareto type I distributions with shape $\alpha \geq 1$, Pareto type II distributions (a.k.a. Lomax distributions) with shape $\alpha \geq 1$, etc.

\xhdr{Quantile and revenue curve.}
We also establish the quantile terminology, which gives an equivalent view of the buyer value, but is more convenient in usage. 
The quantile $\quant$ of a buyer with value $\val$ drawn from distribution $\dist$ is defined as $\quant = 1 - \CDF(\val)$. The value function $\valFunc(\cdot)$ maps a buyer's quantile to her value, i.e., $\valFunc(\quant) \triangleq \inf \{\val \suchthat \CDF(\val) > 1 - \quant\}$. Since $\CDF(\cdot)$ is left-continuous, so is $\valFunc(\cdot)$. 
We further assume that $\valFunc(\cdot)$ is also differentiable in $(0, 1)$, and use $\valFunc'(\cdot)$ to denote its derivative.
We now define the price-posting revenue curve. 
\begin{definition}[Price-posting revenue curve]
    Given value distribution $\dist$, the price-posting revenue curve is defined as $\revCurve(\quant) \triangleq \quant \cdot \valFunc(\quant)$ for every quantile $\quant\in(0, 1]$ and $\revCurve(0)\triangleq \lim_{\quant\to 0}\revCurve(\quant)$. 
\end{definition}
In fact, at each quantile $\quant$, the revenue curve characterizes the seller's revenue under a single-buyer scenario in which the mechanism posts a price $\valFunc(\quant)$, sets it as a take-or-leave-it price, and transfers all buyer payments to the seller. Along the revenue curve, by our assumption, there is at least one quantile $\quant$ such that $\revCurve(\quant)$ reaches its maximum, but there might be many such quantiles.
In this case, we care about the largest such quantile $\quantM$, i.e., 
$\quantM \triangleq \sup \argmax_{\quant} \revCurve(\quant)$.
Under our assumptions, the supremum is obtained, that is, $\revCurve(\quantM) = \max_{\quant \in [0, 1]} \revCurve(\quant)$, 
\footnote{This is because, 
due to our assumption that there exists a price $\price \in \supp(\dist)$ which maximizes $\price \cdot (1 - \dist(\price))$, and that $\quantM = \sup \argmax_{\quant} \revCurve(\quant)$, we derive that $\quantM \geq \dist(\price)$ and $\valFunc(\quantM) \leq \price$. So the price $\valFunc(\quantM)$ is obtained. Since $\revCurve$ is left-continuous, we can further achieve that $\revCurve(\quantM) = \max_{\quant \in [0, 1]} \revCurve(\quant)$ holds, thus the supremum is also obtained.} 
and the corresponding value $\valFunc(\quantM)$ of quantile $\quantM$ is thus the \emph{lowest monopoly reserve}. 
Correspondingly, we call $\quantM$ the quantile of the lowest monopoly reserve quantile.

The following are equivalent interpretations of the conditions of regularity, of MHR, and of anti-MHR, but in the quantile space:
\begin{proposition}[\citealp{BR-89,har-16}]
    The following holds:
    \begin{itemize}
        \item $\dist$ is regular if and only if the price-posting revenue curve $\revCurve(\quant)$ is concave. 
        \item $\dist$ satisfies the MHR condition if and only if $\quant\cdot \valFunc'(\quant)$ is non-increasing in $\quant$.  
        \item $\dist$ satisfies the anti-MHR condition if and only if $\quant\cdot \valFunc'(\quant)$ is non-decreasing in $\quant$. 
    \end{itemize}
\end{proposition}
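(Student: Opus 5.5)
All three equivalences will be proved by a change of variables from value space to quantile space. The plan rests on the identities $\quant = 1-\CDF(\valFunc(\quant))$ and, differentiating, $\valFunc'(\quant) = -1/\PDF(\valFunc(\quant))$ on the interior of the support. Since $\valFunc$ is non-increasing, and strictly decreasing where the density is positive, any function of $\val$ that is monotone in $\val$ becomes oppositely monotone once composed with $\valFunc$ and viewed as a function of $\quant$. Each item then reduces to expressing the relevant object (virtual value or hazard rate) at $\val=\valFunc(\quant)$ through $\revCurve$ or $\valFunc$.

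For regularity, differentiate the revenue curve:
\[
\revCurve'(\quant) = \valFunc(\quant) + \quant\,\valFunc'(\quant) = \valFunc(\quant) - \frac{1-\CDF(\valFunc(\quant))}{\PDF(\valFunc(\quant))} = \virtualval(\valFunc(\quant)).
\]
So $\revCurve'$ is the virtual value read in quantile space. If $\virtualval$ is non-decreasing in $\val$, then $\quant \mapsto \virtualval(\valFunc(\quant))$ is non-increasing, so $\revCurve$ is concave on $(0,1)$. Conversely, if $\revCurve$ is concave, its derivative is non-increasing in $\quant$, and inverting the monotone map $\val\mapsto \quant$ gives $\virtualval$ non-decreasing on the support. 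The endpoint $\revCurve(0)$ is defined as a limit, so concavity extends to $[0,1]$ by continuity.

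For MHR and anti-MHR, substitute again:
\[
\quant\,\valFunc'(\quant) = -\frac{1-\CDF(\valFunc(\quant))}{\PDF(\valFunc(\quant))} = -\frac{1}{\virtualcost(\valFunc(\quant))}.
\]
Suppose the hazard rate $\virtualcost$ is non-decreasing in $\val$. Then, as $\quant$ increases, $\val=\valFunc(\quant)$ decreases, so $\virtualcost$ decreases, so $1/\virtualcost$ increases, and hence $\quant\,\valFunc'(\quant)$ is non-increasing. Every step in this chain reverses under the reverse inclusion, which gives the "if and only if". The anti-MHR case is identical with all inequalities flipped.

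The main obstacle is technical rather than conceptual: justifying the derivative identity $\valFunc' = -1/\PDF\circ\valFunc$ and the monotonicity transfer where $\PDF$ may vanish, on flat pieces of $\CDF$, or at the possible atom at $\hval$. The plan is to work on the interior of the support, where the paper's standing assumptions (differentiability of $\CDF$ and $\valFunc$, no atoms except possibly at $\hval$) make $\valFunc$ the genuine inverse of $1-\CDF$. The boundary points and any measure-zero exceptions are then handled by the left-continuity of $\CDF$ and $\valFunc$. If needed, one can also interpret the monotonicity conditions almost everywhere, as in \citet{BR-89,har-16}.
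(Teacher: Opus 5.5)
The paper gives no proof of this proposition; it only cites Bulow--Roberts and Hartline. Your change of variables is the standard argument behind those citations. You differentiate $F(v(q))=1-q$ to get $v'(q)=-1/f(v(q))$, hence $R'(q)=\psi(v(q))$ and $q\,v'(q)=-1/\phi(v(q))$, and then use that $v$ is a strictly decreasing bijection onto the interior of the support. This is correct on the quantiles that correspond to interior values. Your worry about $f$ vanishing is moot there: the chain rule $f(v(q))\,v'(q)=-1$ already forces $f(v(q))\neq 0$ wherever $F$ and $v$ are both differentiable.

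The step that does not work as written is your treatment of the possible atom at $v_H$. If it has mass $\delta>0$, then $v\equiv v_H$ on the whole interval $(0,\delta]$. This is a positive-length set of quantiles with no interior value behind it. So it is neither a measure-zero exception nor something left-continuity handles, and your formula for $v'$ says nothing there.

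You have to glue this interval in by hand. On $(0,\delta)$ one has $R'=v_H$ and $q\,v'=0$. These dominate $\psi(v(q))<v(q)\le v_H$ and $-1/\phi(v(q))<0$ for $q>\delta$. So concavity of $R$ and the non-increasing property of $q\,v'$ survive, and the first two items go through.

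For the third item, the same comparison shows that an atom is incompatible with $q\,v'$ being non-decreasing. You therefore must also show that an anti-MHR distribution has no atom. This is where the standing assumption that $v$ is differentiable on $(0,1)$ is needed. It forces $v'(\delta)=0$. But anti-MHR gives $|v'(\xi)|=1/(\xi\,\phi(v(\xi)))\ge 1/\phi(v(q_0))>0$ for all $\xi\in(\delta,q_0)$, which contradicts $v'(\delta)=0$ via the mean value theorem.

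Without that assumption the third item is false. Take an exponential capped at $K$, with the tail mass $e^{-K}$ placed as an atom at $K$. It has constant hazard rate on $(0,K)$, yet $q\,v'(q)$ drops from $0$ to $-1$ at $q=e^{-K}$.
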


\xhdr{Indirect Vickrey auctions.}
Finally, we introduce \indirectSPAs, as originally defined in \cite{HR-08}. These auctions extend the Vickrey auction by restricting the bidding space and setting reserves. 

\begin{definition}[Indirect Vickrey auctions]
\label{def:indirect SPA}
    An \emph{\indirectSPA} is parameterized by a bidding space $\bidspace \subseteq \reals_+$ and a uniform reserve $\reserve \in \reals_+$. Buyers are \emph{required} to bid in $\bidspace$, and the buyer $i^*$ with the highest bid wins the item (with ties broken uniformly at random), provided that it is no smaller than the reserve $\reserve$. The winner pays her \emph{critical bid}, defined as the minimum bid she would need to submit to remain the winner. 
\end{definition}

We say the bidding space $\bidspace\primed$ of an {\indirectSPA} $\mech\primed$ is a \emph{refinement} of the bidding space $\bidspace\doubleprimed$ of another {\indirectSPA} $\mech\doubleprimed$, if $\bidspace\primed$ is a superset of $\bidspace\doubleprimed$. Generally speaking, if $\mech\primed$ refines $\mech\doubleprimed$, it allows more abundant bidding strategies for each buyer than the latter.

We remark that although {\indirectSPAs} are not direct-revelation mechanisms---and thus not BIC by definition---they induce dominant strategy equilibria. Specifically, it is a dominant strategy for each buyer to submit the supremum of the bids in the feasible bid space $\bidspace$ that are no larger than her true valuation. By the revelation principle, there exists an equivalent direct-revelation DSIC mechanism that yields the same interim allocation and payment rules. Therefore, with a slight abuse of notation, we treat {\indirectSPAs} as belonging to the mechanism class $\mechFam$.

As shown in \citet{mye-81}, a seller-revenue-optimal mechanism is an {\indirectSPA} with the lowest monopoly reserve $\reserve = \valFunc(\quantM)$.\footnote{When the buyer distribution is regular, this mechanism uses the unconstrained bidding space and the lowest monopoly reserve. Otherwise, the feasible bid space $\bidspace$ corresponds to the set of values that remain fixed under Myerson's ironing procedure.}  
For brevity, we denote this mechanism by $\Mye$, and we have $\sellerexanteutil[\buyerdist]{\Mye} = \sellerbenchmarkDist$.

\section{Characterizing the Pareto Frontier}
\label{sec:Pareto frontier}
In this section, we present structural characterizations of the Pareto frontier of the set of (seller revenue, buyers' surplus) pairs for mechanisms that are ex-post feasible, BIC, interim IR, and symmetric. 

We begin with a warm-up by analyzing the single-buyer market in \Cref{sec:Pareto frontier:single buyer}, then extend our results to the substantially more challenging setting of the multi-buyer market in \Cref{sec:Pareto frontier:multiple buyer}. Subsequently, in \Cref{sec:Pareto frontier:reduction,sec:Pareto frontier:reduction proof example}, we provide a unified framework that can yield all of our structural characterizations.

\subsection{Warm-Up: Single-Buyer Market}
\label{sec:Pareto frontier:single buyer}

We start with the single-buyer market as a warm-up. For several critical points on the Pareto frontier, we know that the seller-optimal, buyer-optimal, and welfare-optimal mechanisms all take the form of posting a single (take-it-or-leave-it) price \citep{mye-81}. In fact, it is not hard to show that any point on the Pareto frontier can be realized by a mechanism that randomizes over at most two posted prices.

\begin{restatable}{proposition}{PFSingleGeneral}
\label{prop:Pareto frontier:single buyer general}
    In a single-item seller, single-buyer market, for any valuation distribution $\dist$, any point on the Pareto frontier can be realized by a mechanism that fixes a probability $\mixprob \in [0, 1]$ ex ante, and then posts a price $\price\primed$ to the buyer with probability $\mixprob$ and posts a price $\price\doubleprimed$ to the buyer with probability $1 - \mixprob$. Both these prices are no larger than the lowest monopoly reserve to the buyer. If the buyer accepts, she receives the item and pays the corresponding price to the seller. 
\end{restatable}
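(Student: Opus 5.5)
We work in quantile space, where a single-buyer mechanism is the same thing as a randomization over posted prices. By Myerson's characterization, a BIC and interim IR mechanism is a non-increasing allocation rule $\alloc(\quant)$ in quantile space, and it is seller-optimal to make the lowest type's utility zero. Any non-increasing $\alloc:[0,1]\to[0,1]$ can be written as a mixture of step functions $\indicator{\quant\le \qquant}$, namely posting price $\valFunc(\qquant)$, with mixing measure $\mu$ on $[0,1]$ (mass $1-\alloc(1)$ on "no sale"). Revenue and buyer surplus are both linear in $\mu$:
\[
\sellerexanteutil[\dist]{\mech}=\int \revCurve(\qquant)\,\dd\mu(\qquant),\qquad
\buyerexanteutil[\dist]{\mech}=\int S(\qquant)\,\dd\mu(\qquant),
\]
where $S(\qquant)=\int_0^{\qquant}(\valFunc(\quant)-\valFunc(\qquant))\dd\quant$ is the buyer surplus from posting $\valFunc(\qquant)$. (A no-sale option is dominated by posting price $\hval$ or a price near it.) Hence the set of achievable pairs is the convex hull $C$ of the curve $\Gamma=\{(\revCurve(\qquant),S(\qquant)) : \qquant\in[0,1]\}$, with closure issues handled by compactness, using left-continuity and the fact that the maximum at $\quantM$ is attained.

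The main step is a Carathéodory argument. Every point on the Pareto frontier of $C$ lies on the boundary of the planar convex set $C$, so it lies on a supporting line. The face of $C$ cut out by that line is the convex hull of the points of $\Gamma$ on the line, which is a segment (or a single point). Its endpoints are points of $\Gamma$, possibly limits, which are attained by the closure argument. So every frontier point is a convex combination of two points of $\Gamma$, that is, of two posted prices $\price\primed,\price\doubleprimed$ with probabilities $\mixprob$ and $1-\mixprob$. In dimension two one could also invoke Carathéodory directly, but that gives three points; the boundary/face argument is what reduces it to two.

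Next we show both prices can be taken at most the lowest monopoly reserve $\valFunc(\quantM)$. Any $\qquant<\quantM$, i.e. a price above $\valFunc(\quantM)$, is weakly dominated by $\quantM$. Indeed $\revCurve(\quantM)\ge\revCurve(\qquant)$ by definition of $\quantM$. Also $S$ is non-decreasing in $\qquant$, because lowering the price raises every buyer's utility: formally $S(\qquant)=\expect{(\val-\valFunc(\qquant))^+}$, which is monotone. Replacing every such $\qquant$ by $\quantM$ therefore weakly improves both coordinates. For a Pareto-optimal point the replacement cannot strictly improve either coordinate, so the resulting mixture realizes the same pair with both prices $\le\valFunc(\quantM)$.

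The expected obstacle is the boundary and closure technicalities. These include showing $C$ is closed, so that frontier points are achieved by actual mechanisms, and handling the possible atom at $\hval$ and an unbounded support. I would address them by compactifying the price range to $[0,\valFunc(\quantM)]$ (permitted by the domination step, which I would do first) and using continuity of $\revCurve$ and $S$ there.
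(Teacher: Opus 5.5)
Your route matches the paper's main-text argument. A single-buyer mechanism is a lottery over posted prices, the achievable set is the convex hull of the posted-price curve, and prices above the lowest monopoly reserve are dominated because surplus is monotone in the price. Your supporting-line/face argument also justifies why two prices (rather than Carath\'eodory's three) suffice for frontier points, which the paper only asserts.

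There is, however, one step that fails as stated. You normalize the lowest type's utility to zero because this is seller-optimal. You then conclude that the achievable set is $C=\mathrm{conv}(\Gamma)$, with $\Gamma$ indexed by quantiles, i.e.\ by prices in $[v_L,v_H]$. Seller-optimality is not the relevant criterion on a Pareto frontier.

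When $v_L>0$, you can lower every payment by a constant $c\le v_L\,x(1)$. This keeps payments non-negative, since under your normalization each type pays at least $v_L\,x(v)\ge v_L\,x(1)$. It moves the pair $(R,S)$ to $(R-c,S+c)$; equivalently, one posts a price $p\in[0,v_L)$. Such a price gives the maximum welfare $\mathbb{E}[v]$, so $(p,\mathbb{E}[v]-p)$ is Pareto-optimal. Yet every point of $\mathrm{conv}(\Gamma)$ has surplus at most $S(1)=\mathbb{E}[v]-v_L$.

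Concretely, for $v\sim U[1,2]$ you get $\Gamma=\{(2q-q^2,\,q^2/2): q\in[0,1]\}$. Its hull has the single Pareto point $(1,\tfrac12)$, while the true frontier is the whole segment from $(1,\tfrac12)$ to $(0,\tfrac32)$. Your argument says nothing about these points.

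The repair is short. Index the lottery by prices $p\in[0,\infty]$ rather than by quantiles, i.e.\ work with the curve $p\mapsto\bigl(p\Pr[v\ge p],\,\mathbb{E}[(v-p)^+]\bigr)$. Every mechanism, including one with lowest-type utility $c>0$, is such a lottery: move the mass $x(1)$ sitting at price $v_L$ to the single price $v_L-c/x(1)\ge 0$. The added prices lie below $v_L$, hence below the lowest monopoly reserve. Your domination and face arguments then go through verbatim on the compact price range $[0,v(q_m)]$ that your last paragraph already mentions. The paper's main-text argument is phrased over all posted prices in this way, so it does not depend on the zero-utility normalization.
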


To obtain \Cref{prop:Pareto frontier:single buyer general}, we observe that in the single buyer case, any mechanism in $\mechFam$ is equivalent to a randomization over some price-posting mechanisms. Therefore, consider the set of (seller revenue, buyers' surplus) pairs that can be reached by some mechanism in $\mechFam$. We then know that all extreme points of this 2-dimensional set are achieved by some price-posting mechanisms. Thus, all points in this set can be realized by a randomization of at most two price-posting mechanisms. 
We further have the following fact. 
\begin{fact} \label{fct:Pareto frontier:buyers' surplus increase}
For a single buyer, her surplus increases as the posted price decreases.
\end{fact}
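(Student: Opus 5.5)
The fact is elementary and follows directly from the formula for the buyer's surplus under a posted price. If the seller posts a price $\price$ to a single buyer with value $\val \sim \dist$, the buyer accepts exactly when $\val \ge \price$; ties have probability zero except possibly at the atom $\hval$, where the buyer is indifferent and the tie does not affect her utility. Her ex-ante surplus is therefore
\[
\buyerexanteutil[\dist]{\price} = \expect[\val\sim\dist]{(\val-\price)^+} = \int_{\price}^{\infty} \bigl(1-\CDF(t)\bigr)\dd t .
\]
I would first justify the integral identity by the standard layer-cake argument: write $(\val-\price)^+ = \int_{\price}^{\infty}\indicator{\val > t}\dd t$, take expectations, and apply Tonelli. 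This uses only non-negativity of the integrand, and under the paper's convention $\prob{\val > t}=1-\CDF(t)$ for all $t$ outside a measure-zero set.

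Monotonicity then follows at once. For $\price\primed < \price\doubleprimed$, the difference is
\[
\buyerexanteutil[\dist]{\price\primed}-\buyerexanteutil[\dist]{\price\doubleprimed} = \int_{\price\primed}^{\price\doubleprimed}\bigl(1-\CDF(t)\bigr)\dd t \ \ge\ 0 .
\]
It is strictly positive whenever $\price\primed < \hval$, since $1-\CDF(t)>0$ on $[\price\primed,\min\{\price\doubleprimed,\hval\})$, which has positive length. Equivalently, I could argue pointwise: for every value $\val$, the quantity $(\val-\price)^+$ is non-increasing in $\price$, and it strictly decreases on the event $\{\val>\price\primed\}$, which has positive probability when $\price\primed<\hval$.

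The only step that needs any care is the meaning of ``increases'' (weak or strict) together with the boundary cases. For prices at or above $\hval$ the surplus is identically zero, so it is constant there and the monotonicity is weak. Inside the support, where the paper uses the fact (prices no larger than the monopoly reserve, hence within $\supp(\dist)$ and below $\hval$), the monotonicity is strict. I would state both versions.
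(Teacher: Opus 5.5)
Your proof is correct and is essentially the intended argument: the paper states this as a Fact without proof, and your identity $\mathbb{E}\bigl[(\val-\price)^+\bigr]=\int_{\price}^{\infty}(1-\CDF(t))\dd t$ is the value-space form of the quantile expression $\int_0^{\quant}(\valFunc(\qquant)-\valFunc(\quant))\dd\qquant$ for the surplus of posting $\valFunc(\quant)$, which underlies the paper's later single-buyer computations. Your strict version is also the one the paper implicitly needs. To Pareto-dominate a higher price that ties the monopoly revenue, the lower price must give a strictly larger surplus; in that situation the lower price is necessarily below $\hval$, so your condition $\price\primed<\hval$ applies.
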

Therefore, all price-posting mechanisms with a price larger than the lowest monopoly reserve are Pareto-dominated by the price-posting mechanism with the lowest monopoly reserve, i.e., the revenue-optimal mechanism. Subsequently, extreme points on the Pareto frontier are those with prices no larger than the lowest monopoly reserve. This proves \Cref{prop:Pareto frontier:single buyer general}.

In particular, when the value distribution is regular, we can show that price-posting mechanisms suffice to construct the entire Pareto frontier. We defer the proof of the following result to \Cref{sec:proofs in Pareto frontier}.

\begin{restatable}{proposition}{PFSingleRegular}
\label{prop:Pareto frontier:single buyer regular}
    In a single-item seller, single-buyer market, for any \emph{regular} valuation distribution $\dist$, any point on the Pareto frontier can be realized by any point on the Pareto frontier can be realized by a single price-posting mechanism.  
\end{restatable}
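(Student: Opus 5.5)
We work directly in quantile space. Posting the price $\valFunc(\quant)$ to the single buyer gives the pair
\[
\bigl(\revCurve(\quant),\ \textstyle\int_0^{\quant}\valFunc(t)\dd t - \revCurve(\quant)\bigr).
\]
The seller receives $\revCurve(\quant)$, and the buyer's surplus is $S(\quant) - \revCurve(\quant)$, where $S(\quant) \triangleq \int_0^{\quant}\valFunc(t)\dd t$ is the welfare. By \Cref{prop:Pareto frontier:single buyer general} and \Cref{fct:Pareto frontier:buyers' surplus increase}, the Pareto frontier is contained in the convex hull of the curve $\gamma(\quant) \triangleq (\revCurve(\quant), S(\quant)-\revCurve(\quant))$ for $\quant\in[\quantM,1]$, which corresponds to prices no larger than the lowest monopoly reserve. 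The frontier consists exactly of the Pareto-maximal points of this hull.

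It therefore suffices to show that, for regular $\dist$, the curve $\gamma$ restricted to $[\quantM,1]$ is itself the upper-right boundary of its convex hull. Then every mixture of two posted prices is weakly dominated by a single posted price. We parametrize by buyer surplus. The function $U(\quant) \triangleq S(\quant)-\revCurve(\quant)$ satisfies $U'(\quant) = \valFunc(\quant) - \revCurve'(\quant) = -\quant\,\valFunc'(\quant) \ge 0$, so $U$ is non-decreasing on $[\quantM,1]$; this is \Cref{fct:Pareto frontier:buyers' surplus increase}. On the same interval $\revCurve$ is non-increasing, because by concavity (regularity) $\revCurve$ decreases after its maximizer $\quantM$. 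Hence along $\gamma$ revenue decreases while surplus increases, so $\gamma$ is a monotone tradeoff curve. The remaining claim is that revenue, viewed as a function of surplus $u = U(\quant)$, is concave.

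For this, compute $d\revCurve/dU = \revCurve'(\quant)/U'(\quant)$. It suffices to show this ratio is non-increasing in $\quant$; since $U$ is increasing, the slope is then non-increasing in $u$, which gives concavity. Write $\revCurve'(\quant) = \valFunc(\quant) + \quant\valFunc'(\quant)$ and $U'(\quant) = -\quant\valFunc'(\quant)$. Then
\[
\frac{d\revCurve}{dU} = -1 + \frac{\valFunc(\quant)}{-\quant\valFunc'(\quant)}.
\]
Note that $-\quant\valFunc'(\quant)/\valFunc(\quant) = \valFunc(\quant)^{-1}\cdot\frac{1-\CDF}{\PDF}$ evaluated at value $\valFunc(\quant)$. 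Regularity says exactly that $\val - \frac{1-\CDF(\val)}{\PDF(\val)}$ is non-decreasing in $\val$, i.e.\ non-increasing in $\quant$, and this equals $\revCurve'(\quant)$. So concavity of $\revCurve$ should be the ingredient, and I would try to show the ratio $\revCurve'/U'$ is monotone using $\revCurve''\le0$ together with the sign relations $\revCurve'\le 0\le U'$ on $[\quantM,1]$.

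This last monotonicity step is the main obstacle, since it mixes $\revCurve''$ with $\valFunc''$. If the direct derivative computation gets messy, I would fall back on a more geometric argument. Take two prices $\quant_1<\quant_2$ in $[\quantM,1]$ and a mixture with weight $\mixprob$. Choose the intermediate quantile $\quant$ with $U(\quant) = \mixprob U(\quant_1)+(1-\mixprob)U(\quant_2)$, which exists by continuity and monotonicity of $U$. The goal is then $\revCurve(\quant)\ge \mixprob\revCurve(\quant_1)+(1-\mixprob)\revCurve(\quant_2)$. A natural route is to compare welfare. Since $S$ is concave ($\valFunc$ is non-increasing), the single price at $\quant$ yields welfare at least that of the mixture; if instead the mixture matched the welfare and $\revCurve$ were linear on the segment, the two would tie. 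Making this comparison rigorous via concavity of $\revCurve$ is what I expect to need the most care. Degenerate pieces, such as flat parts of $\revCurve$ or atoms at $\hval$, should be handled by continuity.
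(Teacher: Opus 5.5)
\textbf{Gap: the key concavity step is never proved.} Your reduction is sound. By \Cref{prop:Pareto frontier:single buyer general}, every frontier point is a mixture $\nu\gamma(q_1)+(1-\nu)\gamma(q_2)$ with $q_1,q_2\in[q_m,1]$. So it suffices to show that revenue is a concave function of surplus along $\gamma$ on $[q_m,1]$. But that concavity is the entire content of the proposition, and the proposal does not establish it. The slope computation is left as ``the main obstacle,'' and the geometric fallback does not work, for two reasons:
\begin{itemize}
\item Concavity of $S$ controls $S(\nu q_1+(1-\nu)q_2)$, not $S$ at the surplus-matched quantile $q$.
\item The welfare comparison you want cannot follow from concavity of $S$ at all. $S$ is concave for every distribution, yet for non-regular distributions a two-price mixture can match a single price's surplus while earning strictly more revenue, hence strictly more welfare. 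The instance in \Cref{sec:revenue approximation with single buyer} is an example.
\end{itemize}
At best, a welfare comparison is a consequence of the revenue comparison, not a route to it.

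\textbf{How to close it.} The missing step is short and needs no second derivatives. Since $R'+U'=v$, on $[q_m,1]$ (where $v>0$) we have $U'=v-R'\ge v>0$, and
\[
\rho(q)\triangleq\frac{R'(q)}{U'(q)}=\frac{R'(q)}{v(q)-R'(q)}=-\frac{t(q)}{1+t(q)},\qquad t(q)\triangleq\frac{-R'(q)}{v(q)}\ge 0 .
\]
Concavity makes $-R'\ge 0$ non-decreasing, and $v$ is non-increasing. So $t$ is non-decreasing and $\rho$ is non-increasing. Now take $q_1<q<q_2$ with $U(q)=\nu U(q_1)+(1-\nu)U(q_2)$. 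Integrating $R'=\rho U'$ gives $R(q)-R(q_1)\ge\rho(q)\,(U(q)-U(q_1))$ and $R(q_2)-R(q)\le\rho(q)\,(U(q_2)-U(q))$. Since $\nu(U(q)-U(q_1))=(1-\nu)(U(q_2)-U(q))$, these combine to $R(q)\ge\nu R(q_1)+(1-\nu)R(q_2)$.

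In fact you do not even need monotonicity of $\rho$. Set $\lambda\triangleq-\rho(q)\in[0,1)$. Then $q$ is a stationary point of $R+\lambda U=\lambda S+(1-\lambda)R$, which is concave as a convex combination of two concave curves. Hence $\gamma(q)$ maximizes revenue plus $\lambda$ times surplus over all mechanisms, and therefore maximizes revenue subject to surplus at least $U(q)$.

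\textbf{Relation to the paper.} This supporting-line version is exactly the paper's proof. The paper observes that $R_\lambda$ is concave for $\lambda\in[0,1]$ under regularity. It then obtains a suitable $\lambda$ for each $q\in[q_m,1]$ via upper hemicontinuity of the argmax correspondence, rather than from an explicit formula. It concludes through complementary slackness and \Cref{lem:Pareto frontier:transfer to constrained optimization}. Your explicit multiplier $\lambda(q)=-R'(q)/U'(q)$ would replace the hemicontinuity step in the differentiable case.
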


A natural subsequent question is whether, for general distributions, the Pareto frontier of randomized mechanisms strictly dominates the Pareto frontier of deterministic mechanisms.  
In \Cref{thm:revenue approximation for single buyer}, we show that it is indeed the case.
In fact, under some valuation distribution $\dist$ and some realizable buyers' surplus $\buyerutil$, the optimal seller revenue of single price-posting mechanisms can be arbitrarily small compared to that of a randomization over two price-posting mechanisms. 
When restricting to the family of quasi-regular value distributions~\citep{FJ-24}, the gap between these two Pareto frontiers still exists, but for any buyers' surplus $\buyerutil$, the ratio of seller revenues is bounded by a constant. We defer a formal presentation and an analysis to \Cref{sec:revenue approximation with single buyer}.

\subsection{Multi-Buyer Market}
\label{sec:Pareto frontier:multiple buyer}

We now consider the setting with multiple ex-ante symmetric buyers whose values are drawn i.i.d.\ from the same value distribution $\dist$. In contrast to the relatively simple form of Pareto-optimal mechanisms in the single-buyer market, Pareto-optimal mechanisms in the multi-buyer setting might be significantly more complex due to competition among buyers.
In this section, we additionally make the following assumption.\footnote{For multi-buyer markets, our characterizations of Pareto-optimal mechanisms rely on additional distributional assumptions. This is driven by both technical and conceptual considerations. As we illustrate in \Cref{sec:Pareto frontier:reduction}, our analysis hinges on a duality-based (Lagrangian relaxation) approach, and these distributional assumptions yield a cleaner construction of the optimal dual variable (Lagrangian multiplier). From a results perspective, they also enable a structurally simple and interpretable characterization of the frontier. We conjecture that without such assumptions, Pareto-optimal mechanisms may exhibit significantly more complex forms---an intriguing direction we leave for future work.} 
\begin{assumption}
    The value function $\valFunc(\cdot)$ of the valuation distribution $\dist$ is twice differentiable.
\end{assumption}

We first consider the case in which the value distribution $\dist$ is both regular and anti-MHR. Intuitively, this represents the most tractable setting: because $\dist$ is regular (resp.\ anti-MHR), the seller-optimal (resp.\ buyer-optimal) mechanism takes the simple form of a second-price auction with a monopoly reserve (resp.\ a standard second-price auction) \citep{mye-81,HR-08}. As shown below, Pareto-optimal mechanisms in this case can be captured by a natural interpolation between these two mechanisms.

\begin{restatable}{theorem}{PFRegularAntiMHR}
    \label{thm:Pareto frontier:multi buyer:regular and antiMHR distribution}
    Consider a single-item market with $n \geq 1$ buyers whose valuations are drawn i.i.d.\ from a regular and anti-MHR distribution $\buyerdist$. Then the following hold:
\begin{enumerate}[label=(\roman*)]
        \item Any point on the Pareto frontier can be realized by implementing a second-price auction with a reserve $\reserve$ no larger than the lowest monopoly reserve. 
        \item Suppose that $\mech\primed$ and $\mech\doubleprimed$ are two Pareto-optimal second-price auctions with reserves on the Pareto frontier, with buyers' surpluses $\buyerutil\primed$ and $\buyerutil\doubleprimed$ respectively. If $\buyerutil\primed < \buyerutil\doubleprimed$, then the reserve of $\mech\primed$ is no smaller than that of $\mech\doubleprimed$. 
    \end{enumerate}
\end{restatable}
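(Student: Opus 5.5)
We plan to characterize the frontier by maximizing a Lagrangian. For $\dualV\ge 0$, we maximize $\sellerexanteutil[\dist]{\mech}+\dualV\,\buyerexanteutil[\dist]{\mech}$ over $\mechFam$ and show that for regular and anti-MHR $\dist$ the maximizer is a second-price auction with a reserve $\reserve_\dualV\le \valFunc(\quantM)$, nonincreasing in $\dualV$. We then argue that, as $\dualV$ ranges over $[0,\infty)$, these auctions trace the whole frontier.

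\textbf{Step 1: pointwise optimization in quantile space.} By Myerson's payment identity, revenue is $\expect{\sum_i \alloc_i R'(\quant_i)}$ and buyer surplus is $\expect{\sum_i \alloc_i (\valFunc(\quant_i)-R'(\quant_i))}$, where $R'(\quant)=\valFunc(\quant)+\quant\valFunc'(\quant)$ is the virtual value and the surplus term $-\quant\valFunc'(\quant)$ is the inverse hazard rate. The Lagrangian therefore has virtual weight
\[
  \lagCurvePrime(\quant)=R'(\quant)+\dualV\bigl(-\quant\valFunc'(\quant)\bigr)=\valFunc(\quant)+(1-\dualV)\,\quant\valFunc'(\quant),
\]
with $\lagCurve(\quant)=\quant\valFunc(\quant)-\dualV\int_0^\quant t\valFunc'(t)\,dt$. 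The map $\quant\mapsto\valFunc(\quant)$ is nonincreasing. Regularity makes $R'$ nonincreasing, and anti-MHR makes $-\quant\valFunc'(\quant)$ nonincreasing.

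\textbf{Step 2: the weight is monotone for every $\dualV$.} For $\dualV\le 1$ the weight is $(1-\dualV)R'(\quant)+\dualV\valFunc(\quant)$, a nonnegative combination of nonincreasing functions. For $\dualV>1$ it is $R'(\quant)+\dualV(-\quant\valFunc'(\quant))$, again a nonnegative combination of nonincreasing functions. So no ironing is needed. The optimum allocates to the lowest quantile (highest value) whenever its weight is positive, which is exactly a second-price auction with reserve $\reserve_\dualV=\valFunc(\quant_\dualV)$, where $\quant_\dualV$ is the zero crossing of the weight (or $\quant_\dualV=1$ if the weight never crosses). At a quantile where $R'\le 0$, the weight $R'(\quant)-\dualV\quant\valFunc'(\quant)$ increases in $\dualV$ because $-\valFunc'\ge 0$. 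Hence the zero crossing moves to larger quantiles, $\reserve_\dualV$ is nonincreasing in $\dualV$, and $\reserve_0$ is the lowest monopoly reserve. Ties at flat pieces are resolved by taking the largest-quantile choice, which is consistent with the definition of $\quantM$.

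\textbf{Step 3: from the Lagrangian to the frontier.} This is the main obstacle. We must show that every frontier point is achieved by some $\mech[\dualV]\opted$, not merely by a mixture of two of them. For a target surplus $U$ between $\buyerexanteutil[\dist]{\Mye}$ and $\buyerbenchmarkDist$ (the latter attained by the plain second-price auction, since $\dist$ is anti-MHR), the frontier point maximizes revenue subject to surplus at least $U$. This is a concave problem over a convex set, so strong duality gives a multiplier $\dualV^*$. Rather than rely on duality with possibly set-valued maximizers, we argue directly: the surplus of a second-price auction with reserve $\reserve$ is continuous and nonincreasing in $\reserve$, ranging over $[\buyerexanteutil[\dist]{\Mye},\buyerbenchmarkDist]$ as $\reserve$ moves from $\valFunc(\quantM)$ down to $\lval$. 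By the intermediate value theorem we pick $\reserve$ with surplus exactly $U$. We then choose $\dualV$ so that this $\reserve$ equals the zero crossing, i.e.\ $\dualV=R'(\quant)/(\quant\valFunc'(\quant))$ at $\quant$ with $\valFunc(\quant)=\reserve$; this value is nonnegative because $R'\le 0$ beyond $\quantM$. That auction maximizes the Lagrangian, and complementary slackness shows it maximizes revenue among mechanisms with surplus at least $U$. This gives part (i).

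For part (ii), surplus is strictly decreasing in the reserve wherever allocation probability changes, and every frontier point has a unique revenue-maximizing surplus level. So two Pareto-optimal auctions with $\buyerutil\primed<\buyerutil\doubleprimed$ must satisfy $\reserve\primed\ge\reserve\doubleprimed$. Ties arise only when the reserve moves over a region with no mass, which does not change outcomes.
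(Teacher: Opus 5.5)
Your proof is correct. It uses the same Lagrangian framework as the paper: the weight $R_\lambda'(q)=v(q)+(1-\lambda)\,q\,v'(q)$, concavity of $R_\lambda$ so that an SPA whose reserve quantile maximizes $R_\lambda$ is Lagrangian-optimal, and Lagrangian sufficiency at a multiplier where the surplus constraint binds. Where you differ is in how you find the multiplier.

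The paper's main-text proof first confines the dual to $[0,1]$ (\Cref{prop:Pareto frontier:dualVOPT bounded by 1 for anti-MHR}, which uses anti-MHR and the fact that the SPA attains $U^*$). It then selects $q_\lambda$ from the argmax interval of $R_\lambda$ so that it moves monotonically and continuously from $q_m$ to $1$, and applies the intermediate value theorem in $\lambda$. The appendix version runs the $(\lambda,\nu)$ evolving process, gets a mixture of two SPAs, collapses it to a single SPA with the same surplus, and only then invokes \Cref{prop:proofs in Pareto frontier:dualV and quantItvOPT relation} to find some $\lambda^\dagger$ for which that SPA is optimal.

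Your argument is essentially that closing step made explicit and self-contained. You apply the intermediate value theorem in the reserve, where continuity of SPA surplus is immediate, and write the multiplier down by inverting the first-order condition, $\lambda=R'(q)/(q\,v'(q))$. This buys several things:
\begin{itemize}
\item It is shorter.
\item It needs no continuous selection from the argmax correspondence and no strong duality.
\item It shows that $\lambda\le 1$ holds automatically, since $\lambda\le 1\iff v(q)\ge 0$.
\end{itemize}
So the $\lambda>1$ half of your Step~2 is never used. Regularity alone certifies Lagrangian optimality, and anti-MHR is needed essentially only to identify $U^*$ with the SPA surplus. The division is legitimate because anti-MHR together with the no-atom assumption forces $v'<0$ on $(0,1)$; for the reserve $v_L$, simply take $\lambda=1$.

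What the paper's heavier process buys is generality. The same machinery handles the non-regular anti-MHR and the MHR cases, where ironing and mixtures are unavoidable and frontier points cannot be indexed by a single reserve.

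For part (ii), the only fact you need is that SPA surplus is nonincreasing in the reserve; its contrapositive is exactly the claim for any two such auctions. Your remarks about strictness and uniqueness of surplus levels are unnecessary. This direct argument is also tighter than the paper's, which reads (ii) off the constructed family $\lambda(U)$ rather than covering arbitrary Pareto-optimal SPAs.
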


As we have mentioned in \Cref{sec:prelim} and \Cref{fig:dist-hierarchy}, some natural distributions that are both regular and anti-MHR include equal revenue distributions, Pareto Type I distributions with $\alpha \geq 1$ and Pareto Type II distributions with $\alpha \geq 1$.

We next consider the case where the value distribution $\dist$ is anti-MHR but not necessarily regular. In this setting, we show that all \emph{extreme points} on the Pareto frontier, i.e., those points that are not a convex combination of any two other points on the Pareto frontier, can be realized by an {\indirectSPA}.

\begin{restatable}{theorem}{PFAntiMHR}
    \label{thm:Pareto frontier:multi buyer:antiMHR distribution}
Consider a single-item market with $n \geq 1$ buyers whose valuations are drawn i.i.d.\ from an anti-MHR distribution $\buyerdist$. Then the following hold:
    \begin{enumerate}[label=(\roman*)]
        \item Any \emph{extreme point} on the Pareto frontier can be realized by implementing an {\indirectSPA}, with a reserve $\reserve$ no larger than the lowest monopoly reserve.
        \item Suppose that $\mech\primed$ and $\mech\doubleprimed$ are two Pareto-optimal {\indirectSPAs} as above, with buyers' surpluses $\buyerutil\primed$ and $\buyerutil\doubleprimed$ respectively. If $\buyerutil\primed < \buyerutil\doubleprimed$, then the bidding space of $\mech\doubleprimed$ is a refinement of that of $\mech\primed$, and the reserve of $\mech\doubleprimed$ is no larger than that of $\mech\primed$. 
    \end{enumerate}
\end{restatable}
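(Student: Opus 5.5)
Our plan is to follow the Lagrangian framework sketched in the introduction, working in quantile space with interim allocation rules. By the standard Myerson characterization, a symmetric mechanism in $\mechFam$ is described by a non-increasing interim allocation $\alloc(\quant)$ satisfying Border's feasibility constraint. The seller revenue is $n\int_0^1 \revCurve'(\quant)\alloc(\quant)\dd\quant$. The buyers' surplus is $n\int_0^1 (\valFunc(\quant)-\revCurve'(\quant))\alloc(\quant)\dd\quant = n\int_0^1 -\quant\valFunc'(\quant)\alloc(\quant)\dd\quant$. For $\dualV\ge 0$ the Lagrangian objective is then $n\int_0^1 \lagCurvePrime(\quant)\alloc(\quant)\dd\quant$, where $\lagCurve(\quant)\triangleq \revCurve(\quant)+\dualV\int_0^\quant -t\valFunc'(t)\dd t$. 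We maximize it by ironing $\lagCurve$: we take its concave hull $\lagCurveCH$ and run the auction that allocates to the highest ironed Lagrangian virtual value when it is positive. Ties inside ironed intervals are broken uniformly. Exactly as in \citet{mye-81,HR-08}, this is an {\indirectSPA} $\mech[\dualV]\opted$. Its bidding space consists of the values outside the ironed intervals of $\lagCurve$, together with interval endpoints. Its reserve is $\valFunc(\quant_\dualV)$, where $\quant_\dualV=\sup\argmax\lagCurveCH$ (or $\quant_\dualV=1$ if $\lagCurveCH$ is still increasing at $1$).

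Next we use anti-MHR to pin down the structure. Under anti-MHR, $-\quant\valFunc'(\quant)$ is non-increasing, so the surplus term $S(\quant)=\int_0^\quant -t\valFunc'(t)\dd t$ is concave and non-decreasing. Two monotonicity facts then follow. (a) $\lagCurvePrime=\revCurve'+\dualV S'$ differs from $\revCurve'$ by an added non-increasing term whose weight grows with $\dualV$. Hence increasing $\dualV$ makes $\lagCurve$ "more concave": every ironed interval of $\lagCurve[\dualV_2]$ is contained in an ironed interval of $\lagCurve[\dualV_1]$ when $\dualV_2>\dualV_1$. So the set of un-ironed values grows, and the bidding space refines. (b) Since $S'\ge 0$, the peak $\quant_\dualV$ of $\lagCurveCH$ is non-decreasing in $\dualV$. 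It starts at $\quantM$ when $\dualV=0$, so the reserve is non-increasing and never exceeds the lowest monopoly reserve. Fact (a) is the step I expect to be the main obstacle. To prove it, I will show that a chord of $\lagCurve$ lying above the curve at $\dualV_2$ also lies above it at $\dualV_1$. This reduces to comparing $\lagCurve[\dualV_2]-\lagCurve[\dualV_1]=(\dualV_2-\dualV_1)S$ with concave $S$: adding a concave function can only shrink the regions where the function lies strictly below its concave hull. Making this rigorous needs care with the hull's contact set.

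Then we match extreme points. Any extreme point of the convex, compact feasible set of (revenue, surplus) pairs on the Pareto frontier between $\Mye$ and the buyer-optimal point maximizes $\text{Revenue}+\dualV\cdot\text{Surplus}$ for some $\dualV\in[0,\infty]$. The endpoint $\dualV=\infty$ is the buyer-optimal mechanism, which under anti-MHR is the second-price auction with no reserve \citep{HR-08}. An extreme point is the unique maximizer in its direction, or a limit of such maximizers. If the Lagrangian maximizer is not unique (ties in the ironing, or a flat part of $\lagCurveCH$ at the peak), the maximizing set is a segment whose endpoints are again achieved by {\indirectSPAs}, obtained by choosing bidding-space or reserve boundaries one way or the other. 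This gives (i). For (ii), any two Pareto-optimal {\indirectSPAs} of this form correspond to multipliers $\dualV'\le\dualV''$, and larger surplus forces the larger multiplier by a standard exchange argument. Facts (a) and (b) then give refinement of the bidding space and a weakly smaller reserve. At ties we choose the consistent endpoint, so the ordering is preserved.
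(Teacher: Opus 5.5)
Your plan is sound and shares the paper's core. You use the same mixed curve $R_\lambda=R+\lambda S$ (the paper writes it as $\lambda\int_0^q v+(1-\lambda)R$), iron it to get an indirect Vickrey auction for each $\lambda$, and rely on the same two monotonicity facts. Where you differ is in how Lagrangian maximizers are connected to the frontier.

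\textbf{The paper's route.} It first uses anti-MHR to restrict the multiplier to $[0,1]$: for $\lambda\ge 1$ the mixed curve is concave and increasing, so the second-price auction is optimal. It then builds an explicit family of Lagrangian-optimal allocations $X^*_{\lambda,\nu}$ that is pointwise continuous and non-decreasing in $(\lambda,\nu)$. At the countably many $\lambda$ where the canonical solution jumps, it interpolates linearly between the left-limit solution and the canonical one, and its main lemma shows every interpolant is still optimal at $\lambda$. The intermediate value theorem on buyers' surplus, together with complementary slackness, then realizes every surplus level between the Myerson auction and $U^*$. Extreme points are the members with $\nu\in\{0,1\}$, and (ii) is read off from the monotonicity of the family.

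\textbf{Your route.} You get a multiplier for each extreme Pareto point directly from convexity of the payoff set (supporting lines), and then analyze the face of maximizers at that multiplier. This gives strong duality without the interpolation lemma or the bound $\lambda^*\le 1$. Your concavity-of-$S$ argument also gives shrinking of ironed sets for all $\lambda\ge 0$ at once, whereas the paper's use of concavity of $\int_0^q v$ only covers $\lambda<1$. In exchange, the paper's constructive route needs neither compactness of the payoff set nor limits of exposed points. It also parametrizes every frontier point, not only the extreme ones, by a single monotone family, which yields the two-auction randomization corollary and the ordering in (ii) directly.

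\textbf{Where the work actually is.} Fact (a) is not the obstacle. The chord argument you sketch is already complete: subtracting the concave $(\lambda_2-\lambda_1)S$ preserves ``strictly below a chord,'' exactly as in the paper's proposition on ironed intervals. The step that needs an argument is the one you state in a single sentence, the non-unique maximizers.
\begin{itemize}
\item \emph{Face endpoints.} Write $w(q)=-qv'(q)$. Surplus equals $n\bigl[X(1)w(1)-\int_0^1 X w'\bigr]$, and $w'\le 0$ under anti-MHR, so surplus is pointwise non-decreasing in $X$. Hence the two endpoints of the face at $\lambda$ are the pointwise-largest optimal allocation (pool only strictly ironed intervals, reserve at $r(\lambda)$) and the pointwise-smallest one (pool every linear piece of $\bar R_\lambda$, reserve at $\ell(\lambda)$). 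Both are indirect Vickrey auctions.
\item \emph{Ordering in (ii).} Your facts (a) and (b) only compare the canonical, largest solutions at different multipliers. ``Choosing the consistent endpoint at ties'' does not cover comparing a lower face endpoint at $\lambda''$ with solutions at $\lambda'<\lambda''$. Nor does it show that this endpoint's reserve $v(\ell(\lambda''))$ is below the monopoly reserve.
\item \emph{Cleanest fix.} Realize that lower endpoint as the pointwise limit of canonical solutions as $\lambda\uparrow\lambda''$. The limit is optimal at $\lambda''$ by continuity of the optimal Lagrangian value, and it has the least surplus on that face by your exchange argument. Facts (a) and (b) then pass to the limit.
\item \emph{Alternative fix.} Strengthen (a) and (b) to two statements: linear pieces of $\bar R_{\lambda''}$ lie inside strictly ironed intervals of $R_{\lambda'}$, and $r(\lambda')\le \ell(\lambda'')$ (up to degenerate cases where $v$ is constant).
\end{itemize}
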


As a direct corollary, we have the following that characterizes any point on the Pareto frontier:
\begin{corollary} 
Consider a single-item market with $n \geq 1$ buyers whose valuations are drawn i.i.d.\ from an anti-MHR distribution $\buyerdist$.
    Any point on the Pareto frontier can be realized by a mechanism that fixes a probability $\mixprob \in [0, 1]$ ex-ante, then with probability $\mixprob$, realizes an {\indirectSPA} $\mech\primed$ with bidding space $\bidspace\primed$ and reserve $\reserve\primed$, and realizes another {\indirectSPA} $\mech\doubleprimed$ with bidding space $\bidspace\doubleprimed$ and reserve $\reserve\doubleprimed$ with probability $1 - \mixprob$. Both $\reserve\primed$ and $\reserve\doubleprimed$ are no larger than the lowest monopoly reserve. Additionally, $\bidspace\doubleprimed$ is a refinement of $\bidspace\primed$, and $\reserve\doubleprimed$ is no larger than $\reserve\primed$.
\end{corollary}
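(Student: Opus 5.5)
The corollary follows from \Cref{thm:Pareto frontier:multi buyer:antiMHR distribution} together with the convexity of the set of achievable (seller revenue, buyers' surplus) pairs. The plan has three steps: establish convexity and compactness of the achievable set, reduce every frontier point to a combination of two extreme frontier points, and then apply the theorem to those two points.

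First, define $S \triangleq \{(\sellerexanteutil[\buyerdist]{\mech}, \buyerexanteutil[\buyerdist]{\mech}) \suchthat \mech \in \mechFam\}$. This set is convex. Indeed, for any $\mech', \mech'' \in \mechFam$ and $\mixprob \in [0,1]$, consider the mechanism that draws $\mech'$ with probability $\mixprob$ and $\mech''$ otherwise, before bids are solicited. It is ex-post feasible, BIC, interim IR and symmetric, since each property is preserved under ex-ante mixing. Both ex-ante payoffs are linear in $\mixprob$.

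I will also use that $S$ is compact, so that the Pareto frontier is a closed, monotone curve. This follows from boundedness of payoffs by $\OPT_{\buyerdist}$ and closedness of the space of interim allocation rules under Border-type feasibility. I expect this to be the main technical point. If needed, I would instead rely on the paper's Lagrangian parametrization $\{\mech[\dualV]\opted\}$, which traces the extreme points with continuous surplus, to avoid a general compactness argument.

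Second, take any point $z$ on the Pareto frontier. The frontier is the upper-right boundary of the compact convex set $S$, and $z$ lies on it. By the Krein--Milman/Carathéodory argument in dimension one along the boundary, $z$ is either an extreme point of $S$ or lies on a segment $[z', z'']$ of the boundary whose endpoints are extreme points. Every point of such a segment is Pareto-optimal. Because the segment lies on the frontier and the frontier is monotone (surplus increasing, revenue decreasing), its endpoints are extreme points of the frontier. Hence $z = \mixprob z' + (1-\mixprob) z''$, where $z'$ and $z''$ are extreme frontier points with buyers' surpluses $\buyerutil' \le \buyerutil''$; if $z$ is itself extreme, take $z' = z'' = z$.

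Third, apply \Cref{thm:Pareto frontier:multi buyer:antiMHR distribution}(i). It gives \indirectSPAs{} $\mech'$ and $\mech''$ realizing $z'$ and $z''$, with reserves no larger than the lowest monopoly reserve. The mixture of $\mech'$ with probability $\mixprob$ and $\mech''$ with probability $1-\mixprob$ then realizes $z$.

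It remains to obtain the ordering. If $\buyerutil' < \buyerutil''$, part (ii) of the theorem yields that $\bidspace''$ refines $\bidspace'$ and that $\reserve'' \le \reserve'$. If $\buyerutil' = \buyerutil''$, then $z' = z'' = z$, since Pareto-optimal points with equal surplus have equal revenue. In that case we take $\mech'' = \mech'$, and the refinement and reserve conditions hold trivially.
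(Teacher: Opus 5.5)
Your argument is correct and is essentially the ``direct corollary'' derivation the paper intends: convexity of the achievable (revenue, surplus) set reduces every frontier point to a mixture of two extreme frontier points, and \Cref{thm:Pareto frontier:multi buyer:antiMHR distribution}(i)--(ii) supplies the two implementing {\indirectSPAs} together with their refinement and reserve ordering. The only difference is that in the paper's appendix the decomposition comes directly from the Lagrangian construction: every frontier point is realized by $\accAllocOPT[\dualV,\evol]$, an interpolation between the canonical primal solutions at $\dualVMinus$ and $\dualV$. So your compactness step is not needed there, nor is your unproved (but true) claim that every point of such a boundary segment is Pareto-optimal, which holds because the segment must have strictly negative slope and hence lies on a supporting line with strictly positive weights.
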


We may need to restrict the feasible bidding space. In fact, as shown in \cite{HR-08}, the optimal mechanism for buyers' surplus in this case is giving the item uniformly at random to one buyer for free, which is equivalent to setting the feasible bid space as a singleton $\{\lval\}$. Randomizing over two {\indirectSPAs} may also be necessary. 
The restriction on the feasible bid space (i) arises primarily from the need to iron the virtual valuation function to optimize seller revenue \citep[cf.][]{mye-81}. Meanwhile, the randomization over two {\indirectSPAs} (ii) is necessary to balance the seller's ex-ante utility and the buyers’ ex-ante surplus in order to achieve a continuous Pareto frontier, mirroring the structure observed in the single-buyer market (\Cref{prop:Pareto frontier:single buyer general}).

Furthermore, \Cref{thm:Pareto frontier:multi buyer:antiMHR distribution} reveals a monotonicity structure of Pareto-optimal mechanisms along the Pareto frontier across their design parameters and two sides' ex-ante utility/surplus. Specifically, suppose that $\mech\primed$ realizes an extreme point on the Pareto frontier and $\mech\doubleprimed$ realizes another, then the following conditions are equivalent:
\begin{align*}
    \bidspace\primed \subseteq \bidspace\doubleprimed
    \Leftrightarrow
    \reserve\primed \geq \reserve\doubleprimed
    \Leftrightarrow
    \sellerexanteutil[\dist]{\mech\primed} \geq \sellerexanteutil[\dist]{\mech\doubleprimed}
    \Leftrightarrow
    \buyerexanteutil[\dist]{\mech\primed} \leq \buyerexanteutil[\dist]{\mech\doubleprimed}.
\end{align*}

Finally, we characterize the Pareto frontier when the value distribution is MHR in \Cref{thm:Pareto frontier:multi buyer:MHR distribution} below. 
Recall that for MHR distributions, the seller-optimal mechanism is the second-price auction with the lowest monopoly reserve. On the other hand, as shown in \citet{HR-08}, under MHR distributions, competition among buyers only reduces their surplus; consequently, the buyer-optimal mechanism allocates the item to a buyer picked uniformly at random, at zero price. 
As the following theorem shows, Pareto-optimal mechanisms can be viewed as an interpolation between these two extremes: 
it either (i) implements a second-price auction with a reserve price strictly below the monopoly reserve, but without restricting the bidding space, or (ii) employs a randomization of two {\indirectSPAs} with no reserve to ensure that the item is always allocated, but restricts the bidding space to dampen competition among buyers. 
Interestingly, there is no need for simultaneously restricting the bidding space and setting a reserve. 

\begin{restatable}{theorem}{PFMHR}
    \label{thm:Pareto frontier:multi buyer:MHR distribution}
Consider a single-item market with $n \geq 1$ buyers whose valuations are drawn i.i.d.\ from an MHR distribution $\buyerdist$.
    Let $\buyerutilmid$ be buyers' surplus under the standard second-price auction. Then the following hold:
    \begin{enumerate}[label=(\roman*)]
        \item If $\buyerexanteutil[\dist]{\Mye} \leq \buyerutil \leq \buyerutilmid$, any point on the Pareto frontier with buyers' surplus $\buyerutil$ can be implemented by a second-price auction with a reserve $\reserve$ no larger than the lowest 
        monopoly reserve. 
        \item Suppose that $\mech\primed$ and $\mech\doubleprimed$ are two Pareto-optimal second-price auctions with reserves on the Pareto frontier as above, with buyers' surpluses $\buyerutil\primed$ and $\buyerutil\doubleprimed$ respectively. If $\buyerutil\primed < \buyerutil\doubleprimed$, then the reserve of $\mech\primed$ is no smaller than that of $\mech\doubleprimed$.
        \item If $\buyerutilmid \leq \buyerutil \leq \buyerbenchmarkDist$, and the point on the Pareto frontier with buyers' surplus $\buyerutil$ is an extreme point, it can be realized by an {\indirectSPA} with zero reserve. 
        \item Suppose that $\mech\primed$ and $\mech\doubleprimed$ are two Pareto-optimal {\indirectSPAs} as above, with buyers' surpluses $\buyerutil\primed$ and $\buyerutil\doubleprimed$ respectively. If $\buyerutil\primed < \buyerutil\doubleprimed$, then the bidding space of $\mech\primed$ is a refinement of that of $\mech\doubleprimed$.
    \end{enumerate}
\end{restatable}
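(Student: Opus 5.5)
We reduce the characterization to a one-parameter family of Lagrangian problems, as outlined in the introduction. For $\dualV\ge 0$ we maximize $\sellerexanteutil[\dist]{\mech}+\dualV\,\buyerexanteutil[\dist]{\mech}$ over $\mech\in\mechFam$. Writing welfare as revenue plus surplus and using Myerson's payment identity in quantile space, the interim allocation $\accAlloc(\quant)$ of a symmetric mechanism gives an objective of the form $\int_0^1 \accAlloc(\quant)\,\bigl(\revCurve'(\quant)+\dualV\,(-\quant\valFunc'(\quant))\bigr)\,d\quant$. Here the buyer's marginal contribution to surplus is $\valFunc(\quant)-\revCurve'(\quant)=-\quant\valFunc'(\quant)$. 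Equivalently, the objective is $\int \accAlloc\,\lagCurvePrime(\quant)\,d\quant$ with the Lagrangian curve $\lagCurve(\quant)=\revCurve(\quant)+\dualV\,(\revCurve(\quant)-\int_0^\quant \valFunc)$ up to sign conventions. Border's/Matthews' feasibility for i.i.d.\ symmetric interim rules then shows that the optimum irons $\lagCurve$ (takes its concave hull $\lagCurveCH$), allocates to the highest ironed value, and excludes quantiles where $\lagCurveCHPrime<0$. This optimum is exactly an \indirectSPA\ $\mech[\dualV]\opted$: its bid space is the set of non-ironed values and its reserve is at the last quantile where $\lagCurveCHPrime\ge 0$.

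Under MHR, $-\quant\valFunc'(\quant)$ is non-decreasing in $\quant$, so the surplus term has a convex contribution and $\lagCurve$ acquires convex pieces. Two regimes then appear. The first regime is $\lagCurvePrime(1)<0$. We want the reserve quantile $\quant_\dualV$ to be the root of $\revCurve'(\quant)-\dualV\quant\valFunc'(\quant)=0$, with no ironing below it. To see this, note that on $[0,\quant_\dualV]$ the function $\revCurve'$ is decreasing (MHR implies regular), while $-\dualV\quant\valFunc'$ is increasing. I would show the sum is still single-crossing and that $\lagCurve$ is concave on the relevant prefix, so that $\lagCurve=\lagCurveCH$ up to $\quant_\dualV$ and no ironing is needed. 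The second regime is $\dualV$ large enough that $\lagCurveCH$ is non-decreasing on all of $[0,1]$. Then the reserve is zero (always allocate), and the convexity coming from the surplus term forces ironing; the ironed interval grows with $\dualV$, so the bid space shrinks. The threshold between the regimes is the $\dualV$ at which the reserve hits $\lval$. At that $\dualV$, $\mech[\dualV]\opted$ is the plain second-price auction, with surplus $\buyerutilmid$.

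Next we use monotone comparative statics in $\dualV$. The reserve quantile $\quant_\dualV$ is increasing in $\dualV$, so the reserve decreases. The ironed intervals of $\lagCurveCH$ are nested increasing, since the added term $\dualV\cdot(\text{convex})$ only enlarges non-concave regions. The buyers' surplus $\buyerexanteutil[\dist]{\mech[\dualV]\opted}$ is continuous in $\dualV$ for the first regime, because the reserve moves continuously. For the second regime, continuity can fail exactly where an ironed interval jumps, and this is what produces non-extreme points handled by randomization. These facts give (ii) and (iv).

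Finally, we need the exactness of the reduction. For a target $\buyerutil\in[\buyerexanteutil[\dist]{\Mye},\buyerbenchmarkDist]$, a Pareto point maximizes revenue subject to surplus at least $\buyerutil$. By weak duality plus the intermediate value argument (or the supporting hyperplane at an extreme point), there is a $\dualV\opted$ with $\buyerexanteutil[\dist]{\mech[\dualV\opted]\opted}=\buyerutil$, so complementary slackness holds. For $\buyerutil\le\buyerutilmid$, continuity covers every point, which gives (i) with no need to restrict extreme points. For (iii), the supporting-line argument gives $\dualV\opted$ directly at extreme points. The main obstacle is regime one: establishing that no ironing occurs before the reserve quantile. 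Concretely, one must show $\lagCurvePrime$ is decreasing wherever it is nonnegative, using MHR (that $\quant\valFunc'$ is non-increasing) together with twice differentiability to control $\lagCurveDoublePrime=\revCurve''-\dualV(\quant\valFunc')'$. I would try to prove that once $\lagCurveDoublePrime$ turns positive, $\lagCurvePrime$ is already negative, or to argue directly on the hull.
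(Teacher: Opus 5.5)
Your outline follows the paper's route: Lagrangify the surplus constraint, iron the mixed curve $\lagCurve$, read off an indirect Vickrey auction, and then combine monotone comparative statics in $\dualV$ with an intermediate-value/complementary-slackness argument. However, the step you flag as ``the main obstacle'' is left unproved, and parts (i)--(ii) rest entirely on it. Until you show that no ironing occurs below the reserve in your first regime, you have not shown that the optimizers there are plain second-price auctions with a reserve. Your proposed routes remain conjectures: single crossing of a decreasing $\revCurve'$ and an increasing $-\dualV\quant\valFunc'$, or ``once $\lagCurveDoublePrime$ turns positive, $\lagCurvePrime$ is already negative''. Relatedly, you never locate the switch between regimes, nor show that every $\dualV$ past it yields a non-decreasing $\lagCurveCH$ (hence zero reserve). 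That is what (iii) needs.

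The missing idea is a one-line identity that your sign slip hides. The mixed curve is $\revCurve+\dualV(\int_0^{\quant}\valFunc-\revCurve)$, not $\revCurve+\dualV(\revCurve-\int_0^{\quant}\valFunc)$; also, the integrand multiplies the interim allocation $\alloc$, not $\accAlloc$. With the correct sign,
\[
\lagCurve(\quant)=(1-\dualV)\,\revCurve(\quant)+\dualV\int_0^{\quant}\valFunc(\qquant)\dd\qquant,
\qquad
\lagCurvePrime(\quant)=\valFunc(\quant)+(\dualV-1)\bigl(-\quant\valFunc'(\quant)\bigr).
\]
For $\dualV\in[0,1]$, $\lagCurve$ is a convex combination of two concave functions. $\revCurve$ is concave because MHR implies regularity, and $\int_0^{\quant}\valFunc$ is concave because $\valFunc$ is non-increasing. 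So $\lagCurve$ is concave on all of $[0,1]$; equivalently, $\lagCurveDoublePrime=\valFunc'+(1-\dualV)(\quant\valFunc')'\le 0$. No ironing ever occurs, and your conjectured condition is vacuous there.

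For $\dualV\ge 1$, the derivative formula gives $\lagCurvePrime\ge 0$ everywhere, so the reserve is zero. Your first regime therefore lies inside $\dualV<1$, and the plain second-price auction (surplus $\buyerutilmid$) is optimal at $\dualV=1$. For $1<\dualV<\dualVPrime$, write $\lagCurve$ as a convex combination of the concave $\lagCurve[1]$ and $\lagCurve[\dualVPrime]$. This turns your heuristic ``the convex term only enlarges non-concave regions'' into a proof that ironed intervals are nested increasing, which gives (iv).

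One smaller point: in the first regime the reserve need not move continuously in $\dualV$, because a flat top of $\lagCurve$ makes the argmax an interval. To hit every $\buyerutil$ up to $\buyerutilmid$, use the fact that every reserve quantile in that argmax is optimal for the same $\dualV$, and sweep through it.
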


Similar to the anti-MHR case, we have the following corollary:
\begin{corollary}
Consider a single-item market with $n \geq 1$ buyers whose valuations are drawn i.i.d.\ from an MHR distribution $\buyerdist$.
    Any point on the Pareto frontier with buyers' surplus no smaller than $\buyerutilmid$ can be realized by a mechanism that fixes a probability $\mixprob \in [0, 1]$ ex-ante and runs as follows: With probability $\mixprob$, it realizes an {\indirectSPA} $\mech\primed$ with bidding space $\bidspace\primed$ and zero reserve, and realizes another {\indirectSPA} $\mech\doubleprimed$  with bidding space $\bidspace\doubleprimed$ and zero reserve, with probability $1 - \mixprob$. Meanwhile, $\bidspace\primed$ is a refinement of $\bidspace\doubleprimed$. 
\end{corollary}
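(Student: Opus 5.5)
The corollary should follow from \Cref{thm:Pareto frontier:multi buyer:MHR distribution}(iii)--(iv) together with a convexity argument, in the same way the anti-MHR corollary follows from \Cref{thm:Pareto frontier:multi buyer:antiMHR distribution}.

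First, I will establish that the set $S$ of achievable pairs $(\sellerexanteutil[\dist]{\mech}, \buyerexanteutil[\dist]{\mech})$ over $\mech \in \mechFam$ is convex and compact. Convexity holds because $\mechFam$ is closed under ex-ante randomization: a mixture of two ex-post feasible, BIC, interim IR, symmetric mechanisms is again such a mechanism, and both payoffs are linear in the mixing weight. Compactness follows from the standard interim reduced-form parametrization (Border-type constraints on the interim allocation, with payments pinned down by Myerson's payment identity). Hence the Pareto frontier is the upper-right boundary of a compact convex planar set. Any point on it is either an extreme point of $S$, or lies on a segment of that boundary whose two endpoints are extreme points which are themselves Pareto-optimal. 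Restricting to frontier points with buyers' surplus $\buyerutil \ge \buyerutilmid$, the relevant endpoints have surpluses $\buyerutil\primed \le \buyerutil \le \buyerutil\doubleprimed$.

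Next I must check that both endpoints lie in the range where part (iii) applies, i.e.\ $\buyerutil\primed \ge \buyerutilmid$. This is the main obstacle. By part (i), the frontier point with surplus exactly $\buyerutilmid$ is realized by a second-price auction with reserve (namely the standard second-price auction, reserve $\lval$, which maximizes welfare). So it is also realized by an {\indirectSPA} with zero reserve and unrestricted bid space, since with MHR and zero reserve the item is always allocated exactly as in the plain Vickrey auction. If $\buyerutil\primed < \buyerutilmid < \buyerutil$, the frontier point at $\buyerutilmid$ would lie in the interior of a boundary segment, so the segment would also contain welfare-optimal points on either side of it. Every point on such a segment would then have welfare $\OPT_\dist$. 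However, points with surplus below $\buyerutilmid$ are realized by second-price auctions with a positive reserve, which lose welfare. For any such point whose reserve lies strictly above $\lval$, this is a contradiction. Therefore I can take $\buyerutilmid$ itself as the left endpoint, which is an extreme point realized by a zero-reserve {\indirectSPA}. The right endpoint then satisfies $\buyerutil\doubleprimed \in [\buyerutilmid, \buyerbenchmarkDist]$. (If instead $\buyerutilmid$ is not extreme, I will simply replace the left endpoint by the point at $\buyerutilmid$, which has the representation just described.)

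Finally, I apply part (iii) to both endpoints to obtain zero-reserve {\indirectSPAs} $\mech\primed,\mech\doubleprimed$, and choose $\mixprob$ to solve $\mixprob \buyerutil\primed + (1-\mixprob)\buyerutil\doubleprimed = \buyerutil$. Since the two endpoints and the target point are collinear, matching the surplus also matches the revenue. The refinement relation, with $\bidspace\primed$ a refinement of $\bidspace\doubleprimed$, follows directly from part (iv) because $\buyerutil\primed < \buyerutil\doubleprimed$. The degenerate case $\buyerutil\primed = \buyerutil\doubleprimed$ is handled by $\mixprob = 1$.
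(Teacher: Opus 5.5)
Your proof is correct and follows the route the paper intends when it presents this as a direct consequence of \Cref{thm:Pareto frontier:multi buyer:MHR distribution}(iii)--(iv) together with convexity of the payoff set. In the appendix, the two-point mixture instead comes straight from the evolving process: for $\dualV \geq 1$ the solution is $(1-\evol)$ times the canonical solution at $\dualV$ plus $\evol$ times the one at $\dualVPlus$, both zero-reserve since $\rightS(\dualV)=1$. Your extra check that no frontier segment straddles $\buyerutilmid$ is a point the paper leaves implicit, and your parenthetical fallback already settles it without the welfare argument, since the second-price auction realizes the point at $\buyerutilmid$ and its unrestricted bid space refines every other bidding space.
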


\Cref{thm:Pareto frontier:multi buyer:MHR distribution} reveals a phase transition in Pareto-optimal mechanisms along the Pareto frontier. When buyers' surplus is below a threshold $\buyerutilmid$ (which is buyers' surplus under the second-price auction), these mechanisms are second-price with reserve, resembling the case when the valuation distribution is regular and anti-MHR, and the reserve decreases as buyers' surplus increases. However, when buyers' surplus is above $\buyerutilmid$, Pareto-optimal mechanisms are randomization over at most two {\indirectSPAs} with zero reserve, one being a refinement of the other, resembling the case of anti-MHR valuation distributions. Nevertheless, the monotonicity direction is reversed: In the anti-MHR case, to implement a high buyers' surplus, the bidding space expands, while it shrinks in the MHR case.

\subsection{A Unified Framework for Characterizing Pareto Frontier}
\label{sec:Pareto frontier:reduction}

In this section, we present a unified analysis framework for characterizing the Pareto frontier, which can be applied to prove all the above results, both for the single-buyer market and the multi-buyer market. Our main idea is to transfer this problem to a constrained optimization problem that maximizes seller revenue subject to a lower bound constraint on buyers' surplus. We then solve this problem via the Lagrangian dual method. This problem could be of independent interest itself.

We start with the following result. 
\begin{lemma} \label{lem:Pareto frontier:transfer to constrained optimization}
    Any mechanism $\mech\primed \in \mechFam$ is Pareto-optimal, if and only if $\mech\primed$ is an optimal solution to the following optimization problem:     
    \begin{gather} 
        \max_{\mech \in \mechFam} \sellerexanteutil[\dist]{\mech}, 
        \quad 
        \subjectto \buyerexanteutil[\dist]{\mech} \geq \buyerutil\primed. 
        \label{eq:constrained optimization}
    \end{gather}
    In the above, $\buyerutil\primed = \buyerexanteutil[\dist]{\mech\primed}$.
\end{lemma}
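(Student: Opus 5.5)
We prove the two directions separately. Only two structural facts about $\mechFam$ are needed. First, $\mechFam$ is convex in payoff space: for $\mech\primed,\mech\doubleprimed\in\mechFam$ and $\mixprob\in[0,1]$, the mechanism that runs $\mech\primed$ with probability $\mixprob$ and $\mech\doubleprimed$ otherwise is still ex-post feasible, BIC, interim IR and symmetric. Second, $\sellerexanteutil[\dist]{\cdot}$ and $\buyerexanteutil[\dist]{\cdot}$ are both linear in $\mixprob$ under such a mixture.

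\emph{Only if.} Let $\mech\primed$ be Pareto-optimal and set $\buyerutil\primed=\buyerexanteutil[\dist]{\mech\primed}$. Then $\mech\primed$ is feasible for \eqref{eq:constrained optimization}. Suppose some feasible $\mech$ had $\sellerexanteutil[\dist]{\mech}>\sellerexanteutil[\dist]{\mech\primed}$. Feasibility gives $\buyerexanteutil[\dist]{\mech}\ge\buyerutil\primed$, so $\mech$ Pareto-dominates $\mech\primed$, a contradiction. Hence $\mech\primed$ is optimal for \eqref{eq:constrained optimization}. This direction is immediate.

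\emph{If.} Let $\mech\primed$ be optimal for \eqref{eq:constrained optimization}, and suppose $\mech\doubleprimed$ Pareto-dominates it. If the revenue inequality in the domination is strict, then $\mech\doubleprimed$ is feasible for \eqref{eq:constrained optimization} with strictly larger revenue, contradicting optimality.

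The remaining case is $\sellerexanteutil[\dist]{\mech\doubleprimed}=\sellerexanteutil[\dist]{\mech\primed}$ and $\buyerexanteutil[\dist]{\mech\doubleprimed}>\buyerutil\primed$. Here we convert the slack in surplus into revenue by mixing. Consider $\mech_\eps$, which runs $\Mye$ with probability $\eps$ and $\mech\doubleprimed$ with probability $1-\eps$. By linearity,
\begin{align*}
\sellerexanteutil[\dist]{\mech_\eps}&=(1-\eps)\sellerexanteutil[\dist]{\mech\doubleprimed}+\eps\,\sellerbenchmarkDist,\\
\buyerexanteutil[\dist]{\mech_\eps}&=(1-\eps)\buyerexanteutil[\dist]{\mech\doubleprimed}+\eps\,\buyerexanteutil[\dist]{\Mye}.
\end{align*}
The surplus is continuous in $\eps$ and strictly exceeds $\buyerutil\primed$ at $\eps=0$, so it stays $\ge\buyerutil\primed$ for all sufficiently small $\eps>0$. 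If $\sellerexanteutil[\dist]{\mech\doubleprimed}<\sellerbenchmarkDist$, the revenue of $\mech_\eps$ is strictly larger than that of $\mech\primed$ for every $\eps>0$. This contradicts the optimality of $\mech\primed$.

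\emph{Main obstacle.} The difficulty is the degenerate case $\sellerexanteutil[\dist]{\mech\primed}=\sellerbenchmarkDist$. Then $\mech\primed$ is revenue-optimal, and the mixing trick cannot raise revenue. Indeed, a revenue-optimal mechanism with needlessly low surplus is optimal for its own constraint but is not Pareto-optimal. An example is posting a higher monopoly price when $\argmax\revCurve$ is not a singleton, or breaking ties adversarially in an ironed region.

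I would handle this case by invoking the remark after the definition of Pareto optimality: among revenue-optimal mechanisms, the reference point on the frontier is $\Mye$, which uses the lowest monopoly reserve. I would then read the lemma as applying to $\buyerutil\primed\ge\buyerexanteutil[\dist]{\Mye}$, which is exactly the range used later in the paper. In that range, a revenue-optimal $\mech\primed$ with $\buyerexanteutil[\dist]{\mech\primed}\ge\buyerexanteutil[\dist]{\Mye}$ cannot be dominated. To show this, I would prove that no mechanism attaining $\sellerbenchmarkDist$ has surplus above $\buyerexanteutil[\dist]{\Mye}$, using Myerson's characterization: every revenue-optimal mechanism allocates only to nonnegative ironed virtual values, and allocates constantly on ironed intervals. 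Its surplus is therefore maximized by the lowest reserve together with the allocation that $\Mye$ uses. If this argument fails, I would instead restate the lemma with that restriction explicitly made.
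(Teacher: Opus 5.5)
Your proposal is correct and follows the paper's proof. The ``only if'' direction is identical. For ``if'', the paper likewise mixes the dominating mechanism with $\Mye$ and closes the remaining case with the same fact that $\Mye$ maximizes buyers' surplus among revenue-optimal mechanisms. The only difference is that the paper chooses the mixing weight so the surplus equals $\buyerutil\primed$ exactly, whereas you take an infinitesimal $\eps$. Your version has the small advantage of needing no range restriction when $\sellerexanteutil[\dist]{\mech\primed} < \sellerbenchmarkDist$.

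You are also right that the ``if'' direction can fail without $\buyerutil\primed \ge \buyerexanteutil[\dist]{\Mye}$; the paper's exact mix silently presupposes this. However, restricting to that range does not by itself rescue the degenerate case. Your fallback therefore still needs the surplus-maximality of $\Mye$ among revenue-optimal mechanisms, or else a threshold defined as the largest surplus attained by any revenue-optimal mechanism.
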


\begin{proof}[Proof of \Cref{lem:Pareto frontier:transfer to constrained optimization}]
    We prove the two sides in order. 
    
    \xhdr{``Only if'' side.} Suppose that $\mech\primed$ is not an optimal solution to \eqref{eq:constrained optimization}. Then there exists a mechanism $\mech$ such that $\sellerexanteutil[\dist]{\mech} > \sellerexanteutil[\dist]{\mech\primed}$ and $\buyerexanteutil[\dist]{\mech} \geq \buyerexanteutil[\dist]{\mech\primed}$, contradicting that $(\sellerexanteutil[\dist]{\mech\primed}, \buyerexanteutil[\dist]{\mech\primed})$ is on the Pareto frontier. 
    
    \xhdr{``If'' side.} Suppose $\mech\primed$ is an optimal solution to \eqref{eq:constrained optimization}, then there is no mechanism such that $\sellerexanteutil[\dist]{\mech} > \sellerexanteutil[\dist]{\mech\primed}$ and $\buyerexanteutil[\dist]{\mech} \geq \buyerexanteutil[\dist]{\mech\primed}$ both hold. 
    We now show that it is also impossible that $\sellerexanteutil[\dist]{\mech} = \sellerexanteutil[\dist]{\mech\primed}$ and $\buyerexanteutil[\dist]{\mech} > \buyerexanteutil[\dist]{\mech\primed}$ both hold. Otherwise, consider the mechanism that randomizes between $\Mye$ and $\mech$ such that the buyers' surplus is $\buyerexanteutil[\dist]{\mech\primed}$. Since $\Mye$ maximizes seller revenue, we know that seller revenue under this randomized mechanism is at least $\sellerexanteutil[\dist]{\mech}$. By the optimality of $\mech\primed$, we know that $\sellerexanteutil[\dist]{\Mye} = \sellerexanteutil[\dist]{\mech\primed} = \sellerexanteutil[\dist]{\mech}$. Notice that $\Mye$ is the mechanism that maximizes buyers' surplus among all seller-utility-maximizing mechanisms; this implies a contradiction.
\end{proof}

Under \Cref{lem:Pareto frontier:transfer to constrained optimization}, to characterize the full Pareto frontier, we only need to consider the solutions to the following program when $\buyerutil \in [\buyerexanteutil[\dist]{\Mye}, \buyerbenchmarkDist]$:
\begin{gather*} 
    \max_{\mech \in \mechFam} \sellerexanteutil[\dist]{\mech}, 
    \quad 
    \subjectto \buyerexanteutil[\dist]{\mech} \geq \buyerutil. 
\end{gather*}
This is to say, characterizing Pareto-optimal solutions is equivalent to studying mechanisms that maximize seller revenue under a buyers' surplus constraint. 
Now, for the above constrained optimization problem, we consider the dual problem with Lagrangian variable $\dualV$ on the surplus constraint. 
\begin{align*}
\min_{\dualV \geq 0} \max_{\mech \in \mechFam} \sellerexanteutil[\dist]{\mech} + \dualV \cdot \buyerexanteutil[\dist]{\mech} - \dualV \cdot \buyerutil. 
\end{align*}
Since for any given $\dualV \geq 0$, the term $- \dualV \cdot \buyerutil$ is a constant that does not depend on the mechanism, we aim to find a mechanism that maximizes the following expression:
\begin{align*}
    \sellerexanteutil[\dist]{\mech} + \dualV \cdot \buyerexanteutil[\dist]{\mech} &= \sellerexanteutil[\dist]{\mech} + \dualV \cdot \inParentheses{\GFT[\dist]{\mech} - \buyerexanteutil[\dist]{\mech}} \\
    &= \dualV\cdot \GFT[\dist]{\mech} + (1 - \dualV)\cdot \sellerexanteutil[\dist]{\mech}, 
\end{align*}
which is a mix of welfare and seller revenue. 
En route to solving the original constrained optimization problem, our main idea is to take the following two steps:
\begin{enumerate}
    \item [\textbf{Step 1.}] For any Lagrangian variable $\dualV \geq 0$, we characterize the optimal mechanism $\mech[\dualV]\opted$ to maximize the mixed welfare-seller revenue objective $\dualV\cdot \GFT[\dist]{\mech} + (1 - \dualV)\cdot \sellerexanteutil[\dist]{\mech}$. We call this the primal problem.  
    \item [\textbf{Step 2.}] Prove strong duality of the problem, by establishing the existence of a dual variable $\dualV(\buyerutil) \geq 0$ and corresponding optimal primal mechanism $\mech[\dualV(\buyerutil)]\opted$ that satisfy KKT conditions. This immediately induces dual optimality.
\end{enumerate}

\subsection{Proof of \texorpdfstring{\Cref{thm:Pareto frontier:multi buyer:regular and antiMHR distribution}}{Theorem 3.4}}
\label{sec:Pareto frontier:reduction proof example}

To demonstrate how the above framework can be applied, we prove \Cref{thm:Pareto frontier:multi buyer:regular and antiMHR distribution} in this section. Specifically, we show that when the valuation distribution is both regular and anti-MHR, every point on the Pareto frontier is realized by a second-price auction with a reserve no larger than the lowest monopoly reserve. 

To analyze this multi-buyer setting, we consider the interim space. We use $\alloc(\quant)$ to denote any buyer's interim allocation probability when her value is $\valFunc(\quant)$. We further use $\accAlloc(\quant)$ to denote the accumulated allocation, i.e., $\accAlloc(\quant) \triangleq \int_0^{\quant} \alloc(\qquant) \dd \qquant$. By Border's condition~\citep{bor-91}, an interim allocation function $\alloc(\quant)$ on the quantile space can be realized ex post by a BIC and interim IR mechanism if and only if
\begin{gather*}
    \accAlloc(\quant) \leq \frac{1 - (1 - \quant)^{\numAgents}}{\numAgents}, \quad \forall \quant \in [0, 1]; \qquad
    \alloc(\quant) \leq \frac{1}{\numAgents}, \quad \forall \quant \in [0, 1]; \qquad
    \alloc(\quant) \text{ is decreasing. }
\end{gather*}
Specifically, in the single buyer case, the first constraint becomes $\accAlloc(\quant) \leq \quant$, which is directly implied by the second constraint $\alloc(\quant) \leq 1$ and non-restrictive. But this is not the case for the multi-buyer case. 

Back to our problem, according to the revenue equivalence theorem, the expected revenue under a mechanism with interim allocation $\alloc(\quant)$ is $\numAgents\cdot \expect[\quant]{\alloc(\quant)\cdot \revCurvePrime(\quant)}$, which is derived by seeing $\alloc(\quant)$ as a randomization of price-posting mechanisms~\citep{mye-81}. 
Thus, the constrained program \eqref{eq:constrained optimization} is formalized as follows:
\begin{gather}
    \begin{gathered}
        \max_{\alloc} \numAgents\cdot \expect[\quant]{\alloc(\quant)\cdot \revCurvePrime(\quant)}, \\
        \subjectto \numAgents\cdot \expect[\quant]{\alloc(\quant)\cdot (\valFunc(\quant) - \revCurvePrime(\quant))} \geq \buyerutil, \\
        \accAlloc(\quant) \leq \frac{1 - (1 - \quant)^{\numAgents}}{\numAgents}, \quad \forall \quant \in [0, 1]; \qquad
        \alloc(\quant) \leq \frac{1}{\numAgents}, \quad \forall \quant \in [0, 1]; \qquad
        \alloc(\quant) \text{ is decreasing. }
    \end{gathered} \label{eq:multi-buyer constrained optimization}
\end{gather}

For brevity, we let $\accAllocUb(\quant) \triangleq \nicefrac{1 - (1 - \quant)^{\numAgents}}{\numAgents}$ and $\allocUb(\quant) \triangleq \nicefrac{1}{n}$. 
As mentioned in our unified framework, we Lagrangify the surplus constraint to derive the following dual program: 
\[\min_{\dualV \geq 0} \max_{\accAlloc(\quant) \leq \accAllocUb(\quant), \; \alloc(\quant) \leq \allocUb(\quant), \; \alloc(\quant) \text{ is decreasing}} \numAgents\cdot \expect[\quant]{\alloc(\quant)\cdot (\dualV\cdot \valFunc(\quant) + (1 - \dualV)\cdot \revCurvePrime(\quant))} - \dualV\cdot \buyerutil.\]
In particular, we let 
\[\lagCurve(\quant) \triangleq \dualV\cdot \int_{0}^{\quant} \valFunc(\qquant) \dd \qquant + (1 - \dualV) \cdot \revCurve(\quant). \] 
We then notice that $\dualV\cdot \valFunc(\quant) + (1 - \dualV)\cdot \revCurvePrime(\quant) = \lagCurvePrime(\quant)$. So the dual program is the following:
\[\min_{\dualV \geq 0} \max_{\accAlloc(\quant) \leq \accAllocUb(\quant), \; \alloc(\quant) \leq \allocUb(\quant), \; \alloc(\quant) \text{ is decreasing}} \numAgents\cdot \expect[\quant]{\alloc(\quant)\cdot \lagCurvePrime(\quant)} - \dualV\cdot \buyerutil.\]

Our roadmap is as follows.
First, for each $\dualV \geq 0$, we derive the allocation $\alloc(\quant)$ that maximizes the primal objective. 
Further, we establish the existence of a $\dualVOPT \geq 0$ such that the complementary slackness condition is met. 
They in together solve the problem. 

Here, one special property brought by the anti-MHR condition is that we can restrict ourselves to $\dualV \in [0, 1]$. In other words, we can prove that it is without loss to suppose $\dualVOPT \leq 1$ in the following proposition. 

\begin{proposition} \label{prop:Pareto frontier:dualVOPT bounded by 1 for anti-MHR}
    For anti-MHR value distribution $\dist$, there exists a $\dualV \in [0, 1]$ that minimizes
    \[\max_{\accAlloc(\quant) \leq \accAllocUb(\quant), \; \alloc(\quant) \leq \allocUb(\quant), \; \alloc(\quant) \text{ is decreasing}} \numAgents\cdot \expect[\quant]{\alloc(\quant)\cdot \lagCurvePrime(\quant)} - \dualV\cdot \buyerutil\]
    for all $\dualV \geq 0$. 
\end{proposition}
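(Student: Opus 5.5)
The idea is to show that the dual function
\[
D(\dualV) \triangleq \max_{\accAlloc \leq \accAllocUb,\ \alloc \leq \allocUb,\ \alloc \text{ decreasing}} \numAgents\cdot \expect[\quant]{\alloc(\quant)\cdot \lagCurvePrime(\quant)} - \dualV\cdot \buyerutil
\]
is non-decreasing on $[1,\infty)$. Then $\inf_{\dualV\ge 0} D(\dualV) = \inf_{\dualV\in[0,1]} D(\dualV)$. The infimum over $[0,1]$ is attained because $D$ is a pointwise maximum of affine functions of $\dualV$, hence convex and finite, and therefore continuous on the compact interval $[0,1]$. Throughout we use that $\buyerutil \in [\buyerexanteutil[\dist]{\Mye}, \buyerbenchmarkDist]$, which is the only range we need by \Cref{lem:Pareto frontier:transfer to constrained optimization}.

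\textbf{Step 1: rewrite the Lagrangian for $\dualV \ge 1$.} The mechanism-dependent part of the objective is
\[
\dualV\cdot\GFT[\dist]{\mech} + (1-\dualV)\sellerexanteutil[\dist]{\mech} = \GFT[\dist]{\mech} + (\dualV-1)\cdot \buyerexanteutil[\dist]{\mech}.
\]
For $\dualV\ge1$ both coefficients are non-negative. Hence any single feasible allocation that simultaneously maximizes welfare and buyers' surplus maximizes the whole objective.

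\textbf{Step 2: the second-price auction maximizes both terms under anti-MHR.} The second-price auction (full bid space, zero reserve) always allocates to the highest value, so it attains $\OPT_\dist$, the maximum welfare.

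For buyers' surplus, in the interim quantile formulation the surplus equals $\numAgents\cdot\expect[\quant]{\alloc(\quant)\, w(\quant)}$ with weight
\[
w(\quant) = \valFunc(\quant) - \revCurvePrime(\quant) = -\quant\, \valFunc'(\quant) \ge 0.
\]
By the quantile characterization of anti-MHR (the proposition of \citealp{BR-89,har-16}), $\quant\valFunc'(\quant)$ is non-decreasing, so $w$ is non-increasing in $\quant$. Integrating by parts gives
\[
\int_0^1 \alloc\, w \dd\quant = \accAlloc(1)w(1) - \int_0^1 \accAlloc(\quant)\dd w(\quant).
\]
Since $-\dd w\ge 0$ and $w(1)\ge0$, this expression is maximized pointwise by $\accAlloc=\accAllocUb$, which is exactly the second-price auction. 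This recovers the result of \citet{HR-08} that the second-price auction is buyer-optimal here, so its buyers' surplus equals $\buyerbenchmarkDist$.

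\textbf{Step 3: conclude monotonicity.} For $\dualV\ge1$, Steps 1 and 2 give
\[
D(\dualV) = \OPT_\dist + (\dualV-1)\buyerbenchmarkDist - \dualV\buyerutil = \OPT_\dist - \buyerbenchmarkDist + \dualV\,(\buyerbenchmarkDist - \buyerutil).
\]
This is non-decreasing in $\dualV$ because $\buyerutil\le\buyerbenchmarkDist$. Therefore $D(\dualV)\ge D(1)$ for all $\dualV\ge1$, and a minimizer over $[0,1]$ minimizes over all $\dualV\ge0$.

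The main obstacle is the integration-by-parts step in Step 2. It requires handling boundary terms and regularity of $\valFunc$ near $\quant=0$ (the support may be unbounded), and checking that $\accAllocUb$ is indeed feasible and is induced by the second-price auction. The twice-differentiability assumption on $\valFunc$ and finiteness of $\OPT_\dist$ should make this routine.
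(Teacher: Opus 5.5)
Your proposal is correct and is essentially the paper's proof. The paper writes $\lagCurvePrime(\quant)=\valFunc(\quant)+(\dualV-1)\bigl(-\quant\valFuncPrime(\quant)\bigr)$, which is exactly your welfare-plus-$(\dualV-1)$-surplus decomposition. It uses anti-MHR to show $\lagCurve$ is concave and increasing for $\dualV\ge 1$, so $\accAlloc=\accAllocUb$ (the second-price auction) is primal optimal. It then cites \citet{HR-08}, which you instead reprove by integration by parts, to get a dual value affine in $\dualV$ with slope $\buyerbenchmarkDist-\buyerutil\ge 0$.

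The one step to tighten is attainment, which the paper does not address. Convexity alone does not give continuity at the endpoint $\dualV=0$. However, $D$ is a supremum of affine functions whose slopes $\buyerexanteutil[\dist]{\mech}-\buyerutil$ are uniformly bounded, so $D$ is Lipschitz, in particular lower semicontinuous, and a minimizer on $[0,1]$ exists.
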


Since the value distribution is regular, we know that $\revCurve(\quant)$ is concave. Meanwhile, as $\valFunc(\qquant)$ is non-negative and non-increasing, $\int_0^{\quant} \valFunc(\qquant) \dd \qquant$ is also a concave function of $\quant$. Therefore, for $\dualV \in [0, 1]$, $\lagCurve(\quant)$ is always concave. We now let
\[\leftS(\dualV) \triangleq \inf \argmax_{\quant} \lagCurve(\quant), \quad \rightS(\dualV) \triangleq \sup \argmax_{\quant} \lagCurve(\quant). \]
Thus, a family of optimal primal solutions $\allocOPT[\dualV](\quant)$ takes the following form: 
Fix any $\quant[\dualV] \in [\leftS(\dualV), \rightS(\dualV)]$, when $\quant \leq \quant[\dualV]$, $\allocOPT[\dualV](\quant) = \accAllocUb'(\quant)$; and when $\quant > \quant[\dualV]$, $\allocOPT[\dualV](\quant) = 0$. 
This is because when $\quant \in [\leftS(\dualV), \rightS(\dualV)]$, the value of $\alloc(\quant)$ does not affect the objective; when $\quant > \rightS(\dualV)$, $\lagCurvePrime(\quant) \leq 0$ and the optimal choice of $\alloc(\quant)$ is $0$. At last, when $\quant < \leftS(\dualV)$, we know by integration by parts that
\[\int_{0}^{\leftS(\dualV)} \alloc(\quant)\cdot \lagCurvePrime(\quant) \dd \quant = - \int_{0}^{\leftS(\dualV)} \accAlloc(\quant)\cdot \lagCurveDoublePrime(\quant) \dd \quant. \]
Since $\lagCurveDoublePrime(\quant) \leq 0$ when $\quant \leq \leftS(\dualV)$, the optimal choice of $\alloc(\quant)$ is to have $\accAlloc(\quant) = \accAllocUb(\quant)$, i.e., $\allocOPT[\dualV](\quant) = \accAllocUb'(\quant)$. 

We also have the following result that establishes the monotonicity of $\leftS(\dualV)$ and $\rightS(\dualV)$ with $\dualV$: 
\begin{proposition}
    For any $0 \leq \dualV < \dualV' \leq 1$, $\leftS(\dualV') \geq \leftS(\dualV)$ and $\rightS(\dualV') \geq \rightS(\dualV)$ hold. 
\end{proposition}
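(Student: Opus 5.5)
We prove the monotonicity by a direct comparison of derivatives of $\lagCurve$, using that $\lagCurve$ is concave for $\dualV\in[0,1]$. The key identity is
\[
\lagCurvePrime(\quant) \;=\; \dualV\cdot \valFunc(\quant) + (1-\dualV)\cdot \revCurvePrime(\quant) \;=\; \revCurvePrime(\quant) + \dualV\cdot\bigl(\valFunc(\quant) - \revCurvePrime(\quant)\bigr) \;=\; \revCurvePrime(\quant) - \dualV\cdot \quant\,\valFunc'(\quant),
\]
since $\revCurvePrime(\quant) = \valFunc(\quant) + \quant\valFunc'(\quant)$. Because $\valFunc$ is non-increasing, $-\quant\valFunc'(\quant)\geq 0$. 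Hence for $\dualV<\dualV'$ we get, pointwise on $(0,1)$,
\[
\lagCurvePrime[\dualV'](\quant) - \lagCurvePrime[\dualV](\quant) = -(\dualV'-\dualV)\,\quant\,\valFunc'(\quant) \;\geq\; 0 .
\]
Thus the derivative of the Lagrangian curve is pointwise non-decreasing in $\dualV$. This step does not need anti-MHR; it only uses monotonicity of $\valFunc$. Regularity is used only through the concavity of $\lagCurve$ already established above.

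Next, we characterize the argmax endpoints of a concave function through the sign of its derivative. Since $\lagCurve$ is concave on $[0,1]$, its set of maximizers is the interval $[\leftS(\dualV),\rightS(\dualV)]$. Moreover, $\lagCurvePrime>0$ on $(0,\leftS(\dualV))$ and $\lagCurvePrime<0$ on $(\rightS(\dualV),1)$. Working with one-sided derivatives, we can write
\[
\leftS(\dualV)=\inf\{\quant: \lagCurvePrime(\quant)\leq 0\}\wedge 1, \qquad \rightS(\dualV)=\inf\{\quant: \lagCurvePrime(\quant)<0\}\wedge 1 ,
\]
with the convention that the value is $0$ or $1$ at the boundaries. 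The ordering then follows from the pointwise bound $\lagCurvePrime[\dualV']\geq\lagCurvePrime[\dualV]$. If $\lagCurvePrime[\dualV'](\quant)\leq 0$, then $\lagCurvePrime[\dualV](\quant)\leq 0$, so $\{\lagCurvePrime[\dualV']\leq 0\}\subseteq\{\lagCurvePrime[\dualV]\leq 0\}$ and therefore $\leftS(\dualV')\geq\leftS(\dualV)$. The same argument applied to the strict sets gives $\rightS(\dualV')\geq\rightS(\dualV)$.

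Alternatively, we can use a cleaner, derivative-free argument, which is the route I would write down. Set
\[
D(\quant)=\lagCurve[\dualV'](\quant)-\lagCurve[\dualV](\quant)=(\dualV'-\dualV)\Bigl(\int_0^\quant \valFunc - \revCurve(\quant)\Bigr).
\]
$D$ is non-decreasing in $\quant$, because its derivative is $-(\dualV'-\dualV)\quant\valFunc'(\quant)\geq 0$ (equivalently, $\int_0^\quant \valFunc(\qquant)\dd\qquant - \quant\valFunc(\quant)$ is non-decreasing as $\valFunc$ decreases). This is the standard monotone comparative statics (single-crossing / increasing differences) setup. Suppose $\quant<\leftS(\dualV)$ and compare it with $\leftS(\dualV)$:
\[
\lagCurve[\dualV'](\leftS(\dualV)) - \lagCurve[\dualV'](\quant) \;\geq\; \lagCurve[\dualV](\leftS(\dualV)) - \lagCurve[\dualV](\quant) \;>\; 0 .
\]
So no $\quant<\leftS(\dualV)$ maximizes $\lagCurve[\dualV']$, which gives $\leftS(\dualV')\geq\leftS(\dualV)$. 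For $\rightS$, take any $\quant<\rightS(\dualV)$. Then
\[
\lagCurve[\dualV'](\rightS(\dualV))-\lagCurve[\dualV'](\quant)\geq \lagCurve[\dualV](\rightS(\dualV))-\lagCurve[\dualV](\quant)\geq 0,
\]
so $\rightS(\dualV)$ is weakly preferred under $\dualV'$ to everything to its left. Therefore $\sup\argmax\lagCurve[\dualV']\geq\rightS(\dualV)$.

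The main obstacle is mild: checking that the extrema are attained and the strict inequality in the $\leftS$ step holds. The strict inequality follows directly from $\leftS(\dualV)$ being the infimum of the argmax, provided the argmax is closed. Closedness holds because $\lagCurve$ is concave, hence continuous on the interior. At $\quant=0$, $\lagCurve(0)=(1-\dualV)\revCurve(0)$ by definition, so continuity there must be argued. The attainment of the supremum mirrors the footnote on $\quantM$, using left-continuity of $\revCurve$. I would handle these boundary issues explicitly at $\quant\in\{0,1\}$.
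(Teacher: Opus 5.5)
Your proposal is correct, and your derivative-free route is essentially the paper's argument. The paper shows that $R_{\lambda'}(q) - R_{\lambda}(q) = (\lambda' - \lambda)\int_0^q (v(t) - v(q))\,dt$ is non-decreasing in $q$, so for $q > q'$, $R_{\lambda}(q) \ge R_{\lambda}(q')$ implies $R_{\lambda'}(q) \ge R_{\lambda'}(q')$, and both endpoint inequalities follow. You are more careful than the paper about attainment of the argmax endpoints, and you correctly note that this step needs only monotonicity of $v$, not concavity; that is why the paper can state it for all $\lambda \ge 0$ in the appendix.
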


Now notice that when $\dualV = 0$, $\lagCurve(\quant) = \revCurve(\quant)$, and thus $\rightS(\dualV)$ is the quantile of the lowest monopoly reserve $\quantM$. When $\dualV = 1$, $\lagCurve(\quant) = \int_0^{\quant} \valFunc(\qquant) \dd \qquant$ is an increasing function, thus $\rightS(\dualV) = 1$.
In all, we can construct a process $\{\quant[\dualV]\}_{0 \leq \dualV \leq 1}$ of quantiles and a process of $\{\allocOPT[\dualV]\}_{0 \leq \dualV \leq 1}$ of primal solutions that satisfies the following:
\begin{enumerate}
    \item When $\dualV$ increases from $0$ to $1$, $\quant[\dualV]$ increases \emph{continuously} from $\quant[0] = \quantM$ to $\quant[1] = 1$.
    \item $\allocOPT[\dualV](\quant) = \accAllocUb'(\quant)$ when $\quant \leq \quant[\dualV]$, and $\allocOPT[\dualV](\quant) = 0$ when $\quant > \quant[\dualV]$; 
    \item $\allocOPT[\dualV]$ is an optimal solution to the primal program $\max \expect[\quant]{\alloc(\quant) \cdot \lagCurvePrime(\quant)}$; 
\end{enumerate}

Under the process $\{\allocOPT[\dualV]\}_{0 \leq \dualV \leq 1}$, we observe that when $\dualV = 0$, the corresponding primal-optimal mechanism under $\allocOPT[0]$ is just $\Mye$, which obtains a buyers' surplus of $\buyerexanteutil[\dist]{\Mye}$; when $\dualV = 1$, the corresponding primal-optimal mechanism is in fact the second-price auction, which reaches the optimal buyers' surplus $\buyerbenchmarkDist$, as given in the following proposition:  
\begin{proposition}[Corollary 2.12, \citealp{HR-08}] \label{prop:multi buyer:anti-MHR SPA gives umax}
    When the value distribution is anti-MHR, $\buyerbenchmarkDist$ is achieved by the second-price auction.
\end{proposition}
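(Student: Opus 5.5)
The plan is to write buyers' surplus in quantile space, as in program \eqref{eq:multi-buyer constrained optimization}, and show that the second-price auction maximizes it among all interim allocations satisfying Border's conditions. By revenue equivalence, a symmetric mechanism with interim allocation $\alloc(\quant)$ gives buyers' surplus
\[
\numAgents\cdot \expect[\quant]{\alloc(\quant)\cdot(\valFunc(\quant)-\revCurvePrime(\quant))}=\numAgents\int_0^1 \alloc(\quant)\, h(\quant)\dd\quant,
\qquad h(\quant)\triangleq \valFunc(\quant)-\revCurvePrime(\quant)=-\quant\cdot\valFunc'(\quant),
\]
where the last equality uses $\revCurvePrime(\quant)=\valFunc(\quant)+\quant\valFunc'(\quant)$. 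Two properties of $h$ drive the argument. First, since $\valFunc$ is non-increasing, $h\geq 0$. Second, by the quantile characterization of anti-MHR (the proposition attributed to \citealp{BR-89,har-16}), $\quant\valFunc'(\quant)$ is non-decreasing, so $h$ is non-increasing on $(0,1)$.

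Next I integrate by parts using the cumulative allocation $\accAlloc(\quant)=\int_0^\quant\alloc(\qquant)\dd\qquant$, with $\accAlloc(0)=0$. In Stieltjes form this gives
\[
\int_0^1 \alloc(\quant)h(\quant)\dd\quant=\accAlloc(1)h(1)+\int_0^1 \accAlloc(\quant)\dd(-h)(\quant).
\]
Here $h(1)$ is understood as a left limit, and the measure $\dd(-h)$ is non-negative because $h$ is non-increasing. Both terms are therefore monotone in $\accAlloc$ pointwise. By Border's condition, $\accAlloc(\quant)\le\accAllocUb(\quant)=\nicefrac{1-(1-\quant)^{\numAgents}}{\numAgents}$ for all $\quant$, so the surplus is at most the value obtained with $\accAlloc\equiv\accAllocUb$.

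The bound is attained: $\accAlloc=\accAllocUb$ corresponds to $\alloc(\quant)=\accAllocUb'(\quant)=(1-\quant)^{\numAgents-1}$. This is exactly the interim allocation of the second-price auction without reserve, which gives the item to the highest-value buyer, and it is feasible. Hence the second-price auction attains $\buyerbenchmarkDist$.

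The only delicate step is making the integration by parts rigorous at the endpoints. Near $\quant=0$, $h$ may blow up (heavy tails), but $\accAlloc(\quant)\le\quant$ and $\int\alloc h$ is finite for the mechanism itself. I would handle this by integrating on $[\eps,1-\eps]$ and letting $\eps\to0$. The boundary term $\accAlloc(\eps)h(\eps)\ge0$ is dropped in the inequality direction; it does not affect the comparison because it is also dominated by $\accAllocUb(\eps)h(\eps)$ and vanishes in the limit whenever surplus is finite. Monotone convergence then finishes the argument. The twice-differentiability assumption on $\valFunc$ makes $h$ differentiable, so the Stieltjes measure can be taken to be $-h'(\quant)\dd\quant\geq0$, which avoids any further technicality.
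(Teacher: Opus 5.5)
Your proof is correct under the normalization the paper uses throughout, and it is essentially the argument behind the cited result of \citet{HR-08}, recast in the paper's quantile/Border framework: under anti-MHR the surplus weight $-\quant\valFunc'(\quant)=1/\buyerhazardrate(\valFunc(\quant))$ is non-negative and non-increasing in $\quant$, so $\accAlloc=\accAllocUb$ (the second-price auction with no reserve) is optimal, which is the same reasoning the paper applies to $\lagCurve$ for $\dualV\ge 1$ in \Cref{prop:proofs in Pareto frontier:dualVOPT bounded by 1 for anti-MHR}. The normalization is your revenue-equivalence step, surplus $=\numAgents\int_0^1\alloc(\quant)h(\quant)\dd\quant$, which presumes the lowest type earns zero utility; interim IR and non-negative payments force this only when $\lval=0$, and for $\lval>0$ the statement itself can fail (e.g., $\numAgents=2$ and $\buyercdf(\val)=1-\val^{-2}$ on $[1,\infty)$, where giving the item to a uniformly random buyer for free yields buyers' surplus $2$ while the second-price auction yields only $4/3$).
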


Meanwhile, with a little bias of notation, we know the formula of buyers' surplus $\buyerexanteutil[\dist]{\alloc}$ under the allocation function $\alloc$ can be written as follows. 
\begin{align*}
    \buyerexanteutil[\dist]{\alloc} = \numAgents\cdot \expect[\quant]{\alloc(\quant)\cdot (\valFunc(\quant) - \revCurvePrime(\quant))}
    = \numAgents\cdot \expect[\quant]{\alloc(\quant)\cdot (-\quant \valFuncPrime(\quant))}. 
\end{align*}
Notice that $-\quant \valFuncPrime(\quant) \geq 0$. Thus, under the process $\{\allocOPT[\dualV]\}_{0 \leq \dualV \leq 1}$, since $\allocOPT[\dualV](\quant)$ monotonically and continuously increases with $\dualV$, so does $\buyerexanteutil[\dist]{\allocOPT[\dualV]}$. Therefore, by the intermediate value theorem, for any $\buyerutil \in [\buyerexanteutil[\dist]{\Mye}, \buyerbenchmarkDist]$, there exists a $\dualV(\buyerutil)$ such that $\buyerexanteutil[\dist]{\allocOPT[\dualV(\buyerutil)]} = \buyerutil$ holds. 
This implies that $(\dualV(\buyerutil), \allocOPT[\dualV(\buyerutil)])$ satisfies the KKT condition of the primal program \eqref{eq:multi-buyer constrained optimization}. 

Therefore, for any $\buyerutil \in [\buyerexanteutil[\dist]{\Mye}, \buyerbenchmarkDist]$, $\allocOPT[\dualV(\buyerutil)]$ is an optimal solution to \eqref{eq:multi-buyer constrained optimization}. 
By \Cref{lem:Pareto frontier:transfer to constrained optimization}, we conclude that for any $(\sellerutil, \buyerutil)$ that is on the Pareto frontier, it can be realized by the mechanism with allocation function $\allocOPT[\dualV(\buyerutil)]$. 
This mechanism itself is a second-price auction with a reserve of quantile $\quant[\dualV(\buyerutil)] \geq \quantM$, and thus, the reserve is no larger than the lowest monopoly reserve. Meanwhile, as $\buyerutil$ increases, the reserve of the corresponding reserve decreases.
This finishes the proof of \Cref{thm:Pareto frontier:multi buyer:regular and antiMHR distribution}. 

Despite the succinctness of combining the dual-based approach and the analysis of the mixed welfare-seller revenue curve $\lagCurve$, in the scenario of \Cref{thm:Pareto frontier:multi buyer:regular and antiMHR distribution}, such a combination would be much more difficult under a looser setting for the value distribution. On one hand, without regularity, although we can still restrict ourselves to $\dualV \in [0, 1]$ as given in \Cref{prop:Pareto frontier:dualVOPT bounded by 1 for anti-MHR}, the mixed curve can be non-concave for some $\dualV \leq 1$. On the other hand, without the anti-MHR condition, even if the distribution is MHR, we must still consider the cases that the dual variable $\dualV \geq 1$, when the mixed curve can be non-concave. 
In all, to meet more general requirements, we need a more delicate analysis of how the mixed curve evolves, and constructing a corresponding continuous and monotone evolution process of primal optimal solutions becomes much more complicated. 
We defer such an analysis and a unified proof of \Cref{prop:Pareto frontier:single buyer general,prop:Pareto frontier:single buyer regular}, and \Cref{thm:Pareto frontier:multi buyer:regular and antiMHR distribution,thm:Pareto frontier:multi buyer:MHR distribution,thm:Pareto frontier:multi buyer:antiMHR distribution} to \Cref{sec:proofs in Pareto frontier}.  
\section{The Kalai-Smorodinsky Solution}
\label{sec:ks solution}

Having characterized the Pareto frontier in \Cref{sec:Pareto frontier}, an important question remains: which point on this frontier should be implemented? A natural desideratum is to achieve fairness between the two sides---the seller and the buyers---while maintaining strong efficiency guarantees. 

Guided by this principle, we interpret the selection of a Pareto-optimal mechanism as a cooperative bargaining problem \citep{nas-50} between the seller and the buyers.\footnote{We refer the reader to \citet[Appendix~A]{BFM-25} for further discussion on interpreting mechanism design as a cooperative bargaining problem.} In this section, we study a canonical solution concept from bargaining theory: the {\KSsolution} \citep{KS-75}. (Another canonical solution---the {\Nashsolution}---is analyzed in \Cref{sec:nash solution}.)

\begin{definition}[{\KSsolution}]
    \label{def:KS solution}
    Fix a valuation distribution $\buyerdist$. A Pareto-optimal mechanism $\mech$ is a {{\sf Kalai-Smorodinsky (KS) Solution}} if both the ex-ante seller revenue and ex-ante buyers' surplus achieve the same fraction of each side's own ideal benchmark, i.e., 
    \begin{align}
    \label{eq:KS fairness}
    \tag{\texttt{KS fairness}}
        \frac{\sellerexanteutil[\buyerdist]{\mech}}{\sellerbenchmarkDist} = \frac{\buyerexanteutil[\buyerdist]{\mech}}{\buyerbenchmarkDist}.
    \end{align}
    Here, the benchmarks $\sellerbenchmarkDist$ and $\buyerbenchmarkDist$ denote the optimal achievable ex-ante seller revenue and ex-ante buyers' surplus, respectively, over all mechanisms in $\mechFam$.
\end{definition}

We make the following remarks regarding the {\KSsolution}.  
First, due to the convexity of the mechanism space and the linearity of both ex-ante seller revenue and buyers' surplus, the set of achievable payoff pairs (seller revenue, buyers' surplus) is convex, and the Pareto frontier is continuous. Consequently, a {\KSsolution} always exists \citep{KS-75}.  

Second, while multiple mechanisms may implement the {\KSsolution}, they all correspond to the same payoff pair on the Pareto frontier and thus yield identical social welfare. Therefore, our structural characterization of Pareto-optimal mechanisms in \Cref{sec:Pareto frontier} can be used to construct an explicit implementation of the {\KSsolution}.  

Third, \citet{BFM-25} study the single-seller, single-buyer bilateral-trade setting and refer to constraint~\eqref{eq:KS fairness} as \emph{KS fairness}. By definition, the {\KSsolution} maximizes social welfare among all KS-fair mechanisms.\footnote{Indeed, to establish positive welfare guarantees for the {\KSsolution}, we sometimes construct a KS-fair mechanism and lower bound its welfare; by definition, this serves as a lower bound on the welfare of the {\KSsolution}.}

We now present the main result of this section: the welfare approximation guarantee for the {\KSsolution} under various distributional assumptions on the buyers' valuation distribution $\buyerdist$.

\begin{restatable}{theorem}{thmKSsolution}
\label{thm:KS solution welfare approximation}
Consider a single-item market with $n \geq 1$ buyers whose values are drawn i.i.d.\ from a distribution $\buyerdist$. Then the following hold:
\begin{enumerate}[label=(\roman*)]
    \item For any valuation distribution $\buyerdist$, the social welfare of the {\KSsolution} is at least $50\%$ of the optimal social welfare.
    
    \item If $\buyerdist$ is both regular and anti-MHR, then the {\KSsolution} achieves at least $\frac{n}{n+1}$ of the optimal social welfare.
    
    \item For every $\eps > 0$ and any sufficiently large $n$, there exists an instance (\Cref{example:multi-buyer:mhr hard instance}) of $n$ buyers with MHR valuation distribution $\buyerdist$ such that the social welfare of the {\KSsolution} is at most $(50 + \eps)\%$ of the optimal social welfare.
\end{enumerate}
\end{restatable}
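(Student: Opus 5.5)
For part~(i) I would not use the structural results at all; convexity of the achievable payoff set is enough. Let $\mech^{\rm B}$ be a buyer-surplus-optimal mechanism, so $\buyerexanteutil[\dist]{\mech^{\rm B}} = \buyerbenchmarkDist$. Consider the mechanism that runs $\Mye$ with probability $\nicefrac12$ and $\mech^{\rm B}$ with probability $\nicefrac12$. Its revenue is at least $\sellerbenchmarkDist/2$ and its buyers' surplus is at least $\buyerbenchmarkDist/2$. After normalizing each side by its benchmark, the convex achievable set therefore contains a point dominating $(\nicefrac12,\nicefrac12)$. Hence the KS point $(\alpha,\alpha)$, which is where the diagonal leaves the set, has $\alpha \ge \nicefrac12$. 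Its welfare is $\alpha(\sellerbenchmarkDist+\buyerbenchmarkDist)$. The second-price auction attains $\OPT_\dist$ and splits it into a revenue term and a surplus term, each bounded by the corresponding benchmark, so $\sellerbenchmarkDist+\buyerbenchmarkDist \ge \OPT_\dist$. Together these give the $50\%$ bound.

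For part~(ii) I would repeat this mixing argument, now mixing $\Mye$ with the second-price auction $\SPA$. By \Cref{prop:multi buyer:anti-MHR SPA gives umax}, $\SPA$ is buyer-optimal under anti-MHR. Write $a = \sellerexanteutil[\dist]{\SPA}/\sellerbenchmarkDist$ and mix with weight $\theta$ on $\SPA$. The normalized revenue is then at least $\theta a + (1-\theta)$ and the normalized surplus is at least $\theta$. These are equal at $\theta = 1/(2-a)$, so the KS fraction satisfies $\alpha \ge 1/(2-a)$. Since welfare equals $\alpha(\sellerbenchmarkDist+\buyerbenchmarkDist) \ge \alpha\,\OPT_\dist$, it suffices to show $a \ge (n-1)/n$, which gives $\alpha \ge n/(n+1)$. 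I would obtain this from two steps, using regularity. First, Bulow--Klemperer: $\SPA$ with $n$ buyers earns at least as much as Myerson's auction with $n-1$ buyers. Second, the optimal revenue with $n-1$ buyers is at least $(n-1)/n$ of the optimal revenue with $n$ buyers. For the second step, take the $n$-buyer Myerson auction and discard a uniformly random buyer by simulating her value. By symmetry this keeps a $(n-1)/n$ fraction of the revenue and yields a feasible $(n-1)$-buyer mechanism.

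Part~(iii) needs a construction, and I expect it to be the main obstacle. The target geometry is as follows: $\sellerbenchmarkDist \approx \OPT_\dist$, while $\buyerbenchmarkDist$ is of lower order than $\OPT_\dist$. By \citet{HR-08}, under MHR $\buyerbenchmarkDist$ is the expected value $\expect{\val}$, attained by the free uniform lottery. On top of that, the frontier from $\Mye$ to the lottery should be essentially a straight segment in normalized coordinates, with $\buyerexanteutil[\dist]{\Mye} = o(\buyerbenchmarkDist)$. If so, the KS fraction is $\alpha \approx \nicefrac12$, and the welfare $\alpha(\sellerbenchmarkDist+\buyerbenchmarkDist) \approx \OPT_\dist/2$.

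The construction must meet three requirements at once:
\begin{itemize}[leftmargin=*]
\item $\expect{\val} = o(\OPT_\dist)$, which needs a heavy-ish upper tail while the distribution stays MHR, so it must be unbounded or spread over a growing range as $n\to\infty$;
\item Myerson's revenue close to $\expect{\max_i \val_i}$, so the top two order statistics must nearly coincide;
\item the second-price-type surplus, equal to $n\,\expect{\alloc(\quant)\cdot(-\quant\valFuncPrime(\quant))}$, must be negligible relative to $\expect{\val}$.
\end{itemize}
I would first try a distribution whose hazard rate increases steeply, for instance one that is nearly uniform or MHR-truncated on a scale growing with $n$, and tune it against these three requirements.

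Given such an instance, I would use \Cref{thm:Pareto frontier:multi buyer:MHR distribution} to control the whole frontier. Points between $\Mye$ and $\SPA$ are second-price auctions with reserve, and points beyond $\SPA$ are zero-reserve {\indirectSPAs} whose bid space shrinks toward the lottery. The aim is to show that along this path, for every frontier point, the normalized revenue plus normalized surplus is at most $1+o(1)$. This near-linearity gives $\alpha \le \nicefrac12 + o(1)$ and hence welfare at most $(\nicefrac12+\eps)\OPT_\dist$.
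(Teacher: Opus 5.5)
Your proofs of (i) and (ii) are correct and essentially follow the paper's argument. You mix the Myerson auction with a buyer-optimal mechanism (in (ii), the second-price auction, which is buyer-optimal under anti-MHR), read off a KS level of at least $\frac12$, respectively $1/(2-a)\ge n/(n+1)$, and use $\Pi^*+U^*\ge\OPT$. One small step is missing in (i): your $\frac12$--$\frac12$ mixture generally lies off the diagonal. You must mix once more with one of the two endpoint mechanisms to reach a diagonal point whose level is at least the smaller coordinate. The paper avoids this by choosing the KS-fair weight directly. Your derivation of $a\ge (n-1)/n$, via Bulow--Klemperer plus dropping a simulated buyer, is exactly the fact the paper cites.

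Part (iii) is not proved, and this is a genuine gap. The statement is existential, and you give neither the instance nor its analysis. Your three requirements are necessary but not sufficient. They only locate three frontier points: Myerson, second-price, and lottery. They say nothing about the frontier between the second-price auction and the lottery. There, zero-reserve \indirectSPAs{} with coarse bid spaces could a priori keep a constant fraction of both benchmarks, so near-linearity does not follow. Walking the frontier via \Cref{thm:Pareto frontier:multi buyer:MHR distribution} gives you no estimate for those mechanisms. That theorem is also stated under a twice-differentiability assumption, which a truncated instance with an atom violates. Your first candidate fails your own first requirement: for the uniform distribution on any $[0,L]$, $\mathbb{E}[v]=L/2\ge\OPT/2$. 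A steeply increasing hazard rate makes the tail lighter, which is the wrong direction.

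Two ingredients are missing. The first is the instance of \Cref{example:multi-buyer:mhr hard instance}. It has hazard rate $2v$ and is truncated at $\constantH=\frac12\sqrt{\ln n}$, so about $n^{3/4}$ buyers sit at the atom. This gives $\OPT=(1-o(1))\constantH$, $\Pi^*=(1-o(1))\constantH$ (already earned by the second-price auction), and $U^*=\mathbb{E}[v]=\Theta(1)$.

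The second is a bound valid for \emph{every} mechanism, not just along the frontier, which is how the paper argues. Fix $t=\ln\ln n$, and let $x_l$ and $x_h$ be the probabilities that the winner's value is at most $t$, respectively above $t$.
\begin{itemize}
\item Revenue is the winner's expected virtual value. Since $\psi\le\psi(t)=o(\constantH)$ below $t$ and $\psi\le\constantH$ everywhere, normalized revenue is at most $x_h+o(1)$.
\item Surplus is the winner's expected $1/\phi$. Above $t$, $1/\phi\le 1/(2t)=o(1)$. Below $t$, monotonicity of the interim allocation and of $1/\phi$ bounds the contribution by $x_l\,\mathbb{E}[v]/F(t)=(1+o(1))\,x_lU^*$. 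So normalized surplus is at most $x_l+o(1)$.
\end{itemize}
Hence every mechanism has normalized revenue plus normalized surplus at most $x_l+x_h+o(1)\le 1+o(1)$. The KS level is therefore at most $\frac12+o(1)$, and its welfare is at most $(\frac12+o(1))(\Pi^*+U^*)=(\frac12+o(1))\OPT$.
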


\begin{example}
\label{example:multi-buyer:mhr hard instance}
    Consider a single-item market with $n$ buyers. Let $\constantH\triangleq \frac{1}{2}\sqrt{\ln n}$. Buyers' values are drawn i.i.d.\ from distribution $\buyerdist$ (which is MHR) with support $\supp(\buyerdist) = [0, \constantH]$ and cumulative density $\buyercdf(\val) = 1 - e^{-\val^2}$ for $\val \in [0, \constantH]$ and $\buyercdf(\val) = 1$ for $\val\in(\constantH, \infty)$. (Namely, there is an atom at $\constantH$ with probability mass of $\frac{1}{{n}^{1/4}}$.)
\end{example}

We now provide several remarks and intuition behind \Cref{thm:KS solution welfare approximation}. All the proofs for this section can be found in \Cref{apx:ks solution analysis}.

First, the theorem demonstrates the robustness of the {\KSsolution} in terms of market efficiency: for all market sizes and all valuation distributions, it guarantees at least half of the optimal social welfare. The proof of this universal bound follows an approach similar to that of \citet{BFM-25} in the single-seller, single-buyer bilateral-trade setting.

As shown in \Cref{sec:Pareto frontier}, when the distribution is both regular and anti-MHR, every point on the Pareto frontier admits a simple implementation---a second-price auction with a reserve no larger than the monopoly reserve. While this structural simplicity does not imply that all Pareto-optimal mechanisms achieve high welfare (e.g., the Myerson auction may yield negligible welfare under certain regular and anti-MHR distributions in large markets\footnote{For example, consider the triangular distribution $\buyerdist$ with $\buyerdist(\val) = 1-1/(\val+1)$ with $\supp(\buyerdist) = [0,\infty)$.}),
it is sufficient to ensure that the {\KSsolution} attains a strong welfare guarantee---namely, $\frac{n}{n+1}$ of the optimum.

In contrast to much of the algorithmic mechanism design literature---where the MHR assumption typically enables improved approximation guarantees\footnote{In the special case of our setting when there is only a single buyer,
\citet{BFM-25} show that the {\KSsolution} achieves at least 91\% of optimal welfare under MHR.}---our negative result reveals that, in large multi-buyer markets, MHR distributions can be as adversarial as general ones. Intuitively, this stems from the tension between competition and information: under hard MHR instances like \Cref{example:multi-buyer:mhr hard instance}, price competition (e.g., in a second-price auction) drastically reduces buyers' surplus. Yet, without such competition, the mechanism lacks the discriminatory power to distinguish high-value from low-value buyers, limiting its ability to extract efficient allocations.

The fragility of the {\KSsolution} under MHR distributions in the single-unit setting raises a natural question: can increased supply mitigate this inefficiency? In many real-world platforms---such as cloud markets or online advertising---the seller often allocates multiple units, which modulates the balance of bargaining power between the seller and buyers.

We analyze this setting with $m$ identical units and $n$ unit-demand buyers whose valuations are drawn i.i.d.\ from an MHR distribution. We show in \Cref{thm:KS solution welfare approximation multi-unit} below that the social welfare of the {\KSsolution} is at least $\multiUnitRatio(m/n)$ fraction of the optimal social welfare, where $\multiUnitRatio(\imbalanceness)$ is defined in \Cref{thm:KS solution welfare approximation multi-unit}. (Also see \Cref{fig:mhr:multi-unit}.)
This provides a nontrivial lower bound that depends only on the supply-to-demand ratio $\imbalanceness = m/n$.

Intuitively, bargaining power shifts with $\imbalanceness$: when $\imbalanceness$ is small (scarce supply), the seller dominates; when $\imbalanceness$ is close to 1 (abundant supply), buyers gain leverage. In either extreme, the need to satisfy KS fairness---equalizing relative gains---can distort allocations away from the welfare optimum. Consequently, the highest welfare is expected to occur at intermediate levels of supply, yielding a non-monotonic (first increasing, then decreasing) relationship between welfare and $\imbalanceness$. 
Notably, our lower bound $\multiUnitRatio(\imbalanceness)$ satisfies $\multiUnitRatio(\imbalanceness) > 1/2$ for all $\imbalanceness \in (0,1]$. This implies that whenever the number of items grows proportionally with the number of buyers---i.e., $m = \lceil \imbalanceness \cdot n \rceil$ for some constant $\imbalanceness > 0$---the asymptotic welfare guarantee strictly improves upon the single-unit worst case.
While this bound is not known to be tight, its non-monotonic shape is consistent with the predicted interplay between market structure, bargaining power, and efficiency.

\begin{restatable}{theorem}{thmKSsolutionMultiUnit}
\label{thm:KS solution welfare approximation multi-unit}
    Consider a market with $m$ identical units of a good and $n\geq m$ unit-demand buyers whose valuations are drawn i.i.d.\ from an MHR distribution $\buyerdist$.
    The social welfare of the {\KSsolution} is at least a $\multiUnitRatio(m/n)$-fraction of the optimal social welfare, where the auxiliary function $\multiUnitRatio : (0,1] \to [0,1]$ is defined as
    \begin{align*}
        \multiUnitRatio(\imbalanceness) \triangleq 
        \min\left\{
            \frac{2 - \ln(\imbalanceness)}{\,2 - 2\ln(\imbalanceness)\,},
            \;
            \frac{e-1}{e}+\frac{1}{e+e^2\cdot \imbalanceness}
        \right\}.
    \end{align*}
\end{restatable}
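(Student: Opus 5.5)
Since the {\KSsolution} maximizes welfare among all KS-fair mechanisms, it suffices to exhibit one KS-fair mechanism whose welfare is at least $\multiUnitRatio(m/n)\cdot\OPT$. The plan is to take a segment between two simple anchor mechanisms, find the KS point on it, and lower bound its welfare.

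\textbf{Step 1: set up the benchmarks.} The welfare optimum $\OPT$ is attained by the $(m+1)$-th price auction. For MHR distributions, the multi-unit extension of \citet{HR-08} says competition only hurts buyers. Hence the buyer-optimal mechanism gives the $m$ units uniformly at random to $m$ buyers at price zero, and
\[
\buyerbenchmarkDist = m\cdot\expect{\val}.
\]
The revenue benchmark $\sellerbenchmarkDist$ is at most $\OPT$. It is also at least the revenue of the VCG auction, $\expect{\sum_{j\le m}\val_{(j+1)}}$.

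\textbf{Step 2: build a KS-fair mechanism.} Randomize with probability $\mixprob$ on the random-allocation mechanism, with payoffs $(0,\buyerbenchmarkDist)$. With probability $1-\mixprob$, run a revenue-oriented mechanism: either Myerson, with payoffs $(\sellerbenchmarkDist,\cdot)$, or the $(m+1)$-th price auction, with payoffs $(\sellerutil_{V},U_{V})$.

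The KS ratio $\rho$ of the mixture satisfies $\rho=(1-\mixprob)\sellerutil_{V}/\sellerbenchmarkDist$ and $\rho=\bigl(\mixprob \buyerbenchmarkDist+(1-\mixprob)U_{V}\bigr)/\buyerbenchmarkDist$. By intermediate values there is a $\mixprob$ equating them. The welfare is then
\[
\mixprob\, m\expect{\val}+(1-\mixprob)\OPT.
\]
Since welfare of a mixture is at least the minimum of the two welfares weighted by $\mixprob$, the task reduces to upper bounding $\mixprob$ and lower bounding $m\expect{\val}/\OPT$.

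\textbf{Step 3: MHR quantitative bounds.} I would use three facts.
\begin{enumerate}
\item For MHR distributions, $\OPT$ relative to $m\expect{\val}$ is controlled via the order-statistic bound. The top $m$ of $n$ values have expected sum at most $m\,\expect{\val}\bigl(1+\ln(n/m)\bigr)$, since the cumulative hazard $\Lambda(\val)$ is convex and the values above quantile $\imbalanceness$ are dominated by an exponential tail. This gives $m\expect{\val}/\OPT\ge 1/(1-\ln\imbalanceness)$.
\item Buyers' surplus in VCG is comparably large relative to $\OPT$ when $\imbalanceness$ is small.
\item The random-allocation welfare is at least $\frac{e-1}{e}$-type fractions when supply is abundant. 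This uses that for MHR the quantile-$1/e$ value is at least $\expect{\val}$.
\end{enumerate}
Plugging these into the mixture welfare and optimizing gives the two branches. The first, $\frac{2-\ln\imbalanceness}{2-2\ln\imbalanceness}$, comes from the log-ratio estimate, since it equals $\frac12\bigl(1+\frac{1}{1-\ln\imbalanceness}\bigr)$. That is, a half-mix of $\OPT$ and the random allocation. The second, $\frac{e-1}{e}+\frac{1}{e+e^2\imbalanceness}$, comes from a bound using the posted price at quantile $1/e$ combined with supply $m$.

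\textbf{Main obstacle.} I expect the hardest part to be Step 3. Specifically, it is deriving a clean upper bound on $\mixprob$ in terms of $\imbalanceness$ alone, which requires simultaneously lower bounding $\sellerbenchmarkDist/\OPT$ and $U_V/\buyerbenchmarkDist$ for MHR. My first attempt would bound $\mixprob\le 1/2$ directly from KS-fairness together with $\sellerbenchmarkDist\le\OPT$, which yields the first branch. For the second branch I would choose a different anchor, a uniform posted price at quantile $\min\{1,e\imbalanceness\}$-style thresholds, and redo the fairness calculation.
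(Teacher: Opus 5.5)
\textbf{Verdict: genuine gap.} Your first branch is sound, but it covers only one of two cases, and the second branch is never argued.

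\textbf{Step 2 fails in one case.} In Step 2 you assert, by intermediate values, that some mixture of the random allocation and VCG is KS-fair. Write $a=\sellerexanteutil[\buyerdist]{\VCG}/\sellerbenchmarkDist$ and $b=\buyerexanteutil[\buyerdist]{\VCG}/\buyerbenchmarkDist$. Put weight $\mixprob$ on the random allocation. The normalized revenue $(1-\mixprob)a$ then decreases in $\mixprob$, and the normalized surplus $\mixprob+(1-\mixprob)b$ increases. At $\mixprob=0$ they equal $a$ and $b$, so they cross if and only if $a\ge b$.

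In that case $\mixprob=(a-b)/(1+a-b)\le 1/2$. This uses only $a-b\le 1$, not $\sellerbenchmarkDist\le\OPT$. Your half-mix estimate with $\OPT\le(1-\ln\imbalanceness)\,m\,\expect{v}$ then gives $\frac{2-\ln\imbalanceness}{2-2\ln\imbalanceness}$. This is exactly Case~1 of the paper's proof.

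When $a<b$, no such mixture exists, and this case is not exotic. For exponential values VCG already attains the buyer-optimal surplus, so $b=1>a$. Swapping VCG for Myerson as your second anchor restores existence. The welfare then becomes $\mixprob\,m\,\expect{v}+(1-\mixprob)\GFT[\buyerdist]{\Mye}$ rather than $\mixprob\,m\,\expect{v}+(1-\mixprob)\OPT$. Your facts 2--3 and the ``posted price at quantile $\min\{1,e\imbalanceness\}$'' anchor are never linked to $\frac{e-1}{e}+\frac{1}{e+e^2\imbalanceness}$ by any computation. So the second branch is unproved.

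\textbf{The missing idea.} Split on the sign of $a-b$. When $a<b$, move along the other segment: mix VCG (weight $\mixprob$) with Myerson. Myerson's normalized revenue is $1$, which is at least its normalized surplus $c=\buyerexanteutil[\buyerdist]{\Mye}/\buyerbenchmarkDist$. So a KS-fair point exists, with $\mixprob=(1-c)/(1-c+b-a)$.

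The multi-unit Bulow--Klemperer bound $\sellerexanteutil[\buyerdist]{\VCG}\ge\frac{n-m}{n}\sellerbenchmarkDist$ gives $b-a\le\imbalanceness$. In the paper's argument, this is where $\imbalanceness$ enters the second branch. The welfare is then at least $\mixprob\,\OPT+(1-\mixprob)\GFT[\buyerdist]{\Mye}$. This is to be bounded using a lower bound on $1-c$ and Myerson's welfare guarantee under MHR.

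\textbf{Check the constants in this route.} If you follow it, verify the constants yourself rather than importing them from the paper.
\begin{enumerate}
\item The paper derives $c\le 1-1/e$ from Myerson's $1/e$ welfare guarantee, but that guarantee does not imply it. For $\mathrm{Exp}(1)$ values with $m=1$ one has $c=1-(1-1/e)^n$, which exceeds $1-1/e$ once $n\ge 3$.
\item Combining $\mixprob\ge 1/(1+e\imbalanceness)$ with the paper's $\frac{1}{e-1}$ Myerson-welfare lemma yields $\frac{1+\frac{e}{e-1}\imbalanceness}{1+e\imbalanceness}$. This is slightly below the stated $\frac{1+(e-1)\imbalanceness}{1+e\imbalanceness}=\frac{e-1}{e}+\frac{1}{e+e^2\imbalanceness}$.
\end{enumerate}
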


\begin{figure}
    \centering
    \begin{tikzpicture}
\begin{axis}[
    width=8cm,height=5cm,
    axis lines=middle, xlabel={$\imbalanceness$}, xmin=0.0001, xmax=1, ymin=0.5, ymax=1, domain=0.01:1, samples=100, ]
    
    \pgfmathsetmacro{\fzeroone}{(2 - ln(0.001)) / (2 - 2 * ln(0.001))}

\addplot[black, line width=1.0mm, domain=0.001:1] {min((2 - ln(x)) / (2 - 2 * ln(x)), (exp(1)-1) / exp(1) + 1/(exp(1) + exp(1) * exp(1) * x))};

    \addplot[black, line width=1.0mm] coordinates {
    (0, 0.5)
    (0.001, \fzeroone)
    };

\addplot[black!20, line width=0.5mm, domain=0.001:1, dashed] { (2 - ln(x)) / (2 - 2 * ln(x)) };

    \addplot[black!20, line width=0.5mm, dashed] coordinates {
    (0, 0.5)
    (0.001, \fzeroone)
    };

\addplot[black!20, line width=0.5mm] { (exp(1)-1) / exp(1) + 1/(exp(1) + exp(1) * exp(1) * x)};

\end{axis}
\end{tikzpicture}     \caption{The welfare approximation guarantee of the {\KSsolution} in \Cref{thm:KS solution welfare approximation multi-unit}. The gray dashed (solid) curve is $\frac{2 - \ln(\imbalanceness)}{\,2 - 2\ln(\imbalanceness)\,}$ ($\frac{e-1}{e}+\frac{1}{e+e^2\cdot \imbalanceness}$) and the black solid curve is $\multiUnitRatio(\imbalanceness)$.}
    \label{fig:mhr:multi-unit}
\end{figure}

\section{The Nash Solution}
\label{sec:nash solution}

In this section, we consider another canonical solution concept from bargaining theory: the {\Nashsolution} \citep{nas-50}.

\begin{definition}[{\Nashsolution}]
\label{def:nash solution}
    Fix a valuation distribution $\buyerdist$. A Pareto-optimal mechanism $\mech$ is a {\Nashsolution} if it maximizes the product of all traders (seller and buyers)'s ex-ante utilities, i.e.,
    \begin{align}
    \label{eq:nash fairness}
    \tag{\texttt{Nash social welfare}}
        \mech \in \argmax_{\mech\primed\in\mechfam}~
        \sellerexanteutil[\buyerdist]{\mech\primed}
        \cdot 
        \prod\nolimits_{i\in[n]}\buyerexanteutil[i,\buyerdist]{\mech\primed}.
    \end{align}
\end{definition}

We remark that the objective maximized by the {\Nashsolution} in its definition is known as the \emph{Nash social welfare} (NSW). Similar to the discussion of the {\KSsolution} in \Cref{sec:ks solution}, the {\Nashsolution} always exists. Moreover, although multiple mechanisms may implement it, they all correspond to the same payoff pair on the Pareto frontier and thus yield identical social welfare.

In contrast to the {\KSsolution}, whose fairness condition is explicit---and arguably intuitive---requiring both sides to achieve the same fraction of their respective ideal benchmarks, the fairness criterion underlying the {\Nashsolution} is more implicit, as it arises from maximizing the product of utilities. However, a crucial distinction emerges in multi-buyer markets: the standard {\Nashsolution} maximizes the product of the seller’s utility and the utility of \emph{each individual buyer}, thereby treating buyers as separate agents. This formulation becomes increasingly sensitive to competition as the number of buyers grows, ultimately leading to poor welfare guarantees. (Nonetheless, when the valuation distribution $\buyerdist$ is both regular and anti-MHR, the {\Nashsolution} performs well: like the {\KSsolution}, it achieves asymptotically optimal social welfare as the market size grows.) All the missing proofs in this section can be found in \Cref{apx:nash solution analysis}.

\begin{restatable}{theorem}{thmNashsolutionMHR}
\label{thm:Nash solution welfare approximation mhr}
For every $\eps > 0$ and any sufficiently large $n$, there exists an instance (\Cref{example:multi-buyer:mhr hard instance}) of $n$ buyers with MHR valuation distribution $\buyerdist$ such that the social welfare of the {\Nashsolution} is at most $\eps$ fraction of the optimal social welfare.
\end{restatable}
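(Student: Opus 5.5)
The plan is to study the instance of \Cref{example:multi-buyer:mhr hard instance} directly, using the symmetric-mechanism description from \Cref{sec:Pareto frontier:reduction proof example}. For a symmetric mechanism with interim allocation $\alloc(\quant)$, every buyer has the same ex-ante utility $\buyerutil/n$. Hence the Nash social welfare equals $\sellerutil \cdot (\buyerutil/n)^n$, and maximizing it is the same as maximizing $\ln\sellerutil + n\ln\buyerutil$. Symmetry is without loss here: symmetrizing a mechanism preserves $\sellerutil$ and $\buyerutil$, and by AM--GM it only increases the product of the buyers' utilities. So the {\Nashsolution} is the point of the (concave) Pareto frontier maximizing $\ln\sellerutil+n\ln\buyerutil$. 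The first-order condition at that point is that the frontier slope satisfies $|d\sellerutil/d\buyerutil| = n\sellerutil/\buyerutil$. Intuitively, the buyers' side carries $n$ times the weight, so the solution sits extremely close to the buyer-optimal end of the frontier.

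\textbf{Step 1: quantify the ideal points.} In the instance, $\OPT_{\buyerdist}$ is at least $\constantH \cdot \prob{\text{some buyer hits the atom}} = \constantH(1-o(1))$, because the atom has mass $n^{-1/4}$ and $n$ buyers draw from it. The buyer-optimal mechanism under MHR (\citealp{HR-08}) is the uniform lottery, with $\buyerbenchmarkDist = \expect{\val} = \Theta(1)$ and welfare $\Theta(1) = o(\OPT_{\buyerdist})$, since $\constantH \to \infty$.

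\textbf{Step 2: lower bound the seller revenue near the buyer end.} I will use the characterization in \Cref{thm:Pareto frontier:multi buyer:MHR distribution}(iii)--(iv): extreme points near $\buyerbenchmarkDist$ are indirect Vickrey auctions with zero reserve and coarse bid spaces. The simplest are two-level auctions $\bidspace=\{0,b\}$, which reduce to "pay $b$ to enter a lottery among high bidders", or mixtures of the lottery with such an auction. Along this segment I need two estimates. First, the rate at which the seller gains revenue per unit of buyer surplus sacrificed, $|d\sellerutil/d\buyerutil|$, is bounded above by a quantity $s$ that does not grow with $n$ (polylog at most). Second, revenue is $0$ at the lottery endpoint. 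The Nash first-order condition then forces $n\sellerutil/\buyerutil \le s$, that is, $\sellerutil \le s\,\buyerutil/n = o(1)$. Since $\buyerutil \le \buyerbenchmarkDist = O(1)$, the welfare of the {\Nashsolution} is $\sellerutil+\buyerutil = O(1)$.

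To make this rigorous without computing the exact frontier slope, I will use concavity of the frontier. Let $\sellerutil^N,\buyerutil^N$ be the Nash point and consider mixing it with the lottery by a factor $1-t$. Optimality of the Nash point against this mixture gives $d/dt$ of the objective $\le 0$ at $t=0$, which yields a bound on $\sellerutil^N$ in terms of $n$ and $(\buyerbenchmarkDist-\buyerutil^N)/\buyerutil^N$. I may instead bound welfare directly: welfare $=\sellerutil+\buyerutil$, and all welfare beyond $O(1)$ must come from allocating to the atom buyers. That in turn requires raising the seller's revenue or losing buyer surplus substantially.

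\textbf{Main obstacle.} The hard step is showing that any mechanism achieving welfare $\ge \eps\constantH$ has Nash product below the lottery's product $(\buyerbenchmarkDist/n)^n \cdot 0$. That product is zero, so the comparison has to be made against a near-lottery point instead. The real content is therefore a tradeoff inequality on the frontier: extracting welfare $\eps\constantH$ requires favoring atom buyers, which forces the buyer surplus down to at most $(1-c)\buyerbenchmarkDist$ for a constant $c$. The mechanism for this is the one from the KS analysis: competition among the roughly $n^{3/4}$ atom buyers kills their surplus, while low-value buyers lose allocation. The factor $(1-c)^n$ then overwhelms any revenue gain, which is at most $\constantH$, when compared with a near-lottery mechanism attaining revenue $\Omega(1/\mathrm{poly}(n))$ at surplus $(1-o(1/n))\buyerbenchmarkDist$. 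I expect to prove the tradeoff via the interim formula $\buyerutil = n\,\expect{\alloc(\quant)(-\quant\valFuncPrime(\quant))}$ together with Border's constraint near quantile $0$, reusing the estimates from the proof of \Cref{thm:KS solution welfare approximation}(iii).
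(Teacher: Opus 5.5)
Your workable route, the ``main obstacle'' paragraph, is essentially the paper's proof. There, welfare $\ge\eps\constantH$ forces $\allocHigh\ge\eps-o(1)$, hence buyers' surplus at most $(1-c)\buyerbenchmarkDist$, and the factor $(1-c)^n$ then loses to a near-lottery mixture. The paper compares against Myerson with probability $\frac{1}{n+1}$ plus the free lottery, and gets the same tradeoff from the $\allocLow/\allocHigh$ split at $\threshold=\ln\ln n$. That split uses monotonicity of the interim allocation and the inverse hazard rate $1/(2\val)$; Border's constraint plays no role.

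The Step~2 slope bound is unnecessary and you give no argument for it. The slope at the buyer end is of order $\sqrt{\ln n}$, so it does grow with $n$. Your lottery-mixing derivative also constrains $\buyerutil^N$ rather than $\sellerutil^N$: it gives $\buyerutil^N\ge\frac{n}{n+1}\buyerbenchmarkDist$. Combined with your tradeoff, that inequality already finishes the proof without the separate Myerson-plus-lottery comparison.
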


\begin{restatable}{theorem}{thmNashsolutionRegAntiMHR}
\label{thm:Nash solution welfare approximation reg anti-mhr}
    Consider a single-item market with $n \geq 1$ buyers whose values are drawn i.i.d.\ from a regular and anti-MHR distribution $\buyerdist$. The {\Nashsolution} achieves at least $\frac{n-1}{n}$ of the optimal social welfare.
\end{restatable}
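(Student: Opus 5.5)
By \Cref{thm:Pareto frontier:multi buyer:regular and antiMHR distribution}, the {\Nashsolution} is, without loss of generality, a second-price auction with a reserve $\reserve = \valFunc(\qquant)$ for some reserve quantile $\qquant \in [\quantM, 1]$. Because this mechanism is symmetric, each buyer's ex-ante utility is $\buyerutil/n$, where $\buyerutil$ is the buyers' surplus. So the Nash objective equals $\sellerutil \cdot (\buyerutil/n)^n$, and the {\Nashsolution} maximizes $\ln \sellerutil + n \ln \buyerutil$ over the (concave) Pareto frontier.

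\textbf{Step 1: first-order condition.} From the proof in \Cref{sec:Pareto frontier:reduction proof example}, each frontier point is the maximizer of $\sellerutil + \dualV \buyerutil$ for some $\dualV \in [0,1]$, and $-\dualV$ is a supporting slope of the frontier there. At the Nash point I will write the optimality condition with respect to this supporting line. Either we are at the corner $\qquant = 1$, or the first-order condition gives
\[
\dualV = \frac{n\,\sellerutil}{\buyerutil}.
\]
The corner $\qquant = 1$ is the plain second-price auction, which attains $\OPT_{\buyerdist}$, so that case is immediate. Otherwise, $\dualV \le 1$ yields the key inequality $\sellerutil \le \buyerutil/n$, i.e.\ the seller gets at most a $1/n$ share of the buyers' total surplus.

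\textbf{Step 2: bound the welfare loss.} Welfare is lost only when every value lies below the reserve, so
\[
L \triangleq \OPT_{\buyerdist} - \GFT[\buyerdist]{\mech} = \expect{\max_i \val_i \cdot \indicator{\max_i \val_i < \reserve}} \le \reserve\,(1-\qquant)^n .
\]
I want to charge this loss to revenue. The seller earns at least $\reserve$ whenever some buyer exceeds the reserve, so $\sellerutil \ge \reserve\,(1-(1-\qquant)^n)$. Combined with Step 1, this gives
\[
L \le \frac{(1-\qquant)^n}{1-(1-\qquant)^n}\,\sellerutil \le \frac{(1-\qquant)^n}{1-(1-\qquant)^n}\cdot\frac{\buyerutil}{n}.
\]
If the prefactor is at most $1$ (that is, $(1-\qquant)^n \le 1/2$), then $L \le \buyerutil/n \le \OPT_{\buyerdist}/n$, which gives the claimed $\frac{n-1}{n}$ ratio.

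\textbf{Step 3 (main obstacle): large reserves.} The remaining case is $(1-\qquant)^n > 1/2$, i.e.\ a reserve quantile $\qquant$ so small that the auction rarely sells. Here I would use more precise information instead of the crude bound $L \le \reserve (1-\qquant)^n$.

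The first tool is to make the first-order condition explicit via $\lagCurve$: at the reserve quantile, $\lagCurvePrime(\qquant) = 0$, i.e.\ $\dualV \valFunc(\qquant) + (1-\dualV)\revCurvePrime(\qquant) = 0$. Substituting $\dualV = n\sellerutil/\buyerutil$, this links $\reserve$, $\revCurvePrime(\qquant)$ and the ratio $\sellerutil/\buyerutil$.

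The second tool is the anti-MHR property that $-\quant\valFunc'(\quant)$ is non-decreasing. I expect it to show that $\buyerutil = n\,\expect{\alloc(\quant)\cdot(-\quant\valFunc'(\quant))}$ is large compared to $\reserve$ times the probability of sale. That in turn should force the Nash trade-off to choose $\qquant$ with $(1-\qquant)^n \le 1/2$, or else bound $L \le \OPT_{\buyerdist}/n$ directly.

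If this route fails, the fallback is a perturbation argument. Moving the reserve quantile from $\qquant$ to $1$ changes $\ln\sellerutil + n\ln\buyerutil$ by a computable amount. Showing this change is positive whenever $L > \OPT_{\buyerdist}/n$ would contradict Nash optimality and finish the proof.
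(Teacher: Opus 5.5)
\textbf{Gap: Step 3 is not proved.} Your Steps 1 and 2 are sound. At a non-SPA Nash point, $n\Pi/U$ must be a supporting slope of the frontier, and under anti-MHR that slope is at most $1$, so $\Pi\le U/n$. Charging the loss to revenue then works whenever the sale probability $1-(1-t)^n$ is at least $1/2$. In the regime $(1-t)^n>1/2$, however, you neither rule the regime out nor bound the loss; you only list tools you expect to work.

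The anti-MHR tool is also stated backwards. By the paper's quantile characterization, anti-MHR means $q\,v'(q)$ is non-decreasing. Hence $-q\,v'(q)=1/\phi(v(q))$ is non-increasing in $q$. With the correct direction and the reserve first-order condition $v(t)=(1-\lambda)(-t\,v'(t))$, you obtain $U\ge \frac{r}{1-\lambda}\,(1-(1-t)^n)$. This bound scales with the sale probability, so on its own it cannot control $L\le r(1-t)^n$ exactly when that probability is small. You would also need $\lambda$ near $1$, and nothing in your argument gives that. Your fallback, comparing with $t=1$, is the right instinct. To make it work you need a quantitative lower bound on the plain second-price auction's revenue, and your proposal never supplies one.

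\textbf{Missing ingredient.} It is the Bulow--Klemperer bound $\Pi(\mathsf{SPA})\ge\frac{n-1}{n}\Pi^*$ for regular $F$ (\Cref{lem:SPA:revenue approx regular}), combined with $U(\mathsf{SPA})=U^*$ for anti-MHR (\Cref{lem:SPA:buyer surplus antiMHR}). This is the paper's route. It needs no frontier characterization, no first-order conditions and no case split on the reserve:
\begin{itemize}
\item Nash optimality against the second-price auction gives $\Pi(\mathcal M)\,(U(\mathcal M)/n)^n\ge \frac{n-1}{n}\,\Pi^*\,(U^*/n)^n$.
\item Since $U(\mathcal M)\le U^*$, this yields $\Pi(\mathcal M)\ge\frac{n-1}{n}\Pi^*$.
\item Since $\Pi(\mathcal M)\le\Pi^*$, it yields $U(\mathcal M)\ge(\frac{n-1}{n})^{1/n}U^*\ge\frac{n-1}{n}U^*$.
\item Therefore $\Pi(\mathcal M)+U(\mathcal M)\ge\frac{n-1}{n}(\Pi^*+U^*)\ge\frac{n-1}{n}\textsc{Opt}_F$. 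The last step holds because $\textsc{Opt}_F$ is the welfare of the second-price auction and so is at most $\Pi^*+U^*$.
\end{itemize}
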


To obtain a notion of Nash fairness that better aligns with the aggregate nature of the bargaining problem between seller and buyers---as in the {\KSsolution}, which operates on total buyers' surplus---we introduce the {\CSNashsolution}, which maximizes the product of the seller’s revenue and the \emph{total} buyers' surplus. This cross-side formulation restores symmetry between the two sides of the market and yields significantly more robust performance (in contrast to the negative result presented in \Cref{thm:Nash solution welfare approximation mhr}).

\begin{definition}[{\CSNashsolution}]
\label{def:cross-side nash solution}
    Fix a valuation distribution $\buyerdist$. A Pareto-optimal mechanism $\mech$ is a {\CSNashsolution} if it maximizes the product of the seller revenue and buyers' surplus, i.e.,
    \begin{align}
    \label{eq:cross-side nash fairness}
    \tag{\texttt{cross-side Nash social welfare}}
        \mech \in \argmax_{\mech\primed\in\mechfam}~
        \sellerexanteutil[\buyerdist]{\mech\primed}
        \cdot 
        \buyerexanteutil[\buyerdist]{\mech\primed}.
    \end{align}
\end{definition}

In the special case of a single buyer, the {\Nashsolution} and {\CSNashsolution} coincide. The tight welfare approximation guarantee of 50\% for this setting is established in \citet{BFM-25}. In \Cref{apx:nash solution analysis}, we further provide an improved welfare guarantee for MHR distributions (\Cref{thm:Nash solution welfare approximation single-buyer mhr}).
For general market sizes, the welfare guarantees of the {\CSNashsolution} closely mirror those of the {\KSsolution} across distributional classes.

\begin{restatable}{theorem}{thmCSNashsolution}
\label{thm:cross-side Nash solution welfare approximation}
Consider a single-item market with $n \geq 1$ buyers whose values are drawn i.i.d.\ from a distribution $\buyerdist$. Then the following hold:
\begin{enumerate}[label=(\roman*)]
    \item For any 
any valuation distribution $\buyerdist$, the social welfare of the {\CSNashsolution} is at least $50\%$ of the optimal social welfare.
    
    \item If $\buyerdist$ is both regular and anti-MHR, then the {\CSNashsolution} achieves at least $\frac{n-1}{n}$ of the optimal social welfare.
    
    \item For every $\eps > 0$ and any sufficiently large $n$, there exists an instance (\Cref{example:multi-buyer:mhr hard instance}) of $n$ buyers with MHR valuation distribution $\buyerdist$ such that the social welfare of the {\CSNashsolution} is at most $(50 + \eps)\%$ of the optimal social welfare.
\end{enumerate}
\end{restatable}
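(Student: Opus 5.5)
We treat the three parts separately, using two facts throughout: the feasible payoff set is convex, and $\sellerexanteutil[\buyerdist]{\mech}+\buyerexanteutil[\buyerdist]{\mech}=\GFT[\buyerdist]{\mech}\le \OPT_{\buyerdist}$.

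\textbf{Part (i).} Write $\Pi=\sellerbenchmarkDist$ and $U=\buyerbenchmarkDist$, and note that $\Pi,U\le \OPT_{\buyerdist}$. Let $(\Pi_c,U_c)$ be the payoff pair of the {\CSNashsolution}; we want $\Pi_c+U_c\ge \OPT_{\buyerdist}/2$.

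\begin{itemize}
\item The Vickrey auction gives a feasible point $(\Pi_V,U_V)$ with $\Pi_V+U_V=\OPT_{\buyerdist}$.
\item The segment from the origin-type point $(0,U)$ to $(\Pi,0)$ is not directly usable. Instead, use first-order optimality: for a product maximizer over a convex set, the line $\{\Pi'/\Pi_c+U'/U_c=2\}$ supports the feasible set.
\item Applying this to the Vickrey point gives $\Pi_V/\Pi_c+U_V/U_c\le 2$.
\item With $W=\Pi_c+U_c$, we have $\Pi_V/\Pi_c+U_V/U_c\ge (\Pi_V+U_V)/\max(\Pi_c,U_c)\ge \OPT_{\buyerdist}/W$.
\end{itemize}

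Combining the last two bullets gives $W\ge \OPT_{\buyerdist}/2$. This is the standard \citet{BFM-25}-style argument and should go through verbatim, including when $\Pi_c$ or $U_c$ is zero; degenerate cases are handled by noting that the product is positive unless the frontier is trivial.

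\textbf{Part (ii).} We use \Cref{thm:Pareto frontier:multi buyer:regular and antiMHR distribution}: every frontier point is a second-price auction with reserve quantile $\quant\in[\quantM,1]$. By \Cref{prop:multi buyer:anti-MHR SPA gives umax}, the point $\quant=1$ is the Vickrey auction with $U=U_V$. The plan is to compare the product at the optimum with the product at Vickrey.

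\begin{itemize}
\item Along the frontier, welfare lost by a reserve at quantile $\quant$ is $\OPT_{\buyerdist}-\GFT[\buyerdist]{\mech}=\int_\quant^1 \valFunc(t)\,n(1-t)^{n-1}\dd t$.
\item Revenue gained relative to Vickrey is the revenue-curve increment $n\int_\quant^1(-\revCurvePrime(t))(1-t)^{n-1}\dd t$.
\item The key inequality to establish is that buyers' surplus at Vickrey is at least a $1-1/n$-type fraction of $\OPT_{\buyerdist}$, or that the Vickrey revenue is small enough relative to $U_V$ that moving the reserve can raise $\Pi U$ only by paying little welfare.
\item Concretely, write the product's derivative in $\quant$ and show that at any stationary point, $\OPT_{\buyerdist}-W\le \OPT_{\buyerdist}/n$. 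Here anti-MHR means $-t\valFuncPrime(t)$ is nondecreasing, which should bound the revenue share of the lowest-order terms by a $1/n$ factor.
\end{itemize}

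\textbf{Part (iii).} We use \Cref{example:multi-buyer:mhr hard instance}. By part (i) the ratio is at least $1/2$, so the task is to show it is at most $1/2+\eps$ for this instance. The steps:

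\begin{itemize}
\item Show that $\OPT_{\buyerdist}\approx \constantH$, since the atom is hit with probability $1-o(1)$.
\item Show that $\Pi\approx \constantH$ from a reserve near $\constantH$, and that $U=\expect{\val}=O(1)$, achieved by random allocation, because MHR means competition hurts buyers.
\item Compute the frontier explicitly via \Cref{thm:Pareto frontier:multi buyer:MHR distribution}. Second-price auctions with reserves give $U\le U_V\approx 0$, since Vickrey revenue $\approx\constantH$ leaves tiny surplus. Surplus of order $1$ requires indirect auctions with zero reserve and coarse bid spaces, which lose almost all welfare, giving $W=O(1)$.
\item The cross-side product then trades off two regimes. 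In the first, the surplus $U_c=\Theta(1)$ with welfare $O(1)$, so the product is $O(1)$. In the second, we randomize Vickrey with probability $\mixprob$ against the buyer-optimal mechanism, getting product $\approx \mixprob\constantH\cdot(1-\mixprob)U$. This is maximized at $\mixprob\approx 1/2$, giving welfare $\approx\constantH/2$.
\end{itemize}

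The main obstacle is showing that no frontier point does better than this mixture, i.e.\ that the frontier is essentially the segment joining $(\constantH,0)$ and $(O(1),U)$ up to $o(\constantH)$. I expect that to follow from the same bounds used for the KS part (iii).
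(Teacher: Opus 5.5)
\textbf{Part (ii) has a genuine gap.} Below, $\Pi^*,U^*$ are the two benchmarks, and $\Pi_c,U_c,W,\Pi_V,U_V$ are as in your proposal. The missing ingredient is the Bulow--Klemperer-type bound of \Cref{lem:SPA:revenue approx regular}: for regular $F$, the Vickrey auction earns $\Pi_V\ge\frac{n-1}{n}\Pi^*$. The factor $\frac{n-1}{n}$ in the statement comes from regularity through this bound. Anti-MHR is used only to make Vickrey buyer-optimal, $U_V=U^*$.

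Neither of your proposed ``key inequalities'' holds. For the exponential distribution, which is regular and anti-MHR, $\OPT=H_n$, $\Pi_V=H_n-1$ and $U_V=1$. So $U_V/\OPT=1/H_n\to 0$ and $\Pi_V/U_V\to\infty$. Your remaining plan is to differentiate $\Pi U$ along the reserve quantile and bound ``lowest-order terms'' via monotonicity of $-t\,\valFuncPrime(t)$. That is a hope rather than an argument, and you have not named an inequality that would make it work.

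With the missing lemma, no frontier parametrization is needed. We have $\Pi_cU_c\ge\Pi_VU_V\ge\frac{n-1}{n}\Pi^*U^*$. Together with $\Pi_c\le\Pi^*$ and $U_c\le U^*$, this forces $\Pi_c\ge\frac{n-1}{n}\Pi^*$ and $U_c\ge\frac{n-1}{n}U^*$. Hence $W\ge\frac{n-1}{n}(\Pi^*+U^*)\ge\frac{n-1}{n}\OPT$, which is the paper's proof. Equivalently, your own supporting-line inequality from (i), evaluated at the Vickrey point, gives $\Pi_c\ge\Pi_V$ and $U_c\ge\frac{n}{n+1}U^*$ directly.

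\textbf{Part (i)} is correct but argued differently from the paper. The paper shows that if $\Pi_c<\Pi^*/2$, mixing in Myerson strictly raises the product, and symmetrically for surplus. So each side gets at least half its benchmark, and the paper then uses $\OPT\le\Pi^*+U^*$. You evaluate the same first-order condition at the Vickrey point instead. This skips the $\OPT\le\Pi^*+U^*$ step; the paper's version additionally yields the per-side half-benchmark guarantees.

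\textbf{Part (iii)} follows the paper's route, and the obstacle you flag is closed exactly by the KS-part bounds. You do not need the MHR frontier characterization. Split on whether the winner's value exceeds $\ln\ln n$, with probabilities $x_h$ and $x_l$. Then every mechanism has $\Pi\le(x_h+o(1))\constantH$ and $U\le(x_l+o(1))\,\mathbb{E}[v]$, with $x_l+x_h\le 1$.

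State the containment in these normalized coordinates, $\Pi/\constantH+U/\mathbb{E}[v]\le 1+o(1)$, rather than ``up to $o(\constantH)$''. An additive $o(\constantH)$ error is vacuous in the surplus coordinate, where everything is $O(1)$. Comparing $\Pi_cU_c$ with your $\frac12$--$\frac12$ mixture then forces $x_h\le\frac12+o(1)$, hence $W\le(\frac12+o(1))\OPT$.
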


\bibliographystyle{apalike}
\bibliography{refs.bib}

\appendix

\section{Strong Budget Balance vs. Weak Budget Balance}
\label{sec:weak budget balance}
In this section, we introduce the concept of weak budget balance, by extending the definition of a mechanism. In general, a mechanism $\mech = (\allocs, \prices, \sellerprice)$ includes an allocation rule $\allocs = (\alloc_1, \dots, \alloc_n)$, a buyer payment rule $\prices = (\price_1, \dots, \price_n)$, and a seller payment rule $\sellerprice$. As previously defined, for each buyer $i \in [n]$, her allocation $\alloc_i \colon \reals_+^n \to [0, 1]$ and payment $\price_i \colon \reals_+^n \to \reals_+$ map the bids of all $n$ buyers to, respectively, the fraction of the item allocated to her and the monetary transfer she makes. 
Further, we involve an additional seller payment rule $\sellerprice \colon \reals_+^n \to \reals_+$, which maps the bid profile to the monetary transfer received by the seller. 

As standard in the literature~\citep{mye-81}, under this extended definition, it is without loss of generality to restrict attention to ex-post feasible, Bayesian incentive compatible (BIC), interim individually rational (IR), ex-ante (weak) budget balanced (WBB) mechanisms. Specifically, the last property is defined as follows. 
\begin{itemize}
    \item \emph{Ex-ante weak budget balance}: The total ex-ante payment collected from all buyers is at least the ex-ante payment sent to the seller, i.e., $\sexpect{\sum_{i \in [n]}\price_i(\vals)} \geq \sexpect{\sellerprice(\vals)}$. When $\sexpect{\sum_{i \in [n]}\price_i(\vals)} = \sexpect{\sellerprice(\vals)}$, the mechanism is said to satisfy \emph{ex-ante strong budget balance (SBB)}.
\end{itemize}

One observes that in the main body, we focus on SBB mechanisms, since all buyers' payments are directly transferred to the seller. We now show that this is without loss in characterizing the Pareto frontier, in the sense that every Pareto-optimal mechanism is SBB. 

\begin{lemma} \label{lem:weak budget balance:SBB}
    In the family of mechanisms that are ex-post feasible, BIC, interim IR, and \emph{WBB}, all Pareto-optimal mechanisms are \emph{SBB}.
\end{lemma}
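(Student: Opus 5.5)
We argue by contradiction: if a WBB mechanism is Pareto-optimal but not SBB, we modify it so that the seller collects the money the planner keeps, without changing anything for the buyers. Fix a mechanism $\mech = (\allocs, \prices, \sellerprice)$ that is ex-post feasible, BIC, interim IR and WBB, and suppose it is not SBB. Then the surplus
\[
\Delta \triangleq \sexpect{\textstyle\sum_{i\in[n]}\price_i(\vals)} - \sexpect{\sellerprice(\vals)}
\]
is strictly positive. Define $\mech\primed = (\allocs, \prices, \sellerprice\primed)$ with $\sellerprice\primed(\vals) \triangleq \sellerprice(\vals) + \Delta$ for every bid profile $\vals$. A state-independent top-up suffices. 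Alternatively one could set $\sellerprice\primed(\vals) = \sum_i \price_i(\vals)$, which is ex-post SBB; either choice works.

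Next we check that $\mech\primed$ is still in the family. The allocation rule $\allocs$ and the buyer payment rule $\prices$ are unchanged. Hence ex-post feasibility, each buyer's interim allocation and payment, and therefore BIC and interim IR are all preserved: none of these constraints involves $\sellerprice$. The buyers do not care what the seller receives, so their incentives are unaffected. Ex-ante budget balance now holds with equality, so $\mech\primed$ is SBB and in particular WBB. Because $\sellerprice\primed \geq \sellerprice \geq 0$, nonnegativity of the seller payment rule is also maintained.

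Finally we compare payoffs. The buyers' surplus $\buyerexanteutil[\dist]{\mech\primed} = \buyerexanteutil[\dist]{\mech}$, since it depends only on $(\allocs, \prices)$. The seller's revenue $\sexpect{\sellerprice\primed(\vals)} = \sexpect{\sellerprice(\vals)} + \Delta$ is strictly larger. So $\mech\primed$ Pareto-dominates $\mech$ within the WBB family, contradicting Pareto-optimality. Therefore every Pareto-optimal WBB mechanism has $\Delta = 0$, i.e., it is SBB.

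There is no substantial obstacle. The only point needing care is confirming that the extended definitions of BIC and interim IR refer only to buyers' utilities, so that altering $\sellerprice$ never breaks them, and that the seller has no incentive constraint of her own: she has zero cost and is not a strategic reporter in this model. This also explains why restricting to SBB mechanisms in the main text loses nothing for the Pareto frontier.
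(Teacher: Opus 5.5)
Your proposal is correct and follows the paper's approach: transfer the planner's leftover ex-ante surplus to the seller. This leaves $(\allocs, \prices)$ unchanged, so buyers' surplus and all buyer-side constraints are preserved while seller revenue strictly increases, which gives a Pareto-dominating SBB mechanism. You spell out the preservation of feasibility, BIC, interim IR and nonnegativity of $\sellerprice$ in more detail than the paper's one-line argument, but the route is the same.
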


\begin{proof}[Proof of \Cref{lem:weak budget balance:SBB}]
    Given any WBB mechanism that is not SBB, one can always transfer any money left to the mechanism to the seller in the ex-ante sense. This results in an SBB mechanism that Pareto-dominates the original mechanism, since seller revenue strictly increases while buyers' surplus does not change. This finishes the proof.
\end{proof}

As a consequence, since all our results in the main body focus on Pareto-optimal mechanisms, they can also be extended to the family of mechanisms that satisfy WBB.

\section{A Unified Proof of Results in \texorpdfstring{\Cref{sec:Pareto frontier}}{Section 3}}
\label{sec:proofs in Pareto frontier}
In this section, we provide a unified proof for all results in \Cref{sec:Pareto frontier}. 

\subsection{Preparations: Properties of the Mixed Welfare-Revenue Curve}

We start by studying the mixed welfare-revenue curve $\lagCurve$:  
\[\lagCurve(\quant) = \dualV\cdot \int_0^{\quant} \valFunc(\qquant) \dd \qquant + (1 - \dualV)\cdot \quant \valFunc(\quant).\]
We now provide some structural properties for $\lagCurve$ and its concave hull, which we denote by $\lagCurveCH$. 
We let 
\[\revCurveOPT(\dualV) \triangleq \max_{\quant \in [0, 1]} \lagCurve(\quant) = \max_{\quant \in [0, 1]} \lagCurveCH(\quant)\]
be the maximum of $\lagCurve(\quant)$ and $\lagCurveCH(\quant)$. Further, let 
\[\quantItvOPT(\dualV) \triangleq \argmax_{\quant \in [0, 1]} \lagCurveCH(\quant)\]
be the set of quantiles $\quant$ that maximize $\lagCurveCH(\quant)$. Note that since $\lagCurveCH$ is a concave curve, $\quantItvOPT(\dualV)$ is a singleton or a closed interval. 
We denote its infimum and supremum by $\leftS(\dualV)$ and $\rightS(\dualV)$ for $\dualV \geq 0$, respectively, satisfying $\rightS(\dualV) \geq \leftS(\dualV)$.

The following proposition characterizes the behavior of $\revCurveOPT(\dualV)$ and $\quantItvOPT(\dualV)$ as functions of $\dualV$.

\begin{proposition}\label{prop:proofs in Pareto frontier:dualV and quantItvOPT relation}
    We have the following:
    \begin{enumerate} 
        \item $\revCurveOPT(\dualV)$ is continuous on $[0, +\infty)$.
        \item For any $\dualV > \dualVPrime$, $\leftS(\dualV) \geq \leftS(\dualVPrime)$ and $\rightS(\dualV) \geq \rightS(\dualVPrime)$. 
        \item As a set-valued function of $\dualV$, $\quantItvOPT(\dualV)$ is upper hemicontinuous on $[0, 1]$.
        \item For any $\quant \in [\leftS(0), 1]$, there is some $\dualV \geq 0$ such that $\quant \in \quantItvOPT(\dualV)$. 
        \item If for some $\dualVPrime < \dualV$, $\leftS(\dualV) < \rightS(\dualVPrime)$, then $\rightS(\dualVPrime) = \rightS(\dualV) = 1$. 
    \end{enumerate}
\end{proposition}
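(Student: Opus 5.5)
The plan is to exploit one structural fact: for each fixed quantile $\quant$, the curve value is affine in the dual variable,
\[
\lagCurve(\quant)=\revCurve(\quant)+\dualV\cdot D(\quant),\qquad D(\quant)\triangleq\int_0^{\quant}\valFunc(\qquant)\dd\qquant-\quant\valFunc(\quant)=\int_0^{\quant}\bigl(\valFunc(\qquant)-\valFunc(\quant)\bigr)\dd\qquant .
\]
Since $\valFunc$ is non-increasing, $D\ge 0$ and $D$ is non-decreasing, with $D'(\quant)=-\quant\valFunc'(\quant)$. I would also record the derivative identity $\lagCurvePrime(\quant)=\valFunc(\quant)+(\dualV-1)(-\quant\valFunc'(\quant))$. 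Hence for $\dualV\ge1$ the curve $\lagCurve$ is non-decreasing, and $1\in\quantItvOPT(\dualV)$.

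For item 1, I would use that $|\lagCurve(\quant)-\revCurve_{\dualVPrime}(\quant)|\le|\dualV-\dualVPrime|\sup_\quant D(\quant)$. The bound $\sup_\quant D(\quant)\le \int_0^1\valFunc=\expect{\val}$ is finite, assuming finite mean; otherwise I would restrict to bounded intervals of quantiles. So $\revCurveOPT$ is Lipschitz, and in fact convex as a supremum of affine functions. The same sup-norm bound passes to concave hulls, because hulls of functions within $\eps$ in sup-norm are within $\eps$. This gives uniform convergence $\lagCurveCH[\dualV_k]\to\lagCurveCH$ whenever $\dualV_k\to\dualV$, which I will reuse.

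For item 2, I would use increasing differences: $\lagCurve-\revCurve_{\dualVPrime}=(\dualV-\dualVPrime)D$ is non-decreasing in $\quant$. The standard revealed-preference (Topkis) argument then shows that the argmax sets of $\lagCurve$ are monotone in the strong set order. The argmax of the hull $\lagCurveCH$ is the closed convex hull of the (limit) maximizers of $\lagCurve$, so its infimum and supremum inherit the monotonicity. The only care needed is attainment, since $\valFunc$ is only left-continuous, and I would handle it via the footnoted argument used for $\quantM$.

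For item 3, I would prove a closed graph on $[0,1]$. Take $\dualV_k\to\dualV$ and $\quant_k\in\quantItvOPT(\dualV_k)$ with $\quant_k\to\quant$. By uniform convergence, $\lagCurveCH(\quant_k)\to\revCurveOPT(\dualV)$. By concavity, $\lagCurveCH$ is continuous on $(0,1)$ and upper semicontinuous at the endpoints, so $\lagCurveCH(\quant)\ge\revCurveOPT(\dualV)$. Compactness of $[0,1]$ then upgrades closed graph to upper hemicontinuity.

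For item 4, I would combine items 2 and 3 with $1\in\quantItvOPT(1)$. Given $\quant\in[\leftS(0),1]$, define $\dualV^\star=\inf\{\dualV\in[0,1]:\rightS(\dualV)\ge\quant\}$, which is well defined since $\rightS(1)=1$. Taking limits from the left gives a maximizer $\le\quant$ at $\dualV^\star$, or $\dualV^\star=0$ and $\leftS(0)\le\quant$. Taking limits from the right gives a maximizer $\ge\quant$ at $\dualV^\star$, by closed graph. Since $\quantItvOPT(\dualV^\star)$ is an interval, it contains $\quant$.

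Item 5 is where I expect the real difficulty. Under the hypothesis and item 2, the intervals $\quantItvOPT(\dualVPrime)$ and $\quantItvOPT(\dualV)$ overlap in a nondegenerate interval $[a,b]$, on which both hulls are flat at their maxima.

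First I would take touching points $\quant_1<\quant_2$ of the curve at $\dualVPrime$ (where $\revCurve_{\dualVPrime}=\lagCurveCH[\dualVPrime]$) that lie in $[a,b]$. The identity $\lagCurve(\quant_i)=\revCurveOPT(\dualVPrime)+(\dualV-\dualVPrime)D(\quant_i)\le\revCurveOPT(\dualV)$ holds for both $i$. Comparing this with the flatness of the hull at $\dualV$ should force $D(\quant_1)=D(\quant_2)$. By monotonicity, $D$ is then constant on $[\quant_1,\quant_2]$, which means $\valFunc'=0$ there.

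Next, on such a stretch the curve $\lagCurve$ has slope $\valFunc>0$, except in the trivial case where the value is zero. So it cannot be maximal at an interior right endpoint unless the flat part extends to quantile $1$. I would push this argument to conclude $\rightS(\dualVPrime)=\rightS(\dualV)=1$.

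Making the touching-point bookkeeping rigorous is the main obstacle. The hull may be flat where $\lagCurve$ itself is strictly below it, so I may need to argue with the extreme touching points $\leftS,\rightS$ directly rather than with interior points.
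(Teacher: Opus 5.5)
Your items 1, 2 and 4 follow the paper's proof: the Lipschitz bound through $D$, the increasing-differences argument, and the infimum-of-multipliers construction combined with hemicontinuity. For item 3, your closed-graph argument via uniform convergence of the hulls is a valid and somewhat more general alternative to the paper's argument along increasing sequences of multipliers. Its justification is misstated, though. A concave function on $[0,1]$ is in general \emph{lower}, not upper, semicontinuous at the endpoints. What you actually need is that $R_\lambda$ itself is upper semicontinuous on $[0,1]$ for $\lambda\in[0,1]$. This holds because $v$ is non-increasing and left-continuous and $R(0)$ is defined as a limit, and it makes the concave hull of $R_\lambda$ upper semicontinuous as well.

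The genuine gap is in item 5, at the step meant to give $D(q_1)=D(q_2)$. There are two problems.
\begin{itemize}
\item Nothing guarantees two touching points of $R_{\lambda'}$ in the overlap $[\ell(\lambda),r(\lambda')]$. Only $r(\lambda')$ is known to be one; $\ell(\lambda)$ is a priori a maximizer of $R_\lambda$ only.
\item Even granting two such points, the bound $R^*(\lambda')+(\lambda-\lambda')D(q_i)\le R^*(\lambda)$ is one-sided. Flatness of the hull $\bar R_\lambda$ on the overlap says nothing about the values of $R_\lambda$ itself at $q_1,q_2$, so $D(q_1)<D(q_2)$ is not excluded.
\end{itemize}
The missing idea is to pair the \emph{left} extreme maximizer at the larger multiplier with the \emph{right} extreme maximizer at the smaller one, and squeeze. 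With $\ell=\ell(\lambda)<r'=r(\lambda')$,
\[
0\le R_{\lambda'}(r')-R_{\lambda'}(\ell)\le R_{\lambda'}(r')-R_{\lambda'}(\ell)+(\lambda-\lambda')\bigl(D(r')-D(\ell)\bigr)=R_\lambda(r')-R_\lambda(\ell)\le 0 .
\]
The three inequalities use, in order, that $r'$ maximizes $R_{\lambda'}$, that $D$ is non-decreasing, and that $\ell$ maximizes $R_\lambda$. Hence $\ell$ also maximizes $R_{\lambda'}$, and $D(\ell)=D(r')$. Since
\[
D(r')-D(\ell)=\ell\bigl(v(\ell)-v(r')\bigr)+\int_\ell^{r'}\bigl(v(t)-v(r')\bigr)\,dt
\]
is a sum of non-negative terms, $v$ equals a constant $c$ on $[\ell,r']$ (on $(0,r']$ if $\ell=0$).

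Your closing step also runs the wrong way. The case $c>0$ is not one to push further; it is impossible, because $R_{\lambda'}(r')-R_{\lambda'}(\ell)=(r'-\ell)c$ must vanish. So $c=0$, and $v$ vanishes from $\ell$ onward. Then both $R_\lambda$ and $R_{\lambda'}$ are constant on $[\ell,1]$. Since $\ell$ maximizes both curves, so does $1$, which gives $r(\lambda)=r(\lambda')=1$. This is exactly the paper's argument.
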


\begin{proof}[Proof of \Cref{prop:proofs in Pareto frontier:dualV and quantItvOPT relation}]
    For the first statement, notice that for any $\quant$, $\dualV$, and $\dualVPrime$, 
    \begin{align*}
        |\lagCurve(\quant) - \lagCurve[\dualVPrime](\quant)| &= |\dualV - \dualVPrime|\cdot (\int_0^{\quant} \valFunc(\qquant) \dd \qquant - \quant \valFunc(\quant)) \\
        &= |\dualV - \dualVPrime|\cdot (\int_0^{\quant} (\valFunc(\qquant) - \valFunc(\quant)) \dd \qquant \\
        &\leq |\dualV - \dualVPrime|\cdot \int_0^{1} \valFunc(\qquant) \dd \qquant .
    \end{align*} 
    Therefore, 
    \begin{align*}
        \revCurveOPT(\dualV) - \revCurveOPT(\dualVPrime) &= \max_{\quant} \lagCurve(\quant) - \max_{\quant} \lagCurve[\dualVPrime](\quant) \\
        &\leq \max_{\quant} \lagCurve(\quant) - \max_{\quant} \inParentheses{\lagCurve(\quant) - |\dualV - \dualVPrime|\cdot \int_0^{1} \valFunc(\qquant) \dd \qquant} \\
        &\leq |\dualV - \dualVPrime|\cdot \int_0^{1} \valFunc(\qquant) \dd \qquant. 
    \end{align*}
    Similarly, $\revCurveOPT(\dualVPrime) - \revCurveOPT(\dualV) \leq |\dualV - \dualVPrime|\cdot \int_0^{1} \valFunc(\qquant) \dd \qquant$ also holds. 
    This concludes that $\revCurveOPT(\dualV)$ is Lipschitz continuous on $[0, +\infty)$. 

    For the second statement, it suffices to prove that for any $\dualV > \dualVPrime > 0$ and $1 > \quant > \quantPrime > 0$ , if $\lagCurve[\dualVPrime](\quant) \geq \lagCurve[\dualVPrime](\quantPrime)$, then $\lagCurve[\dualV](\quant) \geq \lagCurve[\dualV](\quantPrime)$ also holds. 
    Notice that 
    \[\lagCurve[\dualV](\quant) - \lagCurve[\dualVPrime](\quant) = (\dualV - \dualVPrime)\inParentheses{\int_0^{\quant} \valFunc(\qquant) \dd \qquant - \quant \valFunc(\quant)}, \]
    and that $\int_0^{\quant} \valFunc(\qquant) \dd \qquant - \quant \valFunc(\quant) = \int_0^{\quant} (\valFunc(\qquant) - \valFunc(\quant)) \dd \qquant$ is a non-decreasing function of $\quant$, we derive that
    \begin{align*}
        &\mathrel{\phantom{\Longleftrightarrow}} \lagCurve[\dualV](\quant) - \lagCurve[\dualVPrime](\quant) \geq \lagCurve[\dualV](\quantPrime) - \lagCurve[\dualVPrime](\quantPrime) \\
        &\Longleftrightarrow \lagCurve[\dualV](\quant) - \lagCurve[\dualV](\quantPrime) \geq \lagCurve[\dualVPrime](\quant) - \lagCurve[\dualVPrime](\quantPrime) \geq 0.
    \end{align*}

    For the third statement, let $\{\dualV[\subLim]\} \to \dualV[0]$ be an increasing sequence. Thus, by the third statement, $\{\rightS(\dualV[\subLim])\}$ is also a non-decreasing sequence, and converges to some $\rightS$ by the monotone convergence theorem. We know from the above that: 
    \[
    \lim_{\subLim \to \infty}\lagCurve[{\dualV[0]}](\rightS(\dualV[\subLim])) = 
    \lim_{\subLim \to \infty}\lagCurve[{\dualV[\subLim]}](\rightS(\dualV[\subLim])) = \lim_{\subLim \to \infty}\revCurveOPT(\dualV[\subLim]) = \revCurveOPT(\dualV[0]).
    \]
    Now observe that $\int_0^{\quant} \valFunc(\qquant) \dd \qquant$ is a continuous function, and since $\valFunc(\quant)$ is a non-increasing left-continuous function, $\quant \valFunc(\quant)$ is also left-continous, and if $\{\rightS(\dualV[\subLim])\} \to \rightS$, $\lim_{\subLim \to \infty} \rightS(\dualV[\subLim])\valFunc(\rightS(\dualV[\subLim])) \leq \rightS\valFunc(\rightS)$.
    Therefore, 
    \[
    \lagCurve[{\dualV[0]}](\rightS) \geq 
    \lim_{\subLim \to \infty}\lagCurve[{\dualV[0]}](\rightS(\dualV[\subLim])) = 
    \revCurveOPT(\dualV[0]),
    \]
    which implies $\rightS \in \quantItvOPT(\dualV[0])$. Similarly, $\{\leftS(\dualV[\subLim])\}$ converges to some $\leftS \in \quantItvOPT(\dualV[0])$. Thus, for any $\{\quant[\subLim]\} \to \quant[0]$ that satisfy $\quant[\subLim] \in \quantItvOPT(\dualV[\subLim])$, $\leftS \leq \quant[0] \leq \rightS$, indicating that $\quant[0] \in \quantItvOPT(\dualV[0])$. This proves the upper hemicontinuity of $\quantItvOPT(\dualV)$.

    For the fourth statement, we know that $\rightS(1) = 1$. Since $\quantItvOPT(\dualV)$ is upper hemicontinuous, we know that for any $\quant \in [\leftS(0), \rightS(1) = 1]$, the sets $\{\dualV \in [0, 1]\suchthat \leftS(\dualV) \leq \quant\}$ and $\{\dualV \in [0, 1]\suchthat \rightS(\dualV) \geq \quant\}$ are non-empty and closed. 
Now let 
    \begin{align*}
        \dualVPlus \triangleq \sup \{\dualV \in [0, 1]\suchthat \leftS(\dualV) \leq \quant\}, \quad \dualVMinus \triangleq \inf \{\dualV \in [0, 1]\suchthat \rightS(\dualV) \geq \quant\}.
    \end{align*}
    We show $\dualVPlus \geq \dualVMinus$. Or else, for any $\dualV \in (\dualVPlus, \dualVMinus)$, $\leftS(\dualV) > \quant > \rightS(\dualV)$, contradicting that $\leftS(\dualV) \leq \rightS(\dualV)$. 
    Therefore, for any $\dualVMinus \leq \dualV \leq \dualVPlus$, by the second statement, $\leftS(\dualV) \leq \leftS(\dualVPlus) \leq \quant \leq \rightS(\dualVMinus) \leq \rightS(\dualV)$, or that $\quant \in \quantItvOPT(\dualV)$. 
    
    At last, for the last statement, we know that $\lagCurve[\dualVPrime](\leftS(\dualV)) \leq \lagCurve[\dualVPrime](\rightS(\dualVPrime))$ and $\lagCurve(\leftS(\dualV)) \geq \lagCurve(\rightS(\dualVPrime))$ hold simultaneously. Since $\leftS(\dualV) < \rightS(\dualVPrime)$, by a similar analysis of the second statement, both inequalities hold with equality. Thus, 
    \begin{align*}
        0 &= [\lagCurve(\rightS(\dualVPrime)) - \lagCurve(\leftS(\dualV))] - [\lagCurve[\dualVPrime](\rightS(\dualVPrime)) - \lagCurve[\dualVPrime](\leftS(\dualV))] \\
        &= (\dualV - \dualVPrime) \cdot \inParentheses{\int_0^{\rightS(\dualVPrime)} (\valFunc(\qquant) - \valFunc(\rightS(\dualVPrime)) \dd \qquant) - \int_0^{\leftS(\dualV)} (\valFunc(\qquant) - \valFunc(\leftS(\dualV)) \dd \qquant)} \\
        &= (\dualV - \dualVPrime) \cdot \inParentheses{\int_0^{\leftS(\dualV)} (\valFunc(\leftS(\dualV)) - \valFunc(\rightS(\dualVPrime)) \dd \qquant) + \int_{\leftS(\dualV)}^{\rightS(\dualVPrime)} (\valFunc(\qquant) - \valFunc(\rightS(\dualVPrime)) \dd \qquant)}. 
    \end{align*}
    Consequently, we derive that $\valFunc(\leftS(\dualV)) = \valFunc(\rightS(\dualVPrime))$. As a result, 
    \begin{align*}
        0 = \lagCurve(\rightS(\dualVPrime)) - \lagCurve(\leftS(\dualV)) = (\rightS(\dualVPrime) - \leftS(\dualV))\cdot \valFunc(\leftS(\dualV)), 
    \end{align*}
    leading to $\valFunc(\leftS(\dualV)) = 0$. As such, $\lagCurve[\dualVPrime](\rightS(\dualVPrime)) = \lagCurve[\dualVPrime](1)$ and $\lagCurve(\leftS(\dualV)) = \lagCurve(1)$ hold.
    Therefore, $\rightS(\dualVPrime) = \rightS(\dualV) = 1$.
\end{proof}

Furthermore, the following result establishes the trend of the difference between $\lagCurve$ and $\lagCurveCH$, when $\dualV$ increases from 0.

\begin{proposition} \label{prop:proofs in Pareto frontier:ironed intervals shrink expand with dualV}
    We have the following: 
    \begin{itemize}
        \item For $0 \leq \dualVPrime < \dualV < 1$, if $\lagCurve(\quant) < \lagCurveCH(\quant)$, then $\lagCurve[\dualVPrime](\quant) < \lagCurveCH[\dualVPrime](\quant)$ holds. 
        \item For $1 < \dualV < \dualVPrime$, $\lagCurve(\quant) < \lagCurveCH(\quant)$, then $\lagCurve[\dualVPrime](\quant) < \lagCurveCH[\dualVPrime](\quant)$ holds. 
    \end{itemize}
\end{proposition}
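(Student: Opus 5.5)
The idea is to remove the scalar factor in front of $\lagCurve$ and write the curve as ``a fixed function plus a nonnegative multiple of a concave function.'' Let $S(\quant) \triangleq \int_0^{\quant} \valFunc(\qquant) \dd \qquant$. Since $\valFunc$ is non-negative and non-increasing, $S$ is concave. Recall $\lagCurve = \dualV S + (1-\dualV)\revCurve$.

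Multiplying a function by a positive constant rescales its concave hull by the same constant. Hence the set $\{\quant : \lagCurve(\quant) < \lagCurveCH(\quant)\}$ (the ``ironed'' quantiles) is unchanged under positive rescaling. I therefore normalize in the two regimes as follows.
\begin{itemize}
\item For $\dualV \in [0,1)$, write $\lagCurve = (1-\dualV)\,(\revCurve + \mu S)$ with $\mu = \dualV/(1-\dualV)$. This $\mu$ is increasing in $\dualV$.
\item For $\dualV > 1$, write $\lagCurve = (\dualV - 1)\,(\kappa S - \revCurve)$ with $\kappa = \dualV/(\dualV-1)$. This $\kappa$ is decreasing in $\dualV$.
\end{itemize}
With this normalization both bullets of the proposition reduce to one lemma. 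The lemma does not use any property of $h$, so it applies both with $h = \revCurve$ and with $h = -\revCurve$.

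\textbf{Lemma.} Let $h$ be any function on $[0,1]$ and $0 \le a < b$. If $\quant$ is \emph{not} ironed for $h + aS$, then $\quant$ is not ironed for $h + bS$.

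For the first bullet, take $h = \revCurve$, $a = \mu(\dualVPrime)$ and $b = \mu(\dualV)$, and apply the contrapositive. For the second bullet, take $h = -\revCurve$, $a = \kappa(\dualVPrime)$ and $b = \kappa(\dualV)$. Here $\dualVPrime > \dualV$ gives $a < b$, and again the contrapositive yields exactly the claimed direction. This sign flip explains why the monotonicity reverses across $\dualV = 1$.

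To prove the lemma, I will use the supporting-line characterization of the concave hull. A point $\quant$ satisfies $f(\quant) = \bar f(\quant)$ if and only if there is an affine function $\ell \ge f$ on $[0,1]$ with $\ell(\quant) = f(\quant)$. Suppose $\quant$ is not ironed for $h + aS$, and take such an $\ell$ for $f = h + aS$. Since $S$ is concave, at an interior $\quant$ it admits a supergradient, i.e.\ an affine $m \ge S$ with $m(\quant) = S(\quant)$. Then $\ell + (b-a)m$ is affine and dominates $h + bS = (h + aS) + (b-a)S$. It also touches $h + bS$ at $\quant$. Hence $\quant$ lies on the concave hull of $h + bS$, which proves the lemma.

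The main obstacle is the technical boundary and regularity issues in the supporting-line argument.
\begin{itemize}
\item The characterization ``on the hull $\Leftrightarrow$ supporting line exists'' needs the hull to be finite. It also needs the hull to be attained in the appropriate sense given only left-continuity of $\valFunc$. I would justify this using boundedness of $\lagCurve$ on $[0,1]$ and the closedness of the hypograph of its concave hull.
\item At the endpoints the supergradient of $S$ may fail to exist; for instance $S'(0^+) = \valFunc(0)$ may be infinite. I will treat $\quant \in \{0,1\}$ separately. There the concave hull of an upper semicontinuous function agrees with the function itself, so endpoints are never ironed and the claim is vacuous.
\end{itemize}
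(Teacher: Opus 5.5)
Your proof is correct and is essentially the paper's argument. After undoing your rescaling, the identity $h+bS=(h+aS)+(b-a)S$ is exactly the decomposition $\lagCurve = t\,\lagCurve[1] + (1-t)\,\lagCurve[\dualVPrime]$ with $t\in(0,1)$ that the paper uses, and both proofs rest on concavity of $\lagCurve[1]=S$. The only difference is framing: the paper argues directly with chords (if $\lagCurve$ lies strictly below a chord at $\quant$, then so does $\lagCurve[\dualVPrime]$), while you argue the contrapositive via supporting lines, which is the dual view and also makes the finiteness and endpoint issues explicit.
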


\begin{proof}[Proof of \Cref{prop:proofs in Pareto frontier:ironed intervals shrink expand with dualV}]
    For the first statement, we show that if there exists $\quantPrime$ and $\quantDoublePrime$ with $\quant = \interpol\cdot \quantPrime + (1 - \interpol)\cdot \quantDoublePrime$ for some $0 < \interpol < 1$ and $\lagCurve(\quant) < \interpol\cdot \lagCurve(\quantPrime) + (1 - \interpol)\cdot \lagCurve(\quantDoublePrime)$ holds, then $\lagCurve[\dualVPrime](\quant) < \interpol\cdot \lagCurve[\dualVPrime](\quantPrime) + (1 - \interpol)\cdot \lagCurve[\dualVPrime](\quantDoublePrime)$ also holds. To see this, one can regard $\lagCurve$ as a convex combination of $\lagCurve[1]$ and $\lagCurve[\dualVPrime]$. Now we know that $\lagCurve[1](\quant) \geq \interpol\cdot \lagCurve[1](\quantPrime) + (1 - \interpol)\cdot \lagCurve[1](\quantDoublePrime)$ establishes because $\lagCurve[1]$ is concave. This immediately leads to $\lagCurve[\dualVPrime](\quant) < \interpol\cdot \lagCurve[\dualVPrime](\quantPrime) + (1 - \interpol)\cdot \lagCurve[\dualVPrime](\quantDoublePrime)$. The second statement is proved anologously. 
\end{proof}

Intuitively, \Cref{prop:proofs in Pareto frontier:ironed intervals shrink expand with dualV} says that when $\dualV$ increases from $0$, the ironed intervals of the mixed welfare-revenue curve $\lagCurve$ ``shrink'' when $\dualV \in [0, 1]$ and ``expand'' when $\dualV \in [1, +\infty)$. In specific, one observes that when $\dualV = 1$, $\lagCurve[1]$ itself is concave. 
To better characterize what ``shrink'' and ``expand'' mean, we let $\ironedItvs(\dualV)$ denote the set containing all quantiles $\quant \leq \rightS(\dualV)$ such that $\lagCurveCH(\quant) > \lagCurve(\quant)$ holds. Then for any $\dualV$, $\ironedItvs(\dualV)$ is a union of \emph{open intervals}. We have the following. 
\begin{corollary} \label{coro:proofs in Pareto frontier:ironedItvs trend}
    For any $\dualVPrime < \dualV < 1$ or $1 < \dualV < \dualVPrime$, $\ironedItvs(\dualV) \subseteq \ironedItvs(\dualVPrime)$. 
\end{corollary}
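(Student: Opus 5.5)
The corollary should follow directly from \Cref{prop:proofs in Pareto frontier:ironed intervals shrink expand with dualV}, once we handle the cap $\quant \leq \rightS(\dualV)$ in the definition of $\ironedItvs(\dualV)$. Fix $\dualVPrime < \dualV < 1$ (the case $1 < \dualV < \dualVPrime$ is symmetric, using the second bullet of that proposition) and take any $\quant \in \ironedItvs(\dualV)$. By definition, $\quant \leq \rightS(\dualV)$ and $\lagCurve(\quant) < \lagCurveCH(\quant)$. The first bullet of \Cref{prop:proofs in Pareto frontier:ironed intervals shrink expand with dualV} then gives $\lagCurve[\dualVPrime](\quant) < \lagCurveCH[\dualVPrime](\quant)$. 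It remains only to verify $\quant \leq \rightS(\dualVPrime)$.

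For this second requirement, the monotonicity in the second statement of \Cref{prop:proofs in Pareto frontier:dualV and quantItvOPT relation} points the wrong way in the first case: $\dualVPrime < \dualV$ gives $\rightS(\dualVPrime) \leq \rightS(\dualV)$. So I would argue directly. Suppose $\quant > \rightS(\dualVPrime)$. Since $\lagCurveCH[\dualVPrime]$ is concave and attains its maximum at $\rightS(\dualVPrime)$, it is non-increasing on $[\rightS(\dualVPrime), 1]$. Because $\rightS(\dualVPrime)$ is a maximizer of the hull and $\revCurveOPT$ is attained by $\lagCurve[\dualVPrime]$ itself, the hull touches the curve there. On the region to the right of the argmax, the hull is generated by points of $\lagCurve[\dualVPrime]$ to the right of $\rightS(\dualVPrime)$.

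The cleanest route is a strict chord inequality. Strict inequality $\lagCurve(\quant) < \lagCurveCH(\quant)$ means there are $\quantPrime < \quant < \quantDoublePrime$ with $\lagCurve(\quant)$ strictly below the chord of $\lagCurve$ between $\quantPrime$ and $\quantDoublePrime$. The proof of \Cref{prop:proofs in Pareto frontier:ironed intervals shrink expand with dualV} transfers this same chord to $\lagCurve[\dualVPrime]$. Hence $\quant$ lies strictly inside a chord-dominated region of $\lagCurve[\dualVPrime]$ anchored at the same $\quantPrime, \quantDoublePrime$. If instead $\quant > \rightS(\dualVPrime)$, I would replace $\quantPrime$ by $\max\{\quantPrime, \rightS(\dualVPrime)\}$, or else note that the chord with $\quantPrime \geq \rightS(\dualVPrime)$ lies strictly above the curve. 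This would place $\quant$ in an ironed region to the right of the peak. That is harmless for the hull but excluded from $\ironedItvs(\dualVPrime)$ by the cap. To close the gap, I would use $\quant \leq \rightS(\dualV)$ together with the fact that $\lagCurve[1]$ is concave. Writing $\lagCurve$ as a convex combination of $\lagCurve[\dualVPrime]$ and $\lagCurve[1]$, as in the proof of \Cref{prop:proofs in Pareto frontier:ironed intervals shrink expand with dualV}, the points right of $\rightS(\dualVPrime)$ that are ironed for $\dualVPrime$ but still ironed for $\dualV$ must lie left of $\rightS(\dualV)$ and be non-decreasing there. This is to be checked via the chord at $\rightS(\dualVPrime)$.

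The main obstacle is exactly this cap issue: showing that ironed points of $\lagCurve$ below its peak remain below the peak of $\lagCurve[\dualVPrime]$. If the direct argument resists, my fallback is to reinterpret $\ironedItvs$ as the uncapped set of hull-gap points intersected with $[0,\rightS(\cdot)]$. I would then note that the downstream use only needs the inclusion on $[0, \min\{\rightS(\dualV),\rightS(\dualVPrime)\}]$, where it follows immediately from \Cref{prop:proofs in Pareto frontier:ironed intervals shrink expand with dualV}. The second case, $1<\dualV<\dualVPrime$, is easier because $\rightS(\dualVPrime) \geq \rightS(\dualV)$ by \Cref{prop:proofs in Pareto frontier:dualV and quantItvOPT relation}, so the cap is automatic.
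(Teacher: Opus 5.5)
Your first paragraph is the argument the paper relies on. The paper states this corollary without a separate proof, as an immediate consequence of the shrink/expand proposition, and never addresses the cap $q\le r(\cdot)$ in the definition of $\mathcal I$. You are right that this cap is the crux when $\lambda'<\lambda<1$. But the step ``it remains to verify $q\le r(\lambda')$'' is not merely hard: it is false. So your chord-replacement argument cannot be completed, and neither can any other.

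For a concrete instance, let the revenue curve $R$ be piecewise linear through $(0,0),(0.2,1),(0.4,0.6),(0.6,0.85),(1,0)$. Then $v=R/q$ is non-increasing, and the kinks can be smoothed. With $R_\lambda=\lambda\int_0^{q}v+(1-\lambda)R$ one finds:
\begin{itemize}
\item $R_\lambda(q)=5q$ on $[0,0.2]$;
\item $R_\lambda'(q)=1.4\lambda/q-2<0$ on $(0.2,0.4)$ for $\lambda<2/7$;
\item $R_\lambda'(q)=1.25+0.1\lambda/q>0$ on $(0.4,0.6)$;
\item $R_\lambda'(q)=2.125(\lambda/q-1)<0$ on $(0.6,1)$;
\item $R_\lambda(0.2)=1$ and $R_\lambda(0.6)=0.85+(1.4\ln 2+0.1\ln 1.5)\lambda\approx 0.85+1.011\lambda$.
\end{itemize}
For $\lambda'=0.1$ the unique maximizer is $0.2$, so $r(\lambda')=0.2$ and $\mathcal I(\lambda')=\emptyset$. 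For $\lambda=0.2$ the maximizer is $0.6$, so $r(\lambda)=0.6$. The hull on $[0.2,0.6]$ is the chord from $(0.2,1)$ to $(0.6,R_\lambda(0.6))$, which lies strictly above the curve. Hence $\mathcal I(\lambda)=(0.2,0.6)\not\subseteq\mathcal I(\lambda')$.

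So in the first case the statement itself needs repair, and your fallback is such a repair rather than a proof of the corollary. Note that intersecting the uncapped hull-gap set with $[0,r(\cdot)]$ is just the definition of $\mathcal I$, not a reinterpretation. Your first paragraph does prove something true once you add that $\bar R_{\lambda'}$ coincides with $R_{\lambda'}$ at $r(\lambda')$ and at $q=1$: for $0\le\lambda'<\lambda<1$,
\[
\mathcal I(\lambda)\subseteq\mathcal I(\lambda')\cup\mathcal J(\lambda').
\]
That is, the uncapped hull-gap sets are nested, but points can migrate from $\mathcal J$ into $\mathcal I$ when $r$ jumps, as $(0.2,0.6)$ does at $\lambda_0\approx 0.148$ above. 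This union form is what the later evolving-process construction actually needs, since it treats $\mathcal I(\lambda^-)$ and $\mathcal J(\lambda^-)$ jointly.

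Your treatment of the case $1<\lambda<\lambda'$ is correct. There $r(\cdot)\equiv 1$, so the cap is vacuous and the second bullet of the proposition gives the inclusion directly.
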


We further let $\residuItv(\dualV) \triangleq \{\quant \suchthat \rightS(\dualV) < \quant < 1\}$. As a corollary of \Cref{prop:proofs in Pareto frontier:dualV and quantItvOPT relation}, we have
\begin{corollary} \label{coro:proofs in Pareto frontier:residuItv trend}
    For any $\dualVPrime < \dualV$, $\residuItv(\dualV) \subseteq \residuItv(\dualVPrime)$. 
\end{corollary}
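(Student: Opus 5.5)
The claim is that $\residuItv(\dualV) \subseteq \residuItv(\dualVPrime)$ whenever $\dualVPrime < \dualV$. I will reduce it directly to the monotonicity of the right endpoint $\rightS(\cdot)$ in the second statement of \Cref{prop:proofs in Pareto frontier:dualV and quantItvOPT relation}. By definition, $\residuItv(\dualV) = \{\quant \suchthat \rightS(\dualV) < \quant < 1\}$ is the open interval $(\rightS(\dualV), 1)$. Its upper endpoint is the constant $1$, independent of the multiplier, so the inclusion depends only on comparing the two lower endpoints.

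The steps are as follows. First, fix $\dualVPrime < \dualV$ with $\dualVPrime \geq 0$. Apply the second statement of \Cref{prop:proofs in Pareto frontier:dualV and quantItvOPT relation} with the larger multiplier $\dualV$ and the smaller one $\dualVPrime$; this gives $\rightS(\dualV) \geq \rightS(\dualVPrime)$. Then take any $\quant \in \residuItv(\dualV)$. It satisfies $\rightS(\dualVPrime) \leq \rightS(\dualV) < \quant < 1$, and hence $\quant \in \residuItv(\dualVPrime)$. When $\rightS(\dualV) = 1$, the set $\residuItv(\dualV)$ is empty and the inclusion holds trivially.

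The only step needing care is checking that the monotonicity statement applies to the endpoint as defined here. The second statement of \Cref{prop:proofs in Pareto frontier:dualV and quantItvOPT relation} was argued through the comparison $\lagCurve[\dualV](\quant) - \lagCurve[\dualV](\quantPrime) \geq \lagCurve[\dualVPrime](\quant) - \lagCurve[\dualVPrime](\quantPrime)$ for $\quant > \quantPrime$. That comparison uses only that $\int_0^{\quant}(\valFunc(\qquant)-\valFunc(\quant))\dd\qquant$ is non-decreasing, so it holds for every $\dualV \geq 0$, including $\dualV > 1$, with no restriction to $[0,1]$. Meanwhile $\rightS(\dualV)$ is the supremum of the maximizers of $\lagCurveCH$, and this equals the supremum of the maximizers of $\lagCurve$ because both curves share the maximum value $\revCurveOPT(\dualV)$.

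With that check in place, the corollary follows immediately, with no case split between $\dualV < 1$ and $\dualV > 1$. This is in contrast to \Cref{coro:proofs in Pareto frontier:ironedItvs trend}, where the direction of the inclusion reverses at $\dualV = 1$.
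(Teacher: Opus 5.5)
Your proof is correct and follows the paper's route. The paper states this corollary as an immediate consequence of the second statement of the proposition, namely $\rightS(\dualV) \geq \rightS(\dualVPrime)$ for $\dualV > \dualVPrime$, and with $\residuItv(\dualV) = (\rightS(\dualV), 1)$ the inclusion follows at once. Your extra check that the monotonicity argument holds for every $\dualV \geq 0$, not only on $[0,1]$, is a sound clarification that the paper leaves implicit.
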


\subsection{First Step: A General Lagrangian Analysis}

To analyze the setting, we consider the interim space. With a little bias of notation, we use $\alloc(\quant)$ to denote any buyer's interim allocation probability when its value is $\valFunc(\quant)$. We further use $\accAlloc(\quant)$ to denote the accumulated allocation, i.e., $\accAlloc(\quant) \triangleq \int_0^{\quant} \alloc(\qquant) \dd \qquant$. By Border's condition~\citep{bor-91}, an interim allocation function $\alloc(\quant)$ on the quantile space can be ex post realized by a BIC and interim IR mechanism if and only if
\begin{gather*}
    \accAlloc(\quant) \leq \frac{1 - (1 - \quant)^{\numAgents}}{\numAgents}, \quad \forall \quant \in [0, 1]; \qquad
    \alloc(\quant) \leq \frac{1}{\numAgents}, \quad \forall \quant \in [0, 1]; \qquad
    \alloc(\quant) \text{ is decreasing. }
\end{gather*}
Specifically, in the single buyer case, the first constraint becomes $\accAlloc(\quant) \leq \quant$, which is directly implied by the second constraint $\alloc(\quant) \leq 1$ and non-restrictive. But this is not the case for the multi-buyer case. 

Back to our problem, according to the revenue equivalence theorem, the expected revenue under a mechanism with interim allocation $\alloc(\quant)$ is $\expect[\quant]{\alloc(\quant)\cdot \revCurvePrime(\quant)}$, which is derived by seeing $\alloc(\quant)$ as a randomization of price-posting mechanisms~\citep{mye-81}.
Meanwhile, we give buyers' surplus under $\alloc$ as follows. 
\begin{proposition} \label{prop:proofs in Pareto frontier:surplus expression}
    Under a BIC and interim IR mechanism with interim allocation $\alloc$ and its accumulation $\accAlloc$, a buyer's ex ante utility is 
    \begin{align*}
        &\mathrel{\phantom{=}} \expect[\quant]{\alloc(\quant)\cdot (\valFunc(\quant) - \revCurvePrime(\quant))} \\
        &= \expect[\quant]{\alloc(\quant)\cdot (-\quant \valFuncPrime(\quant))} \\
        &= \expect[\quant]{\accAlloc(\quant)\cdot (\quant \valFuncPrime(\quant))'} - \accAlloc(1)\cdot \valFuncPrime(1) \\
        &= \expect[\quant]{(\accAlloc(1) - \accAlloc(\quant))\cdot (-\quant \valFuncPrime(\quant))'} + \accAlloc(1)\cdot \inBrackets{\expect[\quant]{(\quant \valFuncPrime(\quant))'} - \valFuncPrime(1)}. 
    \end{align*}
\end{proposition}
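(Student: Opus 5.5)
The plan is to verify the displayed chain one equality at a time. Throughout, $\expect[\quant]{\cdot}$ denotes $\int_0^1 \cdot \dd\quant$ with $\quant$ uniform. The standing assumption that $\valFunc$ is twice differentiable makes $(\quant\valFuncPrime(\quant))'$ well defined.

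\emph{First line.} By revenue equivalence, as already used for the revenue formula $\expect[\quant]{\alloc(\quant)\revCurvePrime(\quant)}$, a buyer's ex-ante payment under a BIC, interim IR mechanism that leaves zero utility to the lowest type is $\expect[\quant]{\alloc(\quant)\revCurvePrime(\quant)}$. Her ex-ante value is $\expect[\quant]{\alloc(\quant)\valFunc(\quant)}$. So her ex-ante utility is $\expect[\quant]{\alloc(\quant)(\valFunc(\quant)-\revCurvePrime(\quant))}$. Differentiating $\revCurve(\quant)=\quant\valFunc(\quant)$ gives $\revCurvePrime(\quant)=\valFunc(\quant)+\quant\valFuncPrime(\quant)$, hence $\valFunc-\revCurvePrime=-\quant\valFuncPrime$. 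This is the first equality.

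\emph{Second line.} Integrate by parts using $\accAlloc'=\alloc$ and $\accAlloc(0)=0$:
\[
\int_0^1 \alloc(\quant)\bigl(-\quant\valFuncPrime(\quant)\bigr)\dd\quant
=\Bigl[\accAlloc(\quant)\bigl(-\quant\valFuncPrime(\quant)\bigr)\Bigr]_{0}^{1}+\int_0^1\accAlloc(\quant)\bigl(\quant\valFuncPrime(\quant)\bigr)'\dd\quant .
\]
The upper boundary term equals $-\accAlloc(1)\cdot 1\cdot\valFuncPrime(1)$, which matches the stated expression. I expect the lower boundary term at $\quant\to0$ to be the only delicate point, because the support may be unbounded and then $\valFuncPrime$ can blow up near $0$. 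To handle it, I will integrate over $[\epsilon,1]$ and let $\epsilon\to0$. By Border's constraint $\alloc\le 1/\numAgents$, we have $\accAlloc(\epsilon)\le \epsilon/\numAgents$, so the boundary term is bounded by $\epsilon^2|\valFuncPrime(\epsilon)|/\numAgents$. Integrability of $\quant|\valFuncPrime(\quant)|$ near $0$ follows from $\int_0^1 -\quant\valFuncPrime\,\dd\quant=\int_0^1(\valFunc-\revCurvePrime)\,\dd\quant$ being finite, which holds whenever $\OPT_{\buyerdist}<\infty$. This integrability forces $\epsilon\cdot\epsilon|\valFuncPrime(\epsilon)|\to0$ at least along a subsequence $\epsilon_k\to0$. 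The left-hand integral converges by monotone convergence, since its integrand has a fixed sign. Consequently the right-hand integral converges along the same subsequence, which is enough to identify the limit. If needed I will also state explicitly the mild integrability condition (finite $\OPT_{\buyerdist}$) that is implicit in the paper's setting.

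\emph{Third line.} This is pure algebra. Write $\accAlloc(\quant)=\accAlloc(1)-(\accAlloc(1)-\accAlloc(\quant))$ inside $\expect[\quant]{\accAlloc(\quant)(\quant\valFuncPrime(\quant))'}$. This gives
\[
\expect[\quant]{(\accAlloc(1)-\accAlloc(\quant))\bigl(-\quant\valFuncPrime(\quant)\bigr)'}+\accAlloc(1)\,\expect[\quant]{(\quant\valFuncPrime(\quant))'} .
\]
Adding the boundary term $-\accAlloc(1)\valFuncPrime(1)$ and grouping the two terms proportional to $\accAlloc(1)$ yields exactly the final expression.

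Splitting the expectation is legitimate whenever $\expect[\quant]{(\quant\valFuncPrime)'}$ is finite; otherwise I will carry out the split on $[\epsilon,1]$ and pass to the limit as above. I expect no obstacle beyond the boundary and integrability bookkeeping near $\quant=0$.
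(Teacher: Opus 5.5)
Your proposal is correct for lines one through three and follows the route the paper takes implicitly. The paper states this proposition without proof: the first line is revenue equivalence plus $R'(q)=v(q)+qv'(q)$, and the rest is integration by parts and regrouping. Two things you add are worth keeping. Your explicit normalization that the lowest type gets zero utility is genuinely needed: it is automatic when $v_L=0$, but when $v_L>0$ a free allocation makes the formula undercount by that type's utility. Your $\epsilon\to 0$ handling of the boundary term at $q=0$ is also more careful than anything in the paper.

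One claim needs correcting. The fallback for the last line (``otherwise carry out the split on $[\epsilon,1]$ and pass to the limit'') cannot work. Since $\int_\epsilon^1 (qv'(q))'\,dq = v'(1)-\epsilon v'(\epsilon)$, the final expression is meaningful only when $\lim_{q\to 0} qv'(q)$ is finite. Otherwise its two terms are separately $-\infty$ and $+\infty$, and only their sum, which is the previous line, has a limit. This already happens inside the paper's own class: for Pareto type I with shape $\alpha>1$ (regular, anti-MHR, finite mean), $-qv'(q)=q^{-1/\alpha}/\alpha\to\infty$. So state the last equality under the hypothesis that $\lim_{q\to0}qv'(q)$ is finite. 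This holds under MHR, where $-qv'(q)$ is the inverse hazard rate and is non-decreasing in $q$, and MHR is the case in which the paper actually uses this form.

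The same observation also closes your subsequence step. When $qv'$ is monotone (MHR or anti-MHR), $X(q)(qv'(q))'$ has a fixed sign. Monotone convergence then guarantees that $\int_0^1 X(q)(qv'(q))'\,dq$ exists and equals your subsequential limit. For a general twice-differentiable $v$, you should add existence of that integral as a hypothesis.
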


Therefore, our problem can be formalized as follows:

\begin{gather*}
    \max_{\alloc} \numAgents\cdot \expect[\quant]{\alloc(\quant)\cdot \revCurvePrime(\quant)}, \\
    \subjectto \numAgents\cdot \expect[\quant]{\alloc(\quant)\cdot (\valFunc(\quant) - \revCurvePrime(\quant))} \geq \buyerutil, \\
    \accAlloc(\quant) \leq \frac{1 - (1 - \quant)^{\numAgents}}{\numAgents}, \quad \forall \quant \in [0, 1]; \qquad
    \alloc(\quant) \leq \frac{1}{\numAgents}, \quad \forall \quant \in [0, 1]; \qquad
    \alloc(\quant) \text{ is decreasing. }
\end{gather*}

For brevity, we let $\accAllocUb(\quant) \triangleq \nicefrac{1 - (1 - \quant)^{\numAgents}}{\numAgents}$ and $\allocUb(\quant) \triangleq \nicefrac{1}{n}$. 
We Lagrangify the surplus constraint to derive the following dual program: 
\[\min_{\dualV \geq 0} \max_{\accAlloc(\quant) \leq \accAllocUb(\quant), \; \alloc(\quant) \leq \allocUb(\quant), \; \alloc(\quant) \text{ is decreasing}} \numAgents\cdot \expect[\quant]{\alloc(\quant)\cdot (\dualV\cdot \valFunc(\quant) + (1 - \dualV)\cdot \revCurvePrime(\quant))} - \dualV\cdot \buyerutil.\]
In particular, notice that $\dualV\cdot \valFunc(\quant) + (1 - \dualV)\cdot \revCurvePrime(\quant) = \lagCurvePrime(\quant)$. 

We do the following steps. 
First, for each $\dualV \geq 0$, we derive the allocation $\alloc(\quant)$ that maximizes the primal objective. 
Further, we establish the existence of a $\dualVOPT \geq 0$ such that the complementary slackness condition is met. They together solve the problem. 
 
In general, under the accumulated allocation constraint, the optimal primal solution is characterized by the following~\citep{mye-81}.
\begin{proposition} \label{prop:proofs in Pareto frontier:optimal primal allocation}
    Both the optimal solution and the optimal objective of 
    \[\max_{\accAlloc(\quant) \leq \accAllocUb(\quant), \; \alloc(\quant) \leq \allocUb(\quant), \; \alloc(\quant) \text{ is decreasing}} \expect[\quant]{\alloc(\quant)\cdot \lagCurvePrime(\quant)}\]
    coincides, respectively, with the optimal solution and the optimal objective of 
    \begin{gather*}
        \max_{\accAlloc(\quant) \leq \accAllocUb(\quant), \; \alloc(\quant) \leq \allocUb(\quant), \; \alloc(\quant) \text{ is decreasing}} \expect[\quant]{\alloc(\quant) \cdot \lagCurveCHPrime(\quant)} \\
        = \max_{\accAlloc(\quant) \leq \accAllocUb(\quant), \; \alloc(\quant) \leq \allocUb(\quant), \; \alloc(\quant) \text{ is decreasing}} \expect[\quant]{-\accAlloc(\quant) \cdot \lagCurveCHDoublePrime(\quant)} + \accAlloc(1)\cdot \lagCurveCHPrime(1), \\
        \subjectto \allocPrime(\quant) = 0 \text{, if }\lagCurveCH(\quant) > \lagCurve(\quant). 
    \end{gather*}
\end{proposition}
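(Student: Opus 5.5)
We will prove \Cref{prop:proofs in Pareto frontier:optimal primal allocation} by the standard Myerson ironing argument, adapted to the two Border-type constraints $\accAlloc \le \accAllocUb$ and $\alloc \le \allocUb$. The comparison rests on one integration by parts. For any feasible $\alloc$ (decreasing, with $\accAlloc(0)=0$) we have
\[
\expect[\quant]{\alloc(\quant)\,\lagCurvePrime(\quant)} - \expect[\quant]{\alloc(\quant)\,\lagCurveCHPrime(\quant)}
= \int_0^1 \alloc(\quant)\,\bigl(\lagCurve - \lagCurveCH\bigr)'(\quant)\dd \quant
= -\int_0^1 \bigl(\lagCurve(\quant) - \lagCurveCH(\quant)\bigr)\dd \alloc(\quant).
\]
Boundary terms vanish because $\lagCurveCH$ and $\lagCurve$ coincide at the endpoints $0$ and $1$. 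Here $\dd\alloc$ is a nonpositive measure, since $\alloc$ is decreasing, and $\lagCurve - \lagCurveCH \le 0$. Hence the integrand product is nonnegative, and the right-hand side is $\le 0$. It equals $0$ exactly when $\alloc$ is constant on every ironed interval, i.e., $\allocPrime = 0$ wherever $\lagCurveCH > \lagCurve$. Thus the original objective is pointwise (in $\alloc$) bounded by the ironed objective, with equality precisely on the stated subclass.

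Next we show that the ironed problem admits an optimal solution lying in that subclass. Take any optimizer $\alloc$ of the ironed objective. On each maximal open ironed interval $(a,b)$, replace $\alloc$ by its average $\frac{\accAlloc(b)-\accAlloc(a)}{b-a}$. This replacement has four properties:
\begin{itemize}
\item It keeps $\alloc$ decreasing and $\le \allocUb$.
\item It leaves $\accAlloc(a)$ and $\accAlloc(b)$ unchanged.
\item It makes $\accAlloc$ linear on $[a,b]$, hence no larger than the original $\accAlloc$ there, since a decreasing $\alloc$ gives a concave $\accAlloc$.
\item It leaves the ironed objective unchanged, because $\lagCurveCH$ is linear on $(a,b)$ and so $\lagCurveCHPrime$ is constant there.
\end{itemize}
By the third property, $\accAlloc \le \accAllocUb$ is preserved.

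The second displayed form of the objective, $\expect{-\accAlloc\,\lagCurveCHDoublePrime} + \accAlloc(1)\lagCurveCHPrime(1)$, is one more integration by parts. It also shows that only $\accAlloc$ on the non-ironed set (together with $\accAlloc(1)$) matters, because $\lagCurveCHDoublePrime = 0$ on ironed intervals.

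Combining the two steps:
\begin{itemize}
\item $\max \text{original} \le \max \text{ironed}$, by the pointwise bound.
\item The ironed maximum is attained by some $\alloc$ constant on ironed intervals.
\item For that $\alloc$ the original objective equals the ironed one, so the two maxima coincide.
\item Every ironed optimizer in the subclass is an original optimizer.
\item Conversely, any original optimizer must attain equality in the pointwise bound. It therefore lies in the subclass and is an ironed optimizer.
\end{itemize}
So the two optimal solution sets coincide, with the ironed problem carrying the stated side constraint.

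The main obstacle is regularity. $\alloc$ is only monotone, so $\dd\alloc$ must be read as a Stieltjes measure. Atoms of $\dd\alloc$ at endpoints of ironed intervals are harmless, since $\lagCurve = \lagCurveCH$ there. The behaviour of $\lagCurve$ near $0$ and at the possible atom at $\hval$ must be checked so that the boundary terms vanish. I would handle this by first treating step-function (finite randomization over indirect Vickrey auctions) allocations and then passing to limits by monotone approximation.
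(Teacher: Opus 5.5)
Your proof is correct and follows the same route as the paper, which gives no separate argument here but simply invokes Myerson's ironing: the pointwise bound $\int_0^1 x\,(R_\lambda-\bar{R}_\lambda)'\,\mathrm{d}q=-\int_0^1 (R_\lambda-\bar{R}_\lambda)\,\mathrm{d}x\le 0$, with equality iff $x$ is flat on ironed intervals, combined with averaging $x$ on each ironed interval. The one ingredient specific to this multi-buyer setting is that the averaging must respect the Border constraint, and your observation that it replaces the concave $X$ by its chord on each ironed interval (so it only lowers $X$ and preserves $X\le\hat{X}$) is exactly the check that is needed.
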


\subsection{Core: A Continuous and Monotone Evolving Process of the Primal Optimal Accumulated Allocation} \label{sec:proofs in Pareto frontier:evolving process}

In the light of \Cref{coro:proofs in Pareto frontier:ironedItvs trend,coro:proofs in Pareto frontier:residuItv trend}, we now define a process of the primal optimal accumulated allocation $\accAllocOPT[\dualV, \evol](\quant)$ according to \Cref{prop:proofs in Pareto frontier:ironed intervals shrink expand with dualV} when $\dualV$ increases from 0. 
Here, $\evol \in [0, 1]$ is an auxiliary parameter with full support $[0, 1]$ for only a countable many $\dualV$s and a unique value $0$ for other $\dualV$s. 
Our hope for this process is that when the pair $(\dualV, \evol)$ increases from $(0, 0)$ in a lexicographical order, buyers' surplus under the mechanism following this process increases continuously and monotonically from $\buyerexanteutil[\dist]{\Mye}$ to $\buyerbenchmarkDist$.
\footnote{Since $\evol$ has support $[0, 1]$ for only countably many $\dualV$, continuity on the increasing process of $(\dualV, \evol)$ can be defined by establishing a continuous and monotone bijection between all pairs $(\dualV, \evol)$ and $\reals_+$.} 
Therefore, for any $\buyerutil \in [\buyerexanteutil[\dist]{\Mye}, \buyerbenchmarkDist]$, there is a $(\dualVOPT, \evolOPT)$ such that buyers' surplus under $\accAllocOPT[\dualVOPT, \evolOPT](\quant)$ is $\buyerutil$, and the complementary slackness condition is met. 

With the aim, an important property we need is that for any $\quant$, $\accAllocOPT[\dualV, \evol](\quant)$ is continuous in the increasing process of $(\dualV, \evol)$. The sub-processes for $\dualV \leq 1$ and $\dualV > 1$ are highly symmetric, and we start with $\dualV \leq 1$. 
We define the \emph{canonical} primal optimal solution $\accAllocOPT[\dualV](\quant)$ for the dual variable $\dualV$ as the following:
\begin{itemize}
    \item For $\quant \in [0, \rightS(\dualV)]$:
    \begin{itemize}
        \item If $\quant \notin \ironedItvs(\dualV)$, $\accAllocOPT[\dualV](\quant) = \accAllocUb(\quant)$; 
        \item If $\quant \in \ironedItvs(\dualV)$, suppose it lies in the open interval $(\quantDag, \quantDoubleDag) \subseteq \ironedItvs(\dualV)$, and $\quant = \mu\cdot \quantDag + (1 - \mu)\cdot \quantDoubleDag$ for some $0 < \mu < 1$, then $\accAllocOPT[\dualV](\quant) = \mu\cdot \accAllocUb(\quantDag) + (1 - \mu)\cdot \accAllocUb(\quantDoubleDag)$. 
    \end{itemize}
    \item For $\quant \in (\rightS(\dualV), 1]$, $\accAllocOPT[\dualV](\quant) = \accAllocOPT[\dualV](\rightS(\dualV))$. 
\end{itemize}

By applying \Cref{prop:proofs in Pareto frontier:optimal primal allocation}, one can verify that the above accumulated allocation rule is indeed an optimal primal solution. 
Our intuition for constructing the desired continuous process is to transfer from a canonical primal optimal solution to another canonical one as $\dualV$ increases. However, as $\dualV$ increases, the canonical primal optimal solution itself may not be continuous for some $\quant$. 
To see this, for any $0 < \dualV \leq 1$, according to \Cref{coro:proofs in Pareto frontier:ironedItvs trend,coro:proofs in Pareto frontier:residuItv trend}, we can define 
\begin{align*}
    \ironedItvs(\dualVMinus) \triangleq \lim_{\dualVPrime \to \dualVMinus} \ironedItvs(\dualVPrime); \quad \residuItv(\dualVMinus) \triangleq \lim_{\dualVPrime \to \dualVMinus} \residuItv(\dualVPrime). 
\end{align*}
We therefore have $\ironedItvs(\dualV) \subseteq \ironedItvs(\dualVMinus)$ and $\residuItv(\dualV) \subseteq \residuItv(\dualVMinus)$. 

Now, an equivalent condition for the canonical primal optimal allocation itself to be continuous at $\dualV$ for every $\quant$ is that both $\ironedItvs(\dualV) = \ironedItvs(\dualVMinus)$ and $\residuItv(\dualV) = \residuItv(\dualVMinus)$ hold.
When these conditions hold, the support of $\evol$ is the singleton $\{0\}$ with $\dualV$ and $\accAllocOPT[\dualV, 0](\quant)$ is just the canonical primal optimal allocation at $\dualV$. 
In fact, one observes that this happens almost everywhere except a countably set of $\dualV$s. 
And for any other $\dualV$, we need an auxiliary process at $\dualV$, parameterized by $\evol \in [0, 1]$, to continuously transfer from the canonical primal optimal solution at $\dualVMinus$ to that of $\dualV$, \footnote{The canonical primal optimal solution at $\dualVMinus$ means $\lim_{\dualV' \to \dualVMinus} \accAllocOPT[\dualV']$.} meanwhile keeping the solution always primal optimal at $\dualV$. 
On this side, let $\accAllocOPT[\dualV, 0](\quant)$ be the canonical primal optimal solution at $\dualVMinus$, and $\accAllocOPT[\dualV, 1](\quant)$ be that at $\dualV$. Our construction is in fact simple: For any $\evol \in (0, 1)$ and any $\quant$, \[\accAllocOPT[\dualV, \evol](\quant) \triangleq \evol\cdot \accAllocOPT[\dualV, 0](\quant) + (1 - \evol)\cdot \accAllocOPT[\dualV, 1](\quant).\]

By definition, for any $\quant$, continuity of $\accAllocOPT[\dualV, \evol](\quant)$ in $\evol$ is natural. It suffices to prove the following lemma.
\begin{lemma} \label{lem:proofs in Pareto frontier:primal optimal allocation evolvement}
    $\accAllocOPT[\dualV, \evol]$ is a primal optimal solution at $\dualV$ for any $\evol \in [0, 1]$.
\end{lemma}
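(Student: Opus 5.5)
The plan is to show that every $\accAllocOPT[\dualV, \evol]$ satisfies the feasibility constraints and the optimality characterization of \Cref{prop:proofs in Pareto frontier:optimal primal allocation} at the fixed dual variable $\dualV$. The feasible set (Border constraints $\accAlloc \le \accAllocUb$, $\alloc \le \allocUb$, $\alloc$ decreasing) is convex, and the objective $\expect[\quant]{\alloc(\quant)\cdot \lagCurvePrime(\quant)}$ is linear in $\alloc$. So the set of primal optimal solutions at $\dualV$ is convex. It therefore suffices to prove that the two endpoints are primal optimal at $\dualV$; every convex combination $\accAllocOPT[\dualV, \evol]$ then is as well.

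The endpoint $\evol$ corresponding to the canonical solution at $\dualV$ is optimal by construction: it was already verified via \Cref{prop:proofs in Pareto frontier:optimal primal allocation}. The real work is the other endpoint, $\lim_{\dualV' \to \dualVMinus} \accAllocOPT[\dualV']$. I will proceed in three steps.

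\textbf{Step 1 (feasibility).} Feasibility passes to pointwise limits, since all constraints are closed under pointwise convergence of $\accAlloc$. Concavity of $\accAlloc$, i.e.\ $\alloc$ decreasing, is preserved by limits.

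\textbf{Step 2 (objective convergence).} The value $\expect[\quant]{\alloc\cdot\lagCurvePrime}$ is continuous in $\dualV$. Using the integrated form $\expect{-\accAlloc\cdot \lagCurveCHDoublePrime}+\accAlloc(1)\lagCurveCHPrime(1)$, or better, integration by parts against $\lagCurve$ itself, the objective equals $-\int \lagCurve \dd \alloc$ plus boundary terms. I then use Lipschitz continuity of $\lagCurve$ in $\dualV$, uniform in $\quant$, as computed in the proof of \Cref{prop:proofs in Pareto frontier:dualV and quantItvOPT relation}. Together with uniform boundedness of $\alloc$, this gives that the value of the limit allocation under $\lagCurve$ equals $\lim_{\dualV'\to\dualV^-}$ of the optimal values at $\dualV'$.

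\textbf{Step 3 (reaching the optimum).} By continuity of the optimal value (the analogue of statement 1 of \Cref{prop:proofs in Pareto frontier:dualV and quantItvOPT relation} for the constrained problem, via the same uniform Lipschitz bound on the objective), the limit from Step 2 equals the optimum at $\dualV$. Hence the limit allocation is optimal.

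As an alternative, more structural check, I would verify the conditions of \Cref{prop:proofs in Pareto frontier:optimal primal allocation} directly. The limit allocation is flat on $\ironedItvs(\dualVMinus)\supseteq\ironedItvs(\dualV)$, so it is in particular flat wherever $\lagCurveCH>\lagCurve$ at $\dualV$. It is flat beyond $\lim\rightS(\dualV')$, and by upper hemicontinuity (statement 3 of \Cref{prop:proofs in Pareto frontier:dualV and quantItvOPT relation}) that limit lies in $\quantItvOPT(\dualV)$. On the complementary set it equals $\accAllocUb$, on regions where $\lagCurveCHDoublePrime\le 0$. The extra flat intervals in $\ironedItvs(\dualVMinus)\setminus\ironedItvs(\dualV)$ must lie where $\lagCurveCH$ is affine at $\dualV$, i.e.\ $\lagCurveCHDoublePrime=0$, so relaxing $\accAlloc$ below $\accAllocUb$ there costs nothing. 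This follows because the limit of strict concave-hull gaps becomes a zero gap along a segment.

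The main obstacle is this last point: justifying rigorously that on $\ironedItvs(\dualVMinus)\setminus\ironedItvs(\dualV)$, and on $\residuItv(\dualVMinus)\setminus\residuItv(\dualV)$, the curve $\lagCurveCH$ is linear, respectively flat at its maximum. For the residual part, note that $\lim\rightS(\dualV')\in\quantItvOPT(\dualV)$ and $\quantItvOPT(\dualV)$ is an interval, so the region between that limit and $\rightS(\dualV)$ is contained in the argmax of $\lagCurveCH$. There $\lagCurveCHPrime=0$, and allocation is irrelevant. I would lean on the value-continuity argument of Steps 1 to 3 as the primary proof, since it avoids this fine structure.
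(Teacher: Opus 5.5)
Your argument is correct, but your primary route differs from the paper's.

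\emph{The paper's route.} The paper verifies each $\accAllocOPT[\dualV,\evol]$ directly against \Cref{prop:proofs in Pareto frontier:optimal primal allocation}, using two geometric claims. First, a point of an interval $(\quantDag,\quantDoubleDag)\subseteq\ironedItvs(\dualVMinus)$ that is no longer ironed at $\dualV$ must lie on the chord of $\lagCurve$ over that interval, so $\lagCurveCH$ is affine there and the allocation is irrelevant. Second, if $\rightS(\dualVMinus)<\rightS(\dualV)$ then $\lagCurve(\rightS(\dualVMinus))=\lagCurve(\rightS(\dualV))$, so zero allocation on the gap is optimal. This is exactly your ``alternative structural check.'' Your handling of the residual part is cleaner than the paper's bare appeal to continuity: you use upper hemicontinuity (statement 3 of \Cref{prop:proofs in Pareto frontier:dualV and quantItvOPT relation}), which the paper proves along increasing sequences, i.e.\ precisely the left limits needed here.

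\emph{Your route.} Your main argument uses convexity of the optimal set to reduce to the two endpoints, then a Berge-type closed-graph argument. The Border constraints are closed under pointwise limits. Since $\lagCurvePrime-\lagCurvePrime[\dualVPrime]=(\dualV-\dualVPrime)\cdot(-\quant\valFuncPrime(\quant))$ and $0\le\alloc\le\frac1n$, the objective is uniformly Lipschitz in $\dualV$ over the feasible set, and hence so is the optimal value.

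\emph{Comparison.} Your route needs none of the fine structure of how $\ironedItvs$ and $\residuItv$ evolve, and it covers the $\dualV>1$ construction (right limits) verbatim. What it does not supply is the structural fact, explicit in the paper's proof, that the two endpoints differ only on segments where $\lagCurveCH$ is affine. The subsequent claims that extreme points are {\indirectSPAs}, and the monotonicity statements, lean on that picture.

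\emph{One step to tighten.} In Step 2, uniform boundedness of $\alloc$ alone does not give $\int\alloc[\dualVPrime]\lagCurvePrime\to\int\allocMinus\lagCurvePrime$ at the fixed $\dualV$. The missing ingredient is that pointwise convergence of the concave functions $\accAlloc[\dualVPrime]$ forces their derivatives to converge at every differentiability point of the limit, hence almost everywhere. Dominated convergence then finishes the step, using $\lagCurvePrime\in L^1$; this follows from finite mean, the same assumption behind the Lipschitz bound. The integration-by-parts form you mention needs extra care, because the endpoint values $\alloc(0^+)$ and $\alloc(1)$ need not converge to those of the limit.
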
 

\begin{proof}[Proof of \Cref{lem:proofs in Pareto frontier:primal optimal allocation evolvement}]
    To prove \Cref{lem:proofs in Pareto frontier:primal optimal allocation evolvement}, first notice that all of
    \[\accAllocOPT[\dualV, \evol](\quant) \leq \accAllocUb(\quant), \quad \allocOPT[\dualV, \evol](\quant) \leq \allocUb(\quant), \quad \allocOPT[\dualV, \evol](\quant) \text{ is decreasing}\]
    hold. 
    Meanwhile, the accumulated allocation $\accAllocOPT[\dualV, \evol](\quant)$ of any $\quant$ outside $\ironedItvs(\dualVMinus)$ and $\residuItv(\dualVMinus)$ keeps $\accAllocUb(\quant)$ and is optimal. 
    we now consider the optimality for $\quant$ in $\ironedItvs(\dualVMinus)$ and $\residuItv(\dualVMinus)$, respectively. 
    
    Starting with $\ironedItvs$, since both $\ironedItvs(\dualV)$ and $\ironedItvs(\dualVMinus)$ are union of open intervals, we claim the following:
    \begin{claim}
        Let $(\quantDag, \quantDoubleDag)$ be any open interval in $\ironedItvs(\dualVMinus)$. For any $\quantDag \leq \quant \leq \quantDoubleDag$ but $\quant \notin \ironedItvs(\dualV)$, $(\quant, \lagCurve(\quant))$ is on the line between $(\quantDag, \lagCurve(\quantDag))$ and $(\quantDoubleDag, \lagCurve(\quantDoubleDag))$. 
    \end{claim}
    To prove the claim, notice that by \Cref{prop:proofs in Pareto frontier:ironed intervals shrink expand with dualV}, such $\lagCurve[\dualVPrime](\quant) < \lagCurveCH[\dualVPrime](\quant)$ holds for any $\dualVPrime < \dualV$. By an argument of continuity, we know that $(\quant, \lagCurve(\quant))$ is not above the line between $(\quantDag, \lagCurve(\quantDag))$ and $(\quantDoubleDag, \lagCurve(\quantDoubleDag))$. 
    Since $\quant$ is also not ironed at $\lagCurve$, i.e., not below the line, it has to be on the line. 
    
    As a result, we know that $\lagCurveCH$ is a straight line between $\quantDag$ and $\quantDoubleDag$. Thus, $\lagCurveCHDoublePrime(\quant) = 0$ in $(\quantDag, \quantDoubleDag)$, and the value of $\accAllocOPT$ does not affect the final objective, according to \Cref{prop:proofs in Pareto frontier:optimal primal allocation}. 
    Further, for any $\quant \in \ironedItvs(\dualV)$, $(\allocOPT[\dualV, 0])'(\quant) = 0$ and $(\allocOPT[\dualV, 1])'(\quant) = 0$ both hold because of primal optimality, and thus $(\allocOPT[\dualV, \evol])'(\quant) = 0$ also establishes for any $\evol \in [0, 1]$. 
    The above states the primal optimality of $\allocOPT[\dualV, \evol]$ in $\ironedItvs(\dualVMinus)$. 
    
    For quantiles on $\residuItv$, we have a similar claim:
    \begin{claim}
        If $\rightS(\dualVMinus) < \rightS(\dualV)$, then $\lagCurve(\rightS(\dualVMinus)) = \lagCurve(\rightS(\dualV))$. 
    \end{claim}
    This claim is also proved by an argument of continuity. This implies that $\lagCurveCHDoublePrime(\quant) = 0$ in $(\rightS(\dualVMinus), \rightS(\dualV))$, and therefore $\allocOPT[\dualV, \evol](\quant) = 0$ is also primal optimal on $\residuItv(\dualVMinus)$. We finish the proof of \Cref{lem:proofs in Pareto frontier:primal optimal allocation evolvement}.
\end{proof}

Meanwhile, it is not hard to observe that $\accAllocOPT[\dualV, \evol](\quant)$ is monotonically increasing in $(\dualV, \evol)$ for any $\quant$. 
Analogously, we can define such a process for $\dualV > 1$. In summary, the process goes as follows:
\begin{enumerate}
    \item For $\dualV \leq 1$, define $\ironedItvs(\dualVMinus) \triangleq \lim_{\dualVPrime \to \dualVMinus} \ironedItvs(\dualVPrime)$; for $\dualV > 1$, define $\ironedItvs(\dualVPlus) \triangleq \lim_{\dualVPrime \to \dualVPlus} \ironedItvs(\dualVPrime)$. Define $\residuItv(\dualVMinus) \triangleq \lim_{\dualVPrime \to \dualVMinus} \residuItv(\dualVPrime)$.
    \item If $\ironedItvs(\dualVMinus) = \ironedItvs(\dualV)$ and $\residuItv(\dualVMinus) = \residuItv(\dualV)$ for $\dualV \leq 1$ or $\ironedItvs(\dualVPlus) = \ironedItvs(\dualV)$ and $\residuItv(\dualVMinus) = \residuItv(\dualV)$ for $\dualV > 1$, then the support of $\evol$ is the singleton $\{0\}$ with $\dualV$ and $\accAllocOPT[\dualV, 0](\quant)$ is the canonical primal optimal allocation at $\dualV$. 
    \item Otherwise, for $\dualV \leq 1$, let $\accAllocOPT[\dualV, 0](\quant)$ be the canonical primal optimal allocation at $\dualVMinus$ and $\accAllocOPT[\dualV, 1](\quant)$ be that at $\dualV$; for $\dualV > 1$, let $\accAllocOPT[\dualV, 0](\quant)$ be the canonical primal optimal allocation at $\dualV$ and $\accAllocOPT[\dualV, 1](\quant)$ be that at $\dualVPlus$. Then for $\evol \in [0, 1]$ and any $\quant$, 
    \[\accAllocOPT[\dualV, \evol](\quant) \triangleq (1 - \evol)\cdot \accAllocOPT[\dualV, 0](\quant) + \evol\cdot \accAllocOPT[\dualV, 1](\quant).\]
\end{enumerate}

We conclude by presenting our main result for this subsection. 
\begin{proposition} \label{prop:proofs in Pareto frontier:properties for the evolving process}
    Under the process $\{\accAllocOPT[\dualV, \evol]\}$, the following hold:
    \begin{enumerate}
        \item For any $(\dualV, \evol)$, $\accAllocOPT[\dualV, \evol]$ is a primal optimal solution at $\dualV$. 
        \item Each buyer's surplus under $\accAllocOPT[0, 0]$ is $\buyerexanteutil[\dist]{\Mye}$. 
        \item For any $\quant$, $\accAllocOPT[\dualV, \evol](\quant)$ is continuous in $(\dualV, \evol)$. 
        \item For any $\quant$, $\accAllocOPT[\dualV, \evol](\quant)$ is monotonically increasing in $\dualV$ when $\dualV \in [0, 1]$. 
        \item For any $\quant$, $\accAllocOPT[\dualV, \evol](\quant)$ is monotonically decreasing in $\dualV$ when $\dualV \geq 1$.
    \end{enumerate}
\end{proposition}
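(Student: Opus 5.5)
The plan is to verify the five items one at a time. Items 1 and 2 are essentially bookkeeping. Items 3--5 rest on two facts: the canonical solution is a ``chord-ironed'' version of the concave function $\accAllocUb$, and the sets $\ironedItvs(\dualV)$ and $\residuItv(\dualV)$ move monotonically in $\dualV$.

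\textbf{Items 1 and 2.} For item 1, observe that each canonical solution is primal optimal at its own $\dualV$. It satisfies the three Border constraints, since a chord of the concave $\accAllocUb$ lies below $\accAllocUb$, and the induced $\alloc$ stays decreasing and at most $\allocUb$. It also satisfies the optimality conditions of \Cref{prop:proofs in Pareto frontier:optimal primal allocation}: $\accAlloc=\accAllocUb$ wherever $\lagCurveCHDoublePrime<0$, $\alloc$ is constant on ironed intervals, and $\alloc=0$ beyond $\rightS(\dualV)$. At the countably many jump values of $\dualV$, optimality of the interpolants is exactly \Cref{lem:proofs in Pareto frontier:primal optimal allocation evolvement}; for $\dualV>1$ I would invoke its mirror version with $\dualVPlus$ in place of $\dualVMinus$, proved the same way using the second bullet of \Cref{prop:proofs in Pareto frontier:ironed intervals shrink expand with dualV}. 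For item 2, take $\dualV=0$: then $\lagCurve[0]=\revCurve$, and the canonical solution irons $\revCurve$ and cuts off at $\rightS(0)$, the largest maximizer of the hull. This is $\quantM$, so the mechanism is Myerson's ironed auction with the lowest monopoly reserve, which is $\Mye$.

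\textbf{Item 3 (continuity).} For fixed $\dualV$, the map $\evol\mapsto\accAllocOPT[\dualV,\evol](\quant)$ is affine, so it suffices to check continuity in $\dualV$ at the endpoints $\evol=0,1$. By construction, the $\evol=0$ endpoint for $\dualV\le1$ equals the left limit of the canonical solutions, so the remaining task is right-continuity: for $\dualV<1$, I must show $\bigcup_{\dualVPrime>\dualV}\ironedItvs(\dualVPrime)=\ironedItvs(\dualV)$ and the analogous statement for $\residuItv$.
\begin{itemize}
\item For $\ironedItvs$: the condition $\lagCurve(\quant)<\lagCurveCH(\quant)$ is strict, and both sides depend continuously (indeed Lipschitz) on $\dualV$, by the same estimate used for item 1 of \Cref{prop:proofs in Pareto frontier:dualV and quantItvOPT relation}. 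Hence each point of $\ironedItvs(\dualV)$ stays ironed for nearby $\dualVPrime$, which together with \Cref{coro:proofs in Pareto frontier:ironedItvs trend} gives equality.
\item For $\residuItv$: right-continuity of $\rightS$ follows from upper hemicontinuity of $\quantItvOPT$ (item 3 of \Cref{prop:proofs in Pareto frontier:dualV and quantItvOPT relation}) combined with monotonicity of $\rightS$.
\end{itemize}
Given convergence of these nested families of open intervals, the endpoints of the ironed interval containing a fixed $\quant$ converge. Since $\accAllocUb$ is continuous, the chord values converge pointwise. The case $\dualV>1$ is symmetric, with the roles of left and right limits exchanged. I expect this step to be the main obstacle. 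Nested unions of open intervals can merge, as when two ironed intervals at $\dualVPrime$ become one limiting interval, and I must check that the chord formula still converges in that situation. I would first handle it via the chord of a concave function over intervals whose endpoints converge monotonically.

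\textbf{Items 4 and 5 (monotonicity).} These use one elementary fact: for a concave $\accAllocUb$ and nested intervals $[a',b']\subseteq[a,b]$ containing $\quant$, the chord over $[a',b']$ evaluated at $\quant$ is at least the chord over $[a,b]$.
\begin{itemize}
\item For $\dualV\le1$, \Cref{coro:proofs in Pareto frontier:ironedItvs trend} says ironed intervals shrink as $\dualV$ grows, so the chord values rise. By \Cref{coro:proofs in Pareto frontier:residuItv trend}, $\rightS(\dualV)$ increases, so the flat tail starts later and $\accAlloc$ is pointwise larger there as well. Interpolation in $\evol$ goes from the smaller solution to the larger one, so monotonicity persists through the jumps.
\item For $\dualV\ge1$, I note that $\rightS(1)=1$ and $\rightS$ is monotone, so $\residuItv(\dualV)=\emptyset$ throughout. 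Only the ironed intervals change, and they now expand, so the chord fact gives a pointwise decrease in $\dualV$, again preserved by the interpolation.
\end{itemize}
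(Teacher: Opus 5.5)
Your proposal follows the same route as the paper and is, if anything, more complete. The paper's proof treats items 1--4 as immediate from the construction, \Cref{lem:proofs in Pareto frontier:primal optimal allocation evolvement} and the two trend corollaries, and settles item 5 exactly as you do via $\rightS(\dualV)=1$, hence $\residuItv(\dualV)=\emptyset$, for $\dualV\ge 1$. Your persistence-of-strict-ironing argument for continuity and your nested-chord comparison for monotonicity make explicit what the paper leaves implicit.

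The one point your continuity argument does not literally reach, and which the paper also leaves unaddressed, is the right limit at the junction $\dualV=1$. There the ironed sets shrink as $\dualV\downarrow 1$, so persistence gives nothing. You need the extra observation that uniform limits of the linear hull pieces would make the concave $\lagCurve[1](\quant)=\int_0^{\quant}\valFunc(\qquant)\dd\qquant$ linear on any limiting ironed interval. That forces the interval to degenerate unless $\buyerdist$ has an atom.
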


\begin{proof}[Proof of \Cref{prop:proofs in Pareto frontier:properties for the evolving process}]
    We are only left with the last statement, which establishes by noticing that $\rightS(\dualV) = 1$ for any $\dualV \geq 1$ and thus $\residuItv(\dualV) = \emptyset$. 
\end{proof}

\subsection{Single-Buyer Market}

In the single-buyer market, we prove the following two results. 

\PFSingleGeneral*

\PFSingleRegular*

To start with, we notice that in the single-buyer market, the dual program becomes the following:
\[\min_{\dualV \geq 0} \max_{\alloc(\quant) \leq 1; \alloc(\quant) \text{ is decreasing}} \numAgents\cdot \expect[\quant]{\alloc(\quant)\cdot \lagCurvePrime(\quant)} - \dualV\cdot \buyerutil.\]

We now show that the optimal dual variable $\dualV$ is upper bounded by $1$.

\begin{proposition}\label{prop:proofs in Pareto frontier:single buyer dualVOPT bounded by 1}
    In a single-item seller, single-buyer market, there exists a $\dualV \in [0, 1]$ that minimizes 
    \[\min_{\dualV \geq 0} \max_{\alloc(\quant) \leq 1, \alloc(\quant) \text{ is decreasing}} \numAgents\cdot \expect[\quant]{\alloc(\quant)\cdot \lagCurvePrime(\quant)} - \dualV\cdot \buyerutil.\] 
\end{proposition}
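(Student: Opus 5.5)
Write $D(\dualV)\triangleq \max_{\alloc \le 1,\ \alloc \text{ decreasing}} \expect[\quant]{\alloc(\quant)\cdot \lagCurvePrime(\quant)} - \dualV\cdot \buyerutil$ for the dual function (here $\numAgents=1$). The plan is to show that $D$ is non-decreasing on $[1,\infty)$. Then $\inf_{\dualV\ge 0} D(\dualV) = \min_{\dualV\in[0,1]} D(\dualV)$, and since $D$ is continuous on $[0,1]$ the minimum is attained there. Continuity follows from the Lipschitz estimate in the first statement of \Cref{prop:proofs in Pareto frontier:dualV and quantItvOPT relation}; it carries over to $D$ because the inner maximum is $\revCurveOPT(\dualV)$, shown below.

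First, compute the inner maximum explicitly. In the single-buyer case the feasible set, decreasing $\alloc$ with values in $[0,1]$, is exactly the convex hull of the step functions $\indicator{\quant \le \qquant}$ for $\qquant\in[0,1]$, i.e., of posted prices. A linear objective over this set is therefore maximized at an extreme point, giving
\[
\max_{\alloc} \expect[\quant]{\alloc(\quant)\lagCurvePrime(\quant)} = \max_{\qquant\in[0,1]} \lagCurve(\qquant) = \revCurveOPT(\dualV),
\]
using $\lagCurve(0)=0$. Hence $D(\dualV) = \revCurveOPT(\dualV) - \dualV\buyerutil$.

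Next, look at $\dualV \ge 1$. By the second statement of \Cref{prop:proofs in Pareto frontier:dualV and quantItvOPT relation}, and since $\lagCurve[1](\quant)=\int_0^{\quant}\valFunc$ is non-decreasing, we have $\rightS(\dualV) = 1$ for all $\dualV\ge 1$. Thus the maximizer is $\qquant=1$ and
\[
\revCurveOPT(\dualV) = \lagCurve(1) = \dualV \int_0^1 \valFunc + (1-\dualV)\valFunc(1) = \valFunc(1) + \dualV\Bigl(\int_0^1 \valFunc - \valFunc(1)\Bigr).
\]
The bracket is exactly the buyer's surplus from posting the price $\valFunc(1)$, which is $\buyerbenchmarkDist$ in the single-buyer case. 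Indeed, posting the lowest price in the support maximizes buyer surplus by \Cref{fct:Pareto frontier:buyers' surplus increase}, and randomizations cannot do better. So for $\dualV\ge 1$,
\[
D(\dualV) = \valFunc(1) + \dualV(\buyerbenchmarkDist - \buyerutil),
\]
which is non-decreasing in $\dualV$ because the constraint level satisfies $\buyerutil \le \buyerbenchmarkDist$. Hence $D(\dualV)\ge D(1)$ for all $\dualV\ge 1$, and the minimizer can be taken in $[0,1]$.

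The main obstacle I expect is justifying $\rightS(\dualV)=1$ for $\dualV > 1$ cleanly. I would argue it directly rather than through the proposition. The derivative of $\lagCurve$ is $\valFunc(\quant) - (\dualV-1)\quant\valFunc'(\quant)$ (in the appropriate a.e. sense), and both terms are non-negative because $\valFunc$ is non-increasing. So $\lagCurve$ is non-decreasing on $[0,1]$. Care is needed if $\valFunc$ has a jump (the atom at $\hval$, i.e., near $\quant=0$). Any downward jump of $\valFunc$ only raises $(1-\dualV)\quant\valFunc(\quant)$ when $\dualV>1$, so monotonicity survives. A minor secondary point is the boundary case $\buyerutil > \buyerbenchmarkDist$, where the primal problem is infeasible; it is excluded since we only consider $\buyerutil \in [\buyerexanteutil[\dist]{\Mye}, \buyerbenchmarkDist]$.
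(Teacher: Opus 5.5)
Your proposal is correct and is essentially the paper's argument. You identify the inner maximum with $\max_{\quant}\lagCurve(\quant)$, then use $\rightS(1)=1$ and the monotonicity of $\rightS(\cdot)$ to place the maximizer at $\quant=1$ for all $\dualV\ge 1$; your direct check $\lagCurvePrime(\quant)=\valFunc(\quant)+(\dualV-1)\bigl(-\quant\valFuncPrime(\quant)\bigr)\ge 0$ works just as well. From there you conclude that the dual objective is affine and non-decreasing in $\dualV$ on $[1,\infty)$.

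Your bookkeeping is more careful than the paper's, whose formula $\dualV\bigl(\int_0^1\valFunc-\buyerutil\bigr)$ silently drops the $(1-\dualV)\valFunc(1)$ term. One caveat: when $\valFunc(1)>0$, posting price $0$ gives surplus $\int_0^1\valFunc$, so $\int_0^1\valFunc-\valFunc(1)$ is not $\buyerbenchmarkDist$ but only the largest surplus available in the Lagrangian formulation, where the lowest type's utility is zero. What both your argument and the paper's actually use is $\buyerutil\le\int_0^1\valFunc-\valFunc(1)$, which is automatic when $\valFunc(1)=0$.
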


\begin{proof}[Proof of \Cref{prop:proofs in Pareto frontier:single buyer dualVOPT bounded by 1}]
    By \Cref{prop:proofs in Pareto frontier:optimal primal allocation}, Given any $\dualV \geq 0$, \[\max_{\alloc(\quant) \leq 1, \alloc(\quant) \text{ is decreasing}} \expect[\quant]{\alloc(\quant)\cdot \lagCurvePrime(\quant)} = \max_{\alloc(\quant) \leq 1, \alloc(\quant) \text{ is decreasing}} \expect[\quant]{\alloc(\quant)\cdot \lagCurveCHPrime(\quant)}. \] 
    For the latter formula, one observes that the maximum of $\expect[\quant]{\alloc(\quant)\cdot \lagCurveCHPrime(\quant)}$ under the constraints that $\alloc(\quant) \leq 1$ and $\alloc(\quant)$ is decreasing is just the maximum of $\lagCurveCH$, which coincides with the maximum of $\lagCurve$. 
    
    Notice that when $\dualV = 1$, the maximum quantile of $\lagCurve$ is $\quant = 1$, since $\lagCurve[1](\quant) = \int_0^{\quant} \valFunc(\qquant) \dd \qquant$ is an increasing function. 
    Thus, by \Cref{prop:proofs in Pareto frontier:dualV and quantItvOPT relation}, for any $\dualV > 1$, the maximum quantile of $\lagCurve$ is also $\quant = 1$, and the corresponding maximum objective is $\dualV (\int_0^1 \valFunc(\qquant) \dd \qquant - \buyerutil)$. 
    Since $\int_0^1 \valFunc(\qquant) \dd \qquant = \buyerbenchmarkDist \geq \buyerutil$, we conclude that $\dualV \in [0, 1]$ suffices to minimize the program. 
\end{proof}

We are now ready to prove our results. 
We notice that each \emph{extreme point} mechanism of $\{\accAllocOPT[\dualV, \evol]\}_{0 \leq \dualV \leq 1}$ satisfies $\evol = 0, 1$, and in the single-buyer market, each such mechanism is a price-posting mechanism, with the posted-price no larger than the lowest monopoly reserve. This finishes the proof of \Cref{prop:Pareto frontier:single buyer general}. 

As a corollary, we have the following. 
\begin{corollary} \label{coro:proofs in Pareto frontier:single buyer full characterization}
    In a single-item seller, single-buyer market, for any $0 \leq \dualV \leq 1$, let $\buyerutil\primed$ be the buyer surplus under the mechanism of posting price $\valFunc(\leftS(\dualV))$ and $\buyerutil\doubleprimed$ be that under the mechanism of posting price $\valFunc(\rightS(\dualV))$. If $\buyerutil\primed \leq \buyerutil \leq \buyerutil\doubleprimed$, then the optimal seller revenue with a buyer surplus of at least $\buyerutil$ is given by the mechanism of fixing a probability $\mixprob \in [0, 1]$ ex ante, posting price $\valFunc(\leftS(\dualV))$ with probability $\mixprob$ and posting price $\valFunc(\rightS(\dualV))$ with probability $1 - \mixprob$. $\mixprob$ satisfies that: 
    \[\buyerutil = \mixprob\cdot \buyerutil\primed + (1 - \mixprob)\cdot \buyerutil\doubleprimed. \]
\end{corollary}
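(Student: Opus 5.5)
The plan is to certify optimality of the proposed two-price lottery by weak Lagrangian duality at the given multiplier $\dualV$, with no need to search for the dual optimum. Fix $0 \le \dualV \le 1$ and $\buyerutil \in [\buyerutil\primed, \buyerutil\doubleprimed]$.

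First, I will check that posting the price $\valFunc(\quant)$ for any $\quant \in \quantItvOPT(\dualV)$ maximizes the Lagrangian objective $\sellerexanteutil[\dist]{\mech} + \dualV \cdot \buyerexanteutil[\dist]{\mech}$ over all single-buyer mechanisms.
\begin{itemize}
\item Posting $\valFunc(\quant)$ corresponds to the allocation $\alloc = \indicator{\cdot \le \quant}$, so its Lagrangian value is $\int_0^{\quant} \lagCurvePrime(\qquant) \dd \qquant = \lagCurve(\quant)$.
\item By \Cref{prop:proofs in Pareto frontier:optimal primal allocation} and the argument in the proof of \Cref{prop:proofs in Pareto frontier:single buyer dualVOPT bounded by 1}, the maximal Lagrangian value over decreasing $\alloc \le 1$ equals $\revCurveOPT(\dualV) = \max_{\quant} \lagCurveCH(\quant) = \max_{\quant} \lagCurve(\quant)$.
\item So it remains to show $\lagCurve(\leftS(\dualV)) = \lagCurve(\rightS(\dualV)) = \revCurveOPT(\dualV)$, i.e., that both endpoints of the hull's argmax interval lie on the curve itself.
\end{itemize}
This is the one delicate step. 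Away from the maximizing interval the concave hull is strictly below its peak, so $\leftS(\dualV)$ and $\rightS(\dualV)$ are extreme points of the hypograph's maximal face. Hence each endpoint is a limit of points where $\lagCurve$ touches $\lagCurveCH$. I will then use upper semicontinuity of $\lagCurve$: the term $\int_0^{\quant} \valFunc$ is continuous and $\quant \valFunc(\quant)$ is left-continuous with the limit inequality used in part 3 of \Cref{prop:proofs in Pareto frontier:dualV and quantItvOPT relation}. This passes the equality to the limit, as in that proof. Edge cases $\quant \in \{0,1\}$ are handled by the definition $\revCurve(0) = \lim$.

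Second, since the Lagrangian is linear in the mechanism, any lottery that posts $\valFunc(\leftS(\dualV))$ with probability $\mixprob$ and $\valFunc(\rightS(\dualV))$ with probability $1-\mixprob$ is also Lagrangian-optimal. Its buyer surplus is $\mixprob \buyerutil\primed + (1-\mixprob)\buyerutil\doubleprimed$, which is linear in $\mixprob$ and sweeps $[\buyerutil\primed, \buyerutil\doubleprimed]$. So there is a $\mixprob$ with surplus exactly $\buyerutil$; call this lottery $\mech[\mixprob]$. (Indeed $\buyerutil\primed \le \buyerutil\doubleprimed$ automatically, since buyer surplus $\int_0^{\quant}(\valFunc - \valFunc(\quant))$ is non-decreasing in $\quant$ and $\leftS \le \rightS$, cf.\ \Cref{fct:Pareto frontier:buyers' surplus increase}.)

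Finally, the standard weak-duality chain closes the argument. For any $\mech \in \mechFam$ with $\buyerexanteutil[\dist]{\mech} \ge \buyerutil$, using $\dualV \ge 0$,
\[
\sellerexanteutil[\dist]{\mech} \le \sellerexanteutil[\dist]{\mech} + \dualV\bigl(\buyerexanteutil[\dist]{\mech} - \buyerutil\bigr) \le \revCurveOPT(\dualV) - \dualV \buyerutil = \sellerexanteutil[\dist]{\mech[\mixprob]}.
\]
The middle inequality is Lagrangian optimality of $\mech[\mixprob]$. The last equality uses that its surplus equals $\buyerutil$ exactly, which is complementary slackness. Thus $\mech[\mixprob]$ attains the optimal revenue subject to surplus at least $\buyerutil$.
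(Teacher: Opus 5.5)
Your proposal is correct and is essentially the paper's argument made explicit. You fix $\lambda\in[0,1]$, show that both endpoint postings (and hence every mixture of them) maximize the Lagrangian, and pick the mixture whose surplus is exactly $U$, so that complementary slackness makes the weak-duality bound tight. The one real difference is that you verify directly, via upper semicontinuity of $R_\lambda$ for $\lambda\le 1$, that $\ell(\lambda)$ and $r(\lambda)$ lie on $R_\lambda$ itself. The paper instead takes the Lagrangian-optimality of these two postings from its canonical-solution and evolving-process construction (\Cref{lem:proofs in Pareto frontier:primal optimal allocation evolvement}). Your opening claim for \emph{every} $q\in\mathcal{Q}^*(\lambda)$ is too strong, since interior points can lie strictly below the hull; but you only prove and use it at the two endpoints, so nothing breaks.
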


We now consider regular value distributions.
Notice that although the primal optimal mechanism we derive is a randomization over two price-posting mechanisms, there is a single price-posting mechanism (with a price also no larger than the lowest monopoly reserve) that also achieves buyers' surplus of $\buyerutil$. 
Let the price be $\valFunc(\quantDag)$. 
By \Cref{prop:proofs in Pareto frontier:dualV and quantItvOPT relation}, there is a $\dualVDag$ such that $\quantDag \in \quantItvOPT(\dualVDag)$, or that $\quantDag \in \argmax_{\quant} \lagCurveCH(\quant)$. 
Yet, for regular distributions, $\lagCurveCH = \lagCurve$, implying that the mechanism of posting price $\valFunc(\quantDag)$ is also a primal optimal solution for dual variable $\dualVDag$. 
This proves \Cref{prop:Pareto frontier:single buyer regular}.

\subsection{Anti-MHR Distributions}

When the buyers' value distribution is anti-MHR, we prove the following two results.

\PFRegularAntiMHR*

\PFAntiMHR*

We first introduce the following result given by \citet{HR-08}. 
\begin{proposition}[Corollary 2.12, \citealp{HR-08}] \label{prop:proofs in Pareto frontier:anti-MHR SPA gives umax}
    For homogeneous buyers with an anti-MHR value distribution, $\buyerbenchmarkDist$ is achieved with a second-price auction.
\end{proposition}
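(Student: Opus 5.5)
The plan is to write buyers' surplus as a linear functional of the accumulated allocation $\accAlloc$. Once it is in that form, the claim should follow from Border's constraint $\accAlloc(\quant)\le\accAllocUb(\quant)$ holding pointwise, together with the fact that the second-price auction meets this constraint with equality at every quantile. Let $h(\quant)\triangleq -\quant\,\valFuncPrime(\quant)\ge 0$. By \Cref{prop:proofs in Pareto frontier:surplus expression}, the ex-ante buyers' surplus of any symmetric mechanism in $\mechFam$ with interim allocation $\alloc$ is
\[
\buyerexanteutil[\dist]{\alloc} \;=\; \numAgents\int_0^1 \alloc(\quant)\,h(\quant)\dd\quant .
\]
The distributional assumption enters only here. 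By the quantile characterization of anti-MHR (the proposition citing \citealp{BR-89,har-16}), $\quant\,\valFuncPrime(\quant)$ is non-decreasing, so $h$ is non-increasing on $(0,1)$.

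Next I will integrate by parts, using $\accAlloc'=\alloc$ and $\accAlloc(0)=0$:
\[
\int_0^1 \alloc(\quant)h(\quant)\dd\quant \;=\; \accAlloc(1)\,h(1) \;+\; \int_0^1 \accAlloc(\quant)\,\bigl(-h'(\quant)\bigr)\dd\quant .
\]
This is the identity in \Cref{prop:proofs in Pareto frontier:surplus expression}, rearranged. Both coefficients are non-negative: $h(1)\ge 0$, and $-h'\ge 0$ because $h$ is non-increasing. (Under the twice-differentiability assumption $h'$ exists; otherwise I would read $-h'$ as the non-negative measure $-\mathrm d h$.) Hence the surplus is monotone non-decreasing in $\accAlloc$ pointwise. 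Border's condition gives $\accAlloc(\quant)\le\accAllocUb(\quant)=\frac{1-(1-\quant)^{\numAgents}}{\numAgents}$ for all $\quant$, so every feasible mechanism satisfies $\buyerexanteutil[\dist]{\alloc}\le \numAgents\bigl[\accAllocUb(1)h(1)+\int_0^1\accAllocUb(\quant)(-h'(\quant))\dd\quant\bigr]$.

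It remains to show that the second-price auction attains this bound. With no reserve and the full bid space, the highest-value buyer always wins. The interim allocation is therefore $\alloc(\quant)=(1-\quant)^{\numAgents-1}=\accAllocUb'(\quant)$, so $\accAlloc\equiv\accAllocUb$, and this allocation is Border-feasible with the constraint tight everywhere. Its surplus equals the upper bound, so it equals $\buyerbenchmarkDist$.

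I expect the main obstacle to be justifying the integration by parts at the endpoints, where $h$ may be singular. Near $\quant=0$, the possible atom at $\hval$ or an unbounded support can make $h(\quant)$ blow up. I would handle this by integrating on $[\delta,1]$ and letting $\delta\to 0$. The boundary term there is $\accAlloc(\delta)h(\delta)$, and I want it to vanish. Since $\accAlloc(\delta)\le\delta/\numAgents$, it suffices that $\delta h(\delta)\to 0$; I plan to deduce this from the monotonicity of $h$ and integrability, namely $\int_0^1 h=\int_0^1(\valFunc-\revCurvePrime)<\infty$, which is finite whenever $\buyerbenchmarkDist$ is. If this limit fails to vanish, I would drop the limit argument and instead use the layer-cake representation $h(\quant)=h(1)+\int_{\quant}^1(-\mathrm d h)$. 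Applying Tonelli to $\int\alloc h$ then gives the same formula with no boundary term, because every integrand is non-negative.
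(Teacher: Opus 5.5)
The paper does not prove this statement. It imports it from Hartline and Roughgarden (Corollary 2.12) and uses it as a black box, notably in the appendix argument that an optimal multiplier $\lambda\in[0,1]$ exists for anti-MHR distributions, where it supplies the value of the second-price auction's surplus.

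Your argument is a correct, self-contained replacement built only from the paper's own tools. With $h(q)=-qv'(q)$ non-increasing under anti-MHR, the per-buyer surplus
\[
X(1)h(1)+\int_0^1 X(q)\,\bigl(-h'(q)\bigr)\,dq
\]
is a non-negatively weighted functional of $X$. Border's pointwise bound $X\le\hat X$, together with the fact that the second-price auction has $X\equiv\hat X$, then closes the argument.

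This is the accumulated-allocation form of the Hartline--Roughgarden virtual-surplus argument. Their residual-surplus ``virtual value'' is the inverse hazard rate $1/\phi(v)=h(q)$, which is monotone in $v$ under anti-MHR, so giving the item to the highest value maximizes it pointwise. It is also the pure-surplus version of the paper's own observation that $R_\lambda'=v+(\lambda-1)h$ is non-negative and non-increasing for $\lambda\ge 1$.

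The citation buys brevity and access to a more general characterization. Your route makes the appendix independent of an external result and shows exactly where each hypothesis enters.

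On the endpoint issue, both of your fixes work, since monotonicity gives $\delta h(\delta)\le\int_0^\delta h\to 0$. The Tonelli/layer-cake version needs neither a boundary term nor integrability, so lead with it.

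One caveat, which you inherit from the paper's surplus identity rather than introduce yourself. The formula $U=n\int_0^1 xh$ presumes the lowest type has zero interim utility. In general the per-buyer surplus is $u(1)+\int_0^1 xh$ with $0\le u(1)\le x(1)v_L$, because interim payments are non-negative. So the identity is without loss only when $v_L=0$.

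For $v_L>0$ the statement itself fails in the paper's model. Take $v=1+\mathrm{Exp}(1)$, which has constant hazard rate and is therefore anti-MHR, and $n\ge 2$. The second-price auction gives buyers' surplus $\mathbb{E}[v_{(1)}-v_{(2)}]=1$, while giving the item to a uniformly random buyer for free gives $\mathbb{E}[v]=2$.

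So state $v_L=0$ (equivalently, that the bottom type's participation constraint binds) as a hypothesis. Under it, your argument is complete.
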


Although in general, $\dualVOPT \leq 1$ does not hold for the multi-buyer case, it holds for anti-MHR distributions. A key observation is that when $\dualV \geq 1$, the mixed welfare-revenue curve $\lagCurve$ remains concave, so the optimal primal allocation does not change.
\begin{proposition} \label{prop:proofs in Pareto frontier:dualVOPT bounded by 1 for anti-MHR}
    In a single-item seller, multi-buyer market, when the value distribution is anti-MHR, there exists a $\dualV \in [0, 1]$ that minimizes
    \[\max_{\accAlloc(\quant) \leq \accAllocUb(\quant), \; \alloc(\quant) \leq \allocUb(\quant), \; \alloc(\quant) \text{ is decreasing}} \numAgents\cdot \expect[\quant]{\alloc(\quant)\cdot \lagCurvePrime(\quant)} - \dualV\cdot \buyerutil\]
    for all $\dualV \geq 0$. 
\end{proposition}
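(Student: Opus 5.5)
The plan is to show that, for anti-MHR $\dist$, the dual function
\[D(\dualV) \triangleq \max_{\accAlloc \leq \accAllocUb,\ \alloc \leq \allocUb,\ \alloc \text{ decreasing}} \numAgents\cdot \expect[\quant]{\alloc(\quant)\cdot \lagCurvePrime(\quant)} - \dualV\cdot \buyerutil\]
is non-decreasing on $[1,\infty)$. Then $\inf_{\dualV\ge 0} D(\dualV)=\inf_{\dualV\in[0,1]}D(\dualV)$. The latter infimum is attained, because $D$ is continuous: it is a maximum of functions affine in $\dualV$ with uniformly bounded slopes, which is the Lipschitz argument of \Cref{prop:proofs in Pareto frontier:dualV and quantItvOPT relation}. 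The key structural fact is that for $\dualV\geq 1$ the curve $\lagCurve$ stays concave, and its optimal primal allocation is the second-price auction, independently of $\dualV$.

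\emph{Step 1: concavity.} Write $\lagCurve(\quant)=\int_0^\quant \valFunc(\qquant)\dd\qquant - (\dualV-1)\big(\int_0^\quant\valFunc(\qquant)\dd\qquant-\quant\valFunc(\quant)\big)$. Then
\[\lagCurvePrime(\quant)=\valFunc(\quant)+(\dualV-1)\,\quant\valFuncPrime(\quant),\qquad \lagCurveDoublePrime(\quant)=\valFuncPrime(\quant)+(\dualV-1)\big(\quant\valFuncPrime(\quant)\big)'.\]
Here $\valFuncPrime\le 0$, and anti-MHR means $\quant\valFuncPrime(\quant)$ is non-increasing (in the paper's convention on $\valFunc$, with $\valFunc$ decreasing in quantile), so $(\quant\valFuncPrime)'\le 0$. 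I expect the main obstacle to be fixing the sign convention against the proposition of \citet{BR-89,har-16}; I would check it on the exponential case, where $\quant\valFuncPrime(\quant)=-1/\lambda$ is constant. With the signs confirmed, $\lagCurveDoublePrime\le 0$ for all $\dualV\ge1$, so $\lagCurveCH=\lagCurve$ and $\ironedItvs(\dualV)=\emptyset$. Also $\lagCurvePrime(\quant)\ge \valFunc(\quant)\cdot 0+\dots$ is not needed. Instead, use \Cref{prop:proofs in Pareto frontier:dualV and quantItvOPT relation} (monotonicity of $\rightS$ together with $\rightS(1)=1$) to get $\rightS(\dualV)=1$.

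\emph{Step 2: the optimal primal.} By \Cref{prop:proofs in Pareto frontier:optimal primal allocation}, with $\lagCurve$ concave and maximized at $\quant=1$, the canonical optimal solution is $\accAlloc=\accAllocUb$ on $[0,1]$. This is the second-price auction with no reserve, and it is the same mechanism for every $\dualV\ge1$. Hence for $\dualV\ge 1$,
\[D(\dualV)=\sellerexanteutil[\dist]{\SPA}+\dualV\big(\buyerexanteutil[\dist]{\SPA}-\buyerutil\big).\]

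\emph{Step 3: conclude.} By \Cref{prop:proofs in Pareto frontier:anti-MHR SPA gives umax}, $\buyerexanteutil[\dist]{\SPA}=\buyerbenchmarkDist\ge\buyerutil$ for every relevant $\buyerutil$. So $D$ is affine with non-negative slope on $[1,\infty)$, giving $D(\dualV)\ge D(1)$. Therefore a minimizer over $[0,1]$, which exists by continuity and compactness, minimizes $D$ over all $\dualV\ge0$.
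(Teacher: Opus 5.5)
Your overall route is the paper's: show $\lagCurve$ is concave and maximized at $\quant=1$ for every $\dualV\ge 1$, deduce that the second-price auction is primal-optimal for all such $\dualV$, and conclude that the dual function equals $\sellerexanteutil[\dist]{\SPA}+\dualV(\buyerbenchmarkDist-\buyerutil)$ on $[1,\infty)$, hence is non-decreasing there. Steps 2 and 3 are fine. Your continuity/compactness remark, which guarantees that the minimum on $[0,1]$ is actually attained, is a detail the paper leaves implicit.

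\textbf{The problem is Step 1.} It reaches the correct conclusion only because two sign errors cancel.

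First, your decomposition of $\lagCurve$ is wrong. From $\lagCurve(\quant)=\dualV\int_0^{\quant}\valFunc(\qquant)\dd\qquant+(1-\dualV)\quant\valFunc(\quant)$ one gets
\[
\lagCurve(\quant)=\int_0^{\quant}\valFunc(\qquant)\dd\qquant+(\dualV-1)\Big(\int_0^{\quant}\valFunc(\qquant)\dd\qquant-\quant\valFunc(\quant)\Big),\qquad
\lagCurvePrime(\quant)=\valFunc(\quant)+(\dualV-1)\big(-\quant\valFuncPrime(\quant)\big),
\]
with a plus sign. Your expression is actually $\lagCurve[2-\dualV]$; at $\dualV=0$ it fails to reduce to $\revCurve(\quant)=\quant\valFunc(\quant)$.

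Second, anti-MHR means $\quant\valFuncPrime(\quant)$ is non-decreasing, not non-increasing, exactly as the quantile characterization in the preliminaries states. Indeed, $\quant\valFuncPrime(\quant)$ equals minus the reciprocal of the hazard rate at $\valFunc(\quant)$, and a hazard rate that is non-increasing in value is non-decreasing in quantile. The exponential check you propose cannot detect this, since $\quant\valFuncPrime$ is constant there. The equal-revenue distribution does: there $\valFunc(\quant)=1/\quant$ and $\quant\valFuncPrime(\quant)=-1/\quant$ is increasing.

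Had you fixed only the convention, as you planned, your formula $\lagCurveDoublePrime=\valFuncPrime+(\dualV-1)(\quant\valFuncPrime)'$ would be the sum of a non-positive and a non-negative term, and concavity would not follow. So Step 1 as written does not establish the claim.

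\textbf{The fix.} With both signs corrected, the step becomes the paper's one-line argument, which is also simpler than yours. For $\dualV\ge 1$, $\lagCurvePrime$ is the sum of $\valFunc$ and $(\dualV-1)(-\quant\valFuncPrime)$, both non-negative and non-increasing. Hence $\lagCurve$ is concave and non-decreasing at once, which gives $\ironedItvs(\dualV)=\emptyset$ and $\rightS(\dualV)=1$ directly. Your detour through the monotonicity of $\rightS$ and $\rightS(1)=1$ is valid but unnecessary. The garbled sentence about $\lagCurvePrime\ge\valFunc\cdot 0+\dots$ is a symptom of the wrong sign in $\lagCurvePrime$: with your formula, non-negativity of $\lagCurvePrime$ was no longer evident.
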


\begin{proof}[Proof of \Cref{prop:proofs in Pareto frontier:dualVOPT bounded by 1 for anti-MHR}]
    We first show that for anti-MHR distributions, $\lagCurve$ is concave and increasing for all $\dualV \geq 1$. To see that, 
    \begin{align*}
        \lagCurvePrime(\quant) &= \dualV \cdot \valFunc(\quant) + (1 - \dualV) (\valFunc(\quant) + \quant \valFuncPrime(\quant)) \\
        &= \valFunc(\quant) + (\dualV - 1)(-\quant \valFuncPrime(\quant)). 
    \end{align*}
    When $\dualV \geq 1$, since both $\valFunc(\quant)$ and $-\quant \valFuncPrime(\quant)$ are non-increasing and non-negative, $\lagCurvePrime(\quant)$ is non-increasing and non-negative. Thus $\lagCurve$ is concave and increasing, and coincides with $\lagCurveCH$. 
    As a consequence of \Cref{prop:proofs in Pareto frontier:optimal primal allocation}, the optimal accumulated allocation is $\accAllocOPT(\quant) = \accAllocUb(\quant)$ for all $\quant \in [0, 1]$, corresponding to a second-price auction.
    Thus, when $\dualV \geq 1$, the optimal primal objective is 
    \begin{align*}
        &\mathrel{\phantom{=}} \numAgents\cdot \expect[\quant]{\allocOPT(\quant)\cdot \lagCurvePrime(\quant)} - \dualV\cdot \buyerutil \\
        &= \numAgents\cdot \expect[\quant]{\allocOPT(\quant) \valFunc(\quant)} - \buyerutil + (\dualV - 1) \inParentheses{\numAgents\cdot \expect[\quant]{\allocOPT(\quant)\cdot (\valFunc(\quant) - \revCurvePrime(\quant))} - \buyerutil} \\
        &= \numAgents\cdot \expect[\quant]{\allocOPT(\quant) \valFunc(\quant)} - \buyerutil + (\dualV - 1) (\buyerbenchmarkDist - \buyerutil). 
    \end{align*}

    The last equation is because of \Cref{prop:proofs in Pareto frontier:anti-MHR SPA gives umax}. Now since $\buyerutil \leq \buyerbenchmarkDist$, the final formula increases with $\dualV$ when $\dualV \geq 1$. Therefore, the dual optimum is reached when $\dualV \leq 1$. 
\end{proof}

Further, again, by \Cref{prop:proofs in Pareto frontier:surplus expression}, buyers' surplus under allocation $\alloc(\quant)$ is presented as:
\begin{align*}
    \buyerutil = \numAgents\cdot \inParentheses{\expect[\quant]{\accAlloc(\quant)\cdot (\quant \valFuncPrime(\quant))'} + \accAlloc(1)\cdot \valFuncPrime(1)}.
\end{align*}

We are now ready to prove our theorems. 
Observe that when $\dualV \leq 1$, the evolving process $\{\accAllocOPT[\dualV, \evol]\}$ satisfies that $\accAllocOPT[\dualV, \evol](\quant)$ is continuous and increasing for any $\quant$. 
Meanwhile, buyers' surplus is $\buyerexanteutil[\dist]{\Mye}$ under $\accAllocOPT[0, 0]$ and $\buyerbenchmarkDist$ under $\accAllocOPT[1, 1]$ according to \Cref{prop:proofs in Pareto frontier:anti-MHR SPA gives umax}, which gives a second-price auction. 
We therefore obtain the following result.
\begin{proposition} \label{prop:proofs in Pareto frontier:MHR surplus for evovling process}
    For anti-MHR value distributions, under the process $\{\accAllocOPT[\dualV, \evol]\}_{0 \leq \dualV \leq 1}$, buyers' surplus continuously and monotonically increases from $\buyerexanteutil[\dist]{\Mye}$ to $\buyerbenchmarkDist$. 
\end{proposition}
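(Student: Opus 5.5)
The plan is to derive the statement from the structural properties of the evolving process in \Cref{prop:proofs in Pareto frontier:properties for the evolving process}, combined with the surplus formula of \Cref{prop:proofs in Pareto frontier:surplus expression}. First, I would write buyers' surplus as a linear functional of the accumulated allocation:
\[
\buyerexanteutil[\dist]{\accAlloc}=\numAgents\cdot\inParentheses{\expect[\quant]{\accAlloc(\quant)\cdot(\quant\valFuncPrime(\quant))'}+\accAlloc(1)\cdot\valFuncPrime(1)}.
\]
Under anti-MHR, $\quant\valFuncPrime(\quant)$ is non-decreasing, so $(\quant\valFuncPrime(\quant))'\ge 0$. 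However, $\valFuncPrime(1)\le 0$, so the boundary term has the wrong sign. I would therefore rather use the tail form in \Cref{prop:proofs in Pareto frontier:surplus expression}, which is equivalent to the first expression $\numAgents\cdot\expect[\quant]{\alloc(\quant)(-\quant\valFuncPrime(\quant))}$. Integrating by parts against the non-negative, non-increasing weight $w(\quant)=-\quant\valFuncPrime(\quant)$ gives
\[
\buyerexanteutil[\dist]{\accAlloc}=\numAgents\cdot\inParentheses{\accAlloc(1)w(1)+\int_0^1\accAlloc(\quant)\,(-w'(\quant))\dd\quant}.
\]
Here $w(1)\ge0$ and $-w'\ge0$, so surplus is a non-negative combination of the values $\accAlloc(\quant)$. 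This is the key monotonicity fact. If some regularity is lacking, I would instead approximate by Stieltjes integrals against $-\dd w$.

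Second, \Cref{prop:proofs in Pareto frontier:properties for the evolving process}(3),(4) gives that each $\accAllocOPT[\dualV,\evol](\quant)$ is continuous and non-decreasing along the lexicographic process on $\dualV\in[0,1]$. Since the coefficients are non-negative, surplus is monotone non-decreasing along the process. For continuity, I would apply dominated convergence, using $0\le\accAlloc\le\accAllocUb\le 1/\numAgents$ together with the integrability of $-w'$; the latter total mass is $w(0^+)-w(1)$, which is finite if the surplus of any mechanism is finite. Continuity at the countably many jump values of $\dualV$ is handled by the $\evol$-interpolation, which is affine in $\evol$, so surplus is affine there. Matching the endpoints of the jump relies on the definition that $\accAllocOPT[\dualV,0]$ equals the left limit.

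Third, I would identify the endpoints. At $(0,0)$ the surplus equals $\buyerexanteutil[\dist]{\Mye}$ by item (2). At $\dualV=1$, the proof of \Cref{prop:proofs in Pareto frontier:dualVOPT bounded by 1 for anti-MHR} shows $\lagCurve[1]$ is concave and increasing, so $\ironedItvs(1)=\emptyset$ and $\rightS(1)=1$. Hence the canonical solution is $\accAllocUb$, the second-price auction, whose surplus is $\buyerbenchmarkDist$ by \Cref{prop:proofs in Pareto frontier:anti-MHR SPA gives umax}. If $\dualV=1$ is a jump point, the terminal element $\accAllocOPT[1,1]$ is this canonical solution; otherwise $\accAllocOPT[1,0]$ is.

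The main obstacle is the continuity step: verifying that the lexicographic parametrization is genuinely continuous. This requires that left limits $\lim_{\dualVPrime\to\dualVMinus}\accAllocOPT[\dualVPrime]$ exist pointwise, which follows from monotonicity, and that they coincide with the canonical solution built from $\ironedItvs(\dualVMinus)$ and $\residuItv(\dualVMinus)$. I would check the latter directly from \Cref{coro:proofs in Pareto frontier:ironedItvs trend,coro:proofs in Pareto frontier:residuItv trend}, using the fact that the ironed intervals and the residual interval shrink monotonically.
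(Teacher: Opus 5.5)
This is the paper's argument. The paper also writes buyers' surplus as a linear functional of $\accAlloc$ with non-negative weights under anti-MHR, combines this with the pointwise continuity and monotonicity of the evolving process on $\dualV\in[0,1]$, and identifies the endpoints as $\Mye$ at $(0,0)$ and the second-price auction at $\dualV=1$, which is buyer-optimal by Hartline--Roughgarden. Your integration by parts is the correct one. The display in the paper's anti-MHR subsection has a sign typo: the boundary term is $-\accAlloc(1)\valFuncPrime(1)\ge 0$, as in the surplus-expression proposition. So there was no sign obstruction to work around.

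One justification in your continuity step is false, though easily repaired. The total mass of $-w'$ need not be finite even when every surplus is finite. For Pareto values with shape $\alpha>1$ (regular and anti-MHR, finite surplus), $w(\quant)=-\quant\valFuncPrime(\quant)=\quant^{-1/\alpha}/\alpha\to\infty$ as $\quant\to 0$. Hence $w(0^+)-w(1)=\infty$, and the bound $\accAlloc\le 1/\numAgents$ does not give an integrable majorant. Instead, use Border's constraint $0\le\accAlloc\le\accAllocUb$ and dominate by $\accAllocUb(\quant)\,(-w'(\quant))$. Its integral is at most $\buyerbenchmarkDist/\numAgents$, since it is one of the two non-negative terms in the second-price auction's per-buyer surplus. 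Dominated convergence then goes through whenever $\buyerbenchmarkDist<\infty$.
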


Therefore, for any $\buyerexanteutil[\dist]{\Mye} \leq \buyerutil \leq \buyerbenchmarkDist$, there is a pair $(\dualVOPT, \evolOPT)$ such that buyers' surplus under $\accAllocOPT[\dualVOPT, \evolOPT]$ is $\buyerutil$. 
Now, for anti-MHR distributions that are regular, this is a randomization of at most two SPA mechanisms with different reserves that are no larger than the lowest monopoly reserve. This further indicates that there is a SPA mechanism with a reserve (also no larger than the lowest monopoly reserve) that achieves buyers' surplus of $\buyerutil$. 
Suppose the reserve is $\valFunc(\quantDag)$. 
Now, according to \Cref{prop:proofs in Pareto frontier:dualV and quantItvOPT relation}, there is a $\dualVDag$ such that $\quantDag \in \quantItvOPT(\dualVDag)$, or that $\quantDag \in \argmax_{\quant} \lagCurveCH(\quant)$. 
However, for regular distributions, $\lagCurveCH = \lagCurve$, implying that the SPA mechanism with reserve $\valFunc(\quantDag)$ is a primal optimal solution for dual variable $\dualVDag$. 
This proves \Cref{thm:Pareto frontier:multi buyer:regular and antiMHR distribution}. 

For general anti-MHR distributions, one observes that each \emph{extreme point} mechanism of $\{\accAllocOPT[\dualV, \evol]\}_{0 \leq \dualV \leq 1}$ satisfies $\evol = 0, 1$. Meanwhile, we observe that these mechanisms are {\indirectSPAs}, and as $\dualV$ increases, their bidding space refines and their reserve decreases. Specifically, when $\dualV = 0$, the reserve is the lowest monopoly reserve. This proves \Cref{thm:Pareto frontier:multi buyer:antiMHR distribution}. 

\subsection{MHR Distributions}

We now prove the results for MHR distributions. 

\PFMHR*

Recall \Cref{prop:proofs in Pareto frontier:surplus expression} that for each allocation rule $\alloc(\quant)$, each buyer's surplus can be expressed as follows:
\begin{align*}
    &\mathrel{\phantom{=}} \expect[\quant]{\alloc(\quant)\cdot (\valFunc(\quant) - \revCurvePrime(\quant))} \\
    &= \expect[\quant]{\alloc(\quant)\cdot (-\quant \valFuncPrime(\quant))} \\
    &= \expect[\quant]{\accAlloc(\quant)\cdot (\quant \valFuncPrime(\quant))'} - \accAlloc(1)\cdot \valFuncPrime(1) \\
    &= \expect[\quant]{(\accAlloc(1) - \accAlloc(\quant))\cdot (-\quant \valFuncPrime(\quant))'} + \accAlloc(1)\cdot \inBrackets{\expect[\quant]{(\quant \valFuncPrime(\quant))'} - \valFuncPrime(1)}. 
\end{align*}

We now consider the evolving process $\{\accAllocOPT[\dualV, \evol]\}$ introduced in \Cref{sec:proofs in Pareto frontier:evolving process}. When $\dualV = 0$, the surplus given by $\accAllocOPT[0, 0]$ is $\buyerexanteutil[\dist]{\Mye}$. 
When $(\dualV, \evol)$ increases from $(0, 0)$ to $(1, 0)$, notice that the mixed welfare-revenue curve $\lagCurve$ is always concave since the value distribution is MHR. So the canonical primal optimal allocation is a second-price auction with a decreasing reserve, therefore $\allocOPT[\dualV, \evol](\quant)$ is also increasing for any $\quant$, so is buyers' surplus. 
At $(\dualV, \evol) = (1, 0)$, the mechanism is a second-price auction.
When $(\dualV, \evol)$ increases from $(1, 0)$, since $\rightS(\dualV) = \sup \argmax_{\quant} \lagCurve(\quant) = 1$, $\accAllocOPT[\dualV, \evol](1)$ is fixed at $\nicefrac{1}{n}$. Meanwhile, by \Cref{prop:proofs in Pareto frontier:properties for the evolving process}, $\accAllocOPT[\dualV, \evol](\quant)$ is decreasing for each quantile $\quant$, and buyers' surplus is also increasing because $(\quant \valFuncPrime(\quant))' \leq 0$.

Meanwhile, when $\dualV \to +\infty$, the primal optimal allocation maximizes
\[\expect[\quant]{\alloc(\quant) \cdot \lagCurvePrime(\quant)} = \expect[\quant]{\alloc(\quant) \cdot \inBrackets{\revCurvePrime(\quant) + \dualV\cdot \inParentheses{\valFunc(\quant) - \revCurvePrime(\quant)}}}, \]
which certainly maximizes $\expect[\quant]{\alloc(\quant) \cdot \inParentheses{\valFunc(\quant) - \revCurvePrime(\quant)}}$, i.e., each buyer's surplus, and therefore induces buyers' surplus of $\buyerbenchmarkDist$. 
In all, we have the following result. 

\begin{proposition} \label{prop:proofs in Pareto frontier:anti-MHR surplus for evovling process}
    For MHR value distributions, under the process $\{\accAllocOPT[\dualV, \evol]\}$, buyers' surplus continuously and monotonically increases from $\buyerexanteutil[\dist]{\Mye}$ to $\buyerbenchmarkDist$. 
\end{proposition}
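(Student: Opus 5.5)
The plan is to track the surplus functional along the process, using the integration-by-parts forms of \Cref{prop:proofs in Pareto frontier:surplus expression}. Together with the pointwise properties of $\accAllocOPT[\dualV, \evol]$ from \Cref{prop:proofs in Pareto frontier:properties for the evolving process}, this should give continuity and monotonicity. The endpoint $\buyerbenchmarkDist$ is handled separately by a limiting argument. Write $\buyerutil(\dualV,\evol)$ for $\numAgents$ times the per-buyer surplus under $\accAllocOPT[\dualV, \evol]$.

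\textbf{Continuity.} By \Cref{prop:proofs in Pareto frontier:surplus expression}, the surplus equals
\[
\numAgents\cdot\Bigl(\expect[\quant]{\accAlloc(\quant)\cdot (\quant \valFuncPrime(\quant))'} - \accAlloc(1)\cdot \valFuncPrime(1)\Bigr),
\]
which is a linear functional of $\accAlloc$. The process satisfies $0 \le \accAllocOPT[\dualV,\evol] \le \accAllocUb$, and by \Cref{prop:proofs in Pareto frontier:properties for the evolving process} it is pointwise continuous in $(\dualV,\evol)$. Using twice differentiability of $\valFunc$, dominated convergence then gives continuity of $\buyerutil(\dualV,\evol)$. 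At the countably many jump values of $\dualV$ the interpolation in $\evol$ is affine, so the surplus is affine in $\evol$ there. Composing with the monotone bijection onto $\reals_+$ described in the footnote gives continuity along the whole process.

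\textbf{Monotonicity, case $\dualV\in[0,1]$.} MHR implies regularity, and $\int_0^\quant\valFunc$ is concave, so $\lagCurve$ is concave. Hence the canonical solution is a second-price auction with reserve quantile $\rightS(\dualV)$, and there are no ironed intervals. By \Cref{prop:proofs in Pareto frontier:dualV and quantItvOPT relation}, $\rightS(\dualV)$ is nondecreasing. So $\alloc$ increases pointwise: it equals $\accAllocUb'$ on a growing prefix and $0$ afterward. Since the surplus is $\numAgents\,\expect[\quant]{\alloc(\quant)(-\quant\valFuncPrime(\quant))}$ with a nonnegative weight, it is nondecreasing. At $\dualV=0$ we have $\rightS(0)=\quantM$ and the mechanism is $\Mye$. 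At $(1,0)$ the mechanism is the plain second-price auction.

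\textbf{Monotonicity, case $\dualV\ge1$.} Here $\rightS(\dualV)=1$, so $\accAllocOPT[\dualV,\evol](1)=1/\numAgents$ stays fixed. I use the last form in \Cref{prop:proofs in Pareto frontier:surplus expression}: the surplus equals
\[
\numAgents\,\expect[\quant]{(\accAlloc(1)-\accAlloc(\quant))(-\quant\valFuncPrime(\quant))'}
\]
plus a term depending only on $\accAlloc(1)$. Under MHR, $\quant\valFuncPrime(\quant)$ is non-increasing, so $(-\quant\valFuncPrime)'\ge0$. By item 5 of \Cref{prop:proofs in Pareto frontier:properties for the evolving process}, $\accAlloc(\quant)$ decreases in $(\dualV,\evol)$, so $\accAlloc(1)-\accAlloc(\quant)$ increases and the surplus is nondecreasing.

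\textbf{Endpoint (main obstacle).} The process runs over $\dualV\to\infty$, so I must show the surplus approaches and attains $\buyerbenchmarkDist$. Let $\mech^{\sharp}$ be the buyer-optimal mechanism and let $\accAllocOPT[\dualV,\evol]$ be primal optimal at $\dualV$. Primal optimality gives
\[
\sellerbenchmarkDist+\dualV\,\buyerutil(\dualV,\evol)\ \ge\ \sellerexanteutil[\dist]{\accAllocOPT[\dualV,\evol]}+\dualV\,\buyerutil(\dualV,\evol)\ \ge\ \sellerexanteutil[\dist]{\mech^{\sharp}}+\dualV\,\buyerbenchmarkDist\ \ge\ \dualV\,\buyerbenchmarkDist .
\]
Hence $\buyerutil(\dualV,\evol)\ge\buyerbenchmarkDist-\sellerbenchmarkDist/\dualV\to\buyerbenchmarkDist$. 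To actually attain the value, I append the limit point $\dualV=\infty$. The monotone pointwise limit of $\accAllocOPT[\dualV,\evol]$ exists and is feasible, since Border's constraints are closed. By dominated convergence its surplus equals $\lim\buyerutil=\buyerbenchmarkDist$; by \citet{HR-08} it is the uniform random allocation. Combining continuity, monotonicity and the two endpoints yields the proposition, and the intermediate value theorem then supplies the $(\dualVOPT,\evolOPT)$ used for complementary slackness.
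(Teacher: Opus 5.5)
Your proposal is correct and follows the paper's proof essentially step for step. For $\dualV\in[0,1]$, MHR makes $\lagCurve$ concave, so the mechanisms are second-price auctions with decreasing reserve. For $\dualV\ge 1$, you use the fixed $\accAlloc(1)=1/\numAgents$ together with the last form of \Cref{prop:proofs in Pareto frontier:surplus expression} and $(\quant\valFuncPrime(\quant))'\le 0$. The endpoint comes from the $\dualV\to\infty$ limit. Your dominated-convergence continuity step and the bound $\buyerutil(\dualV,\evol)\ge\buyerbenchmarkDist-\sellerbenchmarkDist/\dualV$ make explicit what the paper only asserts.

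One small overstatement: the limiting allocation need not be the uniform random allocation. For the exponential distribution the process stays at the second-price auction for all $\dualV\ge 1$. Only its surplus $\buyerbenchmarkDist$ matters, so nothing breaks.
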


As a consequence of \Cref{prop:proofs in Pareto frontier:anti-MHR surplus for evovling process}, for any $\buyerexanteutil[\dist]{\Mye} \leq \buyerutil \leq \buyerbenchmarkDist$, there is a pair $(\dualVOPT, \evolOPT)$ such that buyers' surplus under allocation $\accAllocOPT[\dualVOPT, \evolOPT]$ is exactly $\buyerutil$. 
In specific, for $\dualV \geq 1$, each extreme point mechanism in $\{\accAllocOPT[\dualV, \evol]\}_{\dualV \geq 1}$ satisfies $\evol = 0, 1$, and is an {\indirectSPA} with zero reserve. Meanwhile, the bidding space anti-refines/shrinks as $\dualV$ increases. 
For $\dualV \leq 1$, $\accAllocOPT[\dualVOPT, \evolOPT]$ is given by a randomization of at most two SPA mechanisms with different reserves. 
Similar to the argument in \Cref{thm:Pareto frontier:multi buyer:regular and antiMHR distribution}, it has the same buyers' surplus as an SPA mechanism with a reserve no larger than the lowest monopoly reserve. 
When $\dualV = 1$, $\accAllocOPT[1, 0]$ is the second-price auction.
In all, we finish the proof of \Cref{thm:Pareto frontier:multi buyer:MHR distribution}.

\section{Welfare Analysis of the {\KSsolution}}
\label{apx:ks solution analysis}
\subsection{Proof of \texorpdfstring{\Cref{thm:KS solution welfare approximation}}{Theorem 4.1}}

\thmKSsolution*

Our analysis relies on the following lemmas regarding the revenue and buyers' surplus guarantee of the second-price auction.

\begin{lemma}[\citealp{BK-96}]
\label{lem:SPA:revenue approx regular}
    For any market size $n \geq 1$ and any regular valuation distribution $\buyerdist$, the revenue of the second-price auction (\SPA) is at least $\frac{n-1}{n}$ of the optimal revenue, i.e., $\sellerexanteutil[\buyerdist]{\SPA} \geq \frac{n-1}{n}\cdot \sellerbenchmarkDist$.
\end{lemma}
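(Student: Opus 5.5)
The plan is to derive the bound by combining the classical Bulow--Klemperer comparison with a simple ``drop one buyer'' argument. Write $\sellerbenchmark(k)$ for the optimal (Myerson) revenue with $k$ i.i.d.\ buyers from $\buyerdist$; the case $n=1$ is trivial, so assume $n\ge 2$. The proof is the chain
\[
\sellerexanteutil[\buyerdist]{\SPA}\;\ge\;\sellerbenchmark(n-1)\;\ge\;\tfrac{n-1}{n}\,\sellerbenchmark(n)=\tfrac{n-1}{n}\,\sellerbenchmarkDist .
\]

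\textbf{Step 1: $\sellerexanteutil[\buyerdist]{\SPA}\ge\sellerbenchmark(n-1)$.} I would use revenue equivalence: the revenue of any BIC, interim IR mechanism in which the lowest type gets zero utility equals its expected virtual surplus $\numAgents\cdot\expect[\quant]{\alloc(\quant)\revCurvePrime(\quant)}$.

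Consider the auxiliary mechanism $\mathcal{A}$ on $n$ buyers:
\begin{itemize}
\item run $\Mye$ on buyers $1,\dots,n-1$;
\item if $\Mye$ leaves the item unallocated, give it to buyer $n$ at price $\lval$.
\end{itemize}
Its revenue is $\sellerbenchmark(n-1)+\lval\cdot\Pr[\text{unallocated}]\ge\sellerbenchmark(n-1)$. The event ``unallocated'' is independent of $\val_n$, so buyer $n$ simply faces a posted price $\lval$ with some probability. This keeps $\mathcal{A}$ truthful and IR, and it sidesteps any boundary term $\revCurve(0)>0$ that could appear for unbounded regular distributions such as equal-revenue.

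By construction $\mathcal{A}$ always allocates the item. Under regularity the virtual value $\virtualval$ is monotone, so among all mechanisms that always allocate, expected virtual surplus is maximized by giving the item to the highest value, which is exactly $\SPA$. Since $\SPA$ is also truthful with zero utility for the lowest type, revenue equivalence gives $\sellerexanteutil[\buyerdist]{\SPA}\ge$ revenue of $\mathcal{A}$.

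\textbf{Step 2: $\sellerbenchmark(n-1)\ge\frac{n-1}{n}\sellerbenchmark(n)$.} Under regularity, $\sellerbenchmark(k)=\expect{\max\{\max_{i\le k}\virtualval(\val_i),0\}}$. Pick a uniformly random buyer to discard among $n$. With probability $\frac{n-1}{n}$ the discarded buyer is not the one attaining the max, so the remaining $n-1$ buyers still have the same positive-part maximum. Otherwise that quantity is at least $0$. Taking expectations gives the claim.

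\textbf{Main obstacle.} The delicate point is Step 1's revenue accounting when $\revCurve(0)>0$ or when there is an atom at $\hval$. I would handle it as above: treat the extra buyer's offer as a posted price of $\lval$, rather than relying on $\expect{\virtualval(\val)}=0$, which can fail in these cases.
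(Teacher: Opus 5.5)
The paper does not prove this lemma; it only cites Bulow--Klemperer. Your chain is the standard derivation behind that citation, and its conclusion is correct. It has two steps. First, $\SPA$ beats the always-allocating mechanism $\mathcal{A}$, because under regularity $\SPA$ pointwise maximizes virtual surplus among always-allocating mechanisms. Second, the drop-one-buyer bound gives $\sellerbenchmark(n-1)\ge\frac{n-1}{n}\sellerbenchmark(n)$.

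\textbf{A misdiagnosed boundary term.} Posting $\lval$ to buyer $n$ only handles the normalization at the \emph{bottom} of the support. It does nothing about the boundary term at the \emph{top}.

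When $R(0)=\lim_{q\to 0}q\,v(q)>0$, the identity ``revenue $=$ expected virtual surplus'' is false. You invoke it in Step 1 for both $\SPA$ and $\mathcal{A}$, and in Step 2 for $\sellerbenchmark(k)$. For example, the equal-revenue distribution has $\virtualval\equiv 0$, yet $\sellerbenchmark(k)=k$. Writing an interim rule as a mixture of posted prices gives the correct identity:
\[
\text{Rev}=\sum_{i}\Bigl(\int_0^1 x_i(q)\,R'(q)\,dq+x_i(0^+)\,R(0)\Bigr).
\]

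\textbf{Your argument survives with the corrected identity.}
\begin{itemize}
\item In Step 1, the virtual-surplus comparison is unchanged. Every $x_i(0^+)\le 1$, and $\SPA$ has $x_i(0^+)=1$ on unbounded supports. So $\SPA$ attains the largest possible boundary term $nR(0)$, and $\mathcal{A}$'s boundary term is at most that. Hence $\sellerexanteutil[\buyerdist]{\SPA}\ge\sellerbenchmark(n-1)$ still holds.
\item In Step 2, the Myerson auction's boundary term is $kR(0)$, which scales by exactly $\frac{n-1}{n}$.
\end{itemize}

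\textbf{An alternative for Step 2.} Step 2 holds for every $F$ with no virtual values at all. Run the optimal symmetric $n$-buyer mechanism with a uniformly random phantom buyer whose value is drawn from $F$. Leave the item unsold if the phantom wins, and collect only the real buyers' payments. By symmetry these total $\frac{n-1}{n}\sellerbenchmark(n)$ in expectation, and the simulated mechanism remains BIC and IR for the real buyers.
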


\begin{lemma}[\citealp{HR-08}]
\label{lem:SPA:buyer surplus antiMHR}
    For any market size $n \geq 1$ and any anti-MHR valuation distribution $\buyerdist$, the buyers' surplus of the second-price auction (\SPA) is optimal, i.e., $\buyerexanteutil[\buyerdist]{\SPA} = \buyerbenchmarkDist$.
\end{lemma}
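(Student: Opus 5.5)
The plan is to work in the interim quantile space of \Cref{sec:Pareto frontier:reduction proof example} and show that the accumulated allocation of the second-price auction pointwise dominates every feasible accumulated allocation. I will then argue that buyers' surplus is a non-negative mixture of the values $\accAlloc(\quant)$, so pointwise dominance gives optimality.

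By symmetry and \Cref{prop:proofs in Pareto frontier:surplus expression}, buyers' surplus of any mechanism in $\mechFam$ with interim allocation $\alloc$ is
\[
\numAgents\int_0^1 \alloc(\quant)\, w(\quant)\dd\quant, \qquad w(\quant)\triangleq -\quant\valFuncPrime(\quant)\ge 0 .
\]
By the quantile characterization of anti-MHR, $\quant\valFuncPrime(\quant)$ is non-decreasing, so $w$ is non-negative and non-increasing on $(0,1)$. Integrating by parts with $\accAlloc(0)=0$ gives
\[
\int_0^1 \alloc(\quant) w(\quant)\dd\quant = \accAlloc(1)\,w(1) + \int_0^1 \accAlloc(\quant)\,\bigl(-w'(\quant)\bigr)\dd\quant .
\]
This is exactly the third line of \Cref{prop:proofs in Pareto frontier:surplus expression}. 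Both weights $w(1)$ and $-w'$ are non-negative, so the surplus is monotone non-decreasing in the function $\accAlloc$, pointwise.

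Border's condition gives $\accAlloc(\quant)\le \accAllocUb(\quant)=\frac{1-(1-\quant)^{\numAgents}}{\numAgents}$ for every feasible mechanism. The second-price auction (no reserve, full bid space) allocates to a buyer at quantile $\quant$ with probability $(1-\quant)^{\numAgents-1}$. Its accumulated allocation is therefore exactly $\accAllocUb$, the Border bound holding with equality everywhere. Plugging this into the nonnegative mixture shows that $\buyerexanteutil[\buyerdist]{\SPA}\ge\buyerexanteutil[\buyerdist]{\mech}$ for every $\mech\in\mechFam$, hence $\buyerexanteutil[\buyerdist]{\SPA}=\buyerbenchmarkDist$.

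The main obstacle is making the integration by parts rigorous at the endpoints under the paper's regularity assumptions. Near $\quant=0$ the value may be unbounded, and $\buyerdist$ may have an atom at $\hval$. I would first handle the boundary term at $0$, arguing that $\accAlloc(\quant)w(\quant)\to 0$ because $\accAlloc(\quant)\le\quant/\numAgents$ and $\quant w(\quant)$ is integrable by finiteness of $\OPT_{\buyerdist}$. If $w$ is not differentiable, I would replace $-w'\dd\quant$ by the non-negative Stieltjes measure $-\dd w$, which exists since $w$ is monotone. An atom at $\hval$ only affects a neighborhood of $\quant=0$, where both sides are treated the same way. It would also be cleaner to avoid derivatives of $w$ entirely by writing $w(\quant)=w(1)+\int_{\quant}^1(-\dd w)$ and applying Fubini, which yields the same non-negative mixture of $\accAlloc$ values.
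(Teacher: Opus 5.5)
Your argument is correct, and it is essentially the standard proof behind the \citet{HR-08} result that the paper cites without proof. Anti-MHR makes the surplus weight $w(\quant)=-\quant\valFuncPrime(\quant)=1/\buyerhazardrate(\valFunc(\quant))$ non-negative and non-increasing, so no ironing is needed and the pointwise-maximal accumulated allocation $\accAllocUb$ of the second-price auction is optimal. This is also the reasoning the paper uses for $\dualV\ge 1$ in \Cref{prop:proofs in Pareto frontier:dualVOPT bounded by 1 for anti-MHR}.

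One caveat should be made explicit. The identity ``surplus $=\numAgents\int_0^1\alloc\,w$'' that you import from \Cref{prop:proofs in Pareto frontier:surplus expression} sets the lowest type's interim utility to zero. That is without loss only when $\lval=0$, as in \citet{HR-08}. In general the surplus is $\numAgents\int_0^1\alloc\,w+\numAgents\, u(\lval)$ with $0\le u(\lval)\le \lval\,\alloc(1)$. For $\lval>0$ the lemma actually fails. For example, if $v-\lval\sim\mathrm{Exp}(1)$ and $\numAgents\ge 2$, the second-price auction gives buyers $\mathbb{E}[v_{(1)}-v_{(2)}]=1$, while a free uniform lottery gives them $1+\lval$. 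So your proof is complete exactly under the normalization $\lval=0$, which you should state.

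A smaller point concerns the boundary term at $\quant=0$. It vanishes because $w$ is non-increasing with $\int_0^1 w=\mathbb{E}[v]-\lval<\infty$, which forces $\quant w(\quant)\to 0$. Integrability of $\quant w(\quant)$ alone, the reason you gave, would not imply this. Your Tonelli route avoids the boundary term altogether.
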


\begin{lemma}[{\citealp{mye-81,HR-08}}]
\label{prop:revenue equivalence}
\label{prop:buyer surplus equivalence}
Fix a buyer with a value drawn from a valuation distribution $\buyerdist$. 
For every mechanism $\mech \in\mechfam$, 
the buyer's ex ante payment and utility satisfy 
$\expect[\val\sim\buyerdist]{\price(\val)} = \expect[\val\sim\buyerdist]{\virtualval(\val)\cdot \alloc(\val)}$
and
$\buyerexanteutil[\dist]{\mech} = \expect[\val\sim\buyerdist]{(1 / \buyerhazardrate(\val))\cdot \alloc(\val)}$, 
where $\alloc(\val)$ is the interim allocation of the buyer with value $\val$, $\virtualval(\val)$ is her virtual value, and $\buyerhazardrate(\val)$ is her hazard rate. 
\end{lemma}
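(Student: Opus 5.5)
The plan is the standard Myerson envelope argument, carried out in the quantile space of \Cref{sec:Pareto frontier}: this is the form already used in \Cref{prop:proofs in Pareto frontier:surplus expression}, and it handles the possible atom at $\hval$ cleanly.

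\textbf{Step 1: envelope identity.} Fix a mechanism $\mech\in\mechfam$ and a buyer. Write $\alloc(\val)$ and $\price(\val)$ for her interim allocation and payment, and $u(\val)=\val\,\alloc(\val)-\price(\val)$ for her interim utility under truthful reporting. By BIC, $u(\val)=\max_{\val'}\{\val\,\alloc(\val')-\price(\val')\}$ is a maximum of affine functions of $\val$. Hence $u$ is convex, $\alloc$ is non-decreasing and is a subgradient of $u$, and
\[
u(\val)=u(\lval)+\int_{\lval}^{\val}\alloc(t)\dd t .
\]
Interim IR gives $u(\lval)\ge 0$. The identities as stated hold exactly when $u(\lval)=0$. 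In general both sides differ by the additive constant $u(\lval)$: it is added to the utility and subtracted from the payment. I will therefore make explicit the normalization that the lowest type receives zero utility. This is the normalization under which the paper uses these identities, in particular for the Pareto-relevant mechanisms and the {\indirectSPAs}, where the lowest type pays her critical bid.

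\textbf{Step 2: buyer surplus.} I take expectations over $\val\sim\buyerdist$ and swap the order of integration (Fubini; the integrand is non-negative):
\[
\expect{u(\val)}=\int_{\lval}^{\hval}\alloc(t)\,\bigl(1-\buyercdf(t)\bigr)\dd t=\expect{\alloc(\val)\cdot\frac{1-\buyercdf(\val)}{\buyerpdf(\val)}}=\expect{\alloc(\val)/\buyerhazardrate(\val)} .
\]
In quantile space this reads $\expect[\quant]{\alloc(\quant)\,(-\quant\valFunc'(\quant))}$, which is exactly the first line of \Cref{prop:proofs in Pareto frontier:surplus expression}.

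\textbf{Step 3: payment.} The payment identity follows from $\price(\val)=\val\,\alloc(\val)-u(\val)$:
\[
\expect{\price(\val)}=\expect{\alloc(\val)\Bigl(\val-\frac{1-\buyercdf(\val)}{\buyerpdf(\val)}\Bigr)}=\expect{\virtualval(\val)\,\alloc(\val)} .
\]

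\textbf{Main obstacle: the technical justification of the change of variables.} Two issues need care. First, the possible atom at $\hval$, where $\buyerpdf$ is undefined. Second, unbounded support, which requires integrability of $\val\,\alloc(\val)$; this is guaranteed since $\OPT_{\buyerdist}<\infty$ is implicitly assumed. I would handle both by working with $\quant$ rather than $\val$. Since $\valFunc$ is constant on the quantile interval $[0,\buyerdist\text{-mass at }\hval]$, $\valFunc'=0$ there, so that region contributes nothing to the surplus integral. The Fubini step then becomes
\[
\int_0^1\int_\quant^1 \alloc(s)\,(-\valFunc'(s))\dd s\dd\quant=\int_0^1 \alloc(s)\, s\,(-\valFunc'(s))\dd s ,
\]
which is valid by non-negativity. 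Translating back to value space recovers $1/\buyerhazardrate$ wherever the density exists.
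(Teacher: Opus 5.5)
Your envelope argument is correct, and it is the standard Myerson derivation behind the paper's citation; the paper gives no proof of its own and simply cites \citet{mye-81,HR-08}. Your quantile-space treatment of the top atom ($\valFunc'=0$ there, i.e.\ $1/\buyerhazardrate(\hval)=0$ and $\virtualval(\hval)=\hval$) also matches the convention the paper uses in \Cref{example:multi-buyer:mhr hard instance}.

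You are right that both identities hold only up to the constant $u(\lval)$. So the literal ``for every $\mech\in\mechfam$'' fails when $\lval>0$: free uniform allocation has $\expect{\price(\val)}=0$ but $\expect{\virtualval(\val)\alloc(\val)}=\lval/n$.

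Your justification for imposing $u(\lval)=0$ is off, however. Pareto-optimal mechanisms need not satisfy it. For values uniform on $[1,2]$ (an MHR distribution), free uniform allocation is the unique buyer-surplus maximizer, hence Pareto-optimal, yet $u(\lval)=1/n$. Likewise, an {\indirectSPA} whose bid space lies strictly below $\lval$ leaves the lowest type positive utility, even though she pays her critical bid.

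The clean fix is that for $\lval=0$ the normalization is automatic for every mechanism in $\mechfam$. Payments are non-negative, so $u(0)=-\price(0)\le 0$, and interim IR gives $u(0)\ge 0$. Every explicit invocation of this lemma in the paper is on \Cref{example:multi-buyer:mhr hard instance}, whose support is $[0,\constantH]$. So your proof establishes the lemma verbatim wherever it is actually used.
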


\begin{proof}[Proof of \Cref{thm:KS solution welfare approximation}]
We now prove all the statements sequentially.

\xhdr{General distributions.} Given any valuation distribution $\buyerdist$, we show that the social welfare of the {\KSsolution} is at least $\frac{1}{2}$ of the optimal social welfare. Let $\mech\primed$ denote the Myerson auction that maximizes the seller revenue, and $\mech\doubleprimed$ denote the buyer-side optimal mechanism that maximizes the buyers' surplus. Define auxiliary notations $\alpha,\beta,\mixprob$ such as 
\begin{align*}
    \alpha \triangleq \frac{\buyerexanteutil[\buyerdist]{\mech\primed}}{\buyerbenchmarkDist}
    \;\;
    \mbox{and}
    \;\;
    \beta \triangleq \frac{\sellerexanteutil[\buyerdist]{\mech\doubleprimed}}{\sellerbenchmarkDist}
    \;\;
    \mbox{and}
    \;\;
    \mixprob \triangleq \frac{1-\beta}{2 - \alpha - \beta}. 
\end{align*}
By definition, $\alpha,\beta,\mixprob\in[0, 1]$. Consider mechanism $\mech$ that implements Myerson auction $\mech\primed$ and buyer-side optimal $\mech\doubleprimed$ with ex ante probabilities $\mixprob$ and $1-\mixprob$, respectively. We note that mechanism $\mech$ satisfies the {\ksfairness}. To see this, note that 
\begin{align*}
    \frac{\sellerexanteutil[\buyerdist]{\mech}}{\sellerbenchmarkDist}
    &=
    \mixprob\cdot 
    \frac{\sellerexanteutil[\buyerdist]{\mech\primed}}{\sellerbenchmarkDist}
    +
    (1 - \mixprob)\cdot 
    \frac{\sellerexanteutil[\buyerdist]{\mech\doubleprimed}}{\sellerbenchmarkDist}
    \overset{(a)}{=}
    \mixprob\cdot 
    1
    +
    (1 - \mixprob)\cdot 
    \beta
    =
    \frac{1-\alpha\beta}{2 - \alpha-\beta};
    \\
    \frac{\buyerexanteutil[\buyerdist]{\mech}}{\buyerbenchmarkDist}
    &=
    \mixprob\cdot 
    \frac{\buyerexanteutil[\buyerdist]{\mech\primed}}{\buyerbenchmarkDist}
    +
    (1 - \mixprob)\cdot 
    \frac{\buyerexanteutil[\buyerdist]{\mech\doubleprimed}}{\buyerbenchmarkDist}
    \overset{(b)}{=}
    \mixprob\cdot 
    \alpha
    +
    (1 - \mixprob)\cdot 
    1
    =
    \frac{1-\alpha\beta}{2 - \alpha-\beta},
\end{align*}
where equality~(a) holds since Myerson auction $\mech\primed$ maximizes the revenue and equality~(b) holds since the mechanism $\mech\doubleprimed$ maximizes the buyers' surplus. We next analyze the social welfare approximation of the KS-fair mechanism $\mech$, which lower bounds the social welfare approximation of the {\KSsolution}. Note that 
\begin{align*}
    \frac{\GFT[\buyerdist]{\mech}}{\OPT_\buyerdist}
    \overset{(a)}{\geq} 
    \frac{\buyerexanteutil[\buyerdist]{\mech} + \sellerexanteutil[\buyerdist]{\mech}}{\buyerbenchmarkDist+\sellerbenchmarkDist}
    =
    \frac{1-\alpha\beta}{2 - \alpha-\beta}
    \overset{(b)}{\geq}
    \frac{1}{2},
\end{align*}
where inequality~(a) holds since $\OPT_\buyerdist\leq \buyerbenchmarkDist+\sellerbenchmarkDist$, and inequality~(b) holds since $\alpha\in[0, 1]$ and $\beta \in[0, 1]$ and term $(1 - \alpha\beta)/(2-\alpha-\beta)$ is minimized at $\alpha = 0$ and $\beta = 0$.

\xhdr{Regular and anti-MHR distributions.} The analysis is similar to the one used for general distributions. Suppose the valuation distribution $\buyerdist$ is both regular and anti-MHR. We show that the social welfare of the {\KSsolution} is at least $\frac{n}{n + 1}$ of the optimal social welfare. Let $\mech\primed$ denote the Myerson auction that maximizes the seller revenue, and $\mech\doubleprimed$ denote the second-price auction. Define auxiliary notations $\alpha,\beta,\mixprob$ such as 
\begin{align*}
    \alpha \triangleq \frac{\buyerexanteutil[\buyerdist]{\mech\primed}}{\buyerbenchmarkDist}
    \;\;
    \mbox{and}
    \;\;
    \beta \triangleq \frac{\sellerexanteutil[\buyerdist]{\mech\doubleprimed}}{\sellerbenchmarkDist}
    \;\;
    \mbox{and}
    \;\;
    \mixprob \triangleq \frac{1-\beta}{2 - \alpha - \beta}.
\end{align*}
By definition, $\alpha,\beta,\mixprob\in[0, 1]$. Consider mechanism $\mech$ that implements Myerson auction $\mech\primed$ and second-price auction $\mech\doubleprimed$ with ex ante probabilities $\mixprob$ and $1-\mixprob$, respectively. We note that mechanism $\mech$ satisfies the {\ksfairness}. To see this, note that 
\begin{align*}
    \frac{\sellerexanteutil[\buyerdist]{\mech}}{\sellerbenchmarkDist}
    &=
    \mixprob\cdot 
    \frac{\sellerexanteutil[\buyerdist]{\mech\primed}}{\sellerbenchmarkDist}
    +
    (1 - \mixprob)\cdot 
    \frac{\sellerexanteutil[\buyerdist]{\mech\doubleprimed}}{\sellerbenchmarkDist}
    \overset{(a)}{=}
    \mixprob\cdot 
    1
    +
    (1 - \mixprob)\cdot 
    \beta
    =
    \frac{1-\alpha\beta}{2 - \alpha-\beta};
    \\
    \frac{\buyerexanteutil[\buyerdist]{\mech}}{\buyerbenchmarkDist}
    &=
    \mixprob\cdot 
    \frac{\buyerexanteutil[\buyerdist]{\mech\primed}}{\buyerbenchmarkDist}
    +
    (1 - \mixprob)\cdot 
    \frac{\buyerexanteutil[\buyerdist]{\mech\doubleprimed}}{\buyerbenchmarkDist}
    \overset{(b)}{=}
    \mixprob\cdot 
    \alpha
    +
    (1 - \mixprob)\cdot 
    1
    =
    \frac{1-\alpha\beta}{2 - \alpha-\beta},
\end{align*}
where equality~(a) holds since Myerson auction $\mech\primed$ maximizes the revenue and equality~(b) holds since the second-price auction $\mech\doubleprimed$ maximizes the buyers' surplus when distribution $\buyerdist$ is anti-MHR (\Cref{lem:SPA:buyer surplus antiMHR}). We next analyze the social welfare approximation of the KS-fair mechanism $\mech$, which lower bounds the social welfare approximation of the {\KSsolution}. Note that 
\begin{align*}
    \frac{\GFT[\buyerdist]{\mech}}{\OPT_\buyerdist}
    \overset{(a)}{\geq} 
    \frac{\buyerexanteutil[\buyerdist]{\mech} + \sellerexanteutil[\buyerdist]{\mech}}{\buyerbenchmarkDist+\sellerbenchmarkDist}
    =
    \frac{1-\alpha\beta}{2 - \alpha-\beta}
    \overset{(b)}{\geq}
    \frac{1}{2-\frac{n-1}{n}}
    =
    \frac{n}{n+1},
\end{align*}
where inequality~(a) holds since $\OPT_\buyerdist\leq \buyerbenchmarkDist+\sellerbenchmarkDist$, and inequality~(b) holds since $\alpha\in[0, 1]$ and $\beta \in[(n-1)/n, 1]$ (invoking \Cref{lem:SPA:revenue approx regular} for regular $\buyerdist$) and term $(1 - \alpha\beta)/(2-\alpha-\beta)$ is minimized at $\alpha = 0$ and $\beta = (n - 1)/n$.

\xhdr{Hard instance with a MHR distribution.}
We analyze the welfare guarantee of the {\KSsolution} in \Cref{example:multi-buyer:mhr hard instance}. The optimal social welfare  can be computed as 
    \begin{align*}
        \OPT_\buyerdist &= \expect{\val_{(1:n)}} = 
        \expect{\val_{(1:n)}|\val_{(1:n)} < \constantH}\cdot \prob{\val_{(1:n)}<\constantH}
        +
        \constantH \cdot \prob{\val_{(1:n)} = \constantH}
        \\
        &=
        \expect{\val_{(1:n)}|\val_{(1:n)} < \constantH} \cdot 
        \left(1-\frac{1}{{n}^{1/4}}\right)^n
        +
        \constantH\cdot 
        \left(1 - 
        \left(1-\frac{1}{{n}^{1/4}}\right)^n\right)
        =
        (1 - o(1))\cdot \constantH.
    \end{align*}
Next we analyze the social welfare of the {\KSsolution}, denoted by $\mech$. Let random variable $\winnerval$ as the value of the winner (i.e., the buyer receiving the item) in the {\KSsolution} if there exists one such winner. If no such winner exists, we set $\winnerval = 0$.
Define auxiliary value threshold $\threshold \triangleq \ln\ln n \leq \constantH$ 
and the following two probabilities:
\begin{align*}
    \allocLow\triangleq \prob{\text{winner exists and $\winnerval \leq \threshold$}}
    \;\;
    \mbox{and}
    \;\;
    \allocHigh\triangleq \prob{\text{winner exists and $\winnerval > \threshold$}}.
\end{align*}
Invoking \Cref{prop:revenue equivalence}, the expected revenue can be expressed as 
\begin{align*}
    \sellerexanteutil[\buyerdist]{\mech} &= \expect{\virtualval(\winnerval)\cdot \indicator{\text{winner exists}}} 
    \\
    &=
    \allocLow\cdot \expect{\virtualval(\winnerval)\condition \text{winner exists and $\winnerval \leq \threshold$}}
    +
    \allocHigh\cdot \expect{\virtualval(\winnerval)\condition \text{winner exists and $\winnerval > \threshold$}}
    \\
    &
    \overset{(a)}{\leq} 
    \allocLow\cdot \virtualval(\threshold)
    +
    \allocHigh\cdot \virtualval(\constantH)
    \overset{(b)}{=}
    \allocLow\cdot \left(\ln\ln n - \frac{1}{2\ln\ln n}\right)
    +
    \allocHigh\cdot \frac{1}{2}\sqrt{\ln n}
    =
    \left(\allocLow\cdot o(1) + \allocHigh\right) \cdot \constantH,
\end{align*}
where inequality~(a) holds since virtual value function $\virtualval(\cdot)$ is increasing, and equality~(b) holds since virtual values $\virtualval(\threshold) = \threshold - \frac{1}{2\threshold} = \ln\ln n - \frac{1}{2\ln\ln n}$ and $\virtualval(\constantH) = \constantH = \frac{1}{2}\sqrt{\ln n}$ by construction.

Similarly, invoking \Cref{prop:revenue equivalence},  buyers' surplus can be expressed as 
    \begin{align*}
    &\buyerexanteutil[\buyerdist]{\mech} 
    \\
    ={}& \expect{1/\buyerhazardrate(\winnerval)} 
    \\
     ={}&
    \allocLow\cdot \expect{1/\buyerhazardrate(\winnerval)\condition \text{winner exists and $\winnerval \leq \threshold$}}
    +
    \allocHigh\cdot \expect{1/\buyerhazardrate(\winnerval)\condition \text{winner exists and $\winnerval > \threshold$}}
    \\
    \overset{(a)}{\leq}{} &  
    \allocLow\cdot \expect[\val\sim\buyerdist]{\buyerhazardrate(\val)\condition\val \leq \threshold}
    +
    \allocHigh\cdot \buyerhazardrate(\threshold)
    \\
    \overset{(b)}{=}{}&
    \allocLow\cdot (1-o(1))\frac{\sqrt{\pi}}{2}
    +
    \allocHigh\cdot \frac{1}{2\ln\ln n}
    \\
    \overset{(c)}{=} {}&
    \left(\allocLow\cdot (1 + o(1)) + \allocHigh\cdot o(1)\right) \cdot \expect[\val\sim \buyerdist]{\val},
\end{align*}
where inequality~(a) holds since buyers' hazard rate function $\buyerhazardrate$ is decreasing, and the buyer's interim allocation rule is weakly increasing, equality~(b) holds since hazard rate function $\buyerhazardrate(\val) = 2\val$, and inequality~(c) holds since $\expect[\val\sim\buyerdist]{\val} = (1 - o(1))\cdot \frac{\sqrt{\pi}}{2}$ by construction.

Putting both the pieces together, the social welfare of the {\KSsolution} $\mech$ can be upper bounded as follows:
\begin{align*}
    \GFT[\buyerdist]{\mech} 
    &\overset{(a)}{=} 
    \sellerexanteutil[\buyerdist]{\mech} + \buyerexanteutil[\buyerdist]{\mech} 
    \\
    &\overset{(b)}{=}
    \min\left\{
    \frac{\sellerexanteutil[\buyerdist]{\mech}}{\sellerbenchmarkDist},
    \frac{\buyerexanteutil[\buyerdist]{\mech}}{\buyerbenchmarkDist}
    \right\}
    \cdot 
    \left(\sellerbenchmarkDist + \buyerbenchmarkDist\right)
    \\
    &\overset{}{\leq}
    \min\left\{
    \frac{\left(\allocLow\cdot o(1) + \allocHigh\right) \cdot \constantH}{(1-o(1))\cdot \constantH},
    \frac{\left(\allocLow\cdot (1 + o(1)) + \allocHigh\cdot o(1)\right) \cdot \expect[\val\sim \buyerdist]{\val}}{\expect[\val\sim \buyerdist]{\val}}
    \right\}
    \cdot 
    (1-o(1))\cdot \constantH
    \\
    &=
    \min\left\{
    (1 + o(1))(\allocLow\cdot o(1) + \allocHigh),
    \left(\allocLow\cdot (1 + o(1)) + \allocHigh\cdot o(1)\right)
    \right\}
    \cdot 
    (1-o(1))\cdot \constantH
    \\
    &\overset{(c)}{\leq}
    \left(\frac{1}{2} + o(1)\right)\cdot \constantH = 
    \left(\frac{1}{2} + o(1)\right)\cdot \OPT_{\buyerdist},
\end{align*}
where equality~(a) holds by the welfare definition, equality~(b) holds due to \ref{eq:KS fairness} in the definition of the {\KSsolution}, inequality~(c) holds due to the seller revenue and buyers' surplus upper bounds derived above, and inequality~(d) holds since $\allocLow+\allocHigh \leq 1$ by construction. This finishes the analysis of \Cref{example:multi-buyer:mhr hard instance} and concludes the proof of \Cref{thm:KS solution welfare approximation}.
\end{proof}

\subsection{Proof of \texorpdfstring{\Cref{thm:KS solution welfare approximation multi-unit}}{Theorem 4.2}}

\thmKSsolutionMultiUnit*

Our analysis relies on the following lemmas.

\begin{lemma}[\citealp{BK-96}]
\label{lem:regular:multi-unit BK}
    Consider a market with $m$ identical units of a good and $n\geq m$ unit-demand buyers whose valuations are drawn i.i.d.\ from a regular distribution $\buyerdist$.
    The seller revenue in the {\VCGAuction} is at least an $\frac{n - m}{n}$-approximation to the Bayesian optimal revenue $\sellerbenchmark$. 
\end{lemma}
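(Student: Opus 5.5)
We plan to derive the bound by combining the Bulow–Klemperer \emph{competition} argument, which trades the Myerson reserve for $m$ extra bidders, with a \emph{scaling} argument that controls how much optimal revenue per buyer can change when buyers are removed. Write $\sellerbenchmark(k)$ for the optimal (Bayesian) revenue with $m$ units and $k$ i.i.d.\ buyers from $\buyerdist$. If $m = n$ the claim is trivial since $\frac{n-m}{n} = 0$, so assume $n > m$. The argument has two steps:
\begin{align*}
\sellerexanteutil[\buyerdist]{\VCG \text{ with } n \text{ buyers}} \;\geq\; \sellerbenchmark(n-m)
\qquad\text{and}\qquad
\sellerbenchmark(n-m) \;\geq\; \tfrac{n-m}{n}\cdot \sellerbenchmark(n).
\end{align*}

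\textbf{Step 1 (Bulow–Klemperer comparison).} Take an optimal mechanism for $n-m$ buyers and extend it to $n$ buyers as follows. Run the optimal mechanism on the first $n-m$ buyers. Then give every unsold unit, at price zero, to the $m$ added buyers, allocating by any fixed rule that ignores their bids. This extended mechanism is BIC and interim IR. It collects exactly $\sellerbenchmark(n-m)$, and it \emph{always} allocates all $m$ units. By \Cref{prop:revenue equivalence}, the revenue of any BIC, IR mechanism equals its expected virtual surplus $\expect{\sum_i \virtualval(\val_i)\alloc_i(\vals)}$, where the zero-payment types are pinned down by IR binding at the lowest type. Among mechanisms that always allocate all $m$ units, expected virtual surplus is maximized pointwise by giving the units to the $m$ buyers with the highest virtual values. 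Since $\buyerdist$ is regular, $\virtualval$ is non-decreasing, so these are exactly the $m$ highest bidders, which is precisely what the {\VCGAuction} (uniform $(m+1)$-st price) does. Hence the revenue of $\VCG$ with $n$ buyers is at least $\sellerbenchmark(n-m)$.

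\textbf{Step 2 (per-buyer monotonicity).} We show that $\sellerbenchmark(k)\geq \frac{k}{k+1}\sellerbenchmark(k+1)$ for every $k \geq 1$. Take an optimal mechanism for $k+1$ buyers and run it on the $k$ real buyers together with one simulated buyer whose bid is drawn fresh from $\buyerdist$; any unit assigned to the simulated buyer is discarded. Because the simulated bid has exactly the prior distribution, every real buyer faces the same interim allocation and payment rules as in the original mechanism, so BIC and IR are preserved. Feasibility holds because the real buyers receive a subset of the original allocation. By symmetry of the optimal mechanism, which is without loss as in \Cref{sec:prelim}, each of the $k+1$ buyers contributes $\sellerbenchmark(k+1)/(k+1)$ in expectation. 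The real buyers therefore pay $\frac{k}{k+1}\sellerbenchmark(k+1)$ in total. Telescoping this inequality from $k=n-m$ up to $n-1$ gives $\sellerbenchmark(n-m)\geq\frac{n-m}{n}\sellerbenchmark(n)$. Combining Steps 1 and 2 yields the lemma.

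The main obstacle we expect is in Step 1: making the virtual-surplus characterization rigorous for the constrained class of mechanisms that always allocate all units. In particular, we must confirm that the zero-price giveaway to the extra buyers changes neither their payments nor the revenue identity. This reduces to checking that interim IR binds at the lowest type, so the constant term in \Cref{prop:revenue equivalence} vanishes. The remaining steps are routine symmetry and simulation arguments.
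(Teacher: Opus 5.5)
The paper gives no proof of this lemma; it imports it as a known result from \citet{BK-96}. Your two-step argument is correct, and it is the standard derivation of the $\frac{n-m}{n}$ bound. Step~1 is the Bulow--Klemperer competition theorem: $\VCG$ with $n$ buyers earns at least $\sellerbenchmark(n-m)$. You prove it with the usual padding construction plus pointwise maximization of virtual surplus over mechanisms that always allocate all $m$ units. Step~2 is the simulation argument that turns ``$m$ extra bidders'' into the multiplicative factor $\frac{n-m}{n}$. Two simplifications are available: you can simulate all $m$ buyers at once instead of telescoping, and you can avoid the symmetry assumption by discarding the $m$ buyers with the smallest expected payments.

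One correction concerns the obstacle you flag. In general, IR does \emph{not} bind at the lowest type in the padded mechanism. Suppose units are left unsold with positive probability (this always happens when $n-m<m$) and $\lval>0$. Then some added buyer of type $\lval$ receives a free unit with probability $x>0$ and earns utility $\lval\cdot x>0$. The payment identity of \Cref{prop:revenue equivalence} therefore picks up a nonzero constant, and ``revenue $=$ virtual surplus'' is false for that mechanism.

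You do not need that equality. For every BIC, interim IR mechanism, revenue equals virtual surplus minus the lowest types' utilities, so revenue is at most virtual surplus. That inequality is the only direction you use for the padded mechanism. You need equality only for $\VCG$, and there the lowest type indeed has zero utility: a winning bidder of type $\lval$ pays the $(m+1)$-st highest bid, which must equal $\lval$. Writing $\mathrm{VS}$ for expected virtual surplus, the chain
\[
\sellerexanteutil[\buyerdist]{\VCG} \;=\; \mathrm{VS}(\VCG) \;\geq\; \mathrm{VS}(\text{padded mechanism}) \;\geq\; \sellerbenchmark(n-m)
\]
holds with no further verification, and your proof goes through.
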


\begin{lemma}[\citealp{HR-08}]
\label{lem:mhr:multi-unit buyer offer structure}
    Consider a market with $m$ identical units of a good and $n\geq m$ unit-demand buyers whose valuations are drawn i.i.d.\ from a MHR distribution $\buyerdist$. The {\BuyerOffer} (that maximizes the buyers' surplus) allocates all $m$ units to $m$ buyers uniformly at random with zero payment.
\end{lemma}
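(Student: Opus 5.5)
The plan is to reduce the buyers' surplus to a linear functional of the interim allocation in quantile space, exactly as in the Lagrangian analysis of \Cref{sec:Pareto frontier}. Then I would use the MHR condition to show, by a rearrangement (Chebyshev-type) argument, that a flat allocation is optimal. Let $\alloc(\quant)$ be the symmetric interim allocation of a buyer at quantile $\quant$. With $\numAgents$ buyers and $m$ units, the only feasibility facts I need are that $\alloc$ is non-increasing (BIC) and that total mass satisfies $A \triangleq \int_0^1 \alloc(\quant)\dd\quant \le m/\numAgents$, since at most $m$ units are allocated ex post.

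First I would write a single buyer's ex-ante utility via the envelope formula (as in \Cref{prop:proofs in Pareto frontier:surplus expression} and \Cref{prop:buyer surplus equivalence}). Payments are now allowed to leave the lowest type with non-negative utility, so
\[
\int_0^1 \alloc(\quant)\, h(\quant)\dd\quant + u(\lval),
\qquad
h(\quant) \triangleq -\quant\,\valFuncPrime(\quant) \ge 0,
\qquad
u(\lval) = \alloc(1)\lval - \price(\lval) \le \alloc(1)\,\lval .
\]
The inequality on $u(\lval)$ uses that payments are non-negative.

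By the quantile characterization of MHR, $\quant\valFuncPrime(\quant)$ is non-increasing, so $h$ is non-decreasing. Since $\alloc$ is non-increasing and $h$ is non-decreasing, Chebyshev's integral inequality gives $\int_0^1 \alloc h \le A\int_0^1 h$. Also $\alloc(1)\le A$, because $\alloc$ is non-increasing. Moreover $\valFunc(\quant) = \lval + \int_\quant^1 (-\valFuncPrime)$, and integrating gives $\expect[\val\sim\buyerdist]{\val} = \lval + \int_0^1 h$. Combining these, each buyer's utility is at most $A\cdot\expect{\val}\le (m/\numAgents)\expect{\val}$. Hence the total buyers' surplus of any mechanism in the class is at most $m\cdot\expect{\val}$.

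Finally I would check that this bound is attained by the mechanism that gives the $m$ units to $m$ uniformly random buyers for free. It is trivially feasible, DSIC and IR, and it yields total surplus $\numAgents\cdot(m/\numAgents)\expect{\val} = m\expect{\val}$. It is therefore buyer-optimal, which is exactly the structure claimed in \Cref{lem:mhr:multi-unit buyer offer structure}.

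The main obstacle I expect is the boundary and regularity bookkeeping rather than the core inequality. Specifically, I need to justify the integration-by-parts identity $\expect{\val} = \lval + \int_0^1 -\quant\valFuncPrime(\quant)\dd\quant$ in two delicate cases: when there is an atom at $\hval$, and when the support is unbounded (so that $\quant\valFunc(\quant)\to 0$ as $\quant\to 0$). I also need to handle the lowest-type utility term correctly, which is why I bound it by $\alloc(1)\lval$ rather than assuming it is zero. If the atom causes trouble, I would first prove the claim for atomless truncations and pass to the limit.
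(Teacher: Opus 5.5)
Your proof is correct. The paper does not prove this lemma; it cites \citet{HR-08}, so the useful comparison is with the argument behind that citation.

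There the result comes from the money-burning characterization. Buyers' surplus is the expectation of $x(v)/\phi(v)$ (the paper's \Cref{prop:buyer surplus equivalence}), and the optimal mechanism maximizes the \emph{ironed} version of this virtual surplus. Under MHR, $1/\phi$ is non-increasing in $v$, so ironing flattens it to a constant. Hence any mechanism that always allocates all $m$ units is optimal, in particular the lottery.

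Your Chebyshev step, $\int_0^1 x h \le \bigl(\int_0^1 x\bigr)\bigl(\int_0^1 h\bigr)$ for non-increasing $x$ and non-decreasing $h=-q v'(q)$, is the elementary content of that ironing step: a weight monotone in the wrong direction gets replaced by its mean. So the key idea is the same.

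What your route buys:
\begin{itemize}
\item A short, self-contained upper bound of $m\,\mathbb{E}[v]$. From feasibility it uses only $n\int_0^1 x \le m$, not Border's conditions or the ironing machinery.
\item An explicit treatment of the lowest type's rent, $u(v_L)\le x(1)\,v_L$, via non-negative payments. This matters: the paper's surplus formulas (\Cref{prop:proofs in Pareto frontier:surplus expression} and \Cref{prop:buyer surplus equivalence}) implicitly set $u(v_L)=0$. The lottery instead leaves rent $(m/n)v_L$ to the lowest type, so when $v_L>0$ your accounting is what makes the bound tight.
\end{itemize}

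What the ironing approach buys in exchange: it identifies the full set of optimizers and extends beyond MHR.

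Your boundary worries are harmless. MHR gives $\mathbb{E}[v]<\infty$, hence $q\,v(q)\to 0$ as $q\to 0$. With an atom at $v_H$ of mass $q_H$, $h$ simply vanishes on $[0,q_H)$. That keeps $h$ non-decreasing, and since the support is an interval, $v$ is continuous at $q_H$, which preserves $\mathbb{E}[v]=v_L+\int_0^1 h$.

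Finally, your argument shows the lottery is \emph{a} buyer-optimal mechanism. That is the right reading of the lemma, because optimality is not unique in general. For an exponential distribution $h$ is constant, and, for example, the VCG auction ties.
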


\begin{lemma}
\label{lem:mhr:multi-unit buyer offer}
    Consider a market with $m$ identical units of a good and $n\geq m$ unit-demand buyers whose valuations are drawn i.i.d.\ from a MHR distribution $\buyerdist$. The social welfare of the {\BuyerOffer} that maximizes the buyers' surplus is at least an $\frac{1}{1 + H_n - H_m}$-approximation to the optimal social welfare $\OPT_\buyerdist$. 
\end{lemma}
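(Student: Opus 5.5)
The plan is to compare the two welfares explicitly after a change of variables that turns an MHR value into a concave function of a standard exponential. By \Cref{lem:mhr:multi-unit buyer offer structure}, the {\BuyerOffer} gives all $m$ units to uniformly random buyers at price zero, so its welfare is exactly $m\cdot \expect[\val\sim\buyerdist]{\val}$. Meanwhile $\OPT_\buyerdist=\expect{\sum_{k=1}^m \val_{(k:n)}}$, the expected sum of the top $m$ order statistics. So it suffices to show
\[
\expect{\textstyle\sum_{k=1}^m \val_{(k:n)}}\le (1+H_n-H_m)\cdot m\cdot \expect{\val}.
\]

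\textbf{Step 1 (exponential representation).} Write the cumulative hazard $\Lambda(\val)=-\ln(1-\buyercdf(\val))$. Under MHR, $\Lambda$ is convex and increasing, so $\val \overset{d}{=} g(E)$ with $E\sim\mathrm{Exp}(1)$ and $g=\Lambda^{-1}$ concave, non-decreasing and $g(0)=\lval\ge 0$. (An atom at the top of the support, as in \Cref{example:multi-buyer:mhr hard instance}, corresponds to $g$ becoming constant, which is still concave.) Since $g$ is monotone, the top $m$ values are $g(E_{(1:n)}),\dots,g(E_{(m:n)})$ for i.i.d.\ exponentials. The ratio $\expect{\sum_{k\le m} g(E_{(k:n)})}/\expect{g(E)}$ is a ratio of two linear functionals of $g$. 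Every concave non-decreasing $g$ with $g(0)\ge0$ is a nonnegative mixture of the constant $1$ and the ``hinge'' functions $h_c(x)=\min\{x,c\}$ for $c\in(0,\infty]$. Hence it suffices to verify the inequality for each generator. For the constant the ratio is $m$, which is fine.

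\textbf{Step 2 (hinge functions).} For $h_c$, use the layer-cake identity
\[
\textstyle\sum_{k\le m}\min\{E_{(k:n)},c\}=\int_0^c \min\{m,N(t)\}\,dt,\qquad N(t)\sim\mathrm{Bin}(n,e^{-t}).
\]
Also $\expect{h_c(E)}=1-e^{-c}$. So the target becomes
\[
\int_0^c \expect{\min\{m,N(t)\}}\,dt\le m(1+H_n-H_m)(1-e^{-c})\qquad \forall c>0.
\]
At $c=\infty$ the left side equals $\sum_{k\le m}(H_n-H_{n-k})$. A direct summation should show this is at most $m(1+H_n-H_m)$.

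\textbf{Step 3 (main obstacle).} The main obstacle is the intermediate range of $c$. The crude bound $\expect{\min\{m,N(t)\}}\le\min\{m,ne^{-t}\}$ is too lossy near $c\approx1$ when $n/m$ is close to $1$. I would instead use the exact structure: $\frac{d}{dc}$ of the left side is $\expect{\min\{m,N(c)\}}$, and the right side has derivative $m(1+H_n-H_m)e^{-c}$. Both sides vanish at $c=0$. So it suffices to show that the ratio $\expect{\min\{m,N(t)\}}/e^{-t}$ is non-decreasing in $t$, and then compare endpoints at $c=\infty$ (a single-crossing or ``ratio of integrals'' argument). Monotonicity holds because $\expect{\min\{m,N(t)\}}/e^{-t}=\sum_{j\le m}\prob{N(t)\ge j}/e^{-t}$, and each tail probability of a binomial, divided by its success probability, is monotone in $p=e^{-t}$.

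If that monotonicity argument fails, my fallback is to verify the inequality directly for $c\le 1$ using $\expect{\min\{m,N(t)\}}\le \min\{m,ne^{-t}\}$ together with the concavity of $1-e^{-c}$.
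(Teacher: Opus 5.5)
Your architecture is sound: the representation $v\overset{d}{=}g(E)$ with $g=\Lambda^{-1}$ concave and non-decreasing, the decomposition of $g$ into the constant and the hinges $h_c$, the layer-cake identity, and the reduction of each hinge to a single-crossing comparison are all correct. That comparison is between $a(t)=\mathbb{E}[\min\{m,N(t)\}]$ and $b(t)=m(1+H_n-H_m)e^{-t}$.

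The gap is in the step that carries the weight: the monotonicity of $a(t)/e^{-t}$. You argue term by term, but for $j\ge 2$ the ratio $\Pr[\mathrm{Bin}(n,p)\ge j]/p$ is not non-increasing in $p$. Near $p=0$ it behaves like $\binom{n}{j}p^{j-1}$, which increases. For $n=3$, $j=2$ it equals $3p-2p^2$, which rises and then falls. Only the $j=1$ term has the monotonicity you need, so the claim for the sum does not follow from the summands.

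Your fallback does not rescue this. It only addresses $c\le 1$, and the crude bound $\min\{m,ne^{-t}\}$ really is too weak for large $c$: for $n=m+1$ and $c=\infty$ it gives $m+m\ln(1+1/m)$, which exceeds the target $m+m/(m+1)$. There is also a slip at the endpoint. In fact $\mathbb{E}[E_{(k:n)}]=H_n-H_{k-1}$; your $H_n-H_{n-k}$ is the $k$-th smallest. Moreover $\sum_{k\le m}(H_n-H_{k-1})=m(1+H_n-H_m)$ holds with equality, so the $c=\infty$ comparison is tight rather than slack.

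The missing idea is to prove monotonicity for the sum directly. Let $S(p)=\mathbb{E}[\min\{m,\mathrm{Bin}(n,p)\}]=\sum_{j\le m}\Pr[\mathrm{Bin}(n,p)\ge j]$. Then $S'(p)=n\sum_{j\le m}\binom{n-1}{j-1}p^{j-1}(1-p)^{n-j}=n\Pr[\mathrm{Bin}(n-1,p)\le m-1]$, which is non-increasing in $p$. So $S$ is concave with $S(0)=0$, hence $S(p)/p$ is non-increasing in $p$, i.e., $a(t)/b(t)$ is non-decreasing in $t=-\ln p$. Since $a(0)/b(0)=1/(1+H_n-H_m)\le 1$ and $\int_0^\infty a=\int_0^\infty b$, you get $\int_0^c(a-b)\le 0$ for every $c$, as you intended.

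With this repair your proof is a valid alternative to the paper's. The paper stays in quantile space and writes the welfare difference as $\int_0^1 w(q)\,v(q)\,dq$. It observes that the weight $w(q)=\sum_{i\le m}g_{i,n}(q)-(1+H_n-H_m)m$ is decreasing; this is exactly $S'(q)$ minus a constant, with the order-statistic density oriented as $g_{i,n}(q)\propto q^{i-1}(1-q)^{n-i}$. It then compares $v$ with the single exponential passing through the zero of $w$, which MHR forces to cross $v$ once in the opposite direction. That needs one comparison and no prefix inequality. Your route instead reduces to the family of truncated exponentials, the same family as the paper's hard MHR instance. It must check the inequality at every truncation level, which is more work, but it makes the extremal distributions explicit.
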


\begin{proof}
Invoking \Cref{lem:mhr:multi-unit buyer offer structure}, the {\BuyerOffer} ({\BOM}) allocates all $m$ units uniformly at random to $m$ buyers at zero prices. Its social welfare is therefore
\begin{align*}
\GFT[\buyerdist]{\BOM} = m \cdot \expect[\buyerdist]{X},
\end{align*}
where $X \sim \buyerdist$.  
The optimal social welfare equals
\begin{align*}
\OPT_\buyerdist = \sum\nolimits_{i=1}^{m} \expect[\buyerdist]{X_{(i:n)}},
\end{align*}
where $X_{(i:n)}$ denotes the $i$-th largest order statistic from $n$ i.i.d.\ samples.

Thus it suffices to show that for every MHR distribution $\buyerdist$ and every $1 \le k \le n$,
\begin{equation}
\label{eq:order-stat-bound}
k \cdot \expect[\buyerdist]{X}
\;\ge\;
\frac{1}{1 + H_n - H_k}
\sum\nolimits_{i\in[k]} \expect[\buyerdist]{X_{(i:n)}} .
\end{equation}
Equivalently,
\begin{align*}
\frac{\sum\nolimits_{i\in[k]} \expect[\buyerdist]{X_{(i:n)}}}
     {k \cdot \expect[\buyerdist]{X}}
\;\le\;
1 + H_n - H_k .
\end{align*}
We prove~\eqref{eq:order-stat-bound} in two steps.

\xhdr{Step 1: Reduction to the exponential distribution.}
Let $\prefixratioTkn \triangleq 1 + H_n - H_k$.  
Recall the (quantile-)value function $\valFunc:[0,1]\to\reals_+$ by
\begin{align*}
\valFunc(\quant) = F^{-1}(1-q),
\end{align*}
and let $g_{i,n}(\quant)$ denote the density of the $i$-th largest order statistic of $n$ i.i.d.\ $U[0,1]$ samples:
\begin{align*}
g_{i,n}(\quant)
= \frac{n!}{(i-1)!(n-i)!} (1-q)^{i-1} q^{n-i}.
\end{align*}
Using the quantile representation of expectations,
\begin{align*}
\expect[\buyerdist]{X_{(i:n)}} = \int_0^1 g_{i,n}(\quant)\, \valFunc(\quant)\, \dd q,
\qquad
\expect[\buyerdist]{X} = \int_0^1 \valFunc(\quant)\,\dd \quant.
\end{align*}
Hence,
\begin{align*}
&\sum\nolimits_{i\in[k]} \expect[\buyerdist]{X_{(i:n)}}
- \prefixratioTkn \cdot k \cdot \expect[\buyerdist]{X}  =
\int_0^1 w(\quant)\, \valFunc(\quant)\,\dd \quant,
\end{align*}
where
\begin{align*}
w(\quant) \triangleq \sum\nolimits_{i\in[k]} g_{i,n}(\quant) - \prefixratioTkn \cdot k.
\end{align*}

\begin{claim} The function $w(\quant)$ is weakly decreasing on $[0,1]$ and has at most one zero.
\end{claim}
\begin{proof}
Each $g_{i,n}(\quant)$ is a unimodal polynomial density whose derivative satisfies
\begin{align*}
\frac{\dd}{\dd \quant} g_{i,n}(\quant)
= g_{i,n}(\quant)\left(\frac{n-i}{\quant} - \frac{i-1}{1-\quant}\right),
\end{align*}
which is negative whenever $\quant$ exceeds its mode $\frac{n-i}{n-1}$. Since $i\le k$, all modes lie weakly to the right of $\frac{n-k}{n-1}$. Therefore the sum $\sum\nolimits_{i\in[k]} g_{i,n}(\quant)$ is decreasing on $[0,1]$, and subtracting a constant preserves monotonicity. A monotone function can cross zero at most once. 
\end{proof}

Let $\quant^*$ denote the (possibly nonexistent) zero of $w(\quant)$. Define $\val^* = \valFunc(\quant^*)$.  
Consider the exponential distribution $\auxdist'$ with CDF
\begin{align*}
F_{\auxdist'}(v)
=
1 - \exp\!\left(\ln(\quant^*)\, \frac{v}{\val^*}\right),
\end{align*}
whose quantile-value function satisfies
\begin{align*}
\val_{\auxdist'}(\quant) = \val^* \frac{\ln \quant}{\ln \quant^*}.
\end{align*}
Because $\buyerdist$ is MHR, its quantile-value function $\valFunc(\quant)$ is convex in $\ln \quant$. Hence $\valFunc$ lies below the chord through $(\quant^*,\val^*)$, implying
\begin{align*}
\val_{\auxdist'}(\quant) \ge \valFunc(\quant) \quad \text{for } q \le \quant^*,
\qquad
\val_{\auxdist'}(\quant) \le \valFunc(\quant) \quad \text{for } q \ge \quant^*.
\end{align*}
Since $w(\quant)$ is positive for $q<\quant^*$ and negative for $q>\quant^*$,
\begin{align*}
\int_0^1 w(\quant)\, \valFunc(\quant)\,\dd \quant
\;\le\;
\int_0^1 w(\quant)\, \val_{\auxdist'}(\quant)\,\dd \quant.
\end{align*}
Finally, expectations under exponential distributions scale linearly with the mean. Therefore,
\begin{align*}
\int_0^1 w(\quant)\, \val_{\auxdist'}(\quant)\,\dd \quant
&=
\left(
\sum\nolimits_{i\in[k]} \expect[\auxdist]{X_{(i:n)}}
- \prefixratioTkn \cdot k \cdot \expect[\auxdist]{X}
\right)
\cdot
\frac{\expect[\auxdist']{X}}{\expect[\auxdist]{X}} 
= 0,
\end{align*}
where $\auxdist$ denotes the unit-mean exponential distribution and the equality follows by the definition of $\prefixratioTkn$.
This completes the reduction.

\xhdr{Step 2: Evaluation under the exponential distribution.}
Let $X \sim \mathrm{Exp}(1)$. The expected $i$-th largest order statistic satisfies
\begin{align*}
\expect{X_{(i:n)}} = H_n - H_{i-1}.
\end{align*}
Therefore,
\begin{align*}
\sum\nolimits_{i\in[k]} \expect{X_{(i:n)}}
=
k(H_n - H_k) + k,
\end{align*}
and since $\expect{X}=1$,
\begin{align*}
\frac{\sum\nolimits_{i\in[k]} \expect{X_{(i:n)}}}{k}
=
1 + H_n - H_k
=
\prefixratioTkn.
\end{align*}
Combining the two steps proves~\eqref{eq:order-stat-bound} and completes the proof.
\end{proof}

\begin{lemma}
\label{lem:mhr:multi-unit seller offer}
Consider a market with $m$ identical units of a good and $n\geq m$ unit-demand buyers whose valuations are drawn i.i.d.\ from a MHR distribution $\buyerdist$. The social welfare of the Myerson auction that maximizes the Bayesian revenue is at least an $\frac{1}{e-1}$-approximation to the optimal social welfare $\OPT_\buyerdist$.
\end{lemma}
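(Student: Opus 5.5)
We intend to use the structure of the Myerson auction under MHR and then a pointwise comparison for each unit. Since an MHR distribution is regular, the revenue-optimal mechanism for $m$ units is the VCG auction with the lowest monopoly reserve $\reserve=\valFunc(\quantM)$. It allocates to each of the top $m$ order statistics whose value is at least $\reserve$. Writing $X_{(i:n)}$ for the $i$-th largest value, we have
\begin{align*}
\GFT[\buyerdist]{\Mye}=\sum\nolimits_{i\in[m]}\expect{X_{(i:n)}\cdot\indicator{X_{(i:n)}\geq \reserve}},
\qquad
\OPT_\buyerdist=\sum\nolimits_{i\in[m]}\expect{X_{(i:n)}}.
\end{align*}
It therefore suffices to prove, for every $i\le m$, that $\expect{X_{(i:n)}\indicator{X_{(i:n)}<\reserve}}\le (e-2)\,\expect{X_{(i:n)}\indicator{X_{(i:n)}\ge\reserve}}$, since summing over $i$ gives the ratio $\frac{1}{e-1}$.

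We will use two standard MHR facts. First, at the monopoly price the virtual value vanishes, so $\reserve=1/\buyerhazardrate(\reserve)$. Second, convexity of $\valFunc$ in $\ln\quant$ (already used in \Cref{lem:mhr:multi-unit buyer offer}) gives $\quantM\ge 1/e$. Hence $p_i\triangleq\prob{X_{(i:n)}\ge\reserve}\ge\prob{X\ge\reserve}\ge 1/e$ for every $i$. The crude bounds (loss at most $\reserve(1-p_i)$, gain at least $\reserve p_i$) only yield the ratio $1/e$. We therefore sharpen both sides by comparing to an exponential distribution, following Step 1 of \Cref{lem:mhr:multi-unit buyer offer}. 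Let $\auxdist'$ be the exponential distribution whose quantile curve passes through $(\quantM,\reserve)$, namely $\val_{\auxdist'}(\quant)=\reserve\ln\quant/\ln\quantM$. By convexity in $\ln\quant$, $\valFunc$ lies above $\val_{\auxdist'}$ for $\quant\ge\quantM$ (the region below the reserve) and below it for $\quant\le\quantM$. The argument then proceeds in three steps.

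\begin{enumerate}
\item On the low side, the conditional loss satisfies $\expect{X\condition X<\reserve}\le\reserve$ trivially. We would like to bound it by the exponential analogue, $\reserve\cdot\frac{e-2}{e-1}$ when $\quantM=1/e$.
\item On the high side, the gain conditional on $X_{(i:n)}\ge\reserve$ is at least $\reserve$ times a factor of $2$, as in the exponential case, where $\expect{X\condition X\ge 1}=2$.
\item For the exponential itself (rate $1$, $\reserve=1$, $\quantM=1/e$) the computation is explicit. The loss is at most $(1-p_i)\frac{e-2}{e-1}$ and the gain is at least $2p_i$. With $p_i\ge 1/e$ this gives loss/gain $\le\frac{e-2}{2}\le e-2$.
\end{enumerate}

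The main obstacle is exactly the comparison in steps 1 and 2: the crossing direction from convexity in $\ln\quant$ does not obviously help. On the high side, where $\quant\le\quantM$, $\valFunc$ lies below the exponential, so the gain is not automatically dominated from below by the exponential's gain. On the low side, where $\quant\ge\quantM$, $\valFunc$ lies above the exponential, so the loss is not automatically dominated from above either. The inequality is not a pointwise consequence of the single crossing; it will have to come from the slack $\frac{e-2}{2}$ versus $e-2$ together with a weighting argument.

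To handle this, I would first try to write the target inequality as $\int_0^1 w_i(\quant)\valFunc(\quant)\dd\quant\le 0$. Here $w_i=g_{i,n}$ on $[\quantM,1]$ and $w_i=-(e-2)g_{i,n}$ on $[0,\quantM]$, with $\quantM$ treated as a parameter. One possible route is to split $\valFunc(\quant)=\reserve+(\valFunc(\quant)-\reserve)$: the constant $\reserve$ term is controlled by $p_i\ge 1/e$, and the deviation term carries signs matching those of $w_i$. An alternative is a two-parameter extremal family (exponential above the reserve, point mass below), which would identify the worst case. If neither closes directly, I would settle for a lower bound $\frac{1}{e-1}$ obtained through the revenue route: Myerson welfare $\ge$ revenue $\ge$ the $m$-unit analogue of the $1/e$ revenue guarantee, plus the buyers' surplus at the reserve.
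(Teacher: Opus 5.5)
Your reduction to a per-unit inequality is where the argument breaks. The bound $p_i\ge\Pr[X\ge\reserve]$ holds only for $i=1$. In general $p_i=\Pr[\mathrm{Bin}(n,\quantM)\ge i]$, which equals $\quantM^{\,n}$ when $i=m=n$. The per-unit inequality itself is also false. Take $X\sim U[0,1]$ (MHR, $\reserve=\quantM=1/2$), $n=m=2$ and $i=2$. Then $p_2=1/4$ and
\[
\mathbb{E}\big[X_{(2:2)}\indicator{X_{(2:2)}\ge \tfrac12}\big]=\mathbb{E}\big[X_{(2:2)}\indicator{X_{(2:2)}<\tfrac12}\big]=\tfrac16,
\]
so the loss-to-gain ratio of the second unit is $1>e-2$, even though the aggregate welfare ratio is $3/4$. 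The statement only holds after summing over units, and the paper's proof is built around this. It keeps $\sum_{i\in[m]}$ intact. After reducing the distribution, it compares with the case $n=m$, where $\sum_{i\in[m]}X_{(i:m)}$ is just the sum of $m$ i.i.d.\ draws. The ratio therefore collapses to the single-sample quantity $\mathbb{E}[X\indicator{X\ge\reserve}]/\mathbb{E}[X]$, and the paper then argues that $n>m$ can only raise it.

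Even for one unit and one buyer, your Steps 1--2 are not merely unproven but false, and the extremal family you suggest is inverted. You are right that the single crossing with the exponential goes the wrong way; the consequence is that the exponential is not the worst case at all. Consider the paper's truncated exponential: $\buyercdf(\val)=1-e^{-\val}$ on $[0,1)$ plus an atom of mass $1/e$ at $1$, which is MHR with $\reserve=1$ and $\quantM=1/e$. Its conditional gain is $\mathbb{E}[X\mid X\ge\reserve]=\reserve$, not $2\reserve$. Its welfare ratio is exactly $\frac{1/e}{1-1/e}=\frac{1}{e-1}$, so there is no slack between $\frac{e-2}{2}$ and $e-2$ to exploit. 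Your loss bound $\reserve\frac{e-2}{e-1}$ is also special to $\quantM=1/e$; for $U[0,1]$, $\mathbb{E}[X\mid X<\reserve]=\reserve/2$.

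The extremal step you need runs opposite to the one in \Cref{lem:mhr:multi-unit buyer offer}. Write $(e-1)\GFT[\buyerdist]{\Mye}-\OPT_\buyerdist=\int_0^1 w(\quant)\valFunc(\quant)\dd\quant$; here $w$ is $-\sum_{i\in[m]}w_i$ in your notation. Then $w\ge 0$ on $[0,\quantM]$ and $w\le 0$ on $(\quantM,1]$. So you lower-bound $\valFunc\ge\reserve$ on $[0,\quantM]$ by monotonicity. You upper-bound $\valFunc(\quant)\le\reserve\,(1+\ln(\quantM/\quant))$ on $[\quantM,1]$, using the MHR shape constraint together with the fact that the revenue curve peaks at $\quantM$. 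The worst case is therefore a point mass at the reserve with an exponential shape below it, not an exponential above and a point mass below. For a single sample its ratio is $\quantM/(2-\quantM+\ln\quantM)$, which is increasing on $[1/e,1]$ and equals $\frac{1}{e-1}$ at $\quantM=1/e$. Your revenue-route fallback yields only $1/e$ unless you add a buyers'-surplus bound, which you have not supplied.
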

\begin{proof}
Consider $n\ge m$ unit-demand buyers with i.i.d.\ values drawn from an MHR distribution $\buyerdist$.
Let $\OPT_\buyerdist$ denote the optimal social welfare, i.e.,
\begin{align*}
\OPT_\buyerdist = \sum\nolimits_{i\in[m]} \expect[\buyerdist]{\val_{(i:n)}} .
\end{align*}
where $\val_{(i:n)}$ is the $i$-th largest order statistics of $n$ samples from distribution $\buyerdist$.
The Bayesian revenue-optimal mechanism is the Myerson auction, which for MHR distribution is the
second-price auction with (unique) monopoly reserve $\optreserve$.
We denote this mechanism by $\Mye$.
We prove that
\begin{align*}
\GFT[\buyerdist]{\Mye}
\;\ge\;
\frac{1}{e-1}\,\OPT_\buyerdist .
\end{align*}
The proof proceeds in two steps.

\xhdr{Step 1: Reduction to a truncated exponential distribution.}
Let $\optreserve$ be the monopoly reserve of MHR distribution $\buyerdist$.
Define the (quantile-)value function $\valFunc(\quant)=\buyercdf^{-1}(1-\quant)$.
For MHR distributions, $\valFunc(\quant)$ is convex in $\ln \quant$.
Moreover, the Myerson allocation rule is monotone and allocates units only to buyers with
values at least $\optreserve$.

Similar to the proof of \Cref{lem:mhr:multi-unit buyer offer}, both the welfare of $\Mye$
and the optimal welfare can be written as linear functionals of $\valFunc$:
\begin{align*}
\GFT[\buyerdist]{\Mye}
= \int_0^1 a(\quant)\,\valFunc(\quant)\,\dd\quant,
\qquad
\OPT_\buyerdist
= \int_0^1 b(\quant)\,\valFunc(\quant)\,\dd\quant,
\end{align*}
where $a(\quant)$ and $b(\quant)$ are deterministic weight functions induced by the allocation rules.
The function $b(\quant)$ corresponds to allocating to the top $m$ order statistics, while
$a(\quant)$ coincides with $b(\quant)$ for $\quant\le \quantM$ and vanishes for $\quant>\quantM$, where
$\quantM = 1-\buyerdist(\optreserve)$ is the monopoly quantile.
Note that the difference
\begin{align*}
\int_0^1 \bigl( (e-1)a(\quant) - b(\quant) \bigr)\valFunc(\quant)\,\dd\quant
\end{align*}
has a single-crossing structure: the coefficient $(e-1)a(\quant)-b(\quant)$ is non-negative for
$\quant\le \quantM$ and non-positive for $\quant>\quantM$.
By the same extremal argument as in \Cref{lem:mhr:multi-unit buyer offer}, among all MHR distributions with monopoly reserve $\optreserve$,
the ratio $\GFT[\buyerdist]{\Mye}/\OPT_\buyerdist$ is minimized by the exponential distribution
$\buyerdist(\val)=1-e^{-\val}$, truncated below at $\optreserve=1$.
Hence, it suffices to prove the lemma for the truncated exponential distribution
\begin{align*}
\buyerdist(\val)=
\begin{cases}
1-e^{-\val}, & 0\le \val\le 1,\\
1, & \val>1.
\end{cases}
\end{align*}

\xhdr{Step 2: Reduction to the case $n=m$ under the truncated exponential.}
Under the worst-case distribution $\buyerdist$ (i.e., truncated exponential distribution with truncation at monopoly reserve $\optreserve= 1$) established in Step 1, each buyer's value equals $1$ with probability $e^{-1}$ and lies in $[0,1)$ otherwise. By construction, the welfare approximation ratio can be expressed as 
\begin{align*}
    \frac{\GFT[\buyerdist]{\Mye}}{\OPT_\buyerdist}
    &=
    \frac{
    \sum\nolimits_{i\in[m]}\expect[\buyerdist]{\val_{(i:n)}\cdot
    \indicator{\val_{(i:n)} \geq \optreserve}}}{
    \sum\nolimits_{i\in[m]}\expect[\buyerdist]{\val_{(i:n)}
    }}
    \\
    &\geq 
    \frac{
    \sum\nolimits_{i\in[m]}\expect[\buyerdist]{\val_{(i:m)}\cdot
    \indicator{\val_{(i:m)} \geq \optreserve}}}{
    \sum\nolimits_{i\in[m]}\expect[\buyerdist]{\val_{(i:m)}
    }}
    =
    \frac{\expect[\buyerdist]{\val\cdot
    \indicator{\val \geq \optreserve}}}{
   \expect[\buyerdist]{\val}
   }
   =
   \frac{1/e}{1-1/e} = \frac{1}{e-1}
\end{align*}
This establishes the desired bound and completes the proof.
\end{proof}

\begin{proof}[Proof of \Cref{thm:KS solution welfare approximation multi-unit}]
    Let $\VCG$ denote the {\VCGAuction}. Our analysis considers two cases, based on the relative performance of $\mech$ with respect to the seller's ideal benchmark $\sellerbenchmarkDist$ and buyers' ideal benchmark $\buyerbenchmarkDist$.

    \xhdr{Case 1: $\displaystyle {\sellerexanteutil[\buyerdist]{\VCG}}/{\sellerbenchmarkDist} \geq {\buyerexanteutil[\buyerdist]{\VCG}}/{\buyerbenchmarkDist}$.}
    In this case, 
    consider an ex ante randomization between {\VCGAuction} (\VCG) and the {\BuyerOffer} (\BOM) that maximizes the buyers' surplus, and let $\mixprob$ denote the probability of implementing the {\VCGAuction}.

    Since the valuation distribution is MHR, {\BuyerOffer} allocates the $m$ units uniformly at random to buyers for free, yielding zero seller revenue. To satisfy \ref{eq:KS fairness} (by case assumption and mixture construction, this can be done), the {\VCGAuction} must be used with probability at least $1/2$. By \Cref{lem:mhr:multi-unit buyer offer}, the social welfare of this mixture is lower bounded by
    \begin{align*}
        &\mixprob \cdot \GFT[\buyerdist]{\VCG} + (1 - \mixprob) \cdot  \GFT[\buyerdist]{\BOM}
        \\
        \overset{(a)}{\geq} {} &\mixprob \cdot \OPT_\buyerdist + (1 - \mixprob) \cdot \frac{1}{1 + H_n - H_m}\cdot \OPT_\buyerdist
        \\
        \overset{(b)}{\geq}{}& \left(\frac{1}{2} \cdot 1 + \frac{1}{2} \cdot \frac{1}{1 + H_n - H_m}\right)\cdot \OPT_\buyerdist
        \\
        \overset{(c)}{\geq}{} &  \frac{2 - \ln\!\left(\frac{m}{n}\right)}{2 - 2\ln\!\left(\frac{m}{n}\right)}\cdot 
        \OPT_\buyerdist,
    \end{align*}
    where inequality~(a) holds due to \Cref{lem:mhr:multi-unit buyer offer} and the social welfare optimality of the {\VCGAuction}, inequality~(b) holds due to $\mixprob \geq 1/2$ argued above, and inequality~(c) holds due to the bound $H_n - H_m \leq \ln(n/m)$.

    \xhdr{Case 2: $\displaystyle {\sellerexanteutil[\buyerdist]{\VCG}}/{\sellerbenchmarkDist} < {\buyerexanteutil[\buyerdist]{\VCG}}/{\buyerbenchmarkDist}$.}
    In this case, consider an ex ante randomization between the {\VCGAuction} ({\VCG}) and the Myerson auction ({\Mye}) that maximizes the seller revenue, with $\mixprob$ again denoting the probability of using $\mech$. By \Cref{lem:regular:multi-unit BK}, we have
    \begin{align*}
        \frac{\sellerexanteutil[\buyerdist]{\mech}}{\sellerbenchmark} \geq \frac{n - m}{n}.
    \end{align*}
    Moreover, since the distribution is MHR, the Myerson auction achieves at least a $1/e$ fraction of the optimal social welfare \citep{HR-09}, implying that its buyers' surplus satisfies
    \begin{align*}
        \frac{\buyerexanteutil[\buyerdist]{\Mye}}{\buyerbenchmarkDist} \leq 1 - \frac{1}{e}.
    \end{align*}
    To satisfy \ref{eq:KS fairness} (by case assumption and mixture construction, this can be done), the mixture must equalize the normalized seller revenue and buyers' surplus:
    \begin{align*}
        \mixprob \cdot \frac{\sellerexanteutil[\buyerdist]{\VCG}}{\sellerbenchmarkDist} + (1 - \mixprob) \cdot \frac{\sellerexanteutil[\buyerdist]{\Mye}}{\sellerbenchmarkDist}
        \;=\;
        \mixprob \cdot \frac{\buyerexanteutil[\buyerdist]{\VCG}}{\buyerbenchmarkDist} + (1 - \mixprob) \cdot \frac{\buyerexanteutil[\buyerdist]{\Mye}}{\buyerbenchmarkDist}.
    \end{align*}
    Using the bounds above and simplifying yields
    \begin{align*}
        \mixprob \geq \frac{1}{1 + e \cdot \frac{m}{n}}.
    \end{align*}
    Thus, the social welfare of the mixture is then lower bounded by
    \begin{align*}
        &\mixprob \cdot \GFT[\buyerdist]{\VCG} + (1 - \mixprob) \cdot  \GFT[\buyerdist]{\Mye}
        \\
        \overset{(a)}{\geq} {} &
        \mixprob \cdot \OPT_\buyerdist + (1 - \mixprob) \cdot \frac{1}{e - 1} \cdot  \OPT_\buyerdist
        \\
        \overset{(b)}{\geq} {} &
        \left(\frac{e - 1}{e} + \frac{1}{e + e^2 \cdot \frac{m}{n}}\right)\cdot \OPT_\buyerdist
    \end{align*}
    where inequality~(a) holds due to \Cref{lem:mhr:multi-unit seller offer} and the social welfare optimality of the {\VCGAuction}, and inequality~(b) holds due to the lower bound of mixing probability $\mixprob$ derived above.

    Combining both cases, the welfare of the {\KSsolution} is at least $\multiUnitRatio(m/n)$ times the optimum, completing the proof.
\end{proof} 

\section{Welfare Analysis of the {\Nashsolution} and {\CSNashsolution}}
\label{apx:nash solution analysis}

\subsection{Proof of \texorpdfstring{\Cref{thm:Nash solution welfare approximation mhr}}{Theorem 5.1}}
\thmNashsolutionMHR*
\begin{proof} 
We analyze the welfare guarantee of the {\Nashsolution} in \Cref{example:multi-buyer:mhr hard instance}. The optimal social welfare  can be computed as 
\begin{align*}
    \OPT_\buyerdist &= \expect{\val_{(1:n)}} = 
    \expect{\val_{(1:n)}|\val_{(1:n)} < \constantH}\cdot \prob{\val_{(1:n)}<\constantH}
    +
    \constantH \cdot \prob{\val_{(1:n)} = \constantH}
    \\
    &=
    \expect{\val_{(1:n)}|\val_{(1:n)} < \constantH} \cdot 
    \left(1-\frac{1}{{n}^{1/4}}\right)^n
    +
    \constantH\cdot 
    \left(1 - 
    \left(1-\frac{1}{{n}^{1/4}}\right)^n\right)
    =
    (1 - o(1))\cdot \constantH.
\end{align*}
Next we analyze the social welfare of the {\Nashsolution}, denoted by $\mech$. Let random variable $\winnerval$ as the value of the winner (i.e., the buyer receiving the item) in the {\Nashsolution} if there exists one such winner. If no such winner exists, we set $\winnerval = 0$.
Define auxiliary value threshold $\threshold \triangleq \ln\ln n \leq \constantH$ 
and the following two probabilities:
\begin{align*}
    \allocLow\triangleq \prob{\text{winner exists and $\winnerval \leq \threshold$}}
    \;\;
    \mbox{and}
    \;\;
    \allocHigh\triangleq \prob{\text{winner exists and $\winnerval > \threshold$}}.
\end{align*}
Invoking \Cref{prop:revenue equivalence}, the expected revenue can be expressed as 
\begin{align*}
    \sellerexanteutil[\buyerdist]{\mech} &= \expect{\virtualval(\winnerval)\cdot \indicator{\text{winner exists}}} 
    \\
    &=
    \allocLow\cdot \expect{\virtualval(\winnerval)\condition \text{winner exists and $\winnerval \leq \threshold$}}
    +
    \allocHigh\cdot \expect{\virtualval(\winnerval)\condition \text{winner exists and $\winnerval > \threshold$}}
    \\
    &
    \overset{(a)}{\leq} 
    \allocLow\cdot \virtualval(\threshold)
    +
    \allocHigh\cdot \virtualval(\constantH)
    \overset{(b)}{=}
    \allocLow\cdot \left(\ln\ln n - \frac{1}{2\ln\ln n}\right)
    +
    \allocHigh\cdot \frac{1}{2}\sqrt{\ln n}
    =
    \left(\allocLow\cdot o(1) + \allocHigh\right) \cdot \constantH,
\end{align*}
where inequality~(a) holds since virtual value function $\virtualval(\cdot)$ is increasing, and equality~(b) holds since virtual values $\virtualval(\threshold) = \threshold - \frac{1}{2\threshold} = \ln\ln n - \frac{1}{2\ln\ln n}$ and $\virtualval(\constantH) = \constantH = \frac{1}{2}\sqrt{\ln n}$ by construction.

Similarly, invoking \Cref{prop:revenue equivalence},  buyers' surplus can be expressed as 
    \begin{align*}
    &\buyerexanteutil[\buyerdist]{\mech} 
    \\
    ={}& \expect{1/\buyerhazardrate(\winnerval)} 
    \\
     ={}&
    \allocLow\cdot \expect{1/\buyerhazardrate(\winnerval)\condition \text{winner exists and $\winnerval \leq \threshold$}}
    +
    \allocHigh\cdot \expect{1/\buyerhazardrate(\winnerval)\condition \text{winner exists and $\winnerval > \threshold$}}
    \\
    \overset{(a)}{\leq}{} &  
    \allocLow\cdot \expect[\val\sim\buyerdist]{\buyerhazardrate(\val)\condition\val \leq \threshold}
    +
    \allocHigh\cdot \buyerhazardrate(\threshold)
    \\
    \overset{(b)}{=}{}&
    \allocLow\cdot (1-o(1))\frac{\sqrt{\pi}}{2}
    +
    \allocHigh\cdot \frac{1}{2\ln\ln n}
    \\
    \overset{(c)}{=} {}&
    \left(\allocLow\cdot (1 + o(1)) + \allocHigh\cdot o(1)\right) \cdot \expect[\val\sim \buyerdist]{\val},
\end{align*}
where inequality~(a) holds since buyers' hazard rate function $\buyerhazardrate$ is decreasing, and the buyer's interim allocation rule is weakly increasing, equality~(b) holds since hazard rate function $\buyerhazardrate(\val) = 2\val$, and inequality~(c) holds since $\expect[\val\sim\buyerdist]{\val} = (1 - o(1))\cdot \frac{\sqrt{\pi}}{2}$ by construction.

Next we provide a lower bound on the optimal Nash social welfare, which helps us pin down the value range of $\allocLow$ and $\allocHigh$. Consider mechanism $\mech\primed$ that implements the Myerson auction with probability $1/(n + 1)$, and gives the item to a uniformly random buyer for free with probability $n/(n + 1)$ (i.e., each buyer $i\in[n]$ receives the item for free with probability $1/(n+1)$). The Nash social welfare of mechanism $\mech\primed$ can be computed as
\begin{align*}
    \sellerexanteutil[\buyerdist]{\mech\primed}
    \cdot 
    \prod\nolimits_{i\in[n]}\buyerexanteutil[i,\buyerdist]{\mech\primed}
    &=
    \frac{1}{n+1}\cdot (1-o(1))\cdot \constantH \cdot 
     \prod\nolimits_{i\in[n]}\frac{1}{n + 1}\cdot \expect[\val\sim\buyerdist]{\val}
     \\
    &=
    \frac{1}{(n+1)^{(n + 1)}} \cdot (1 - o(1))\cdot \constantH\cdot \left(\expect[\val\sim\buyerdist]{\val}\right)^n.
\end{align*}
Since the {\Nashsolution} maximizes the Nash social welfare,
the following inequality holds:
\begin{align*}
    \sellerexanteutil[\buyerdist]{\mech}
    \cdot 
    \prod\nolimits_{i\in[n]}\buyerexanteutil[i,\buyerdist]{\mech}
    \geq 
    \sellerexanteutil[\buyerdist]{\mech\primed}
    \cdot 
    \prod\nolimits_{i\in[n]}\buyerexanteutil[i,\buyerdist]{\mech\primed}.
\end{align*}
Combining the revenue and buyers' surplus upper bounds above, it implies
\begin{align*}
&\left(\allocLow\cdot o(1) + \allocHigh\right) \cdot \constantH
\cdot 
\left(
\frac{1}{n}\cdot \left(\allocLow\cdot (1 + o(1)) + \allocHigh\cdot o(1)\right) \cdot \expect[\val\sim \buyerdist]{\val}\right)^n
\\
&\qquad\qquad\qquad\qquad\qquad\qquad\qquad\geq
    \frac{1}{(n+1)^{(n + 1)}} \cdot (1 - o(1))\cdot \constantH\cdot \left(\expect[\val\sim\buyerdist]{\val}\right)^n.
\end{align*}
Invoking $\allocLow + \allocHigh\leq 1$, we obtain the following inequality on $\allocHigh$:
\begin{align*}
    \allocHigh\cdot (1-\allocHigh)^n\geq \frac{(1-o(1))\cdot n^n}{(n+1)^{(n+1)}}
    \;\;
    \Longrightarrow
    \;\;
    \allocHigh\leq 1 - \frac{(1-o(1))\cdot n}{(n+1)^{\frac{n+1}{n}}}.
\end{align*}
Therefore, the social welfare of the {\Nashsolution} $\mech$ can be upper bounded as follows:
\begin{align*}
    \GFT[\buyerdist]{\mech} 
    &\overset{(a)}{=}
    \sellerexanteutil[\buyerdist]{\mech}
    +
    \buyerexanteutil[\buyerdist]{\mech}
    \overset{(b)}{\leq}
    {\left(\allocLow\cdot o(1) + \allocHigh\right) \cdot \constantH+\left(\allocLow\cdot (1 + o(1)) + \allocHigh\cdot o(1)\right) \cdot \expect[\val\sim \buyerdist]{\val}}
    \\
    &\overset{(c)}{\leq} 
    \left(1 - \frac{(1-o(1))\cdot n}{(n+1)^{\frac{n+1}{n}}}\right)
    \cdot (1+o(1))\cdot \constantH
    =o(1)\cdot \constantH = o(1)\cdot \OPT_\buyerdist,
\end{align*}
where equality~(a) holds by the welfare definition, equality~(b) holds due to the seller revenue and buyers surplus upper bounds derived above, and inequality~(c) holds due to the upper bound of $\allocHigh$ derived above and the fact that $\constantH\gg \expect[\val\sim \buyerdist]{\val}$. This finishes the analysis of \Cref{example:multi-buyer:mhr hard instance} and concludes the proof of \Cref{thm:Nash solution welfare approximation mhr}.
\end{proof}

\subsection{Proof of \texorpdfstring{\Cref{thm:Nash solution welfare approximation reg anti-mhr}}{Theorem 5.2}}
\thmNashsolutionRegAntiMHR*
\begin{proof}
Given any valuation distribution $\buyerdist$ that is both regular and anti-MHR, we show that the social welfare of the {\Nashsolution} (denoted by $\mech$) is at least $\frac{n-1}{n}$ of the optimal social welfare. 
By the definition of the social welfare, it suffices to prove that 
\begin{align*}
    \buyerexanteutil[\buyerdist]{\mech} + \sellerexanteutil[\buyerdist]{\mech}
    \geq 
    \frac{n-1}{n}\cdot \left(\buyerbenchmarkDist+\sellerbenchmarkDist\right).
\end{align*}
To prove this inequality, we first establish a lower bound on the optimal Nash social welfare, which helps us to pin down a relation between $\buyerexanteutil[\buyerdist]{\mech}$, $\sellerexanteutil[\buyerdist]{\mech}$ and $\buyerbenchmarkDist$, $\sellerbenchmarkDist$. Consider the second-price auction, denoted by $\mech\primed$. Since distribution $\buyerdist$ is regular and anti-MHR, invoking \Cref{lem:SPA:revenue approx regular,lem:SPA:buyer surplus antiMHR}, we obtain
\begin{align*}
    \buyerexanteutil[\buyerdist]{\mech\primed} = \buyerbenchmarkDist
    \;\;
    \mbox{and}
    \;\;
    \sellerexanteutil[\buyerdist]{\mech\primed} \geq \frac{n - 1}{n}\cdot \sellerbenchmarkDist.
\end{align*}
Thus, the optimal Nash social welfare (achieve by {\Nashsolution}) must satisfy
\begin{align*}
    \sellerexanteutil[\buyerdist]{\mech}
    \cdot 
    \left(\frac{\buyerexanteutil[\buyerdist]{\mech}}{n}\right)^n
    &\geq 
    \sellerexanteutil[\buyerdist]{\mech\primed}
    \cdot 
    \left(\frac{\buyerexanteutil[\buyerdist]{\mech\primed}}{n}\right)^n
    \geq 
    \frac{n-1}{n}\cdot \sellerbenchmarkDist\cdot 
    \left(\frac{\buyerbenchmarkDist}{n}\right)^n.
\end{align*}
Introduce auxiliary notations $\varx,\varyy,\varA,\varB$ defined as
\begin{align*}
    \varx\triangleq \sellerexanteutil[\buyerdist]{\mech}~,
    \;\;
    \varyy\triangleq \frac{\buyerexanteutil[\buyerdist]{\mech}}{n}~,
    \;\;
    \varA\triangleq \sellerbenchmarkDist~,
    \;\;
    \mbox{and}
    \;\;
    \varB\triangleq \frac{\buyerbenchmarkDist}{n}~.
\end{align*}
We can formulate the the problem of identifying the worst-case welfare approximation ratio as the following minimization program:
\begin{align*}
\min_{\substack{\varA,\varB>0\\ 0<\varx\le \varA,\;0<\varyy\le \varB}}
\frac{\varx+n\varyy}{\varA+n\varB}
\quad\text{subject to}\quad
\varx \varyy^{n} \;\ge\; \frac{n-1}{n}\,\varA \varB^{n}.
\end{align*}
In the remaining analysis, we show the optimal objective value of this minimization program is $(n-1)/n$, which suffices to prove \Cref{thm:Nash solution welfare approximation reg anti-mhr}.

\xhdr{Step 1: Reduction to boundary solutions in $\varx,\varyy$.}
Fix $\varA,\varB>0$. Since the objective $\varx+n\varyy$ is strictly increasing in both
$\varx$ and $\varyy$, the constraint must bind at optimality:
\begin{align*}
\varx \varyy^{n} = \frac{n-1}{n}\,\varA \varB^{n}.
\end{align*}
Moreover, under the upper bounds $\varx\le\varA$ and $\varyy\le\varB$, the optimizer must lie
on one of the two boundary faces:
\begin{align*}
(\varx,\varyy)=\Big(\tfrac{n-1}{n}\varA,\;\varB\Big)
\quad\text{or}\quad
(\varx,\varyy)=\Big(\varA,\;\varB(\tfrac{n-1}{n})^{1/n}\Big).
\end{align*}
Thus, after minimizing over $\varx,\varyy$, the objective reduces to
\begin{align*}
\min\!\left\{
\frac{\frac{n-1}{n}\varA+n\varB}{\varA+n\varB},
\;
\frac{\varA+n\varB(\frac{n-1}{n})^{1/n}}{\varA+n\varB}
\right\}.
\end{align*}

\xhdr{Step 2: Scale invariance.}
For any $t>0$, replacing $(\varA,\varB)$ by $(t\varA,t\varB)$ leaves both expressions unchanged.
Hence the problem depends only on the ratio $r:=\varA/\varB$.
Without loss of generality, set $\varB=1$ and $\varA=r>0$.
Define
\begin{align*}
\phi_1(r)=\frac{\frac{n-1}{n}r+n}{r+n},
\qquad
\phi_2(r)=\frac{r+n(\frac{n-1}{n})^{1/n}}{r+n},
\end{align*}
and let $\phi(r)=\min\{\phi_1(r),\phi_2(r)\}$.

\xhdr{Step 3: Worst-case over $r$.}
We examine the boundary behavior:
\begin{align*}
\lim_{r\to\infty}\phi_1(r)=\frac{n-1}{n},
\qquad
\lim_{r\to 0^+}\phi_2(r)=\left(\frac{n-1}{n}\right)^{1/n}.
\end{align*}
Since for every $n\ge1$, we have
\begin{align*}
\left(\frac{n-1}{n}\right)^{1/n} \geq \frac{n-1}{n}.
\end{align*}
Therefore,
\begin{align*}
\inf_{r>0}\phi(r)=\frac{n-1}{n}.
\end{align*}

\xhdr{Step 4: Conclusion.}
Combining the above steps, we conclude that
\begin{align*}
\inf_{\substack{\varA,\varB>0\\ 0<\varx\le \varA,\;0<\varyy\le \varB}}
\frac{\varx+n\varyy}{\varA+n\varB}
=
\frac{n-1}{n}.
\end{align*}
The infimum is not attained, but is approached asymptotically as
$\varA/\varB\to\infty$ with
$(\varx,\varyy)=(\tfrac{n-1}{n}\varA,\varB)$.
\end{proof}

\subsection{Proof of \texorpdfstring{\Cref{thm:cross-side Nash solution welfare approximation}}{Theorem 5.3}}
\thmCSNashsolution*

\begin{proof}
We now prove all the statements sequentially.

\xhdr{General distributions.} 
The analysis is similar to the proof in \citet{BFM-25} for the single-seller single-buyer bilateral trade model.

Given any valuation distribution $\buyerdist$, we show that the social welfare of the {\CSNashsolution} (denoted by $\mech$) is at least $\frac{1}{2}$ of the optimal social welfare. It suffices to prove that 
\begin{align*}
    \sellerexanteutil[\buyerdist]{\mech} \geq \frac{1}{2}\cdot \sellerbenchmarkDist
    \;\;
    \mbox{and}
    \;\;
    \buyerexanteutil[\buyerdist]{\mech} \geq \frac{1}{2}\cdot \buyerbenchmarkDist.
\end{align*}
We prove above inequalities by contradiction.
Suppose that the seller revenue satisfies $\frac{\sellerexanteutil[\buyerdist]{\mech}}{\sellerbenchmarkDist} < \frac{1}{2}$. (The case for the buyers' surplus follows by a symmetric argument.) Define $\revratio \triangleq \frac{\sellerexanteutil[\buyerdist]{\mech}}{\sellerbenchmark} < \frac{1}{2}$.

Consider a new mechanism $\mech\primed$ that runs the Myerson auction $\Mye$ that maximizes the seller revenue with probability 
$\frac{1}{2} - \frac{\revratio}{2 - 2\revratio}$, 
and runs $\mech$ with the remaining probability 
$\frac{1}{2} + \frac{\revratio}{2 - 2\revratio}$.  
Since $\revratio < \frac{1}{2}$, both probabilities lie in $[0,1]$, so $\mech\primed$ is well-defined.
By construction, the seller revenue and buyers' surplus under $\mech\primed$ satisfy
\begin{align*}
    \sellerexanteutil[\buyerdist]{\mech\primed} 
    & = \left(\frac{1}{2} - \frac{\revratio}{2 - 2\revratio}\right) \cdot \sellerbenchmarkDist 
      + \left(\frac{1}{2} + \frac{\revratio}{2 - 2\revratio}\right) \cdot \sellerexanteutil[\buyerdist]{\mech} \\
    & \overset{(a)}{=} 
      \left(\frac{1}{2} - \frac{\revratio}{2 - 2\revratio}\right) \cdot \frac{1}{\revratio} \cdot \sellerexanteutil[\buyerdist]{\mech} 
      + \left(\frac{1}{2} + \frac{\revratio}{2 - 2\revratio}\right) \cdot \sellerexanteutil[\buyerdist]{\mech}, \\
    \buyerexanteutil[\buyerdist]{\mech\primed} 
    & \geq \left(\frac{1}{2} + \frac{\revratio}{2 - 2\revratio}\right) \cdot \buyerexanteutil[\buyerdist]{\mech},
\end{align*}
where equality~(a) follows from the definition $\revratio = \frac{\sellerexanteutil[\buyerdist]{\mech}}{\sellerbenchmarkDist}$.
Hence, the Nash social welfare of $\mech\primed$ is therefore at least
\begin{align*}
    &\left[
        \left(\frac{1}{2} - \frac{\revratio}{2 - 2\revratio}\right) \cdot \frac{1}{\revratio}
        + \left(\frac{1}{2} + \frac{\revratio}{2 - 2\revratio}\right)
    \right]
    \cdot \sellerexanteutil[\buyerdist]{\mech}
    \cdot 
    \left(\frac{1}{2} + \frac{\revratio}{2 - 2\revratio}\right) \cdot \buyerexanteutil[\buyerdist]{\mech} \\
    &
    \qquad\qquad\qquad\qquad\qquad\qquad\qquad
    = \frac{1}{4(1 - \revratio)\revratio} \cdot \sellerexanteutil[\buyerdist]{\mech} \cdot \buyerexanteutil[\buyerdist]{\mech} 
    > \sellerexanteutil[\buyerdist]{\mech} \cdot \buyerexanteutil[\buyerdist]{\mech},
\end{align*}
where the equality follows by direct algebra and the strict inequality holds because $\revratio < \frac{1}{2}$ implies $\frac{1}{4(1 - \revratio)\revratio} > 1$.

Thus, $\mech\primed$ achieves strictly higher Nash social welfare than $\mech$, contradicting the assumption that $\mech$ is a {\Nashsolution}. This completes the proof of the welfare approximation of $1/2$ for general valuation distributions.

\xhdr{Regular and anti-MHR distributions.}
The analysis is similar to the one used for \Cref{thm:Nash solution welfare approximation reg anti-mhr}. 
Given any valuation distribution $\buyerdist$ that is both regular and anti-MHR, we show that the social welfare of the {\CSNashsolution} (denoted by $\mech$) is at least $\frac{n-1}{n}$ of the optimal social welfare. 
By the definition of the social welfare, it suffices to prove that 
\begin{align*}
    \buyerexanteutil[\buyerdist]{\mech} + \sellerexanteutil[\buyerdist]{\mech}
    \geq 
    \frac{n-1}{n}\cdot \left(\buyerbenchmarkDist+\sellerbenchmarkDist\right).
\end{align*}
To prove this inequality, we first establish a lower bound on the optimal cross-side Nash social welfare, which helps us to pin down a relation between $\buyerexanteutil[\buyerdist]{\mech}$, $\sellerexanteutil[\buyerdist]{\mech}$ and $\buyerbenchmarkDist$, $\sellerbenchmarkDist$. Consider the second-price auction, denoted by $\mech\primed$. Since distribution $\buyerdist$ is regular and anti-MHR, invoking \Cref{lem:SPA:revenue approx regular,lem:SPA:buyer surplus antiMHR}, we obtain
\begin{align*}
    \buyerexanteutil[\buyerdist]{\mech\primed} = \buyerbenchmarkDist
    \;\;
    \mbox{and}
    \;\;
    \sellerexanteutil[\buyerdist]{\mech\primed} \geq \frac{n - 1}{n}\cdot \sellerbenchmarkDist.
\end{align*}
Thus, the optimal cross-side Nash social welfare (achieve by {\CSNashsolution}) must satisfy
\begin{align*}
    \sellerexanteutil[\buyerdist]{\mech}
    \cdot 
    \buyerexanteutil[\buyerdist]{\mech}
    &\geq 
    \sellerexanteutil[\buyerdist]{\mech\primed}
    \cdot 
    \buyerexanteutil[\buyerdist]{\mech\primed}
    \geq 
    \frac{n-1}{n}\cdot \sellerbenchmarkDist\cdot 
    \buyerbenchmarkDist.
\end{align*}
Introduce auxiliary notations $\varx,\varyy,\varA,\varB$ defined as
\begin{align*}
    \varx\triangleq \sellerexanteutil[\buyerdist]{\mech}~,
    \;\;
    \varyy\triangleq \buyerexanteutil[\buyerdist]{\mech}~,
    \;\;
    \varA\triangleq \sellerbenchmarkDist~,
    \;\;
    \mbox{and}
    \;\;
    \varB\triangleq \buyerbenchmarkDist~.
\end{align*}
We can formulate the the problem of identifying the worst-case welfare approximation ratio as the following minimization program:
\begin{align*}
\min_{\substack{\varA,\varB>0\\ 0<\varx\le \varA,\;0<\varyy\le \varB}}
\frac{\varx+\varyy}{\varA+\varB}
\quad\text{subject to}\quad
\varx \varyy \;\ge\; \frac{n-1}{n}\,\varA \varB.
\end{align*}
Due to the symmetric in $(\varx, \varA)$ and $(\varyy,\varB)$, it can be verified that the optimal objective value of this minimization program is $(n-1)/n$, which is attended by $\varx = \frac{n-1}{n}\cdot\varA$ and $\varyy=\varB$.

\xhdr{Hard instance with a MHR distribution.} The analysis below is similar to the one for \Cref{thm:Nash solution welfare approximation mhr}.
We analyze the welfare guarantee of the {\CSNashsolution} in \Cref{example:multi-buyer:mhr hard instance}. The optimal social welfare  can be computed as 
\begin{align*}
    \OPT_\buyerdist &= \expect{\val_{(1:n)}} = 
    \expect{\val_{(1:n)}|\val_{(1:n)} < \constantH}\cdot \prob{\val_{(1:n)}<\constantH}
    +
    \constantH \cdot \prob{\val_{(1:n)} = \constantH}
    \\
    &=
    \expect{\val_{(1:n)}|\val_{(1:n)} < \constantH} \cdot 
    \left(1-\frac{1}{{n}^{1/4}}\right)^n
    +
    \constantH\cdot 
    \left(1 - 
    \left(1-\frac{1}{{n}^{1/4}}\right)^n\right)
    =
    (1 - o(1))\cdot \constantH.
\end{align*}
Next we analyze the social welfare of the {\CSNashsolution}, denoted by $\mech$. Let random variable $\winnerval$ as the value of the winner (i.e., the buyer receiving the item) in the {\CSNashsolution} if there exists one such winner. If no such winner exists, we set $\winnerval = 0$.
Define auxiliary value threshold $\threshold \triangleq \ln\ln n \leq \constantH$ 
and the following two probabilities:
\begin{align*}
    \allocLow\triangleq \prob{\text{winner exists and $\winnerval \leq \threshold$}}
    \;\;
    \mbox{and}
    \;\;
    \allocHigh\triangleq \prob{\text{winner exists and $\winnerval > \threshold$}}.
\end{align*}
Invoking \Cref{prop:revenue equivalence}, the expected revenue can be expressed as 
\begin{align*}
    \sellerexanteutil[\buyerdist]{\mech} &= \expect{\virtualval(\winnerval)\cdot \indicator{\text{winner exists}}} 
    \\
    &=
    \allocLow\cdot \expect{\virtualval(\winnerval)\condition \text{winner exists and $\winnerval \leq \threshold$}}
    +
    \allocHigh\cdot \expect{\virtualval(\winnerval)\condition \text{winner exists and $\winnerval > \threshold$}}
    \\
    &
    \overset{(a)}{\leq} 
    \allocLow\cdot \virtualval(\threshold)
    +
    \allocHigh\cdot \virtualval(\constantH)
    \overset{(b)}{=}
    \allocLow\cdot \left(\ln\ln n - \frac{1}{2\ln\ln n}\right)
    +
    \allocHigh\cdot \frac{1}{2}\sqrt{\ln n}
    =
    \left(\allocLow\cdot o(1) + \allocHigh\right) \cdot \constantH,
\end{align*}
where inequality~(a) holds since virtual value function $\virtualval(\cdot)$ is increasing, and equality~(b) holds since virtual values $\virtualval(\threshold) = \threshold - \frac{1}{2\threshold} = \ln\ln n - \frac{1}{2\ln\ln n}$ and $\virtualval(\constantH) = \constantH = \frac{1}{2}\sqrt{\ln n}$ by construction.

Similarly, invoking \Cref{prop:revenue equivalence},  buyers' surplus can be expressed as 
    \begin{align*}
    &\buyerexanteutil[\buyerdist]{\mech} 
    \\
    ={}& \expect{1/\buyerhazardrate(\winnerval)} 
    \\
     ={}&
    \allocLow\cdot \expect{1/\buyerhazardrate(\winnerval)\condition \text{winner exists and $\winnerval \leq \threshold$}}
    +
    \allocHigh\cdot \expect{1/\buyerhazardrate(\winnerval)\condition \text{winner exists and $\winnerval > \threshold$}}
    \\
    \overset{(a)}{\leq}{} &  
    \allocLow\cdot \expect[\val\sim\buyerdist]{\buyerhazardrate(\val)\condition\val \leq \threshold}
    +
    \allocHigh\cdot \buyerhazardrate(\threshold)
    \\
    \overset{(b)}{=}{}&
    \allocLow\cdot (1-o(1))\frac{\sqrt{\pi}}{2}
    +
    \allocHigh\cdot \frac{1}{2\ln\ln n}
    \\
    \overset{(c)}{=} {}&
    \left(\allocLow\cdot (1 + o(1)) + \allocHigh\cdot o(1)\right) \cdot \expect[\val\sim \buyerdist]{\val},
\end{align*}
where inequality~(a) holds since buyers' hazard rate function $\buyerhazardrate$ is decreasing, and the buyer's interim allocation rule is weakly increasing, equality~(b) holds since hazard rate function $\buyerhazardrate(\val) = 2\val$, and inequality~(c) holds since $\expect[\val\sim\buyerdist]{\val} = (1 - o(1))\cdot \frac{\sqrt{\pi}}{2}$ by construction.

Next we provide a lower bound on the optimal cross-side Nash social welfare, which helps us pin down the value range of $\allocLow$ and $\allocHigh$. Consider mechanism $\mech\primed$ that implements the Myerson auction with probability $1/2$, and gives the item to a uniformly random buyer for free with probability $1/2$ (i.e., each buyer $i\in[n]$ receives the item for free with probability $1/2n$). The cross-side Nash social welfare of mechanism $\mech\primed$ can be computed as
\begin{align*}
    \sellerexanteutil[\buyerdist]{\mech\primed}
    \cdot 
    \buyerexanteutil[\buyerdist]{\mech\primed}
    &=
    \frac{1}{2}\cdot (1-o(1))\cdot \constantH \cdot 
    \frac{1}{2}\cdot \expect[\val\sim\buyerdist]{\val}
     \\
    &=
    \frac{1}{4} \cdot (1 - o(1))\cdot \constantH\cdot \left(\expect[\val\sim\buyerdist]{\val}\right)^n.
\end{align*}
Since the {\CSNashsolution} maximizes the cross-side Nash social welfare,
the following inequality holds:
\begin{align*}
    \sellerexanteutil[\buyerdist]{\mech}
    \cdot 
    \buyerexanteutil[\buyerdist]{\mech}
    \geq 
    \sellerexanteutil[\buyerdist]{\mech\primed}
    \cdot 
    \buyerexanteutil[\buyerdist]{\mech\primed}.
\end{align*}
Combining the revenue and buyers' surplus upper bounds above, it implies
\begin{align*}
&\left(\allocLow\cdot o(1) + \allocHigh\right) \cdot \constantH
\cdot 
\left(
\left(\allocLow\cdot (1 + o(1)) + \allocHigh\cdot o(1)\right) \cdot \expect[\val\sim \buyerdist]{\val}\right)
\geq
    \frac{1}{4} \cdot (1 - o(1))\cdot \constantH\cdot \left(\expect[\val\sim\buyerdist]{\val}\right)^n.
\end{align*}
Invoking $\allocLow + \allocHigh\leq 1$, we obtain the following inequality on $\allocHigh$:
\begin{align*}
    \allocHigh\cdot (1-\allocHigh)\geq \frac{1-o(1)}{4}
    \;\;
    \Longrightarrow
    \;\;
    \allocHigh\leq \frac{1}{2} + o(1).
\end{align*}
Therefore, the social welfare of the {\Nashsolution} $\mech$ can be upper bounded as follows:
\begin{align*}
    \GFT[\buyerdist]{\mech} 
    &\overset{(a)}{=}
    \sellerexanteutil[\buyerdist]{\mech}
    +
    \buyerexanteutil[\buyerdist]{\mech}
    \overset{(b)}{\leq}
    {\left(\allocLow\cdot o(1) + \allocHigh\right) \cdot \constantH+\left(\allocLow\cdot (1 + o(1)) + \allocHigh\cdot o(1)\right) \cdot \expect[\val\sim \buyerdist]{\val}}
    \\
    &\overset{(c)}{\leq} 
    \left(\frac{1}{2} + o(1)\right)
    \cdot (1+o(1))\cdot \constantH
    = \left(\frac{1}{2} + o(1)\right)\cdot \OPT_\buyerdist,
\end{align*}
where equality~(a) holds by the welfare definition, equality~(b) holds due to the seller revenue and buyers surplus upper bounds derived above, and inequality~(c) holds due to the upper bound of $\allocHigh$ derived above and the fact that $\constantH\gg \expect[\val\sim \buyerdist]{\val}$. This finishes the analysis of \Cref{example:multi-buyer:mhr hard instance} and concludes the proof of \Cref{thm:cross-side Nash solution welfare approximation}.
\end{proof}

\subsection{Improved Welfare Approximation for Bilateral Trade Model}

In this section, we consider the single-seller single-buyer bilateral trade model. The seller has a private cost $\cost\sim\sellerdist$ and the buyer has a private valuation $\val\sim\buyerdist$. We say the seller cost distribution $\sellerdist$ is MHR if $\sellerdist(\cost)/\sellerpdf(\cost)$ is weakly decreasing, and the buyer valuation distribution $\buyerdist$ is MHR if $(1-\buyerdist(\val))/\buyerdist(\cost)$ is weakly decreasing. We define {\SellerOffer} ({\SOM}) and {\BuyerOffer} ({\BOM}) as the seller-optimal mechanism and the buyer-optimal mechanism, respectively.

\begin{theorem}
\label{thm:Nash solution welfare approximation single-buyer mhr}
    Consider a single-seller single-buyer bilateral trade market. Suppose  both traders' distributions are MHR. Then the {\Nashsolution} achieves at least $1/(e-1)$ fraction of $\sellerbenchmark+\buyerbenchmark$.\footnote{We remark that $\sellerbenchmark+\buyerbenchmark$ is an upper bound on the standard second best GFT benchmark \citep[cf.][]{BCGZ-18,DMSW-22}.}
\end{theorem}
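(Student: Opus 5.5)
The plan is to work in normalized coordinates and lower bound $\min\{x_0,y_0\}$ at the Nash point. Write $x\triangleq\sellerutil/\sellerbenchmark$ and $y\triangleq\buyerutil/\buyerbenchmark$, so every mechanism maps to a point of the feasible set $S\subseteq[0,1]^2$. As noted after \Cref{def:KS solution}, $S$ is convex. The {\Nashsolution} maximizes $\sellerutil\cdot\buyerutil$, which is a positive multiple of $xy$, so it is the maximizer $(x_0,y_0)$ of $xy$ over $S$. Its welfare satisfies
\[
\frac{x_0\sellerbenchmark+y_0\buyerbenchmark}{\sellerbenchmark+\buyerbenchmark}\;\geq\;\min\{x_0,y_0\}.
\]
It therefore suffices to show $\min\{x_0,y_0\}\geq \frac{1}{e-1}$.

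First, a supporting-line argument. Since $(x_0,y_0)$ maximizes $xy$ over a convex set, the hyperbola $xy=x_0y_0$ is tangent to $S$ there. Hence every point $(x,y)\in S$ satisfies $x/x_0+y/y_0\leq 2$. Let $\alpha\triangleq \buyerexanteutil{\SOM}/\buyerbenchmark$ and $\beta\triangleq\sellerexanteutil{\SOM}\!/\!\sellerbenchmark$ be replaced, more precisely, by $\beta\triangleq \sellerexanteutil{\BOM}/\sellerbenchmark$. The point $(1,\alpha)$, realized by the {\SellerOffer}, lies in $S$, so $1/x_0+\alpha/y_0\leq 2$. Using $y_0\leq 1$ this gives $x_0\geq \frac{1}{2-\alpha}$. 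Symmetrically, the point $(\beta,1)$, realized by the {\BuyerOffer}, gives $y_0\geq\frac{1}{2-\beta}$. The target $\frac{1}{e-1}$ then follows once $\alpha,\beta\geq 3-e$, since $2-(3-e)=e-1$.

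The core step is the lemma that, when both distributions are MHR, the buyer retains at least a $(3-e)$ fraction of $\buyerbenchmark$ under the {\SellerOffer}, and symmetrically the seller retains at least a $(3-e)$ fraction of $\sellerbenchmark$ under the {\BuyerOffer}. I would prove it conditional on the other side's private type and then integrate.
\begin{itemize}
\item Fix the cost $\cost$. The seller posts the monopoly price $p^*(\cost)$, which satisfies $p^*-\cost=1/\buyerhazardrate(p^*)$. The buyer's surplus is $\int_{p^*}^\infty(1-\buyerdist)$. In $\buyerbenchmark$, the buyer faces a seller with cost $\cost$ and effectively pays $\cost$ in the relevant terms, yielding at most $\int_{\cost}^\infty(1-\buyerdist)$.
\item The MHR condition makes $\ln(1-\buyerdist)$ concave. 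As in the proofs of \Cref{lem:mhr:multi-unit buyer offer,lem:mhr:multi-unit seller offer}, the ratio $\int_{p^*}^\infty(1-\buyerdist)\big/\int_{\cost}^\infty(1-\buyerdist)$ can be written as a linear functional with a single-crossing weight. A chord argument in quantile space then shows the ratio is minimized by an exponential tail pinned at $p^*$, possibly truncated.
\item For the exponential case the ratio is computed explicitly, giving an $e^{-1}$-type constant. Combining it with the mass the buyer already gets from the region above $p^*$ should yield $3-e$.
\end{itemize}
The seller-side bound under the {\BuyerOffer} is the mirror image, with the cost distribution $\sellerdist$ and $\sellerdist/\sellerpdf$ playing the role of the inverse hazard rate.

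I expect this extremal lemma to be the main obstacle. It is not clear that the conditional ratios integrate cleanly, because $\buyerbenchmark$ is the {\BuyerOffer} value and not simply $\int_{\cost}^\infty(1-\buyerdist)$. If the constant $3-e$ does not come out directly, my fallback is to bound welfare using both the tangent-line inequality and the extra fact that $\sellerutil+\buyerutil$ under the {\SellerOffer} is at least a $1/(e-1)$ fraction of $\sellerbenchmark+\buyerbenchmark$. That fact is the bilateral analogue of \Cref{lem:mhr:multi-unit seller offer} with $m=n=1$. I would then check whether the Nash point dominates a suitable mixture of the {\SellerOffer} and {\BuyerOffer} in welfare, which again uses the supporting-line inequality.
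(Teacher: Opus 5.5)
Your proposal has a genuine gap, and it is not only the extremal lemma. The benchmark you are aiming at is out of reach.

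\textbf{Counterexample.} Take the buyer's value $v\sim U[1,1+\delta]$ and the seller's cost $c\sim U[0,\delta]$ with $\delta$ small; both are MHR.
\begin{itemize}
\item For every cost, the {\SellerOffer} posts price $1$. For every value, the {\BuyerOffer} offers $\delta$. Hence $\sellerbenchmark=\buyerbenchmark=1-\delta/2$, while in each case the other side keeps only $\delta/2$. So $\alpha,\beta=O(\delta)$, and your lemma $\alpha,\beta\ge 3-e$ is false.
\item Trade is always efficient, so every mechanism has $\Pi+U\le\OPTFB=\expect{v-c}=1$. The Nash point therefore has payoffs $(\tfrac12,\tfrac12)$, for example via the fixed price $\tfrac{1+\delta}{2}$. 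In your coordinates this is $x_0=y_0=\frac{1}{2-\delta}$, so your bound $x_0\ge\frac{1}{2-\alpha}$ is essentially tight here.
\item Its welfare is only a $\frac{1}{2-\delta}$ fraction of $\sellerbenchmark+\buyerbenchmark$.
\end{itemize}
So neither $\min\{x_0,y_0\}\ge\frac{1}{e-1}$ nor a $\frac{1}{e-1}$ bound against $\sellerbenchmark+\buyerbenchmark$ can be proved; that benchmark can approach $2\OPTFB$. The ``fact'' in your fallback fails on the same instance: the {\SellerOffer} has welfare $1$ against $2-\delta$.

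\textbf{The correct benchmark.} The $\frac{1}{e-1}$ guarantee for posted prices under MHR \citep[Theorem~3.1]{Fei-22} is relative to the first-best gains from trade $\OPTFB=\expect{(v-c)^+}$. It is the bilateral analogue of \Cref{lem:mhr:multi-unit seller offer} with $m=n=1$. That is also all the paper's proof delivers: it ends with $\GFT{\mech}\ge\frac{1}{e-1}\OPTFB\ge\frac{1}{e-1}\OPTSB$, so the statement should be read against $\OPTFB$, not literally against $\sellerbenchmark+\buyerbenchmark$.

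\textbf{The paper's route.} The paper's argument is essentially your fallback, without the normalization or the tangent line, and with this benchmark.
\begin{itemize}
\item If the Nash point $\mech$ had $\buyerexanteutil{\mech}<\buyerexanteutil{{\SOM}}$, then {\SOM}, which maximizes seller utility, would have a strictly larger Nash product. Hence $\buyerexanteutil{{\SOM}}\le\buyerexanteutil{\mech}\le\buyerexanteutil{{\BOM}}$.
\item Pick $\mixprob\in[0,1]$ so that running {\BOM} with probability $\mixprob$ and {\SOM} otherwise gives buyer utility exactly $\buyerexanteutil{\mech}$.
\item Nash optimality forces $\sellerexanteutil{\mech}$ to be at least the mixture's seller utility. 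Hence $\GFT{\mech}\ge\mixprob\,\GFT{{\BOM}}+(1-\mixprob)\,\GFT{{\SOM}}$.
\item Each term is at least $\frac{1}{e-1}\OPTFB$ by Fei's theorem: the MHR buyer handles {\SOM}, and the MHR seller handles {\BOM} by symmetry.
\end{itemize}
So drop the $\min\{x_0,y_0\}$ reduction and the $3-e$ lemma, and state the conclusion against $\OPTFB$, which also gives it against $\OPTSB$.
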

\begin{proof}
    We denote by $\mech$ as the {\Nashsolution}.
    Our argument studies two cases separately.

    \xhdr{Case 1.} Suppose the buyer's ex-ante utility $\buyerexanteutil{\mech}$ in {\Nashsolution} $\mech$ is strictly smaller than her ex-ante utility $\buyerexanteutil{\SOM}$ in the {\SellerOffer}, i.e., $\buyerexanteutil{\mech} < \buyerexanteutil{\SOM}$. In this case, we claim that the {\SellerOffer} has strictly higher Nash social welfare than mechanism $\mech$, since $\buyerexanteutil{\mech} < \buyerexanteutil{\SOM}$ (case assumption) and $\sellerexanteutil{\mech} \leq \sellerexanteutil{\SOM}$ (guarantee of the {\SellerOffer}). This is a contradiction, and thus this case never happens.

    \xhdr{Case 2.} Suppose the buyer's ex-ante utility $\buyerexanteutil{\mech}$ in {\Nashsolution} $\mech$ is weakly higher than her ex-ante utility $\buyerexanteutil{\SOM}$ in the {\SellerOffer}, i.e., $\buyerexanteutil{\mech} \geq \buyerexanteutil{\SOM}$. Define probability $\mixprob$ as 
    \begin{align*}
        \mixprob\triangleq \frac{\buyerexanteutil{\mech} - \buyerexanteutil{\SOM}}{\buyerexanteutil{\BOM} - \buyerexanteutil{\SOM}}.
    \end{align*}
    Since $\buyerexanteutil{\mech} \geq \buyerexanteutil{\SOM}$ (case assumption) and $\buyerexanteutil{\mech} \leq \buyerexanteutil{\BOM}$ (guarantee of the {\BuyerOffer}), probability $\mixprob$ is well-defined (i.e., $\mixprob \in [0, 1]$). Now consider the mechanism $\mech\primed$ that runs the {\BuyerOffer} with probability $\mixprob$ and the {\SellerOffer} with probability $1 - \mixprob$. By construction, the buyer's ex-ante utility is identical in both mechanisms $\mech$ and $\mech\primed$, i.e., 
    \begin{align*}
        \buyerexanteutil{\mech} = \buyerexanteutil{\mech\primed}.
    \end{align*}
    Moreover, the seller's ex-ante utility is weakly higher in {\Nashsolution} $\mech$ than the mechanism $\mech\primed$. (Otherwise, the latter mechanism $\mech\primed$ has a strictly higher Nash social welfare than the former mechanism $\mech$, a contradiction.) Thus, 
    \begin{align*}
        \GFT{\mech} &= \sellerexanteutil{\mech} + \buyerexanteutil{\mech} \geq 
        \sellerexanteutil{\mech\primed} + \buyerexanteutil{\mech\primed} \overset{(a)}{=}
        \mixprob\cdot \GFT{\BOM} + (1 - \mixprob) \cdot \GFT{\SOM}
        \\
        & \overset{(b)}{\geq}
        \frac{1}{e - 1}\cdot \OPTFB \geq 
        \frac{1}{e - 1}\cdot \OPTSB,
    \end{align*}  
    where equality~(a) holds due to the construction of the mechanism $\mech\primed$, and inequality~(b) holds since both players are MHR and \citet[Theorem~3.1]{Fei-22}. This finishes the proof of the statement as desired.
\end{proof}

\section{Revenue Approximation of Price-Posting Mechanisms in a Single Buyer Market}
\label{sec:revenue approximation with single buyer}
As we have already shown, for the problem of maximizing seller revenue under a buyer surplus constraint, in a single-buyer market, a price-posting mechanism is optimal when the value distribution is regular, and in general, a randomization between two price-posting mechanisms is optimal. 
In this section, we study how much revenue a price-posting mechanism can obtain in general. 
We start by defining the revenue approximation ratio. 
\begin{definition}[Revenue approximation ratio]
    The revenue approximation ratio of a class of mechanisms $\mechFam\primed \subset \mechFam$ is defined as 
    \[\inf_{\dist, \buyerutil} \frac{\sup_{\mech\primed \in \mechFam\primed, \buyerexanteutil[\dist]{\mech\primed} \geq \buyerutil} \sellerexanteutil[\dist]{\mech}}{\sup_{\mech \in \mechFam, \buyerexanteutil[\dist]{\mech} \geq \buyerutil} \sellerexanteutil[\dist]{\mech}}.\]
\end{definition}

We also introduce the concept of quasi-regular value definitions from \cite{FJ-24}, as follows: 
\begin{definition}[Quasi-regularity, from \cite{FJ-24}]
    A value distribution with the revenue curve $\revCurve$ is \emph{quasi-regular}, if for any $\quant < \quant' < 1$, 
    \[\frac{\revCurve(\quant')}{1 - \quant'} \geq \frac{\revCurve(\quant)}{1 - \quant}. \]
\end{definition}
Intuitively speaking, a value distribution is quasi-regular, if for any $\quant < \quant'$, $(\quant', \revCurve(\quant'))$ is not below the line between $(\quant, \revCurve(\quant))$ and $(1, \revCurve(1) = 0)$. A direct implication is that the family of regular distributions is a strict subset of the family of quasi-regular distributions. 

We have the following result. 
\begin{theorem} \label{thm:revenue approximation for single buyer}
    In a single-item seller, single-buyer market, the revenue approximation ratio of price-posting mechanisms is 
    \begin{itemize}
        \item 0, for general value distributions; 
        \item between {\revApproxL} and {\revApproxU}, for quasi-regular value distributions; 
        \item 1, for regular value distributions. 
    \end{itemize}
\end{theorem}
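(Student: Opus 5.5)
I will prove the three bullets of \Cref{thm:revenue approximation for single buyer} in increasing order of difficulty, using the single-buyer characterization already established. The regular case is immediate. By \Cref{prop:Pareto frontier:single buyer regular}, for every feasible $\buyerutil$ the revenue-maximizing mechanism subject to $\buyerexanteutil[\dist]{\mech}\ge\buyerutil$ is a single posted price. (This uses \Cref{lem:Pareto frontier:transfer to constrained optimization}, and for $\buyerutil\le\buyerexanteutil[\dist]{\Mye}$ the Myerson price itself works.) Hence the ratio is $1$.

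For general distributions I will construct a family of instances driving the ratio to $0$. By \Cref{coro:proofs in Pareto frontier:single buyer full characterization}, the randomized optimum is a mixture of two prices $\valFunc(\leftS(\dualV))$ and $\valFunc(\rightS(\dualV))$ at the endpoints of an ironed interval of $\lagCurve$. The plan is to take a revenue curve with two far-apart "spikes."
\begin{itemize}
\item The first spike is a high price at tiny quantile $\quant_1$, with revenue $R(\quant_1)=1$ and negligible buyer surplus.
\item The second is the price $\lval=0$-ish at quantile $1$, with large surplus and $R(1)$ tiny.
\end{itemize}
In between, $R$ is nearly $0$, for instance an equal-revenue-like segment with very small constant revenue; this is achievable by a value function that drops sharply right after $\quant_1$. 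I then choose $\buyerutil$ equal to, say, half the maximum surplus. A $1/2$--$1/2$ mixture of the two spike prices gives revenue about $1/2$. Any single posted price $\valFunc(\quant)$ meeting the surplus constraint must have $\quant$ well past $\quant_1$ (surplus $\int_0^\quant(\valFunc(t)-\valFunc(\quant))\,dt$ is small near $\quant_1$). In that region its revenue is at most the small middle value. Letting the middle revenue level go to $0$ gives ratio $\to 0$. The check I need is that surplus at prices just below the first spike stays below $\buyerutil$. I will ensure this by putting almost all the mean value mass in low values near quantile $1$.

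The quasi-regular bounds are the main obstacle. For the lower bound $\revApproxL$, I fix an ironed interval $[\quant_\ell,\quant_r]$ of $\lagCurve$ and a mixture with weight $\mixprob$, and normalize $\valFunc$. Quasi-regularity, i.e.\ $R(\quant)/(1-\quant)$ nondecreasing, gives a lower envelope for $R$ on $[\quant_\ell,\quant_r]$: namely $R(\quant)\ge R(\quant_\ell)(1-\quant)/(1-\quant_\ell)$. It also gives lower bounds on surplus at intermediate prices, since surplus is monotone in the price by \Cref{fct:Pareto frontier:buyers' surplus increase}. I will pick the single price $\quant^\dagger\in[\quant_\ell,\quant_r]$ whose surplus equals $\buyerutil$, and bound $R(\quant^\dagger)$ against the mixture revenue $\mixprob R(\quant_\ell)+(1-\mixprob)R(\quant_r)$. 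This reduces the problem to a finite-dimensional worst case over $(\quant_\ell,\quant_r,\mixprob)$ and the shape of $\valFunc$ between them. I expect the extremal $\valFunc$ to be the one saturating the quasi-regularity inequality, giving a revenue segment on the line toward $(1,0)$. The resulting numeric optimization should yield $\revApproxL$. For the upper bound $\revApproxU$, I will exhibit that extremal instance explicitly, with parameters tuned numerically, and compute both sides. The delicate part is justifying that the saturating curve is worst. I would argue this by a pointwise domination: lowering $R$ to the envelope decreases the single-price revenue while keeping the endpoints and mixture fixed, and I would check carefully that the surplus constraint moves in the favorable direction.
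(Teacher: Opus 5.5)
Your regular case is the paper's argument. Your general-case instance is essentially the paper's too: a revenue spike at a tiny quantile, followed by a low equal-revenue segment, mixed with a low price. It works, but the ``check'' you list is unnecessary and misdiagnosed. Just past $q_1$ the surplus jumps to roughly $R(q_1)$ rather than staying small. However, every price beyond $q_1$ earns at most the middle revenue level whether or not it is feasible, so nothing has to be enforced there.

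\textbf{The gap is in the quasi-regular lower bound.} Comparing the optimal two-price mixture only against the posted price at $q^\dagger$ (surplus exactly $U$) cannot yield $0.8246$. Here is an instance where it fails:
\begin{itemize}
\item Let $v$ be constant on $[0,\ell]$ with $\ell=0.01$, and let $R(q)=R(\ell)\frac{1-q}{1-\ell}$ on $[\ell,\frac12]$.
\item Let $v$ be almost constant (strictly decreasing) on $[\frac12,0.99]$, and let $R$ decrease linearly to $(1,0)$ afterwards.
\end{itemize}
This is quasi-regular, and $[\ell,0.99]$ is the maximizing set of $R_\lambda$ for a small $\lambda>0$. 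With $U=U(\frac12)$, the surplus is nearly flat on $[q^\dagger,r]=[\frac12,0.99]$. So the mixing weight on $\ell$ tends to $0$ and the optimum tends to $R(r)\approx R(\ell)$, while $R(q^\dagger)\approx 0.505\,R(r)$.

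The paper therefore benchmarks against $\max\{R(q^\dagger),R(r)\}$; price $v(r)$ is feasible because surplus is monotone in the price. Its steps are:
\begin{itemize}
\item It upper-bounds the mixing weight by lower-bounding $U(q^\dagger)-U(\ell)$ via the envelope $v(q)\ge\frac{R(\ell)}{1-\ell}\cdot\frac{1-q}{q}$, and by using $U(r)-U(q^\dagger)\le r\,(v(q^\dagger)-v(r))$.
\item It then uses $R(r)\le R(\ell)\le\frac{1-\ell}{1-q^\dagger}R(q^\dagger)$.
\item Finally it minimizes over $\alpha=v(q^\dagger)/v(r)$ and $r$ treated as free parameters. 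The worst case is $r=q^\dagger\alpha$, then $r=1$ and $\alpha=1/q^\dagger$, which gives $0.8246$.
\end{itemize}

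That number is an artifact of this relaxation. The minimizing parameters need not come from an instance in which $r$ really is the right endpoint of the maximizing interval, which is why the paper only brackets the constant between $0.8246$ and $0.8532$. So your plan to show the saturating curve is exactly worst, via pointwise domination ``with endpoints and mixture fixed,'' is not how the constant arises. The domination also fails as stated: lowering $R$ changes $v$, hence $U(\cdot)$, the location of $q^\dagger$, and possibly $\lambda^*$ and the ironed interval.

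\textbf{The upper bound needs an explicit instance, which you do not give.} The paper takes:
\begin{itemize}
\item $v$ constant on $[0,\ell]$;
\item $R(q)=\frac{q^\dagger}{1-q^\dagger}(1-q)$ on $(\ell,q^\dagger]$;
\item equal revenue $q^\dagger$ on $(q^\dagger,1]$;
\item $U=U(q^\dagger)$.
\end{itemize}
The equal-revenue tail makes every feasible posted price earn exactly $q^\dagger$. One checks that $\lambda^*=\ell$ makes $[\ell,1]$ the maximizing set of $R_{\lambda^*}$. The resulting ratio $\frac{q^\dagger\ln q^\dagger-\ln\ell}{\ell\ln q^\dagger-\ln\ell}$ is minimized at about $0.8532$.
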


In particular, we notice that the last statement above is a corollary of \Cref{prop:Pareto frontier:single buyer general}. The rest of this section is devoted to proving \Cref{thm:revenue approximation for single buyer}. For simplicity, we slightly abuse notation, and let $\buyerexanteutil[\dist]{\quant}$ represent the buyers' surplus of the mechanism of posting a single price $\valFunc(\quant)$. 

\subsection{General Value Distributions}

For general value distributions, we proceed by showing that for any $\smallEps > 0$, there exists $\dist, \buyerutil$ such that 
\[\frac{\sup_{\mech\primed \in \mechFam\primed, \buyerexanteutil[\dist]{\mech\primed} \geq \buyerutil} \sellerexanteutil[\dist]{\mech}}{\sup_{\mech \in \mechFam, \buyerexanteutil[\dist]{\mech} \geq \buyerutil} \sellerexanteutil[\dist]{\mech}} \leq \smallEps.\]
In the above, $\mechFam\primed$ stands for the family of price-posting mechanisms.

Now, let $\largeN > \exp(\nicefrac{2}{\smallEps})$, and consider the following $\dist$ such that 
\[ \valFunc(\quant) = 
\begin{cases}
    \largeN^2, \quad & 0 \leq \quant \leq \nicefrac{1}{2\largeN}; \\
    \nicefrac{\largeN}{2}, \quad & \nicefrac{1}{2\largeN} < \quant \leq \nicefrac{1}{\largeN}; \\
    \nicefrac{1}{2\quant}, \quad & \nicefrac{1}{\largeN} < \quant \leq \nicefrac{1}{2}; \\
    0, \quad & \nicefrac{1}{2} < \quant \leq 1. 
\end{cases}
\]
Meanwhile, let $\buyerutil = \nicefrac{\largeN}{2}$. \footnote{Although the value distribution $\dist$ we present here is not differentiable within the interior of its support, it can be smoothed via standard techniques to satisfy this condition without causing loss to our result.}

For this problem instance, notice that there exists a $\dualVOPT$ such that $\lagCurve[\dualVOPT](\nicefrac{1}{2\largeN}) = \lagCurve[\dualVOPT](\nicefrac{1}{2}) = \largeN / 2$, and $\quantItvOPT(\dualVOPT) = [\nicefrac{1}{2\largeN}, \nicefrac{1}{2}]$. 
Now one can compute that $\buyerexanteutil[\dist]{\nicefrac{1}{2\largeN}} = 0$, and $\buyerexanteutil[\dist]{\nicefrac{1}{2}} = \nicefrac{(\largeN + \ln \largeN - \ln 2)}{2}$. 
Thus, $\buyerexanteutil[\dist]{\leftS(\dualVOPT)} \leq \buyerutil \leq \buyerexanteutil[\dist]{\rightS(\dualVOPT)}$. 
According to \Cref{coro:proofs in Pareto frontier:single buyer full characterization}, the optimal mechanism for this problem instance is a randomization of setting price $\valFunc(\nicefrac{1}{2\largeN}) = \largeN^2$ with probability $\interpol$ and $\valFunc(\nicefrac{1}{2}) = 1$ with probability $1 - \interpol$, satisfying
\begin{align*}
    \interpol \cdot 0 + (1 - \interpol)\cdot \frac{\largeN + \ln \largeN - \ln 2}{2} = \buyerutil = \frac{\largeN}{2} \Longleftrightarrow \interpol = \frac{\ln \largeN - \ln 2}{\largeN + \ln \largeN - \ln 2}. 
\end{align*}
Therefore, the optimal seller revenue is 
\begin{align*}
    \interpol \cdot \frac{\largeN}{2} + (1 - \interpol)\cdot \frac{1}{2} = \frac{\largeN \ln \largeN + \largeN - \largeN \ln 2}{2(\largeN + \ln \largeN - \ln 2)}. 
\end{align*}

Meanwhile, since $\buyerexanteutil[\dist]{\nicefrac{1}{\largeN}} = \nicefrac{\largeN}{2} = \buyerutil$, and $\buyerexanteutil[\dist]{\quant} < \nicefrac{\largeN}{2}$ for any $\quant < \nicefrac{1}{\largeN}$, any single-price-posting mechanism satisfying the constraint should post a price with quantile no less than $\nicefrac{1}{\largeN}$, therefore rendering a revenue of at most $\nicefrac{1}{2}$. 
Consequently, the revenue approximation ratio of price-posting mechanisms in this problem instance is upper-bounded by 
\[\frac{\largeN + \ln \largeN - \ln 2}{\largeN \ln \largeN + \largeN - \largeN \ln 2} < \frac{2}{\ln \largeN} < \smallEps. \]
This finishes the proof.

\subsection{Quasi-Regular Value Distributions}

\subsubsection{{\revApproxL} Revenue Approximation}

For any $\dist, \buyerutil$, we let $\quantDag \triangleq \sup\{\quant \suchthat \buyerexanteutil[\dist]{\quant} \leq \buyerutil\}$. Since the $\buyerexanteutil[\dist]{\cdot}$ is a continuous function, $\buyerexanteutil[\dist]{\quantDag} = \buyerutil$. Further, let $\dualVOPT = \sup\{\dualV \suchthat \leftS(\dualV) \leq \quantDag\}$. According to the proof of \Cref{prop:Pareto frontier:single buyer general}, $\buyerexanteutil[\dist]{\leftS(\dualVOPT)} \leq \buyerutil \leq \buyerexanteutil[\dist]{\rightS(\dualVOPT)}$ holds. 
If $\leftS(\dualVOPT) = \rightS(\dualVOPT) = \quantDag$, then posting price $\valFunc(\quantDag)$ is already optimal. Otherwise, write for brevity that $\leftS \triangleq \leftS(\dualVOPT)$ and $\rightS \triangleq \rightS(\dualVOPT)$. 
We prove that the better of posting price $\valFunc(\quantDag)$ and posting price $\valFunc(\rightS)$ gives a {\revApproxL} revenue approximation for the optimal mechanism. 
Notice that since $\rightS \geq \quantDag$, the mechanism of posting price $\valFunc(\rightS)$ satisfies the surplus constraint. 

Let $\interpol$ satisfy $\interpol \cdot \buyerexanteutil[\dist]{\leftS} + (1 - \interpol) \cdot \buyerexanteutil[\dist]{\rightS} = \buyerutil = \buyerexanteutil[\dist]{\quantDag}$. Thus, according to \Cref{coro:proofs in Pareto frontier:single buyer full characterization}, the optimal mechanism would be posting price $\valFunc(\leftS)$ with probability $\interpol$ and price $\valFunc(\rightS)$ with probability $1 - \interpol$, and thus the optimal seller revenue would be $\interpol \cdot \leftS \valFunc(\leftS) + (1 - \interpol) \cdot \rightS \valFunc(\rightS)$. 

\xhdr{Step 1: Bounding the optimal seller revenue with two prices, and the revenue approximation ratio.}
By the definition of quasi-regularity, we know that for any $\leftS \leq \quant \leq \quantDag$, $\valFunc(\quant) \geq (\nicefrac{\leftS \valFunc(\leftS)}{1 - \leftS}) \cdot \nicefrac{(1 - \quant)}{\quant}$. As a result, 
\begin{align*}
    \buyerexanteutil[\dist]{\quantDag} - \buyerexanteutil[\dist]{\leftS} &= \leftS(\valFunc(\leftS) - \valFunc(\quantDag)) + \int_{\leftS}^{\quantDag} (\valFunc(\quant) - \valFunc(\quantDag)) \dd \quant \\
    &\geq \leftS(\valFunc(\leftS) - \valFunc(\quantDag)) + \int_{\leftS}^{\quantDag} \inParentheses{\frac{\leftS \valFunc(\leftS)}{1 - \leftS} \cdot \frac{1 - \quant}{\quant} - \valFunc(\quantDag)} \dd \quant \\
    &= \leftS(\valFunc(\leftS) - \valFunc(\quantDag)) + \frac{\leftS \valFunc(\leftS)}{1 - \leftS}\cdot (\ln \quantDag - \ln \leftS + \leftS - \quantDag) - (\quantDag - \leftS)\cdot \valFunc(\quantDag) \\
    &= \leftS \valFunc(\leftS) - \quantDag \valFunc(\quantDag) + \frac{\leftS \valFunc(\leftS)}{1 - \leftS}\cdot (\ln \quantDag - \ln \leftS + \leftS - \quantDag). 
\end{align*}

Meanwhile, $\buyerexanteutil[\dist]{\rightS} - \buyerexanteutil[\dist]{\quantDag} \leq \rightS\cdot (\valFunc(\quantDag) - \valFunc(\rightS))$. Since $\interpol\cdot (\buyerexanteutil[\dist]{\quantDag} - \buyerexanteutil[\dist]{\leftS}) = (1 - \interpol)\cdot (\buyerexanteutil[\dist]{\rightS} - \buyerexanteutil[\dist]{\quantDag})$, putting the above together induces that
\[\interpol \leq \frac{\rightS\cdot (\valFunc(\quantDag) - \valFunc(\rightS))}{\leftS \valFunc(\leftS) - \quantDag \valFunc(\quantDag) + \frac{\leftS \valFunc(\leftS)}{1 - \leftS}\cdot (\ln \quantDag - \ln \leftS + \leftS - \quantDag) + \rightS\cdot (\valFunc(\quantDag) - \valFunc(\rightS))}. \]
Since $\lagCurve[\dualVOPT](\leftS) = \lagCurve[\dualVOPT](\rightS)$, we have $\leftS \valFunc(\leftS) \geq \rightS \valFunc(\rightS)$. We then have
\begin{align}
    &\mathrel{\phantom{=}} \interpol \cdot \leftS \valFunc(\leftS) + (1 - \interpol) \cdot \rightS \valFunc(\rightS) \notag \\
    &\leq \frac{\rightS\cdot (\valFunc(\quantDag) - \valFunc(\rightS))}{\leftS \valFunc(\leftS) - \quantDag \valFunc(\quantDag) + \frac{\leftS \valFunc(\leftS)}{1 - \leftS}\cdot (\ln \quantDag - \ln \leftS + \leftS - \quantDag) + \rightS\cdot (\valFunc(\quantDag) - \valFunc(\rightS))} \cdot (\leftS \valFunc(\leftS) - \rightS \valFunc(\rightS)) + \rightS \valFunc(\rightS) \notag \\
    &= \frac{\rightS\cdot (\valFunc(\quantDag) - \valFunc(\rightS))}{\leftS \valFunc(\leftS) - \rightS \valFunc(\rightS) + \frac{\leftS \valFunc(\leftS)}{1 - \leftS}\cdot (\ln \quantDag - \ln \leftS + \leftS - \quantDag) + (\rightS - \quantDag)\cdot \valFunc(\quantDag)} \cdot (\leftS \valFunc(\leftS) - \rightS \valFunc(\rightS)) + \rightS \valFunc(\rightS). \label{eq:revenue approximation:quasi regularity:good:1}
\end{align}

Observe that this is an increasing function of $\valFunc(\leftS)$ when all other parameters are fixed. By the definition of quasi-regularity, we have $\leftS \valFunc(\leftS) \leq \nicefrac{(1 - \leftS)}{(1 - \quantDag)} \cdot \quantDag \valFunc(\quantDag)$. The first term of \eqref{eq:revenue approximation:quasi regularity:good:1} is upper bounded by 
\begin{align*}
    &\mathrel{\phantom{=}} \frac{\rightS\cdot (\valFunc(\quantDag) - \valFunc(\rightS))}{\frac{1 - \leftS}{1 - \quantDag} \quantDag \valFunc(\quantDag) - \rightS \valFunc(\rightS) + \frac{\quantDag \valFunc(\quantDag)}{1 - \quantDag}\cdot (\ln \quantDag - \ln \leftS + \leftS - \quantDag) + (\rightS - \quantDag)\cdot \valFunc(\quantDag)} \cdot \inParentheses{\frac{1 - \leftS}{1 - \quantDag} \quantDag \valFunc(\quantDag) - \rightS \valFunc(\rightS)} \\
    &= \frac{\rightS\cdot (\valFunc(\quantDag) - \valFunc(\rightS))\inBrackets{(1 - \leftS)\cdot \quantDag\valFunc(\quantDag) - (1 - \quantDag)\cdot \rightS \valFunc(\rightS)}}{(\ln \quantDag - \ln \leftS)\quantDag \valFunc(\quantDag) + \rightS(1 - \quantDag)(\valFunc(\quantDag) - \valFunc(\rightS))}. 
\end{align*}

Considering the approximation of $\max(\quantDag\valFunc(\quantDag), \rightS\valFunc(\rightS))$ on the above formula plus $\rightS\valFunc(\rightS)$, let $\valRatio \triangleq \nicefrac{\valFunc(\quantDag)}{\valFunc(\rightS)}$, the revenue approximation ratio lower bound then becomes
\begin{align*}
    &\mathrel{\phantom{=}} \left. \max\inParentheses{\nicefrac{\quantDag}{\rightS} \cdot \valRatio, 1} \middle/ \inParentheses{\frac{(\valRatio - 1)\inBrackets{(1 - \leftS) \cdot \nicefrac{\quantDag}{\rightS} \cdot \valRatio - (1 - \quantDag)}}{(\ln \quantDag - \ln \leftS) \cdot \nicefrac{\quantDag}{\rightS} \cdot \valRatio + (1 - \quantDag)(\valRatio - 1)} + 1}\right. \\
    &= \frac{\inBrackets{(\ln \quantDag - \ln \leftS) \cdot \nicefrac{\quantDag}{\rightS} \cdot \valRatio + (1 - \quantDag)(\valRatio - 1)} \max\inParentheses{\nicefrac{\quantDag}{\rightS} \cdot \valRatio, 1}}{\inBrackets{\ln \quantDag - \ln \leftS + (1 - \leftS)(\valRatio - 1)}\cdot \nicefrac{\quantDag}{\rightS} \cdot \valRatio}. 
\end{align*}

\xhdr{Step 2: Solving a program to minimize the lower bound with the rest variables.}
Now, when $\rightS \geq \quantDag \cdot \valRatio$, $\max(\nicefrac{\quantDag}{\rightS} \cdot \valRatio, 1) = 1$, and the above formula reaches the minimum when $\rightS = \quantDag \cdot \valRatio$. On the other hand, when $\rightS \leq \quantDag \cdot \valRatio$, $\max(\nicefrac{\quantDag}{\rightS} \cdot \valRatio, 1) = \nicefrac{\quantDag}{\rightS} \cdot \valRatio$, and the minimum is also obtained when $\rightS = \quantDag \cdot \valRatio$. In all, when $\rightS = \quantDag \cdot \valRatio$, the revenue approximation ratio is lower bounded by 
\begin{align*}
    \frac{\ln \quantDag - \ln \leftS + (1 - \quantDag)(\valRatio - 1)}{\ln \quantDag - \ln \leftS + (1 - \leftS)(\valRatio - 1)}. 
\end{align*}
Further, fix $\leftS$ and $\quantDag$, the above is minimized when $\valRatio \geq 1$ reaches its maximum, or $\rightS = 1$ and $\valRatio = \nicefrac{1}{\quantDag}$. The lower bound is then simplified to  
\begin{align*}
    \frac{\quantDag(\ln \quantDag - \ln \leftS) + (1 - \quantDag)^2}{\quantDag(\ln \quantDag - \ln \leftS) + (1 - \leftS)(1 - \quantDag)}. 
\end{align*}

Finally, for $0 \leq \leftS \leq \quantDag \leq 1$, the global minimum of the above formula is approximately {\revApproxL}, reached when $\leftS \approx 0.238405$ and $\quantDag \approx 0.528482$. 

However, we mention that this lower bound we derive might be loose, because in the definition, we suppose that $\rightS = \rightS(\dualVOPT)$, but this may not be the case under the final instance we derive. This is a main reason for the gap between our approximability result and the following inapproximability result.

\subsubsection{A {\revApproxU} Revenue Inapproximability}

Consider $\dist, \buyerutil$ with parameters $0 \leq \leftS \leq \quantDag \leq 1$ to be decided satisfying: 
\[
\valFunc(\quant) = 
\begin{cases}
    \frac{\quantDag}{1 - \quantDag} \cdot \frac{1 - \leftS}{\leftS}, \quad & 0 \leq \quant \leq \leftS; \\
    \frac{\quantDag}{1 - \quantDag} \cdot \frac{1 - \quant}{\quant}, \quad & \leftS < \quant \leq \quantDag; \\
    \frac{\quantDag}{\quant}, \quad & \quantDag < \quant \leq 1. 
\end{cases}
\]

Further, $\buyerutil = \buyerexanteutil[\dist]{\quantDag}$. \footnote{Similarly, the value distribution $\dist$ we present here can be smoothed without causing loss to our result.}
Notice that in this problem instance, since the value function is equal-revenue on $[\quantDag, 1]$, the optimal seller revenue for price-posting is $\quantDag$, by posting price $\valFunc(\quantDag) = 1$. Meanwhile, if there exists some $\dualV$ such that $\quantItvOPT(\dualV) = [\leftS, 1]$, then the optimal mechanism is a randomization of posting price $\valFunc(\leftS)$ with probability $\interpol$ and price $\valFunc(1)$ with probability $1 - \interpol$, satisfying that $\buyerutil = \interpol \cdot \buyerexanteutil[\dist]{\leftS} + (1 - \interpol) \cdot \buyerexanteutil[\dist]{1}$. We compute that 
\begin{align*}
    \buyerexanteutil[\dist]{\leftS} &= 0; \\
    \buyerexanteutil[\dist]{\quantDag} &= \frac{\quantDag}{1 - \quantDag}\cdot (\ln \quantDag - \ln \leftS); \\
    \buyerexanteutil[\dist]{1} &= \frac{\quantDag}{1 - \quantDag}\cdot (\quantDag \ln \quantDag - \ln \leftS). 
\end{align*}

Therefore, $\interpol = \nicefrac{((\quantDag - 1) \ln \quantDag)}{(\quantDag \ln \quantDag - \ln \leftS)}$, and the optimal seller revenue is
\begin{align*}
    &\mathrel{\phantom{=}} \interpol\cdot \leftS\valFunc(\leftS) + (1 - \interpol)\cdot \valFunc(1) \\
    &= \frac{(\quantDag - 1)\ln \quantDag}{\quantDag \ln \quantDag - \ln \leftS} \cdot \frac{\quantDag}{1 - \quantDag}\cdot (1 - \leftS) + \frac{\ln \quantDag - \ln \leftS}{\quantDag \ln \quantDag - \ln \leftS}\cdot \quantDag \\
    &= \frac{\leftS \ln \quantDag - \ln \leftS}{\quantDag \ln \quantDag - \ln \leftS} \cdot \quantDag. 
\end{align*}
And the revenue approximation ratio is therefore
\[\frac{\quantDag \ln \quantDag - \ln \leftS}{\leftS \ln \quantDag - \ln \leftS}. \]

When $0 \leq \leftS \leq \quantDag \leq 1$, the above formula reaches its minimum of about {\revApproxU} when $\leftS \approx 0.325268$ and $\quantDag \approx 0.589198$. We are left to show the existence of $\dualVOPT$ such that $\quantItvOPT(\dualVOPT) = [\leftS, 1]$. To start with, to let $\lagCurve[\dualVOPT](\leftS) = \lagCurve[\dualVOPT](1)$, we should have
\[\frac{\quantDag}{1 - \quantDag} \cdot (1 - \leftS) = \dualVOPT\cdot \inBrackets{\frac{\quantDag}{1 - \quantDag}\cdot (\quantDag \ln \quantDag - \ln \leftS) + \quantDag} + (1 - \dualVOPT)\cdot \quantDag, \]
or that 
\[\dualVOPT = \frac{\quantDag - \leftS}{\quantDag \ln \quantDag - \ln \leftS}. \]
Further, according to the first-order condition of the optimality of $\leftS$, we also have
\[\left. \partial \frac{\quantDag \ln \quantDag - \ln \leftS}{\leftS \ln \quantDag - \ln \leftS} \middle/ \partial \leftS \right. = 0 \;\Longrightarrow\; \frac{\quantDag - \leftS}{\quantDag \ln \quantDag - \ln \leftS} = \leftS.\]
Thus $\dualVOPT = \leftS$. To prove that $\quantItvOPT(\dualVOPT) = [\leftS, 1]$, we observe that $\lagCurve[\dualVOPT](\cdot)$ is increasing on $[\quantDag, 1]$ since $\valFunc(\cdot)$ is equal-revenue on $[\quantDag, 1]$. It suffices to prove that $\lagCurve[\dualVOPT](\cdot)$ is decreasing on $[\leftS, \quantDag]$. To see this, when $\quant \in [\leftS, \quantDag]$, we obtain that
\begin{align*}
    \lagCurve[\dualVOPT]'(\quant) &= \dualVOPT\cdot \frac{\quantDag}{1 - \quantDag} \cdot \frac{1 - \quant}{\quant} - (1 - \dualVOPT)\cdot \frac{\quantDag}{1 - \quantDag} \\
    &= \dualVOPT\cdot \frac{\quantDag}{1 - \quantDag} \cdot \inParentheses{\frac{1}{\quant} - \frac{1}{\dualVOPT}} \\
    &= \dualVOPT\cdot \frac{\quantDag}{1 - \quantDag} \cdot \inParentheses{\frac{1}{\quant} - \frac{1}{\leftS}} \leq 0.
\end{align*}

Consequently, $\quantItvOPT(\dualVOPT) = [\leftS, 1]$ holds. We conclude that any single price-posting mechanism can at most obtain a {\revApproxU} revenue approximation in this problem instance. 
 
\end{document}